\newtheorem{thm}    {Theorem}%[section]
\newtheorem{lem}[thm]{Lemma}%[section]
\newtheorem{cor}[thm]{Corollary}%[section]
\newtheorem{proposition}[thm]{Proposition}%[section]
\newtheorem{rem}     {Remark}%[section]56
 \newenvironment{proofof}[1]{\vspace*{5mm} \par \noindent
         \quad{\it Proof of #1:\hspace{2mm}}}{\endproof
%\hfill$\Box$ \vspace*{3mm}
}
\def\sM{\mathsf{M}}
\def\mix{\mathop{\rm mix}}
\def\inv{\mathop{\rm inv}}
\def\normal{\mathop{\rm normal}}
\def\Ker{\mathop{\rm Ker}}
\def\FF{\mathbb F}
\newcommand{\bR}{\mathbb{R}}
\def\cA{{\cal A}}
\def\cM{{\cal M}}
\def\cB{{\cal B}}
\def\cE{{\cal E}}
\def\cM{{\cal M}}
\def\rE{{\rm E}}
\newcommand{\bX}{{\bf X}}
\newcommand{\lleq}{\mathrel{\mathpalette\gl@align<}}
\newcommand{\ggeq}{\mathrel{\mathpalette\gl@align>}}
\newcommand{\gl@align}[2]{%\lower.1ex
\vbox{\baselineskip\z@skip\lineskip\z@
\ialign{$\m@th#1\hfil##\hfil$\crcr#2\crcr{}_{{}_{(=)}}\crcr}}}
\def\Label#1{\label{#1}\ [\ \text{#1}\ ]\ }
\def\Label{\label}
\begin{document}
\title{Security analysis of $\varepsilon$-almost dual universal$_2$ hash functions:
smoothing of min entropy vs. smoothing of R\'{e}nyi entropy of order 2}
\author{
Masahito Hayashi
\thanks{
M. Hayashi is with Graduate School of Mathematics, Nagoya University, 
Furocho, Chikusaku, Nagoya, 464-8602, Japan, and
Centre for Quantum Technologies, National University of Singapore, 3 Science Drive 2, Singapore 117542.
(e-mail: masahito@math.nagoya-u.ac.jp)
}}
\date{}
\maketitle

\begin{abstract}
Recently, $\varepsilon$-almost dual universal$_2$ hash functions
has been proposed as a new and wider class of hash functions.
Using this class of hash functions, several efficient hash functions were proposed.
This paper evaluates the security performance 
when we apply this kind of hash functions.
We evaluate the security in several kinds of setting
based on the $L_1$ distinguishability criterion and the modified mutual information criterion.
The obtained evaluation is based on smoothing of 
R\'{e}nyi entropy of order 2 and/or min entropy.
We clarify the difference between these two methods.
\end{abstract}
\begin{keywords}
$\varepsilon$-almost dual universal$_2$ hash function, 
secret key generation,
exponential decreasing rate,
single-shot setting,
equivocation rate
\end{keywords}

\section{Introduction}
\subsection{Tight exponential evaluation of $L_1$ distinguishability
under $\varepsilon$-almost dual universality$_2$}
Secure key generation is an important problem in information theoretic security.
When a part of keys are leaked to a third party, we cannot use the key.
In this case, we need to apply a hash function to the keys.
Bennett et al. \cite{BBCM} and H\r{a}stad et al. \cite{ILL} proposed to use universal$_2$ hash functions for privacy amplification
and derived two universal hashing lemma, which provides an upper bound for 
leaked information based on R\'{e}nyi entropy of order $2$.
Two universal hashing lemma can guarantee the security only when the length of the generated keys is less than 
R\'{e}nyi entropy of order $2$.
In order to resolve this drawback,
Renner \cite{Renner} attached the smoothing to min entropy,
which is a lower bound of conditional R\'{e}nyi entropy of order $2$.
The smoothing is the method to replace 
the true distribution by a good distribution that approximates the true distribution.
This method works well when 
the security is evaluated variational distance between the real distribution and the ideal distribution, which is often called the $L_1$ distinguishability criterion.

Now, we consider the case when a random variable $A$ leaked to the third party $E$ is given as $n$-fold independent and identical distribution \cite{AC93,Mau93}.
Under this setting, the optimal asymptotic secure key generation rate is the conditional entropy \cite{AC93,Mau93}. 
The smoothing to min entropy shows that
universal$_2$ hash functions asymptotically achieves the conditional entropy date.
When the key generation rate is smaller than the conditional entropy date,
the $L_1$ distinguishability criterion goes to zero exponentially. 
The previous paper \cite{H-tight} derived an exponentially decreasing rate under 
the universality$_2$.
Its tightness was also shown in \cite{W-H2}.
Note that the importance of exponentially decreasing rate has been explained in the previous papers \cite{H-tight,H-cq}.

Recently, Tsurumaru et al.\cite{Tsuru} proposed to use 
$\varepsilon$-almost dual universal$_2$ hash functions, which is a generalization of liner universal$_2$ hash functions,
and obtained a different version of two universal hashing lemma for this class of hash functions.
Further, the recent paper \cite{H-T} 
proposed several practical hash functions 
under the condition of the $\varepsilon$-almost dual universality$_2$.
The hash functions \cite{H-T}
have a smaller calculation amount and a smaller number of random variables
than the concatenation of Toeplitz matrix and the identity matrix, which is a typical example of universal$_2$ hash functions.
Therefore, 
it is better to evaluate the security 
under the $\varepsilon$-almost dual universality$_2$ rather than under the universality$_2$.
However, the above results in \cite{H-tight,W-H2} were given under the universality$_2$.
In this paper, we show that the above optimal exponential rate can be attained 
by $\varepsilon$-almost dual universal$_2$ hash functions.
Indeed, although the previous paper \cite{H-cq} obtained a similar result in the quantum setting,
the exponent in \cite{H-cq} is strictly worse than the optimal exponent even in the commutative case.

\subsection{Evaluation of modified mutual information}
When the key generation rate is larger than the conditional entropy date,
it is helpful to evaluate how much information is leaked to the third party.
In this case, 
the $L_1$ distinguishability does not go to zero and does not reflect the amount of leaked information properly.
The mutual information seems to work more properly.
Indeed,
many papers \cite{CN,DM,AC93,Mau93,M94,M03,CKbook,Shikata,BTV,YPS,BKS,ZKVB,BB,PB,LLB,WO,NP,NYBNR,TNG,Haya1} 
% of the information theory 
employ the mutual information as the security criterion.
In the case of secure random number generation,
we need to consider the uniformity as well as the independence.
For this purpose, 
Csisz\'{a}r and Narayan \cite{CN04} modified the mutual information.
Then, we call the criterion the modified mutual information \cite{H-q2,H-cq}.
In the above situation,
the amount of leaked information is expected to increase linearly.
To reflect this requirement, it is natural to surpass the chain rule for the criterion.
In this paper, we show that
only the modified mutual information satisfies 
several natural conditions for our security criteria including the chain rule.
Since these natural conditions for our security criteria 
uniquely determine the security criterion,
only the modified mutual information suits the situation when 
the key generation rate is larger than the conditional entropy date.
Although the previous paper \cite{H-cq} gave a similar characterization in a quantum setting,
the previous characterization \cite{H-cq} could not determine the security criterion uniquely.

When the key generation rate is smaller than the conditional entropy date,
the modified mutual information does not go to zero and increases 
in proportion to the number $n$.
The linear coefficient reflects the amount of leaked information, and is called the equivocation rate.
The previous paper \cite{H-q2} showed that 
the optimal equivocation rate can be attained by universal$_2$ hash functions.
However, it was not shown whether 
the optimal equivocation rate can be attained by $\varepsilon$-almost dual universal$_2$ hash functions.
In this paper, we show that the above optimal equivocation rate 
can be attained by $\varepsilon$-almost dual universal$_2$ hash functions.

Further, due to the Pinsker inequality,
the modified mutual information goes to zero 
when the $L_1$ distinguishability criterion goes to zero.
However, the exponential decreasing rate of the $L_1$ distinguishability criterion 
cannot determine the exponential decreasing rate of 
the modified mutual information because the Pinsker inequality is not so tight.
The previous paper \cite{H-leaked} also derived an lower bound of the exponentially decreasing rate of the modified mutual information
when we apply universal$_2$ hash functions.
In this paper, we show that 
the same lower bound can be attained 
even when we apply $\varepsilon$-almost dual universal$_2$ hash functions.

\subsection{Smoothing of min entropy vs. smoothing of R\'{e}nyi entropy of order $2$}
To discuss the asymptotic performance, 
the paper \cite{Renner} applies the smoothing of the min entropy.
The previous paper \cite{H-tight} applied the smoothing of Renyi entropy of order $2$ when the no leaked information.
Since Renyi entropy of order $2$ gives a better evaluation than the min entropy,
the smoothing of the min entropy cannot surpass
that of Renyi entropy of order $2$. 
The previous paper \cite{H-tight} also showed that the smoothing of the min entropy cannot realize the optimal exponential decreasing rate of the $L_1$ distinguishability criterion 
without any information leakage to the third party.
However, the previous paper \cite{H-tight} did not discuss whether 
the smoothing of the min entropy can realize the optimal exponential decreasing rate of the $L_1$ distinguishability criterion
when a partial information is leaked to the third party.
It is needed to clarify whether 
the smoothing of the min entropy can realize the optimal exponential decreasing rate of the $L_1$ 
distinguishability criterion in this situation
because this general situation is more important from the practical viewpoint
and many people still believe the importance of the smoothing of min entropy.

On the other hand, 
recently, many researchers are interested in second order analysis \cite{strassen,Hsec,H08,Pol,Kont}.
Since the papers \cite{Hsec,H08} for second order analysis employ the method of information spectrum,
which has been established by Han and V\'{e}rdu in their seminal papers \cite{sp2,sp3,sp4,sp5,Han-source} and the book \cite{Han},
many people are interested in how powerful the method of information spectrum is.
As is explained in Section \ref{cqs4-5}, 
the smoothing of the min entropy is essentially the same as the method of information spectrum\footnote{This argument is true even in the classical case. In the quantum case, there are several variants for information spectrum. Hence, we cannot say that the smoothing of the min entropy is essentially the same as the method of information spectrum.
Indeed, the previous paper \cite{TH} discussed this problem only with fidelity distance.}.
Hence, it is important to clarify the limit of the smoothing of the min entropy.

In this paper, we show that the smoothing of the min entropy cannot realize the optimal exponential decreasing rate of the $L_1$ distinguishability criterion 
even when a partial information leaked to the third party.
Then, we arise another question when the smoothing of the min entropy can realize the optimal asymptotic performance.
To answer this question, we show that the smoothing of the min entropy can attain 
the optimal second order key generation rate when the required the $L_1$ distinguishability criterion is fixed
although the same result with the fidelity distance 
was obtained in the previous paper \cite{TH}.
We also show that 
the smoothing of the min entropy can attain 
the optimal equivocation rate.
Here, we should explain that the smoothing of the min entropy is almost same as 
the method of information spectrum, 
which is a powerful and general tool for information theory.
Information spectrum has been established by Han and V\'{e}rdu in their seminal papers \cite{sp2,sp3,sp4,sp5,Han-source} and the book \cite{Han}.
This method can derive asymptotically tight bounds of the optimal performances of various information processings.

These obtained results are summarized as Table \ref{table1}.

\begin{table}[htb]
%\begin{center}
  \caption{Summary of obtained results.}
\begin{center}
  \begin{tabular}{|l|c|c|c|} 
\hline
 {setting} & 
\!\!\!\!\begin{tabular}{l}
single-shot \\ 
/asymptotic
\end{tabular}%
%finite/asymptotics
& $L_1$  & MMI  \\ \hline
 \multirow{3}{*}{exponent (R\'{e}nyi 2)}
& \multirow{2}{*}{single-shot}
& {(\ref{12-5-1-a}) in Theorem \ref{Lem8}}
& {(\ref{12-5-2-a}) in Theorem \ref{Lem8}}  
\\
& 
&{(\ref{3-26-2}) in Theorem \ref{Lem11}}  
&{(\ref{3-26-3}) in Theorem \ref{Lem12}}  
\\ \cline{2-4}
 & asymptotic
&{(\ref{12-18-6-a}) in Theorem \ref{t3-16-1}} 
&{(\ref{12-18-7-a}) in Theorem \ref{t3-16-1}} 
\\ \cline{1-4}
 \multirow{3}{*}{exponent (min)} 
& \multirow{2}{*}{single-shot} 
&{(\ref{12-5-1-2}) in Theorem \ref{t8-27-1}} 
&{(\ref{12-5-2-2}) in Theorem \ref{t8-27-1}} 
\\
 &
& {(\ref{3-19-11}) in Theorem \ref{L3-19-10}} 
& {(\ref{3-19-13}) in Theorem \ref{L3-19-10}} 
\\ \cline{2-4}
 & asymptotic
& {(\ref{3-17-4}) in Theorem \ref{L3-18-3}} 
& {(\ref{3-17-4b}) in Theorem \ref{L3-18-3}} 
\\ \cline{1-4}
 \multirow{3}{*}{second order (min)}
& \multirow{2}{*}{single-shot}
&{(\ref{12-5-1-2}) in Theorem \ref{t8-27-1}} 
& --  
\\
& 
& {(\ref{3-19-11b}) in Theorem \ref{L3-19-10b}} 
&--
\\ \cline{2-4}
 & asymptotic
&{(\ref{12-18-6-a}) in Theorem \ref{t3-16-b}} 
& --
\\ \cline{1-4}
 \multirow{3}{*}{equivocation (min)}
& \multirow{2}{*}{single-shot}
& --  
&{(\ref{12-5-2-d}) in Theorem \ref{t8-27-2}} 
\\
& 
&--
& {(\ref{8-28-1}) in Theorem \ref{t8-27-3}} 
\\ \cline{2-4}
& asymptotic
& --
&{(\ref{8-28-15}) in Theorem \ref{t8-27-4}} 
\\ \hline
  \end{tabular}
\end{center}

\vspace{2ex}

%Roman letters express obtained results.
%Italic letters express existing results or results with the same performance as existing results.
$L_1$ is the $L_1$ distinguishability criterion.
MMI is the modified mutual information criterion. 
(min) means the result 
derived by the smoothing of min entropy.
(R\'{e}nyi 2) means
the results 
derived by the smoothing of R\'{e}nyi entropy of order $2$.
\Label{table1}
%\end{center}
\end{table}

\subsection{Significance from information theoretical viewpoint}
Before describing the organization of this paper,
we need to think the current situation of the study of information theoretic security.
Although the information theoretic security has information theoretic formulation,
it has been mainly studied by the community of cryptography not by information theory community.
Further, many important papers \cite{Renner, TSSR11, MDSFT, H-q2,H-cq, TH, MMH, FS08, Tsuru} 
in this direction were written with the quantum terminology.
Since the information theoretic security even with the non-quantum setting
has a sufficient significance from the practical viewpoint
and its formulation has a sufficient similarity to information theory,
it should be studied from information theory more actively.
Indeed, this paper deals with a non-quantum topic.
So, non-quantum researchers should be contained in the reader of this paper.
However, the above mentioned situation obstructs the non-quantum researchers to access the papers in
the information theoretic security even with the non-quantum setting.
To resolve this situation,
this paper needs to contain surveys of results originally obtained in quantum information, which should be written in the non-quantum terminology.

\subsection{Organization}
The remaining part of this paper is organized as follows.
Now, we give the outline of the preliminary parts.
In Section \ref{cqs1}, 
we prepare the information quantities for 
evaluating the security and derive several useful inequalities
for the quantum case.
We also give a clear definition for security criteria.
The contents in Section \ref{cqs1} except for Lemma \ref{L7-1} and Theorem \ref{l8-24-1} are known.
However, since they are given in quantum terminology, these contents are not familiar for people in information theory.
For readers in information theory, their proofs are given in Appendixes.

In Section \ref{cqs3}, 
we introduce several class of hash functions (universal$_2$ hash functions
and $\varepsilon$-almost dual universal$_2$ hash functions).
We clarify the relation between 
$\varepsilon$-almost dual universal$_2$ hash functions
and $\delta$-biased ensemble.
We also derive an $\varepsilon$-almost dual universal$_2$ version of two universal hashing lemma based on 
Lemma for $\delta$-biased ensemble given by Dodis et al \cite{DS05}.
The latter preliminary parts are more technical and used for proofs of the main results.
Although the contents are given the previous paper \cite{Tsuru} with 
terminologies in quantum information,
since they are necessary for the latter discussion,
they are presented in this paper with non-quantum terminologies.

In Section \ref{cqs4}, under 
the $\varepsilon$-almost dual universal$_2$ condition,
we evaluate the $L_1$ distinguishability criterion
and the modified mutual information based on 
the smoothing of min entropy and 
R\'{e}nyi entropy of order $2$.
These parts give the definitions for concepts and quantities describing the main results.
These parts are almost included in the papers \cite{Tsuru,H-cq}.
So, the larger part of Sections \ref{cqs1} \ref{cqs3}, and \ref{cqs4} 
are surveys with non-quantum terminology.

Next, we outline the main results. 
In Section \ref{cqs4-5}, 
using the tail probability of a proper event, 
we evaluate upper bounds given by
the smoothing of min entropy in Section \ref{cqs4}
with the single-shot setting.
This tail probability plays a central role in information spectrum.
The bounds obtained in this section
have smaller complexity for calculation
than those given in Section \ref{cqs4}.
In Section \ref{s4-1}, 
using the information quantities given in Section \ref{cqs1},
we evaluate upper bounds given in Section \ref{cqs4}.
The bounds obtained in this section
have smaller complexity for calculation
than those given in Sections \ref{cqs4-5} and \ref{cqs4}.
In Section \ref{s4-1-b}, 
we derive an exponential decreasing rate for both criteria 
when we simply apply hash functions.
In Section \ref{s4-5}, 
we also discuss the case when the key generation rate is greater than the conditional entropy rate.

\section{Preparation}\Label{cqs1}

\subsection{R\'{e}nyi relative entropy}
In order to discuss the security problem, 
we prepare several information quantities for sub-distributions $P_A$ 
$Q_A$ on a space ${\cal A}$.
That is, these are assumed to satisfy 
the conditions $P_A(a) \ge 0$ and  $\sum_a P_A(a) \le 1$.
R\'{e}nyi introduced 
R\'{e}nyi relative entropy
\begin{align}
D_{1+s}(P_A \|Q_A) & := \frac{1}{s}\log \sum_{a \in \cA} P_A(a)^{1+s} Q_A(a)^{-s} 
\end{align}
as a generalization of relative entropy
\begin{align}
D(P_A \|Q_A) & :=  \sum_{a\in {\cal A}} P_A(a) \log \frac{P_A(a)}{Q_A(a)} 
\end{align}
When we apply a stochastic matrix $\Lambda$ on $\cA$, 
the information processing inequalities
\begin{align}
D(\Lambda(P_A)\|\Lambda(Q_A)) & \le D(P_A\|Q_A) , \quad
D_{1+s}(\Lambda(P_A)\|\Lambda(Q_A)) \le D_{1+s}(P_A\|Q_A)
\Label{8-21-7} 
\end{align}
hold for $s\in (0,1]$.
Since the map $s\mapsto s D_{1+s}(P_A\|Q_A)$ is convex,
we have the following lemma.

\begin{lem}\Label{L22-c1}
$D_{1+s}(P_A\|Q_A)$ is monotonically increasing for 
$s$ in $(-\infty,0) \cup (0,\infty)$.
\end{lem}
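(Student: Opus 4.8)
The plan is to exploit the stated fact that $s \mapsto s\, D_{1+s}(P_A\|Q_A)$ is convex, together with the single anchor value at $s=0$. First I would record that anchor: by continuity (L'H\^{o}pital, or the standard limit $\lim_{s\to 0} D_{1+s}(P_A\|Q_A) = D(P_A\|Q_A)$), the function $g(s) \defeq s\, D_{1+s}(P_A\|Q_A)$ extends to a convex function on all of $\real$ with $g(0)=0$. Then $D_{1+s}(P_A\|Q_A) = g(s)/s$ is a difference quotient of $g$ between the points $0$ and $s$, namely $\frac{g(s)-g(0)}{s-0}$.

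Next I would invoke the elementary monotonicity property of difference quotients of a convex function: if $g$ is convex on an interval, then for fixed $x_0$ the slope $s \mapsto \frac{g(s)-g(x_0)}{s-x_0}$ is nondecreasing in $s$ (over $s\ne x_0$). Applying this with $x_0 = 0$ gives immediately that $s \mapsto \frac{g(s)}{s} = D_{1+s}(P_A\|Q_A)$ is nondecreasing on $(-\infty,0)$ and, separately, on $(0,\infty)$ — which is exactly the claim. One should be slightly careful to state it on the two intervals $(-\infty,0)$ and $(0,\infty)$ as in the lemma rather than across the removable singularity, though in fact the difference-quotient lemma gives monotonicity across $0$ as well once $g(0)=0$ is used.

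The only mild subtlety — and the "hard part," such as it is — is the bookkeeping at $s=0$: one must make sure $D_{1+s}(P_A\|Q_A)$ really is the difference quotient of a genuinely convex function through the origin, i.e., that the convexity of $sD_{1+s}$ asserted in the excerpt holds on the full range $(-\infty,\infty)$ (including negative $s$ and $s>1$, where $D_{1+s}$ is still a well-defined extended-real quantity for sub-distributions) and that the value there is $0$. Granting that, the argument is a two-line consequence of convex-analysis folklore. I would therefore present it as: (i) set $g(s)=sD_{1+s}$, convex with $g(0)=0$; (ii) write $D_{1+s}=g(s)/s=(g(s)-g(0))/(s-0)$; (iii) cite monotonicity of slopes of a convex function from a fixed point; (iv) conclude monotone increase on each of $(-\infty,0)$ and $(0,\infty)$. \qed
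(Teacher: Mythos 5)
Your core idea is exactly the paper's: the paper justifies this lemma with the single remark that $s\mapsto g(s):=sD_{1+s}(P_A\|Q_A)=\log\sum_a P_A(a)^{1+s}Q_A(a)^{-s}$ is convex, and then appeals (implicitly) to monotonicity of chord slopes. So the route is the same; the issue is the anchor point. You take $g(0)=0$, but the lemma is stated in the paper's setting of \emph{sub}-distributions ($\sum_a P_A(a)\le 1$), where $g(0)=\log\sum_a P_A(a)$ can be strictly negative. In that case $\lim_{s\to 0}D_{1+s}(P_A\|Q_A)$ does not equal $D(P_A\|Q_A)$ (the one-sided limits are $\mp\infty$), so $g$ does not "extend to a convex function through the origin," and $D_{1+s}=g(s)/s$ is not the difference quotient $\bigl(g(s)-g(0)\bigr)/(s-0)$. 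The paper itself is careful here: it asserts $g(0)=0$ and the limit statement only \emph{after} restricting to normalized distributions, and it does need the sub-distribution version of the present lemma later (e.g., to get Lemma \ref{L11-1} for a sub-distribution $P_{A,E}$). Note also that convexity of $g$ alone is not enough without some control on $g(0)$: for $g\equiv 1$ the quotient $g(s)/s$ is strictly decreasing on $(0,\infty)$.

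The repair is one line, and worth making explicit. All that is needed is $g(0)\le 0$, which holds precisely because $P_A$ is a sub-distribution. Either decompose $g(s)/s=\bigl(g(s)-g(0)\bigr)/s+g(0)/s$ and observe that the second term is also nondecreasing on each of $(-\infty,0)$ and $(0,\infty)$ when $g(0)\le 0$; or argue directly: for $0<s_1<s_2$ write $s_1=\lambda s_2+(1-\lambda)\cdot 0$ with $\lambda=s_1/s_2$, so convexity gives $g(s_1)\le\lambda g(s_2)+(1-\lambda)g(0)\le\lambda g(s_2)$, hence $g(s_1)/s_1\le g(s_2)/s_2$, and symmetrically for $s_1<s_2<0$ (the inequality flips on division by a negative number). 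With that adjustment your argument covers the stated generality; as written it proves the lemma only for normalized $P_A$.
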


When $P_A$ and $Q_A$ are normalized distributions,
we have $s D_{1+s}(P_A\|Q_A)|_{s=0}=0$.
Hence, the concavity of $s\mapsto s D_{1+s}(P_A\|Q_A)$
implies $\lim_{s \to 0}D_{1+s}(P_A\|Q_A)=D(P_A\|Q_A)$.
Then, Lemma \ref{L22-c1} yields the following lemma.
%similar to Lemma \ref{L11},
\begin{lem}\Label{L22-c}
When $P_A$ and $Q_A$ are normalized distributions,
\begin{align}
D_{1-s}(P_A \|Q_A) \le
D(P_A \|Q_A) & \le 
D_{1+s}(P_A \|Q_A)
\end{align}
for $s> 0$ .
\end{lem}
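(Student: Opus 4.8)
The plan is to obtain both inequalities directly from Lemma~\ref{L22-c1} together with the limiting identity $\lim_{s\to 0}D_{1+s}(P_A\|Q_A)=D(P_A\|Q_A)$ established just above. Lemma~\ref{L22-c1} tells us that $t\mapsto D_{1+t}(P_A\|Q_A)$ is monotonically increasing on $(0,\infty)$ and, separately, on $(-\infty,0)$; I would exploit these two monotonicities on either side of the removable singularity at $t=0$.

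For the right-hand inequality, fix $s>0$. Since $t\mapsto D_{1+t}(P_A\|Q_A)$ is increasing on $(0,s]$ and its limit as $t\downarrow 0$ equals $D(P_A\|Q_A)$, monotonicity gives $D(P_A\|Q_A)=\lim_{t\downarrow 0}D_{1+t}(P_A\|Q_A)\le D_{1+s}(P_A\|Q_A)$. For the left-hand inequality, apply the monotonicity on $(-\infty,0)$: writing $D_{1-s}(P_A\|Q_A)=D_{1+(-s)}(P_A\|Q_A)$ with $-s<0$, the increasing property on $[-s,0)$ yields $D_{1-s}(P_A\|Q_A)\le \lim_{t\uparrow 0}D_{1+t}(P_A\|Q_A)=D(P_A\|Q_A)$, the one-sided limit again coinciding with $D(P_A\|Q_A)$ because the limit computed above is two-sided. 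Concatenating the two bounds gives the claimed chain $D_{1-s}(P_A\|Q_A)\le D(P_A\|Q_A)\le D_{1+s}(P_A\|Q_A)$. Equivalently, one may phrase this as: extending $t\mapsto D_{1+t}(P_A\|Q_A)$ by continuity with value $D(P_A\|Q_A)$ at $t=0$, the extended function is monotonically increasing on the whole line, and the lemma is just the instance of this at $t=-s<0=t<s$.

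The only point requiring a little care is the justification that the one-sided limits of $D_{1+t}(P_A\|Q_A)$ as $t\to 0$ coincide and equal $D(P_A\|Q_A)$; this is exactly what the convexity of $s\mapsto sD_{1+s}(P_A\|Q_A)$ and the normalization $sD_{1+s}(P_A\|Q_A)\big|_{s=0}=0$ provide, so I would simply cite that discussion. A minor side remark handles degenerate supports: if $Q_A(a)=0$ while $P_A(a)>0$ for some $a$, then $D(P_A\|Q_A)=+\infty$ and $D_{1+s}(P_A\|Q_A)=+\infty$ for $s>0$, so the right inequality is trivial while the possibly finite $D_{1-s}(P_A\|Q_A)$ is still $\le+\infty$; no separate treatment is needed. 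I do not anticipate any substantial obstacle here --- the statement is essentially a bookkeeping corollary of the monotonicity already proved in Lemma~\ref{L22-c1}.
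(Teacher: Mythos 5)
Your proposal is correct and follows exactly the paper's route: the paper also derives Lemma \ref{L22-c} by combining the monotonicity of $s\mapsto D_{1+s}(P_A\|Q_A)$ from Lemma \ref{L22-c1} with the fact that $\lim_{s\to 0}D_{1+s}(P_A\|Q_A)=D(P_A\|Q_A)$, which it obtains from $sD_{1+s}(P_A\|Q_A)\big|_{s=0}=0$ and the convexity of $s\mapsto sD_{1+s}(P_A\|Q_A)$. Your additional care about the coincidence of the two one-sided limits and about degenerate supports is sound but does not change the argument.
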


\subsection{Conditional R\'{e}nyi entropy}
\subsubsection{Case of joint sub-distribution}
Next, we prepare the conditional R\'{e}nyi entropy
for a joint sub-distribution $P_{A,E}$ on subsets ${\cal A}$ and ${\cal E}$.
In the following discussion,
the sub-distribution $P_A$ and $P_{A,E}$ is not necessarily normalized, 
and is assumed to satisfy 
the condition $\sum_a P_A(a) \le 1$ or $\sum_{a,e} P_{A,E}(a,e)\le 1$.
For the sub-distributions $P_{A}$ and $P_{A,E}$, we define 
the normalized distributions $P_{A,\normal}$ and $P_{A,E,\normal}$
by $P_{A,\normal}(a):=P_{A}(a)/\sum_{a}P_{A}(a)$ and 
$P_{A,E,\normal}(a,e):=P_{A,E}(a,e)/\sum_{a,e}P_{A,E}(a,e)$.
For a sub-distribution $P_{A,E}$, we define 
the marginal sub-distribution $P_{A}$ on ${\cal A}$
by $P_A(a):= \sum_{e \in {\cal E}} P_{A,E}(a,e)$.
Then, we define the conditional sub-distribution 
$P_{A|E}$ on ${\cal A}$
by $P_{A|E}(a|e):= P_{A,E}(a,e)/P_{E,\normal}(e)$. 
The conditional entropy
is given as
\begin{align*}
H(A|E|P_{A,E}) &:= H(A,E|P_{A,E})-H(E|P_{E,\normal}) .
\end{align*}
When we replace $P_{E,\normal}$ 
by another normalized distribution $Q_E$ on ${\cal E}$,
we can generalize the above quantities.
\begin{align}
H(A|E|P_{A,E}\| Q_E) 
:=& \log |{\cal A}| - D(P_{A,E}\|P_{\mix,\cA} \times Q_{E}) \nonumber \\
=& -\sum_{a,e} P_{A,E}(a,e) \log \frac{P_{A,E}(a,e)}{Q_E(e)}  \nonumber \\
=& H(A|E|P_{A,E}) + D(P_E \|Q_E) \nonumber \\
\ge & H(A|E|P_{A,E}) \Label{12-31-2}, 
\end{align}
where
$P_{\mix,\cA}$ is the uniform distribution on the set that the random variable $A$ takes values in.
By using the R\'{e}nyi relative entropy, 
the conditional R\'{e}nyi entropies 
and the conditional min entropy 
are given in the way relative to $Q_E$ as
\begin{align}
%=& H(A,E|P_{A,E})-H(E|Q_E) \nonumber \\
H_{1+s}(A|E|P_{A,E}\| Q_E) 
:=& \log |{\cal A}| - D_{1+s}(P_{A,E}\|P_{\mix,\cA} \times Q_{E}) \nonumber \\
=& \frac{-1}{s} \log 
\sum_{a,e} P_{A,E}(a,e)^{1+s} Q_E(e)^{-s} ,\nonumber\\
H_{\min}(A|E|P_{A,E}\| Q_E)
:=& - \log \max_{(a,e):Q_E(e)>0} \frac{P_{A,E}(a,e)}{Q_E(e)}.
\end{align}
Applying Lemma \ref{L22-c1}, we obtain the following lemma.
\begin{lem}\Label{L11-1}
The quantity $H_{1+s}(A|E|P_{A,E}\|Q_E)$
is monotonically decreasing for $s$ in $(-\infty,0) \cup (0,\infty)$.
\end{lem}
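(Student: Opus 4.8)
The plan is to obtain Lemma~\ref{L11-1} as an immediate consequence of Lemma~\ref{L22-c1}, after reinterpreting the conditional quantity as an (unconditional) R\'{e}nyi relative entropy on the product space $\cA\times\cE$. First I would set $P := P_{A,E}$ and $Q := P_{\mix,\cA}\times Q_E$, both viewed as sub-distributions on $\cA\times\cE$; here $P$ is a sub-distribution by hypothesis, while $Q$ is in fact a normalized distribution, since $P_{\mix,\cA}$ is uniform on $\cA$ and $Q_E$ is a normalized distribution on $\cE$. With this identification the defining formula for the conditional R\'{e}nyi entropy becomes
\[
H_{1+s}(A|E|P_{A,E}\|Q_E) \;=\; \log|\cA| \;-\; D_{1+s}(P \| Q).
\]

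Next I would apply Lemma~\ref{L22-c1} with this choice of $P$ and $Q$: it asserts that the map $s\mapsto D_{1+s}(P\|Q)$ is monotonically increasing on $(-\infty,0)\cup(0,\infty)$. Since $\log|\cA|$ does not depend on $s$, the map $s\mapsto \log|\cA| - D_{1+s}(P\|Q)$ is monotonically decreasing on the same set, which is precisely the statement of Lemma~\ref{L11-1}.

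There is essentially no obstacle here. The only point deserving a line of care is that Lemma~\ref{L22-c1} must apply to $P = P_{A,E}$, and this is fine because that lemma (which rests on the convexity of $s\mapsto s\,D_{1+s}(P_A\|Q_A)$ recorded just above it) is already stated for sub-distributions, so no normalization of $P_{A,E}$ is required; the product structure of $Q$ plays no role beyond the identification displayed above. Whatever support convention makes Lemma~\ref{L22-c1} valid for the full range $s\in(-\infty,0)\cup(0,\infty)$ carries over verbatim, so the quoted monotonicity transfers directly to $H_{1+s}$.
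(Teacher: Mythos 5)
Your proof is correct and is exactly the paper's argument: the paper obtains Lemma \ref{L11-1} by applying Lemma \ref{L22-c1} to the defining identity $H_{1+s}(A|E|P_{A,E}\|Q_E)=\log|\cA|-D_{1+s}(P_{A,E}\|P_{\mix,\cA}\times Q_E)$, which is precisely your identification of $P$ and $Q$ followed by the sign flip.
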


Since 
$\sum_{e}P_{E,\normal}(e) \sum_{a} P_{A|E}(a|e) P_{A,E}(a,e)^{s}Q_E(e)^{-s} 
\le \max_{a,e:P_{E}(e)>0} P_{A,E}(a,e)^s Q_E(e)^{-s}$ for $s >0$,
we have
\begin{align}
H_{1+s}(A|E|P_{A,E}\|Q_E) \ge H_{\min}(A|E|P_{A,E}\|Q_E) .
\end{align}
Taking the limit,
we obtain the equality
\begin{align}
\lim_{s\to +\infty}H_{1+s}(A|E|P_{A,E}\|Q_E) = H_{\min}(A|E|P_{A,E}\|Q_E) .
\end{align}

Due to (\ref{8-21-7}),
when we apply 
an operation $\Lambda$ on ${\cal E}$, 
it does not act on the system ${\cal A}$.
Then,
\begin{align}
H(A|E|\Lambda(P_{A,E}) \|\Lambda (Q_{E})) & \ge H(A|E|P_{A,E}\|Q_{E}) \\
H_{1+s}(A|E|\Lambda(P_{A,E})\|\Lambda (Q_{E})) & \ge H_{1+s}(A|E|P_{A,E}\|Q_{E}) \Label{8-15-12-c} .
\end{align}
In particular, 
the inequalities
\begin{align}
H(A|E|\Lambda(P_{A,E})) &\ge H(A|E|P_{A,E}) \nonumber 
\end{align}
hold. Conversely,
when we apply the function $f$ to the random number $a \in \cA$,
we have
\begin{align}
H(f(A)|E|P_{A,E}) \le H(A|E|P_{A,E}).\Label{8-14-1-c}
\end{align}

Now, we introduce two kinds of conditional R\'{e}nyi entropies
by specifying $Q_E$.
The first type is defined by substituting $P_{E,\normal}$ into $Q_E$
as follows
\begin{align*}
%-\Tr \rho \log \rho + \Tr \rho_E \log \rho_E \\
H_{1+s}^{\downarrow}(A|E|P_{A,E})
:=& H_{1+s}(A|E|P_{A,E}\|P_{E,\normal}) \\
=&\frac{-1}{s}\log \sum_{e}P_{E,\normal}(e)\sum_{a} P_{A|E}(a|e)^{1+s} \\
H_{\min}^{\downarrow}(A|E|P_{A,E})
:=& H_{\min}(A|E|P_{A,E}\|P_{E,\normal}) \\
=& - \log \max_{(a,e):P_{E,\normal}(e)>0} P_{A|E}(a|e)
\end{align*}
with $s \in \bR \setminus \{0\}$.
Then, as a special case of \eqref{8-15-12-c}, we have
\begin{align}
H_{1+s}^{\downarrow}(A|E|\Lambda(P_{A,E})) &\ge H_{1+s}^{\downarrow}(A|E|P_{A,E}) \Label{8-15-12} 
\end{align}

The second type is defined as
\begin{align}
H_{1+s}^{\uparrow}(A|E|P_{A,E}):=
\max_{Q_E}  H_{1+s}(A|E|P_{A,E}\| Q_E )
\Label{8-26-8-c}
\end{align}
This quantity has another expression as follows.
\begin{lem}\Label{cor1}
A joint sub-distribution $P_{A,E}$ satisfies
the relation
\begin{align}
H_{1+s}^{\uparrow}(A|E|P_{A,E})
=-\frac{1+s}{s}\log \sum_{e} (\sum_{a} P_{A,E}(a,e)^{1+s})^{\frac{1}{1+s}} 
\end{align}
for $s \in [-1,\infty)\setminus \{0\}$.
The maximum in \eqref{8-26-8-c} can be realized when $Q_E(e)= 
(\sum_{a} P_{A,E}(a,e)^{1+s})^{1/(1+s)} /
\sum_e (\sum_{a} P_{A,E}(a,e)^{1+s})^{1/(1+s)}$.
\end{lem}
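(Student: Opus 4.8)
The plan is to turn the maximization over $Q_E$ in \eqref{8-26-8-c} into a single application of H\"older's inequality. Put $f(e):=\sum_{a}P_{A,E}(a,e)^{1+s}$, so that by definition
\[
H_{1+s}(A|E|P_{A,E}\|Q_E)=\frac{-1}{s}\log\sum_{e}f(e)\,Q_E(e)^{-s}.
\]
Since $\log$ is increasing, maximizing the right-hand side over normalized $Q_E$ is the same as minimizing $G(Q_E):=\sum_{e}f(e)Q_E(e)^{-s}$ when $s>0$ (here $-1/s<0$) and as maximizing $G(Q_E)$ when $-1<s<0$ (here $-1/s>0$). I would first discard every $e$ with $f(e)=0$: for $s>0$ such an $e$ must get $Q_E(e)=0$ to keep $G$ finite, for $-1<s<0$ it contributes $0$ to $G$, and in both cases the candidate optimizer of the lemma already vanishes there; on the surviving indices $f(e)>0$ and $Z:=\sum_{e}f(e)^{1/(1+s)}\in(0,\infty)$.

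For $s>0$ I would apply H\"older with the conjugate pair $\bigl(1+s,\tfrac{1+s}{s}\bigr)$ to the factorization $f(e)^{1/(1+s)}=\bigl(f(e)^{1/(1+s)}Q_E(e)^{-s/(1+s)}\bigr)\cdot Q_E(e)^{s/(1+s)}$, which gives $Z\le G(Q_E)^{1/(1+s)}\bigl(\sum_e Q_E(e)\bigr)^{s/(1+s)}=G(Q_E)^{1/(1+s)}$, i.e.\ $G(Q_E)\ge Z^{1+s}$. For $-1<s<0$ I would instead apply H\"older directly to $G(Q_E)=\sum_e f(e)\,Q_E(e)^{-s}$ with the conjugate pair $\bigl(\tfrac{1}{1+s},-\tfrac1s\bigr)$ (both exponents exceed $1$ because $0<1+s<1$ and $0<-s<1$), obtaining $G(Q_E)\le\bigl(\sum_e f(e)^{1/(1+s)}\bigr)^{1+s}\bigl(\sum_e Q_E(e)\bigr)^{-s}=Z^{1+s}$. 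In either regime, equality in H\"older holds exactly when $Q_E(e)$ is proportional to $f(e)^{1/(1+s)}$, i.e.\ for the distribution named in the lemma; substituting the corresponding optimal value of $G$ back gives
\[
H_{1+s}^{\uparrow}(A|E|P_{A,E})=\frac{-1}{s}\log Z^{1+s}=-\frac{1+s}{s}\log\sum_{e}\Bigl(\sum_{a}P_{A,E}(a,e)^{1+s}\Bigr)^{\frac{1}{1+s}},
\]
which is the asserted identity for $s\in(-1,0)\cup(0,\infty)$.

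The endpoint $s=-1$, where $1+s=0$ and the closed form cannot be read off by substitution, I would handle by continuity. For each fixed $Q_E$ the map $s\mapsto H_{1+s}(A|E|P_{A,E}\|Q_E)$ is continuous at $s=-1$, and since the maximum is taken over the compact simplex of distributions $Q_E$ on ${\cal E}$, the left-hand side $H_{1+s}^{\uparrow}(A|E|P_{A,E})$ is continuous there as well. On the right-hand side, writing $t=1+s$ and letting $t\downarrow0$, a short estimate of $\sum_e\bigl(\sum_a P_{A,E}(a,e)^{t}\bigr)^{1/t}$ (the sum is dominated by the indices $e$ maximizing $|\{a:P_{A,E}(a,e)>0\}|$) shows that $-\frac{1+s}{s}\log\sum_e\bigl(\sum_a P_{A,E}(a,e)^{1+s}\bigr)^{1/(1+s)}\to\log\max_e|\{a:P_{A,E}(a,e)>0\}|$, which is precisely $\lim_{s\to-1}H_{1+s}^{\uparrow}(A|E|P_{A,E})$. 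Passing to the limit extends the identity to $s=-1$.

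I expect the H\"older step itself to be painless; the real nuisance is bookkeeping. Both the direction of the H\"older inequality and the sense of the optimization (minimum versus maximum of $G$) reverse across $s=0$, so the ranges $s>0$ and $-1<s<0$ must be carried in parallel, and the endpoint $s=-1$ is a genuine limit rather than an instance of the general formula. Keeping the conjugate exponents and the equality case consistent across both regimes, and making the $s=-1$ limit rigorous, are the only points requiring care.
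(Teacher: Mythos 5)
Your proof is correct and follows essentially the same route as the paper's: the paper also reduces the maximization over $Q_E$ to a single H\"older/reverse-H\"older inequality with the substitution $X(e)=\sum_a P_{A,E}(a,e)^{1+s}$, $Y(e)=Q_E(e)^{-s}$, split into the regimes $s>0$ and $s<0$ with the same optimizer and equality condition. Your explicit handling of the indices with $f(e)=0$ and of the endpoint $s=-1$ by a limit is a small refinement — the paper simply asserts the H\"older inequality for $s\in[-1,0)$, where the exponent $1/(1+s)$ degenerates at $s=-1$ — but it does not change the argument.
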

For reader's convenience, the proof of Lemma \ref{cor1} is given in Appendix \ref{scor1}.
In information theory, we often employ  
Gallager-type \cite{Gal} function \cite{H-tight}:
\begin{align*}
\phi(s|A|E|P_{A,E})
:=&
\log \sum_{e} (\sum_{a} P_{A,E}(a,e)^{1/(1-s)})^{1-s} \\
=&
\log \sum_{e} P_E(e)(\sum_{a} P_{A|E}(a|e)^{1/(1-s)})^{1-s} .
\end{align*}
The quantity $H_{1+s}^{\uparrow}(A|E|P_{A,E})$ can be expressed as
\begin{align*}
H_{1+s}^{\uparrow}(A|E|P_{A,E})
=-\frac{1+s}{s}\phi(\frac{s}{1+s}|A|E|P_{A,E}).
\end{align*}

Although $H_{1+s}^{\uparrow}(A|E|P_{A,E})$
can be lowerly bounded by $H_{1+s}^{\downarrow}(A|E|P_{A,E})$
due to the definition, 
we have the opposite inequality as follows. 

\begin{lem}\Label{cor}
For $s \in [-1,1]\setminus \{0\}$, 
a joint sub-distribution $P_{A,E}$ satisfies
the relation
\begin{align}
%s H_{1+s}^{\downarrow}(A|E|P_{A,E}) \ge -\phi(s|A|E|P_{A,E} )
H_{1+s}^{\downarrow}(A|E|P_{A,E}) \ge H_{\frac{1}{1-s}}^{\uparrow}(A|E|P_{A,E}).
\Label{8-26-8-k}
\end{align}
The equality holds %except for $s=0$
only when 
$P_{A|E=e}$ is uniform distribution for all $e \in {\cal E}$.
\end{lem}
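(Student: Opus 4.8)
The plan is to reduce \eqref{8-26-8-k}, separately for each value $e$ of $E$, to one application of Jensen's inequality for the power function $x\mapsto x^{1-s}$. First I would rewrite everything in terms of genuine (normalized) conditional distributions: put $N:=\sum_{a,e}P_{A,E}(a,e)\le1$ and, for each $e$ with $P_E(e)>0$, let $q^{(e)}_a:=P_{A,E}(a,e)/P_E(e)$, so that $q^{(e)}$ is a probability distribution on $\cA$ with $P_{A,E}(a,e)=N\,P_{E,\normal}(e)\,q^{(e)}_a$ and $P_{A|E}(a|e)=N\,q^{(e)}_a$. Substituting these into the defining formula for $H_{1+s}^{\downarrow}$ and into the closed form of $H^{\uparrow}$ from Lemma~\ref{cor1} (used with R\'{e}nyi parameter $\frac{1}{1-s}$), and pulling the powers of $N$ and of $P_{E,\normal}(e)$ outside the sums, one obtains $H_{1+s}^{\downarrow}(A|E|P_{A,E})=-\frac{1+s}{s}\log N-\frac1s\log S_1$ and $H_{\frac{1}{1-s}}^{\uparrow}(A|E|P_{A,E})=-\frac1s\log N-\frac1s\log S_2$, where $S_1:=\sum_e P_{E,\normal}(e)\sum_a(q^{(e)}_a)^{1+s}$ and $S_2:=\sum_e P_{E,\normal}(e)\bigl(\sum_a(q^{(e)}_a)^{1/(1-s)}\bigr)^{1-s}$. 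Hence $H_{1+s}^{\downarrow}-H_{\frac{1}{1-s}}^{\uparrow}=-\log N+\frac1s\log(S_2/S_1)$; since $-\log N\ge0$, it suffices to prove $\frac1s\log(S_2/S_1)\ge0$, i.e.\ $S_2\ge S_1$ when $s>0$ and $S_2\le S_1$ when $s<0$.

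The heart of the proof is the per-$e$ comparison of $\sum_a(q^{(e)}_a)^{1+s}$ with $\bigl(\sum_a(q^{(e)}_a)^{1/(1-s)}\bigr)^{1-s}$, obtained by applying Jensen's inequality to $f(x)=x^{1-s}$ with weights $\{q^{(e)}_a\}_a$ at the points $\{(q^{(e)}_a)^{1/(1-s)-1}\}_a$. Using $\bigl(\frac{1}{1-s}-1\bigr)(1-s)=s$, the average-of-$f$ side is exactly $\sum_a(q^{(e)}_a)^{1+s}$ and the $f$-of-the-average side is $\bigl(\sum_a(q^{(e)}_a)^{1/(1-s)}\bigr)^{1-s}$. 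For $s\in(0,1)$ we have $1-s\in(0,1)$, so $f$ is concave and Jensen gives $\bigl(\sum_a(q^{(e)}_a)^{1/(1-s)}\bigr)^{1-s}\ge\sum_a(q^{(e)}_a)^{1+s}$; for $s\in[-1,0)$ we have $1-s\in(1,2]$, so $f$ is convex and the inequality reverses (the case $s=-1$ being Cauchy--Schwarz). Averaging against $P_{E,\normal}(e)$ yields $S_2\ge S_1$ for $s>0$ and $S_2\le S_1$ for $s<0$, exactly what was required, which proves \eqref{8-26-8-k}. The endpoint $s=1$, where $\frac{1}{1-s}$ is read as $+\infty$ and $H^{\uparrow}_{\infty}=H^{\uparrow}_{\min}$, follows by letting $s\to1^-$ (both sides being continuous in $s$), or directly from $\sum_a(q^{(e)}_a)^2\le\max_a q^{(e)}_a$.

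For the equality assertion I would use the equality case of Jensen: for $s\in[-1,1]\setminus\{0,1\}$ the map $x\mapsto x^{1-s}$ is strictly concave or strictly convex, so equality in the $e$-th term forces $q^{(e)}_a$ to be constant on $\{a:q^{(e)}_a>0\}$ for every $e$ with $P_E(e)>0$, while $-\log N>0$ unless $N=1$; hence \eqref{8-26-8-k} can be an equality only when $P_{A,E}$ is a normalized distribution and each $P_{A|E=e}$ is uniform on its support. I do not expect a genuine obstacle here; the only points needing care are the exponent identity above and tracking the two sign reversals — convexity versus concavity of $f$, and the sign of $\frac1s$ — so that they combine into one inequality direction valid for all admissible $s$.
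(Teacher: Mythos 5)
Your proof is correct and is essentially the paper's argument in different packaging: the paper proves \eqref{8-26-8-k} by applying the H\"{o}lder inequality (reverse H\"{o}lder for $s<0$) to $X(a)=P_{A,E}(a,e)$ and $Y(a)=(P_{A,E}(a,e)/P_{E,\normal}(e))^{s}$ for each $e$ and absorbing the sub-normalization via $\sum_a P_{A,E}(a,e)/P_{E,\normal}(e)\le 1$, which is exactly the weighted power-mean comparison you obtain from Jensen applied to $x\mapsto x^{1-s}$, with the slack $-\log N\ge 0$ made explicit instead. The equality analysis is likewise the same (uniformity of each $P_{A|E=e}$ on its support plus $N=1$), so no further comment is needed beyond noting that your strict-convexity argument for equality leaves the endpoint $s=1$ to the direct $\sum_a q_a^2\le\max_a q_a$ computation, just as the paper leaves it to the $L^\infty$ form of H\"{o}lder.
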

Although Lemma \ref{cor} can be regarded as a special case of (47) or (48) of \cite{MMH}\footnote{Historically, the earlier version of this paper showed Lemma \ref{cor} at the first time.
Then, the paper \cite{MMH} extended this inequality to the quantum setting.},
we give its proof in Appendix \ref{scor} for reader's convenience
because the proof in \cite{MMH} given in quantum terminology.

\subsubsection{Case of joint normalized distribution}
When $P_{A,E}$ is a joint normalized distribution,
the additional useful properties hold as follows.
In this case, 
since 
$\lim_{s\to 0}s H_{1+s}^{\downarrow}(A|E|P_{A,E})=0$,
we have 
\begin{align}
\lim_{s\to 0}H_{1+s}^{\downarrow}(A|E|P_{A,E}) &=H(A|E|P_{A,E})\\
\end{align}
Hence, 
we define 
$H_{1}^{\downarrow}(A|E|P_{A,E})$ and $H_{1}^{\uparrow}(A|E|P_{A,E})$ 
to be $H(A|E|P_{A,E})$.
Further, 
applying Lemma \ref{L22-c}, we obtain the following lemma.
\begin{lem}\Label{L11}
When $P_{A,E}$ and $Q_E$ are normalized distributions,
\begin{align}
H_{1-s}(A|E|P_{A,E}\|Q_E) \ge H(A|E|P_{A,E}\|Q_E) 
\ge H_{1+s}(A|E|P_{A,E}\|Q_E)
\end{align}
for $s>0$.
\end{lem}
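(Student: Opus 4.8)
The plan is to deduce Lemma \ref{L11} directly from Lemma \ref{L22-c} by applying the latter on the product alphabet $\cA\times\cE$. First I would unfold the definitions: by construction
\[
H_{1+s}(A|E|P_{A,E}\|Q_E)=\log|\cA|-D_{1+s}(P_{A,E}\|P_{\mix,\cA}\times Q_E),
\qquad
H(A|E|P_{A,E}\|Q_E)=\log|\cA|-D(P_{A,E}\|P_{\mix,\cA}\times Q_E),
\]
so that, after cancelling the common offset $\log|\cA|$, the asserted chain of inequalities is equivalent to
\[
D_{1-s}(P_{A,E}\|P_{\mix,\cA}\times Q_E)\ \le\ D(P_{A,E}\|P_{\mix,\cA}\times Q_E)\ \le\ D_{1+s}(P_{A,E}\|P_{\mix,\cA}\times Q_E).
\]

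Next I would check that both arguments are genuine normalized distributions on the single finite set $\cA\times\cE$: the joint distribution $P_{A,E}$ is normalized by hypothesis, while $P_{\mix,\cA}\times Q_E$ is the product of the uniform distribution on $\cA$ with the normalized distribution $Q_E$ and is therefore normalized as well. Consequently Lemma \ref{L22-c} applies verbatim with the substitutions $P_A\mapsto P_{A,E}$ and $Q_A\mapsto P_{\mix,\cA}\times Q_E$, and this is exactly the displayed two-sided bound; the lemma follows.

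I do not expect a genuine obstacle: the entire content is already packaged in Lemma \ref{L22-c1} (monotonicity of $s\mapsto D_{1+s}$ on $(-\infty,0)\cup(0,\infty)$) together with the limiting identity $\lim_{s\to 0}D_{1+s}=D$ valid for normalized distributions. The only step requiring a little care is the bookkeeping: one must observe that the conditional R\'{e}nyi entropies relative to $Q_E$ are defined precisely so that the $\log|\cA|$ terms cancel and the overall sign is flipped, which turns ``$D_{1+s}$ is increasing in $s$'' into ``$H_{1+s}(A|E|P_{A,E}\|Q_E)$ is decreasing in $s$'' and thereby produces the sandwich around the Shannon conditional quantity $H(A|E|P_{A,E}\|Q_E)$.
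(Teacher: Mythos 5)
Your proposal is correct and is exactly the paper's argument: the paper derives Lemma \ref{L11} by applying Lemma \ref{L22-c} to the normalized pair $P_{A,E}$ and $P_{\mix,\cA}\times Q_E$ on $\cA\times\cE$, with the $\log|\cA|$ offset and sign flip converting the monotonicity of $D_{1+s}$ into the stated sandwich. No gaps; the bookkeeping you describe is all that is needed.
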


Similar properties hold for $H_{1+s}^{\uparrow}(A|E|P_{A,E})$ as follows.
\begin{lem}\Label{L7-1}
\begin{align}
\lim_{s\to 0}
H_{1+s}^{\uparrow}(A|E|P_{A,E})
=H(A|E|P_{A,E}).\Label{1-5-1}
\end{align}
The map $s \mapsto s H_{1+s}^{\uparrow}(A|E|P_{A,E})$ is concave and then
the map $s \mapsto H_{1+s}^{\uparrow}(A|E|P_{A,E})$ 
is monotonically decreasing for $s \in (-1,\infty)$.
In particular, when $P_{A|E=e}$ is not a uniform distribution for an element $e \in {\cal E}$,
the map $s \mapsto s H_{1+s}^{\uparrow}(A|E|P_{A,E})$ is strictly concave and then
the map $s \mapsto H_{1+s}^{\uparrow}(A|E|P_{A,E})$ 
is strictly monotonically decreasing for $s \in (-1,\infty)$.
\end{lem}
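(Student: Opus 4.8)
The plan is to reduce every assertion to properties of the Gallager‑type function $\phi(\cdot|A|E|P_{A,E})$ through the identity $H_{1+s}^{\uparrow}(A|E|P_{A,E})=-\frac{1+s}{s}\phi(\frac{s}{1+s}|A|E|P_{A,E})$ supplied by Lemma~\ref{cor1}. Writing $k(s):=s\,H_{1+s}^{\uparrow}(A|E|P_{A,E})=-(1+s)\phi(\frac{s}{1+s}|A|E|P_{A,E})$ on $(-1,\infty)$ (the right-hand side being read at $s=0$ as its smooth limit), the first ingredient I need is that $t\mapsto\phi(t|A|E|P_{A,E})$ is convex on $(-\infty,1)$. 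This is the classical concavity of Gallager's $E_0$-function (\cite{Gal}, see also \cite{H-tight}); I would give a self-contained argument from the interpolation inequality for $\ell_p$-norms, which makes each summand $(\sum_a P_{A,E}(a,e)^{1/(1-t)})^{1-t}$ log-convex in $t$, combined with the fact that a sum of log-convex functions is log-convex. Granting this, $s\mapsto(1+s)\phi(\frac{s}{1+s})$ is the restriction to the line $\{(x,y)=(s,1+s):s>-1\}$ of the perspective $(x,y)\mapsto y\,\phi(x/y)$ of a convex function, hence convex, so $k$ is concave on $(-1,\infty)$. Concretely the same fact appears as $k''(s)=-\phi''(\frac{s}{1+s})/(1+s)^3$, which I will also use below.

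For the limit \eqref{1-5-1}, normalization of $P_{A,E}$ gives $\phi(0|A|E|P_{A,E})=\log\sum_{a,e}P_{A,E}(a,e)=0$, so $k(0)=0$ and $k$ is smooth near $0$; hence $\lim_{s\to0}H_{1+s}^{\uparrow}(A|E|P_{A,E})=\lim_{s\to0}k(s)/s=k'(0)$. Differentiating $k(s)=-(1+s)\phi(\frac{s}{1+s})$ gives $k'(0)=-\phi(0|A|E|P_{A,E})-\frac{d}{dt}\phi(t|A|E|P_{A,E})\big|_{t=0}$, and differentiating $\sum_e(\sum_a P_{A,E}(a,e)^{1/(1-t)})^{1-t}$ at $t=0$ yields $\frac{d}{dt}\phi(t|A|E|P_{A,E})\big|_{t=0}=H(E|P_{E,\normal})-H(A,E|P_{A,E})=-H(A|E|P_{A,E})$, so $k'(0)=H(A|E|P_{A,E})$. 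Monotonicity then follows from the elementary secant-slope fact: if $k$ is concave on an interval containing $0$ with $k(0)=0$, then $s\mapsto k(s)/s$ is non-increasing on that interval punctured at $0$; applying this to $k$ gives that $s\mapsto H_{1+s}^{\uparrow}(A|E|P_{A,E})=k(s)/s$ is monotonically decreasing on $(-1,\infty)$, its value at $s=0$ being $H(A|E|P_{A,E})$ by the computation just made.

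For the strict statements, $k''(s)=-\phi''(\frac{s}{1+s})/(1+s)^3$ shows $k$ is strictly concave on $(-1,\infty)$ iff $\phi''>0$ there, and the strict version of the secant-slope comparison then upgrades monotonicity of $k(s)/s$ to strict monotonicity; so everything reduces to showing $t\mapsto\phi(t|A|E|P_{A,E})$ is strictly convex when $P_{A|E=e}$ is non-uniform for some $e$. Since $\phi$ is the logarithm of a sum of the log-convex functions $g_e(t)=(\sum_a P_{A,E}(a,e)^{1/(1-t)})^{1-t}$, it can be affine on an interval only in the degenerate case where every $g_e$ is log-affine with a common exponent, i.e.\ (tracking the equality case of the interpolation inequality) every $P_{A|E=e}$ is flat on its support and all those supports have the same cardinality --- which, under the convention that $P_{A,E}$ has full support, is exactly ``$P_{A|E=e}$ uniform for all $e$'', consistently with the equality case of Lemma~\ref{cor}. \textbf{The main obstacle} is precisely this identification of the degeneracy: establishing the strict form of the interpolation inequality and of log-convexity of sums, and reconciling the result with the lemma's hypothesis and the intended convention on the alphabet of $A$. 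The concavity and the limit, by contrast, are routine once the reduction to $\phi$ via Lemma~\ref{cor1} is in hand.
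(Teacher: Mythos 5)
Your proposal is correct, and it reaches the concavity by a genuinely different route from the paper. The paper's proof (Appendix \ref{pL7-1} together with Lemma \ref{L3-19-1}) differentiates $s\mapsto sH_{1+s}^{\uparrow}(A|E|P_{A,E})$ twice explicitly and identifies $-\frac{d^2}{ds^2}sH_{1+s}^{\uparrow}$ as $(1+s)$ times the variance of the random variable $\frac{1}{1+s}\log P_{A|E}(a|e)-\frac{1}{(1+s)^2}\log\bigl(\sum_a P_{A|E}(a|e)^{1+s}\bigr)$ under the tilted distribution $P_{A,E;s}$; non-negativity of variance gives concavity, and strict positivity of the variance gives the strict statements. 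You instead establish convexity of $t\mapsto\phi(t|A|E|P_{A,E})$ structurally (log-convexity of each summand via H\"{o}lder/interpolation, then log-convexity of the sum) and transfer it to $k(s)=-(1+s)\phi(\frac{s}{1+s})$ through the perspective transformation; your identity $k''(s)=-\phi''(\frac{s}{1+s})/(1+s)^3$ checks out and makes the two arguments visibly equivalent, since $\phi''$ is exactly the variance the paper computes. The limit \eqref{1-5-1} is handled the same way in both (via $\phi(0)=0$ and $\phi'(0)=-H(A|E|P_{A,E})$), and the secant-slope step from concavity of $k$ with $k(0)=0$ to monotonicity of $k(s)/s$ is the step the paper leaves implicit ("is strictly concave and then \dots is strictly monotonically decreasing"), so spelling it out is a small improvement. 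What your approach buys is that it avoids the lengthy and error-prone second-derivative computation of Lemma \ref{L3-19-1}; what it costs is that you must supply the strict equality analysis of H\"{o}lder yourself for the strict case. On that point, the "obstacle" you flag is not a gap relative to the paper: the degeneracy you identify (every $P_{A|E=e}$ flat on its support with supports of equal cardinality) is precisely the vanishing-variance case in the paper's own formula, and both treatments need the same full-support convention to reconcile "uniform on its support" with the lemma's phrase "uniform distribution"; your handling is, if anything, the more careful of the two.
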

Lemma \ref{L7-1} will be shown in Appendix \ref{pL7-1}.

Hence, we define $H_{1}^{\uparrow}(A|E|P_{A,E})$ to be $H(A|E|P_{A,E})$.
Then, the relations (\ref{1-5-1}) and (\ref{8-26-8-c}) 
hold even with $s=0$.

\begin{rem}
Iwamoto and Shikata \cite{MS13} discussed conditional R\'{e}nyi entropies in the different notations.
They denote $H_{1+s}^{\downarrow}(A|E|P_{A,E})$ by $R_{1+s}^{\mathsf{H}}(A|E)$ and $H_{1+s}^{\uparrow}(A|E|P_{A,E})$ by $R_{1+s}^{\mathsf{A}}(A|E)$.
They also compare these with other conditional R\'{e}nyi entropies.
Muller-Lennert et al \cite{MDSFT} denoted $H_{1+s}^{\uparrow}(A|E|P_{A,E})$ by $H_{1+s}^{\downarrow}(P_{A,E}|E)$ in the quantum setting.
Iwamoto and Shikata \cite{MS13} pointed out that these quantities do not satisfy the chain rule.
Instead, Muller-Lennert et al \cite[Proposition 7]{MDSFT} showed the inequality 
$H_{1+s}^{\uparrow}(A|E,E'|P_{A,E,E'}) \ge H_{1+s}^{\uparrow}(A,E'|E|P_{A,E,E'})-\log |{\cal E}'|$ for $s \in (-1,\infty)$.
Also, the paper \cite[Corollary 77]{HM} shows the inequality
$H_{1+s(1-s)}(A|E|P_{A,E,E'}) \ge H_{1+s}^{\downarrow}(A,E|P_{A,E,E'})-\log |{\cal E}|$ for $s \in [0,1)$.
\end{rem}

\subsection{Criteria for secret random numbers}\Label{cqs1-2}
\subsubsection{Case of joint sub-distribution}
Next, we introduce criteria for the amount of the information leaked from the secret random number $A$ to $E$ for joint sub-distribution $P_{A,E}$.
Using the $\ell_1$ norm, we can evaluate the secrecy for the state $P_{A,E}$ as follows:
\begin{align}
d_1(A|E|P_{A,E} ):=\| P_{A,E} - P_A \times P_{E} \|_1.
\end{align}
%That is,
%the secrecy is measured by the difference between the true sub-distribution $P_{A,E}$  and the ideal sub-distribution $P_A \times P_{E}$.
Taking into account the randomness, 
Renner \cite{Renner} employed the $L_1$ distinguishability criteria for security of the secret random number $A$:
\begin{align}
d_1'(A|E|P_{A,E}):=
\| P_{A,E} - P_{\mix,\cA} \times P_{E} \|_1,
\end{align}
which can be regarded as the difference between the true sub-distribution
$P_{A,E}$ and the ideal sub-distribution $P_{\mix,\cA} \times P_{E}$. 
It is known that the quantity is universally composable \cite{R-K}.
%Hence, we call it the {\it universal composability criterion}.

Renner\cite{Renner} defined the conditional $L_2$-distance from uniform of $P_{A,E}$
relative to a distribution $Q_E$ on ${\cal E}$:
\begin{align*}
&{d_{2}}(A|E|P_{A,E}\| Q_E )\\
:=&
\sum_{a,e}
(P_{A,E}(a,e)-P_{\mix,\cA}(a)P_E(e))^2 Q_E(e)^{-1} \\
=&
\sum_{a,e}
P_{A,E}(a,e)^2 Q_E(e)^{-1}
- \frac{1}{|{\cal A}|} \sum_{e} P_E(e)^2 Q_E(e)^{-1} \\
=&
e^{-H_{2}(A|E|P_{A,E}\| Q_E )}
- \frac{1}{|{\cal A}|} e^{D_2(P_{A}\| Q_E )}.
\end{align*}
Using this value
and a normalized distribution $Q_E$, 
we can evaluate $d_1'(A|E|P_{A,E} )$ 
as follows \cite[Lemma 5.2.3]{Renner}:
\begin{align}
d_1'(A|E|P_{A,E} )
\le
\sqrt{|{\cal A}|}
\sqrt{{d_{2}}(A|E|P_{A,E}\| Q_E )}.
\end{align}

\subsubsection{Case of joint normalized distribution}
In the remaining part of this subsection, we assume that
$ P_{A,E}$ is a normalized distribution.
The correlation between $A$ and $E$ can be evaluated by the mutual information
\begin{align}
I(A:E|P_{A,E}) &:= D( P_{A,E} \| P_{A}\times P_{E}) .
\end{align}
By using the uniform distribution $P_{\mix,\cA}$ on ${\cal A}$,
Csisz\'{a}r and Narayan \cite{CN04} 
modified the mutual information to 
\begin{align}
I'(A|E|P_{A,E}) &:= D( P_{A,E} \| P_{\mix,\cA} \times P_{E}) ,
\end{align}
which is called the modified mutual information \cite{H-cq,H-q2} and satisfies
\begin{align}
I'(A|E|P_{A,E}) 
= I(A:E|P_{A,E}) +
D(P_A\|P_{\mix,\cA} )
\end{align}
and
\begin{align}
H(A|E|P_{A,E} ) = -I'(A|E|P_{A,E}) +\log |{\cal A}| .
\end{align}
Indeed, the quantity $I(A:E|P_{A,E})$ 
represents the amount of information leaked by $E$,
and the remaining quantity $D(P_A\|P_{\mix,\cA} )$
describes the difference of the random number $A$ from the uniform random number.
So, if the quantity $I'(A|E|P_{A,E})$ is small,
we can conclude that the random number $A$ has less correlation with $E$ 
and is close to the uniform random number.

Indeed, it is natural to adopt a quantity expressing the difference between the true distribution and the ideal distribution 
$P_{\mix,\cA} \times P_{E}$ as a security criterion.
However, there are several quantities expressing the difference between two distributions.
Both $d_1'(A|E|P)$ and $I'(A|E|P)$
are characterized in this way.
Here, we show that the modified mutual criterion $I'(A|E|P)$ can be derived in a more natural way in the following sense.

It is natural assume the following condition for the security criterion $C(A;E|P)$
as well as the the permutation invariance on ${\cal A}$ and ${\cal E}$. 
\begin{description}
\item[\bf C1]{\bf Chain rule}
$C(A,B|E|P)=C(B|E|P)+C(A|B,E|P)$.

\item[\bf C2]{\bf Linearity}
When the supports of two marginal distributions $P_{E,1}$ and $P_{E,2}$ are disjoint as subsets of ${\cal E}$,
$C(A|E|\lambda P_1+(1-\lambda) P_2)=\lambda C(A|E|P_1)+(1-\lambda)C(A|E|P_2)$.

%In general, $C(A;E|\lambda P_1+(1-\lambda) P_2) \le \lambda C(A;E|P_1)+(1-\lambda)C(A;E|P_2)$.

\item[\bf C3]{\bf Range}
$ \log |\cA| \ge C(A|E|P) \ge 0$.

\item[\bf C4]{\bf Ideal case}
$C(A|E|P_{\mix,\cA} \otimes P_E)=0$.

\item[\bf C5]{\bf Normalization}
$C(A|E||a\rangle \langle a|\otimes P_E)=\log |\cA|$.
\end{description}
Unfortunately, 
the $L_1$ distinguishability does not satisfies {\bf C1} Chain rule.
However, 
we have the following theorem.
\begin{thm}\Label{l8-24-1}
$C(A|E|P)$ satisfies all of the above properties 
if and only if
$C(A|E|P)$ coincides with
the modified mutual information criterion $I'(A|E|P)= \log |\cA| -H(A|E|P)$.
\end{thm}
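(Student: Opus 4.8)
The plan is to prove the two implications separately. The ``only if'' direction is a direct verification. Writing $I'(A|E|P)=\log|\cA|-H(A|E|P)$, property \textbf{C1} is exactly the chain rule $H(A,B|E|P)=H(B|E|P)+H(A|B,E|P)$ for the conditional entropy combined with $\log|\cA\times\cB|=\log|\cA|+\log|\cB|$; property \textbf{C2} holds because, when the $\cE$-marginals of $P_1$ and $P_2$ are disjointly supported, the sum defining $D(\,\cdot\,\|P_{\mix,\cA}\times P_E)$ splits over the two supports while the reference $P_{\mix,\cA}\times(\lambda P_{E,1}+(1-\lambda)P_{E,2})$ restricts to $\lambda$, resp.\ $(1-\lambda)$, times the respective reference, yielding affinity; \textbf{C3} is $0\le H(A|E|P)\le\log|\cA|$; \textbf{C4} is $D(Q\|Q)=0$; and \textbf{C5} is $H(A|E|\,\cdot\,)=0$ for a point mass on $\cA$. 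Permutation invariance is immediate.

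For the converse, let $C$ satisfy all six properties. The first step is a reduction. Writing $P_{A,E}=\sum_{e}P_E(e)\,\delta_{e}\otimes P_{A|E=e}$ and peeling off the values of $e$ one at a time (after renormalization each remaining piece has $\cE$-support disjoint from the point just removed), a finite iteration of \textbf{C2} gives $C(A|E|P_{A,E})=\sum_{e}P_E(e)\,C(A|E|\delta_{e}\otimes P_{A|E=e})$. By permutation invariance $C(A|E|\delta_{e}\otimes Q)$ depends neither on $e$ nor (one checks that a deterministic conditioning coordinate may be discarded) on $\cE$; denote it $g_{\cA}(Q)$. Since $H(A|E|P_{A,E})=\sum_{e}P_E(e)\,H(P_{A|E=e})$, the theorem reduces to the claim $g_{\cA}(Q)=\log|\cA|-H(Q)$ for every finite $\cA$ and every distribution $Q$ on it.

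Set $h_{\cA}(Q):=\log|\cA|-g_{\cA}(Q)$. Applying \textbf{C1} with a one-point $\cE$ to the pair $(A,B)$ and then \textbf{C2} to the conditioning on $B$ yields the Shannon branching identity $h_{\cA\times\cB}(Q_{A,B})=h_{\cB}(Q_B)+\sum_{b}Q_B(b)\,h_{\cA}(Q_{A|B=b})$, while \textbf{C3}, \textbf{C4}, \textbf{C5} give $0\le h_{\cA}(Q)\le\log|\cA|$, $h_{\cA}(P_{\mix,\cA})=\log|\cA|$, and $h_{\cA}(\delta_a)=0$. Feeding a rational distribution $Q=(k_1/N,\dots,k_m/N)$ into the branching identity with $\cB=[m]$, $Q_B=Q$, and $Q_{A|B=i}$ uniform on the $i$-th block of a partition of $[N]$ turns the left-hand side into $h$ of a uniform distribution, which equals $\log N$; provided one also knows that $h$ is expansible (adjoining a zero-probability symbol does not change it) this forces $h_{\cA}(Q)=-\sum_i(k_i/N)\log(k_i/N)=H(Q)$ on all rational $Q$, and then the bound $0\le h_{\cA}\le\log|\cA|$ from \textbf{C3} upgrades this to all $Q$. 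Hence $h_{\cA}=H$, i.e.\ $g_{\cA}(Q)=\log|\cA|-H(Q)$, and combining with the first step $C(A|E|P)=\log|\cA|-H(A|E|P)=I'(A|E|P)$.

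The two delicate points — and the real content of the argument — are: (i) extracting enough regularity to pass from rational to arbitrary distributions without assuming continuity of $C$, which is where the boundedness \textbf{C3} must play the role that continuity plays in Faddeev's classical characterization of Shannon entropy (one has to argue that a bounded symmetric solution of the fundamental equation of information theory is the usual entropy); and (ii) the alphabet-size bookkeeping, since $C$, unlike $H$, genuinely depends on $|\cA|$ through \textbf{C5}, so one must track the $\log|\cA|$ terms carefully and, in particular, derive expansibility of $h$ from the axioms. A comparison with the known uniqueness theorems for Shannon entropy (Faddeev and its measurability/boundedness refinements) is the natural tool for both.
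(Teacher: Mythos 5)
Your proof follows essentially the same route as the paper's: the forward direction is the same direct verification of \textbf{C1}--\textbf{C5}, and for the converse the paper likewise uses \textbf{C2} to reduce the functional $\tilde H(A|E|P):=\log |\cA| - C(A|E|P)$ to an unconditional quantity on the conditional distributions $P_{A|E=e}$ and then appeals to Khinchin's axiomatic characterization of Shannon entropy. If anything you are more careful than the paper, which invokes Khinchin's axioms even though those include a continuity hypothesis that is not among \textbf{C1}--\textbf{C5}; your observation that the boundedness in \textbf{C3} must stand in for continuity (via the Faddeev-type uniqueness theorems for bounded solutions of the fundamental equation of information) and that expansibility must be derived rather than assumed addresses exactly the points that the paper's one-line appeal to Khinchin glosses over.
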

For a proof, see Appendix \ref{s8-24}. 
Hence, it is natural to adopt
the modified mutual information criterion $I'(A|E|P)$ as a security criterion.
In particular, 
if one emphasizes {\bf C1} Chain rule rather than the universal composability,
it is better to employ the modified mutual information criterion $I'(A|E|P)$.

In particular,
if the quantity $I'(A|E|P_{A,E})$ goes to zero,
$d_1'(A|E|P_{A,E} )$ also goes to zero as follows.
Using Pinsker inequality,
we obtain
\begin{align}
%\| \rho -\frac{1}{|{\cal A}|}I \otimes \rho_E) \|_1^2 \le
d_1(A|E|P_{A,E} )^2  &\le 2 I(A|E|P_{A,E}) \Label{8-19-14-a} \\
d_1'(A|E|P_{A,E} )^2 &\le 2 I'(A|E|P_{A,E}).\Label{8-19-14}
\end{align}
Conversely, 
we can evaluate $I(A:E|P_{A,E})$ and $I'(A|E|P_{A,E})$ by using $d_1(A|E|P_{A,E} )$ and $d_1'(A|E|P_{A,E} )$
in the following way.
Applying the Fannes inequality, we obtain
\begin{align}
0 \le & I(A:E|P_{A,E}) = H(A| P_{A})+ H(E| P_{E})- H(A,E|P_{A,E}) \nonumber \\
= & H(A,E|P_A \times P_{E} )- H(A,E|P_{A,E}) \nonumber \\
= & \sum_a P_A(a)  H(E|P_{E} )- H(E|P_{E|A=a}) \nonumber \\
\le & \sum_a P_A(a)  \eta ( \| P_{E|A=a} -  P_{E} \|_1,\log |{\cal E}| ) \nonumber \\
= & \eta (  \| P_{E,A} - P_A \times P_{E} \|_1,\log |{\cal E}|) \nonumber \\
=& \eta( d_1(A|E|P_{A,E} ),\log |{\cal E}|),
\Label{8-26-9-a}
\end{align}
where $\eta(x,y):= - x \log x +xy$.
Similarly, we obtain
\begin{align}
& 0 \le I'(A|E|P_{A,E}) \nonumber \\
=& H(A| P_{\mix,\cA} ) + H(E| P_{E} )- H(A,E|P_{A,E}) \nonumber \\
=& H(A,E|P_{\mix,\cA} \times P_{E} )- H(A,E|P_{A,E}) \nonumber\\
=& \sum_e P_E(e) (H(A|P_{\mix,\cA} )- H(A|P_{A|E=e}) ) \nonumber\\
\le & \sum_e P_E(e) ( \| P_{\mix,\cA} - H(A|P_{A|E=e}) \|_1 ,
\log |{\cal A}| ) \nonumber\\
\le & \eta( \| P_{\mix,\cA} \times P_{E}- P_{A,E} \|_1, \log |{\cal A}| )\nonumber \\
=& \eta( d_1'(A|E|P_{A,E} ), \log |{\cal A}| ).
\Label{8-26-9}
\end{align}

\section{Random Hash functions}\Label{cqs3}
\subsection{General random hash functions}
In this section, we focus on a random function $f_{\bX}$ from ${\cal A}$ to ${\cal B}$,
where $\bX$ is a random variable identifying the function $f_{\bX}$.
In this case, the total information of Eve's system is written as $(E,\bX)$.
Then, 
by using 
$P_{f_{\bX}(A),E,\bX}(b,e,x):=\sum_{  a\in f_{\bX}^{-1}(b) }P_{A,E}(a,e)P_{\bX}(x)$,
the $L_1$ distinguishability criterion
is written as
\begin{align}
& d_1'(f_{\bX}(A)|E,\bX|P_{f_{\bX}(A),E,\bX} )\nonumber \\
=&
\|P_{f_{\bX}(A),E,\bX}- P_{\mix,\cB}\times P_{E,\bX}\|_1\nonumber \\
=&
\sum_{x} P_{\bX}(x)
\|P_{f_{\bX=x}(A),E}- P_{\mix,\cB}\times P_{E}\|_1\nonumber \\
=&
\rE_{\bX}
\|P_{f_{\bX}(A),E}- P_{\mix,\cB}\times P_{E}\|_1 .
\end{align}
Also, the modified mutual information is written as
\begin{align}
&I'(f_{\bX}(A)|E,\bX|P_{f_{\bX}(A),E,\bX} )\nonumber \\
=&
D(P_{f_{\bX}(A),E,\bX}\| P_{\mix,\cB}\times P_{E,\bX}) \nonumber\\
=&
\sum_{x} P_{\bX}(x)
D(P_{f_{\bX=x}(A),E,\bX}\|P_{\mix,\cB}\times P_{E}) \nonumber\\
=&
\rE_{\bX}
D(P_{f_{\bX}(A),E,\bX}\| P_{\mix,\cB}\times P_{E}) .
\end{align}

We say that a random function $f_{\bX}$ is {\it $\varepsilon$-almost universal$_2$} \cite{Carter,WC81,Tsuru}, 
if, for any pair of different inputs $a_1$,$a_2$, 
the collision probability of their outputs is upper bounded as
\begin{equation}
{\rm Pr}\left[f_{\bX}(a_1)=f_{\bX}(a_2)\right]
\le \frac{\varepsilon}{|{\cal B}|}.
\Label{eq:def-universal-2}
\end{equation}
The parameter $\varepsilon$ appearing in (\ref{eq:def-universal-2}) is shown to be confined in the region
\begin{equation}
\varepsilon\ge\frac{|\cA|-|{\cal B}|}{|\cA|-1},
\Label{eq:epsilon-lower-bound}
\end{equation}
and in particular, 
a random function $f_{\bX}$ with $\varepsilon=1$ is simply called a {\it universal$_2$} function.

Two important examples of universal$_2$ hash function
are the Toeplitz matrices (see, e.g., \cite{MNP90}), and multiplications over a finite field (see, e.g., \cite{Carter,BBCM}).
A modified form of the Toeplitz matrices is also shown to be universal$_2$, which is given by a concatenation $(X, I)$ of the Toeplitz matrix $X$ and the identity matrix $I$ \cite{H-leaked}.
The (modified) Toeplitz matrices are particularly useful in practice, because there exists an efficient multiplication algorithm using the fast Fourier transform algorithm with complexity $O(n\log n)$ (see, e.g., \cite{MatrixTextbook}).

The following proposition holds for any {\it universal$_2$} function.
\begin{proposition}[{Renner \cite[Lemma 5.4.3]{Renner}}]\Label{Lem5}
Given any joint sub-distribution $P_{A,E}$ on $\cA \times \cE$
and any normalized distribution $Q_E$ on $\cE$,
any universal$_2$ hash function $f_{\bX}$ 
from $\cA$ to $\cM:=\{1, \ldots, \sM\}$
satisfies
\begin{align}
\rE_{\bX} {d_{2}}(f_{\bX}(A)|E|P_{A,E}\|Q_E)
\le
%(\Tr \rho_{A,E})
%\frac{1}{\sM}
e^{-H_{2}(A|E|P_{A,E}\|Q_E)}.
\end{align}
More precisely, the inequality
\begin{align}
& \rE_{\bX} 
e^{-H_{2}(f_{\bX}(A)|E|P_{A,E} \| Q_E)} 
\nonumber \\
\le &
(1-\frac{1}{\sM} )e^{-H_{2}(A|E|P_{A,E} \| Q_E)} 
 +
\frac{1}{\sM} 
e^{D_2(P_{E} \| Q_E)} 
\end{align}
holds.
\end{proposition}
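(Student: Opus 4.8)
The plan is to prove the more precise inequality directly, since the first inequality follows from it by discarding the nonnegative term $\frac{1}{\sM}e^{D_2(P_E\|Q_E)}$ and using the identity
\begin{align}
{d_2}(f_{\bX}(A)|E|P_{A,E}\|Q_E)
= e^{-H_2(f_{\bX}(A)|E|P_{A,E}\|Q_E)} - \frac{1}{\sM}e^{D_2(P_{f_{\bX}(A)}\|Q_E)}
\end{align}
together with the observation that $P_{f_{\bX}(A)}$ has the same total mass as $P_A$, so after taking $\rE_{\bX}$ one has $\rE_{\bX} e^{D_2(P_{f_{\bX}(A)}\|Q_E)} \ge e^{D_2(P_A\|Q_E)}$ by convexity — actually, to keep things clean I would just note $\rE_{\bX}{d_2}(f_{\bX}(A)|E|P_{A,E}\|Q_E) \le \rE_{\bX} e^{-H_2(f_{\bX}(A)|E|P_{A,E}\|Q_E)} - \frac{1}{\sM}e^{D_2(P_A\|Q_E)}$ is false in general, so I will simply derive the precise bound and then drop the last term for the first claim.

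The core computation: write out $\rE_{\bX} e^{-H_2(f_{\bX}(A)|E|P_{A,E}\|Q_E)} = \rE_{\bX}\sum_{b,e} P_{f_{\bX}(A),E}(b,e)^2 Q_E(e)^{-1}$. Expanding $P_{f_{\bX}(A),E}(b,e) = \sum_{a\in f_{\bX}^{-1}(b)} P_{A,E}(a,e)$, the square becomes $\sum_{a,a': f_{\bX}(a)=f_{\bX}(a')=b} P_{A,E}(a,e)P_{A,E}(a',e)$. Summing over $b$ removes the constraint "$=b$" and leaves the single constraint $f_{\bX}(a)=f_{\bX}(a')$; then take $\rE_{\bX}$ inside. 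Split the double sum over $(a,a')$ into the diagonal $a=a'$, where $\Pr[f_{\bX}(a)=f_{\bX}(a)]=1$, and the off-diagonal $a\neq a'$, where the universal$_2$ property \eqref{eq:def-universal-2} (with $\varepsilon=1$) gives $\Pr[f_{\bX}(a)=f_{\bX}(a')] \le \frac{1}{\sM}$. This yields
\begin{align}
\rE_{\bX} e^{-H_2(f_{\bX}(A)|E|P_{A,E}\|Q_E)}
\le \sum_e Q_E(e)^{-1}\Big( \sum_a P_{A,E}(a,e)^2 + \frac{1}{\sM}\sum_{a\neq a'} P_{A,E}(a,e)P_{A,E}(a',e)\Big).
\end{align}
Now write $\sum_{a\neq a'} = \big(\sum_a P_{A,E}(a,e)\big)^2 - \sum_a P_{A,E}(a,e)^2 = P_E(e)^2 - \sum_a P_{A,E}(a,e)^2$, substitute, and collect the coefficient of $\sum_a P_{A,E}(a,e)^2$, which becomes $1 - \frac{1}{\sM}$. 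Summing over $e$ and recognizing $\sum_{a,e} P_{A,E}(a,e)^2 Q_E(e)^{-1} = e^{-H_2(A|E|P_{A,E}\|Q_E)}$ and $\sum_e P_E(e)^2 Q_E(e)^{-1} = e^{D_2(P_E\|Q_E)}$ gives exactly the stated precise inequality.

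The only genuinely delicate point is the bookkeeping in the off-diagonal step: one must be careful that the universal$_2$ bound $\Pr[f_{\bX}(a_1)=f_{\bX}(a_2)]\le 1/\sM$ holds for all \emph{distinct} pairs but not for $a=a'$, so the diagonal must be separated out before applying it, and one must verify that after substituting $\sum_{a\neq a'}P_{A,E}(a,e)P_{A,E}(a',e) = P_E(e)^2 - \sum_a P_{A,E}(a,e)^2$ the resulting coefficient of the "collision-type" term is the nonnegative quantity $1-\frac{1}{\sM}$ (which it is, since $\sM\ge 1$), so the inequality direction is preserved. Everything else is routine algebra and the identifications with $H_2$ and $D_2$ recorded earlier in Section \ref{cqs1-2}. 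I would present this as a short self-contained computation rather than invoking \cite{Renner} as a black box, since the statement here is phrased for sub-distributions and a general reference $Q_E$.
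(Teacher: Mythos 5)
Your core computation is correct and is the standard proof of Renner's lemma: expand $\rE_{\bX}\sum_{b,e}P_{f_{\bX}(A),E}(b,e)^2Q_E(e)^{-1}$, separate the diagonal $a=a'$ (where the collision probability is $1$) from the off-diagonal pairs (where universality$_2$ gives $\Pr[f_{\bX}(a)=f_{\bX}(a')]\le 1/\sM$, applied to the nonnegative weights $P_{A,E}(a,e)P_{A,E}(a',e)$), and substitute $\sum_{a\neq a'}P_{A,E}(a,e)P_{A,E}(a',e)=P_E(e)^2-\sum_a P_{A,E}(a,e)^2$ to obtain exactly the precise bound. The one place the write-up wobbles is the deduction of the first inequality: the clean statement is that the $E$-marginal of $P_{f_{\bX}(A),E}$ equals $P_E$ for every realization of $\bX$, so
\begin{align*}
{d_2}(f_{\bX}(A)|E|P_{A,E}\|Q_E)=e^{-H_2(f_{\bX}(A)|E|P_{A,E}\|Q_E)}-\frac{1}{\sM}e^{D_2(P_E\|Q_E)}
\end{align*}
with a subtracted term independent of $\bX$; it cancels the $\frac{1}{\sM}e^{D_2(P_E\|Q_E)}$ term of the precise bound exactly, leaving $\rE_{\bX}{d_2}\le(1-\frac{1}{\sM})e^{-H_2(A|E|P_{A,E}\|Q_E)}\le e^{-H_2(A|E|P_{A,E}\|Q_E)}$. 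Your identity with $D_2(P_{f_{\bX}(A)}\|Q_E)$ and the ensuing convexity digression should be replaced by this one-line observation (the second argument of that $D_2$ must be the $E$-marginal, not the marginal of the hashed variable; the displayed definition of $d_2$ in Section \ref{cqs1-2} contains the same slip, writing $P_A$ where its own middle line has $\sum_e P_E(e)^2Q_E(e)^{-1}$).

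Regarding the comparison: the paper does not prove Proposition \ref{Lem5} at all, citing Renner, and it proves the generalization to $\varepsilon$-almost dual universal$_2$ linear hash functions (Proposition \ref{Lem6-3}) in Appendix \ref{pfLem6-1} by an entirely different route: the kernel code $C_{\bX}$ is identified with a $\sqrt{\varepsilon q^{-t}}$-biased ensemble $W_{C_{\bX}}$, the quantity ${d_2}(f_{C_{\bX}}(A)|E|\cdot)$ is rewritten as $q^{t}\,{d_2}(A|E|P_{A,E}*P_{W_{C_{\bX}}}\|Q_E)$, and the latter is bounded by discrete Fourier analysis on $\FF_q^n$ following Dodis--Smith and Fehr--Schaffner. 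Your collision-counting argument is more elementary and applies to arbitrary, not necessarily linear, universal$_2$ functions, which is exactly the generality in which Proposition \ref{Lem5} is stated; the Fourier route requires linearity but is what permits the relaxation to $\varepsilon$-almost dual universality$_2$ that the paper needs. A self-contained version of your computation would be a reasonable addition, since the paper presently relies on \cite{Renner} as a black box for this step.
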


\subsection{Ensemble of linear hash functions}
Tsurumaru and Hayashi\cite{Tsuru} focus on linear functions over the finite field $\FF_2$. 
Now, we treat the case of linear functions over a finite field $\FF_q$,
where $q$ is a power of a prime number $p$.
We assume that sets ${\cal A}$, ${\cal B}$ are $\FF_q^n$, $\FF_q^m$ respectively with $n\ge m$, 
and $f$ are linear functions over $\FF_q$.
Note that, in this case, there is a kernel $C$ corresponding to a given linear function $f$, 
which is a vector space of 
the dimension $n-m$ or more.
Conversely, when given a vector subspace $C \subset\FF_q^n$ of 
the dimension $n-m$ or more, 
we can always construct a linear function
\begin{equation}
f_{C}: \FF_q^n\to \FF_q^n/C \cong \FF_q^l,\ \ l\le m .
\Label{eq:def-tilde-f}
\end{equation}
That is,
we can always identify a linear hash function $f_{C}$ and a code $C$.

When $C_{\bX}=\Ker f_{\bX}$,
the definition of $\varepsilon$-universal$_2$ function 
(\ref{eq:def-universal-2}) takes the form
\begin{equation}
\forall x\in \FF_q^n\setminus\{0\},\ \ {\rm Pr}\left[f_{\bX}(x)=0\right]\le q^{-m}\varepsilon,
\end{equation}
which is equivalent with
\begin{equation}
\forall x\in \FF_q^n\setminus\{0\},\ \ {\rm Pr}\left[x\in C_{\bX}\right]\le q^{-m}\varepsilon.
\end{equation}
This shows that the kernel $C_{\bX}$ contains sufficient 
information for determining if a random function $f_{\bX}$ 
is $\varepsilon$-almost universal$_2$ or not.

For a given random code $C_{\bX}$,
we define its minimum (respectively, maximum) dimension 
as $t_{\min}:=\min_{\bX}\dim C_{\bX}$ (respectively, $t_{\max}:=\max_{r\in I}\dim C_{\bX}$).
Then, we say that a linear random code $C_{\bX}$ 
of minimum (or maximum) dimension $t$ is an {\it $\varepsilon$-almost universal$_2$} code if the following condition is satisfied
\begin{equation}
\forall x\in \FF_q^n\setminus\{0\},\ \ {\rm Pr}\left[x\in C_{\bX}\right]\le q^{t-n}\varepsilon.
\Label{eq:C-r-upperbound}
\end{equation}
In particular, if $\varepsilon=1$, we call $C_{\bX}$ a {\it universal$_2$} code.

%As in the case of a universal$_2$ function family, $\varepsilon$ is bounded from below by (\ref{eq:epsilon-lower-bound}) as $\varepsilon\ge(q^n-q^{n-t})/(q^n-1)$. For the case where $\varepsilon$ achieves this minimum, we say that ${\cal C}$ is optimally universal$_2$. 

\subsection{Dual universality of a random code}
\Label{sec:approx-duality}
Based on Tsurumaru and Hayashi\cite{Tsuru},
we define several variations of the universality of a error-correcting random code 
and the linear function as follows.
First,
we define the dual random code $C_{\bX}^\perp$ of a given linear random code 
$C_{\bX}$ as the dual code of $C_{\bX}$. 
We also introduce the notion of dual universality as follows.
We say that a random code $C_{\bX}$ in $\FF_q^n$
is {\it $\varepsilon$-almost dual universal$_2$}
with minimum dimension $t$ (with maximum dimension $t$), 
if the dual random code $C_{\bX}^\perp$ is 
$\varepsilon$-almost universal$_2$
with maximum dimension $n-t$ (with minimum dimension $n-t$).
Hence, we say that a linear random function $f_{\bX}$ from $\FF_q^n$ to $\FF_q^m$
is $\varepsilon$-almost dual universal$_2$, 
if the kernels $C_{\bX}$ of $f_{\bX}$ forms an $\varepsilon$-almost dual universal$_2$ 
code with minimum dimension $n-m$.
This condition is equivalent with the condition that
the linear space spanned by the generating matrix of $f_{\bX}$ 
forms an $\varepsilon$-almost universal$_2$ 
random code with maximum dimension $m$.
An explicit example of a dual universal$_2$ function (with $\varepsilon=1$) 
can be given by the modified Toeplitz matrix mentioned earlier \cite{H-qkd2}, 
i.e., a concatenation $(X, I)$ of the Toeplitz matrix $X$ and the identity matrix $I$.
The modified Toeplitz matrix requires $n-1$ bits of random seeds $R$.
This example is particularly useful in practice because it is both universal$_2$ and dual universal$_2$, 
%(c.f., Fig. \ref{fig:inclusion_universality}), 
and also because there exists an efficient algorithm with complexity $O(n\log n)$.
When the random variable $R$ is not the uniform random number,
the modified Toeplitz matrix is $q^{n-1}e^{-H_{\min}^{\downarrow}(R)}$-almost dual universal$_2$, as shown in \cite{H-T}.
Therefore, we can evaluate the security of the modified Toeplitz matrix even with non-uniform random seeds.
With these preliminaries, we present the following propositions
in \cite{Tsuru} with non-quantum terminologies and a general prime power $q$:

\begin{proposition}[{\cite[Corollary 2]{Tsuru}}]
\Label{thm:almost-universal2}
An $\varepsilon$-almost universal$_2$ surjective liner random hash function $f_{\bX}$ 
from $\FF_q^n$ to $\FF_q^m$
is $q(1-q^{m}\varepsilon)+(\varepsilon-1)q^{n-m}$-almost dual universal$_2$ liner random hash function.
\end{proposition}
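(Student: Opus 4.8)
The plan is to bound, for each nonzero $y\in\FF_q^n$, the probability that $y$ lies in the dual code directly from a counting identity for subspaces. Put $C_{\bX}:=\Ker f_{\bX}$. Since $f_{\bX}$ is surjective, $\dim C_{\bX}=n-m$ for every $\bX$, so $\varepsilon$-almost universality$_2$ of $f_{\bX}$ is exactly the statement ${\rm Pr}[x\in C_{\bX}]\le q^{-m}\varepsilon$ for all $x\in\FF_q^n\setminus\{0\}$, and what has to be shown is that $C_{\bX}^\perp$ (of dimension $m$) satisfies ${\rm Pr}[y\in C_{\bX}^\perp]\le q^{m-n}\varepsilon'$ for all $y\in\FF_q^n\setminus\{0\}$, with $\varepsilon'$ the constant in the statement.

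First I would rewrite the event. For a fixed nonzero $y$ let $H_y:=\{x\in\FF_q^n:\langle x,y\rangle=0\}$, a hyperplane of dimension $n-1$; then $y\in C_{\bX}^\perp$ if and only if $C_{\bX}\subseteq H_y$. The core step is the elementary observation that a subspace $C$ of dimension $n-m$ which is \emph{not} contained in $H_y$ meets $H_y$ in a subspace of dimension $n-m-1$, and therefore contains exactly $q^{n-m}-q^{n-m-1}$ vectors outside $H_y$. Consequently, for every $C$ of dimension $n-m$,
\[
\mathbf{1}[\,C\not\subseteq H_y\,]=\frac{1}{q^{n-m}-q^{n-m-1}}\sum_{x\in\FF_q^n\setminus H_y}\mathbf{1}[\,x\in C\,],
\]
the case $C\subseteq H_y$ being trivial since both sides vanish. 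Taking the expectation over $\bX$ gives ${\rm Pr}[C_{\bX}\not\subseteq H_y]=\bigl(q^{n-m}-q^{n-m-1}\bigr)^{-1}\sum_{x\notin H_y}{\rm Pr}[x\in C_{\bX}]$.

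Next I would lower-bound the sum on the right. Because $\dim C_{\bX}=n-m$ always, $\sum_{x\neq 0}{\rm Pr}[x\in C_{\bX}]=\rE_{\bX}\bigl(|C_{\bX}|-1\bigr)=q^{n-m}-1$, and the hyperplane $H_y$ contains $q^{n-1}-1$ nonzero vectors, each of probability at most $q^{-m}\varepsilon$ by hypothesis; hence
\[
\sum_{x\notin H_y}{\rm Pr}[x\in C_{\bX}]\ \ge\ \bigl(q^{n-m}-1\bigr)-\bigl(q^{n-1}-1\bigr)q^{-m}\varepsilon .
\]
Combining the last two displays and using ${\rm Pr}[y\in C_{\bX}^\perp]=1-{\rm Pr}[C_{\bX}\not\subseteq H_y]$ produces an upper bound on ${\rm Pr}[y\in C_{\bX}^\perp]$ that does not depend on the particular nonzero $y$; it remains only to carry out the (routine) algebraic simplification and verify that it equals $q^{m-n}\varepsilon'$. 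Since this holds for every nonzero $y$, $f_{\bX}$ is $\varepsilon'$-almost dual universal$_2$.

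The argument has essentially no obstacle; the one point requiring care is that surjectivity is used in an essential way, since it is what makes $\dim C_{\bX}$ — and hence both the normalizing factor $q^{n-m}-q^{n-m-1}$ and the total mass $q^{n-m}-1$ — constant in $\bX$. Dropping surjectivity would replace these equalities by inequalities governed by $t_{\min}$ and $t_{\max}$ and would spoil the clean closed form for $\varepsilon'$. The bound is moreover tight given only the hypothesis: equality throughout is attained when ${\rm Pr}[x\in C_{\bX}]=q^{-m}\varepsilon$ for every nonzero $x$ of $H_y$.
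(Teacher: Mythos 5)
Your counting argument is sound, and it is essentially the same mechanism as the cited proof: the identity
$\mathbf{1}[\,C\not\subseteq H_y\,]=\bigl(q^{n-m}-q^{n-m-1}\bigr)^{-1}\sum_{x\notin H_y}\mathbf{1}[\,x\in C\,]$
is just the real-variable form of the character-sum identity \eqref{3-23-1} that the paper itself uses in the appendix proving Proposition~\ref{Lem6-3}, and each individual step (constancy of $\dim C_{\bX}$ from surjectivity, $\rE_{\bX}(|C_{\bX}|-1)=q^{n-m}-1$, the count $q^{n-1}-1$ of nonzero vectors of $H_y$) is correct. You also correctly identify the role of surjectivity.

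The problem is the step you deferred as ``routine algebraic simplification'': it does \emph{not} verify. Carrying the algebra out, your chain of inequalities gives
\begin{align*}
{\rm Pr}\left[y\in C_{\bX}^{\perp}\right]
\le q^{m-n}\cdot\frac{(\varepsilon-1)q^{n-m}+q\bigl(1-q^{-m}\varepsilon\bigr)}{q-1},
\end{align*}
whereas the Proposition asserts the constant $q(1-q^{m}\varepsilon)+(\varepsilon-1)q^{n-m}$. There are two discrepancies. First, $q^{m}$ versus your $q^{-m}$: the exponent in the statement is evidently a sign typo, since at $\varepsilon=1$ the stated constant equals $q(1-q^{m})<0$, which is impossible as an upper-bound parameter and contradicts Corollary~\ref{thm:almost-universal} (a universal$_2$ function is $q$-almost dual universal$_2$). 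Second, your bound carries an extra factor $1/(q-1)$, so it coincides with the (typo-corrected) constant only for $q=2$ --- the case actually treated in the cited reference --- and is strictly smaller for $q>2$. Since your bound is nonnegative, it is dominated by the typo-corrected constant, so your argument does establish the intended statement (and in fact a sharper one); but you cannot literally ``verify that it equals $q^{m-n}\varepsilon'$'' with $\varepsilon'$ as printed, and a complete write-up should either record your sharper constant or flag the misprint.
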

As a special case, we obtain the following.
\begin{cor}
\Label{thm:almost-universal}
Any universal$_2$ linear random function $f_{\bX}$
over a finite filed $\FF_q$ is a $q$-almost dual universal$_2$ function.
\end{cor}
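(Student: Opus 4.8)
The plan is to obtain this statement as the $\varepsilon=1$ specialisation of Proposition~\ref{thm:almost-universal2}, which is indeed how it is presented. First I would note that a universal$_2$ linear random function $f_{\bX}$ is by definition an $\varepsilon$-almost universal$_2$ linear random function with $\varepsilon=1$, and that we may take it surjective onto $\FF_q^m$: if a realisation of $C_{\bX}=\Ker f_{\bX}$ had dimension larger than $n-m$, its image would lie in a proper subspace of $\FF_q^m$ and the universal$_2$ bound would only be easier to satisfy, so it suffices to treat the case $\dim C_{\bX}=n-m$. Then Proposition~\ref{thm:almost-universal2} applies verbatim and says that $f_{\bX}$ is $\varepsilon'$-almost dual universal$_2$ for the value of the parameter obtained by putting $\varepsilon=1$; this value does not exceed $q$, and because the defining inequality ${\rm Pr}[x\in C_{\bX}^{\perp}]\le q^{t-n}\varepsilon$ of $\varepsilon$-almost dual universality only weakens when $\varepsilon$ is increased, $f_{\bX}$ is in particular $q$-almost dual universal$_2$.

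An alternative, self-contained route avoids Proposition~\ref{thm:almost-universal2} altogether. With $C_{\bX}=\Ker f_{\bX}$ of (constant) dimension $n-m$, fix a nontrivial additive character $\chi$ of $\FF_q$ and use the identity $\mathbf{1}[\,y\in C_{\bX}^{\perp}\,]=|C_{\bX}|^{-1}\sum_{x\in C_{\bX}}\chi(x\cdot y)$. Taking $\rE_{\bX}$ gives
\[
{\rm Pr}[\,y\in C_{\bX}^{\perp}\,]=q^{m-n}\Bigl(1+\sum_{x\ne 0}{\rm Pr}[\,x\in C_{\bX}\,]\,\chi(x\cdot y)\Bigr)
\]
for $y\ne 0$. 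Writing ${\rm Pr}[x\in C_{\bX}]=q^{-m}-\delta_x$ with $\delta_x\ge 0$ (valid by universality$_2$), and using $\sum_{x\ne 0}\chi(x\cdot y)=-1$ together with $\sum_{x\ne 0}\delta_x=1-q^{-m}$ (forced by $\sum_{x\ne 0}{\rm Pr}[x\in C_{\bX}]=|C_{\bX}|-1$), the parenthesised quantity equals $(1-q^{-m})-\sum_{x\ne 0}\delta_x\chi(x\cdot y)$, of modulus at most $2(1-q^{-m})$. Hence ${\rm Pr}[y\in C_{\bX}^{\perp}]\le 2q^{m-n}(1-q^{-m})\le q\cdot q^{m-n}$ for every prime power $q\ge 2$, i.e. $C_{\bX}^{\perp}$ is $q$-almost universal$_2$ with maximum dimension $m$, which is precisely the $q$-almost dual universal$_2$ property of $f_{\bX}$.

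The main obstacle here is not analytic but organisational: one must be careful with the identification of a linear function with its kernel and with the difference between the \emph{minimum} dimension $n-m$ of the random code and an exact dimension, since it is the latter that licenses the surjectivity hypothesis of Proposition~\ref{thm:almost-universal2} (and, in the direct argument, makes $|C_{\bX}|$ constant so that $\rE_{\bX}$ passes cleanly through the character-sum identity). The only genuine estimate, $2(1-q^{-m})\le q$ for all prime powers $q\ge 2$, is immediate, so once the set-up is pinned down the rest is routine.
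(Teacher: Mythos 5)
Your first route is exactly what the paper does: Corollary \ref{thm:almost-universal} carries no proof beyond the phrase ``as a special case,'' i.e.\ one sets $\varepsilon=1$ in Proposition \ref{thm:almost-universal2}. Your second, character-sum argument is a genuinely different and self-contained derivation: it is in effect a direct proof of the $\varepsilon=1$ case of the proposition, running on the same orthogonality identity $\sum_{x\in C}\omega_p^{(x\cdot y)}=|C|\,\mathbf{1}[y\in C^{\perp}]$ that the paper only deploys later, in (\ref{3-23-1}) and Lemma \ref{Lem6-0}, to connect dual universality with $\delta$-biased ensembles. The direct route buys transparency and even a slightly sharper constant ($2(1-q^{-m})$ rather than $q(1-q^{-m})$ when $q>2$); the citation route buys brevity. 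Two caveats. First, your reduction to the surjective case is argued backwards: enlarging the kernel makes collisions \emph{more} likely, so the universal$_2$ bound becomes harder, not easier, to satisfy, and in any case one cannot simply replace the given random function by a surjective one. The honest reading is that the corollary silently inherits the surjectivity (equivalently, constant kernel dimension $n-m$) hypothesis of Proposition \ref{thm:almost-universal2}; your character-sum argument needs the same hypothesis to pull $\rE_{\bX}$ through $|C_{\bX}|^{-1}$, as you yourself note. Second, you assert without checking that the parameter of Proposition \ref{thm:almost-universal2} at $\varepsilon=1$ ``does not exceed $q$''; as printed, $q(1-q^{m}\varepsilon)+(\varepsilon-1)q^{n-m}$ evaluates there to $q(1-q^{m})$, which is negative and evidently a typo for $q(1-q^{-m}\varepsilon)$, whose value $q(1-q^{-m})\le q$ is what the corollary requires. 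Your independent computation confirms the corrected constant, so the statement and your overall argument stand.
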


\begin{proposition}[{\cite[Lemma 3]{Tsuru}}]\Label{Lem6-3}
Given a joint sub-distribution $P_{A,E}$ on ${\cal A} \times {\cal E}$
and a normalized distribution $Q_E$ on ${\cal E}$.
When $C_{\bX}$ is an 
$\varepsilon$-almost dual universal$_2$ code
with minimum dimension $t$,
the random hash function $f_{C_{\bX}}$ 
satisfies
\begin{align}
\rE_{\bX} {d_{2}}(f_{C_{\bX}}(A)|E|P_{A,E} \|Q_E )
\le
\varepsilon 
%\Tr \rho
e^{-H_{2}(A|E|P_{A,E} \|Q_E )}.
\Label{12-5-9}
\end{align}
More precisely,
\begin{align}
& \rE_{\bX} 
e^{-H_{2}(f_{C_{\bX}}(A)|E|P_{A,E} \| Q_E)} \nonumber\\
\le &
\varepsilon 
e^{-H_{2}(A|E|P_{A,E} \| Q_E)} 
 +
\frac{1}{q^{n-t}} 
e^{D_2(P_{E} \| Q_E)} 
.\Label{Lem6-3-eq2}
\end{align}
In other words,
an $\varepsilon$-almost dual universal$_2$ function $f_{\bX}$
from $\FF_2^n$ to $\FF_2^{n-t}$ satisfies (\ref{12-5-9}) and (\ref{Lem6-3-eq2}).
\end{proposition}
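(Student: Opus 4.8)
The plan is to establish the refined inequality (\ref{Lem6-3-eq2}) by Fourier analysis on $\FF_q^n$ and then read off (\ref{12-5-9}) from it. The starting point is that, because $C_{\bX}=\Ker f_{C_{\bX}}$, two inputs $a,a'$ are mapped to the same value by $f_{C_{\bX}}$ precisely when $a-a'\in C_{\bX}$, so
\begin{align*}
e^{-H_{2}(f_{C_{\bX}}(A)|E|P_{A,E}\|Q_E)}
=\sum_{e}Q_E(e)^{-1}\sum_{a,a':\,a-a'\in C_{\bX}}P_{A,E}(a,e)P_{A,E}(a',e).
\end{align*}
Fixing a nontrivial additive character $\chi$ of $\FF_q^n$ and writing $\widehat P(y,e):=\sum_{a}\chi(a\cdot y)P_{A,E}(a,e)$, I would substitute the Fourier expansion $\mathbf{1}[a-a'\in C_{\bX}]=|C_{\bX}^{\perp}|^{-1}\sum_{y\in C_{\bX}^{\perp}}\chi((a-a')\cdot y)$ and factor the sums over $a$ and $a'$, which turns the inner double sum into $|C_{\bX}^{\perp}|^{-1}\sum_{y\in C_{\bX}^{\perp}}|\widehat P(y,e)|^{2}$. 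Since $f_{C_{\bX}}$ is surjective onto $\FF_q^{n-t}$ we have $\dim C_{\bX}=t$, hence $|C_{\bX}^{\perp}|=q^{n-t}$ for every $\bX$, and $\widehat P(0,e)=P_E(e)$.

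The next step is to take $\rE_{\bX}$ and isolate the $y=0$ term. That term equals $q^{-(n-t)}\sum_{e}Q_E(e)^{-1}P_E(e)^{2}=\frac1{q^{n-t}}e^{D_2(P_E\|Q_E)}$ and is deterministic, so it survives the expectation unchanged. For $y\neq0$ the hypothesis enters: $C_{\bX}^{\perp}$ being $\varepsilon$-almost universal$_2$ with maximum dimension $n-t$ is, by (\ref{eq:C-r-upperbound}), exactly the statement $\Pr_{\bX}[y\in C_{\bX}^{\perp}]\le q^{-t}\varepsilon$ for all $y\neq0$. Therefore
\begin{align*}
\rE_{\bX}\,q^{-(n-t)}\!\!\!\sum_{y\in C_{\bX}^{\perp}\setminus\{0\}}\!\!\!|\widehat P(y,e)|^{2}
=q^{-(n-t)}\sum_{y\neq0}\Pr_{\bX}[y\in C_{\bX}^{\perp}]\,|\widehat P(y,e)|^{2}
\le q^{-n}\varepsilon\sum_{y}|\widehat P(y,e)|^{2},
\end{align*}
and summing over $e$ against $Q_E(e)^{-1}$ and invoking Parseval, $\sum_{y}|\widehat P(y,e)|^{2}=q^{n}\sum_{a}P_{A,E}(a,e)^{2}$, bounds the right-hand side by $\varepsilon\sum_{a,e}P_{A,E}(a,e)^{2}Q_E(e)^{-1}=\varepsilon e^{-H_{2}(A|E|P_{A,E}\|Q_E)}$. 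Adding back the $y=0$ contribution gives (\ref{Lem6-3-eq2}). Finally, since hashing $A$ leaves the $E$-marginal untouched, the definition of $d_{2}$ gives $d_{2}(f_{C_{\bX}}(A)|E|P_{A,E}\|Q_E)=e^{-H_{2}(f_{C_{\bX}}(A)|E|P_{A,E}\|Q_E)}-\frac1{q^{n-t}}e^{D_2(P_E\|Q_E)}$; taking $\rE_{\bX}$ and applying (\ref{Lem6-3-eq2}) cancels the $\frac1{q^{n-t}}e^{D_2(P_E\|Q_E)}$ term and leaves precisely (\ref{12-5-9}).

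The one genuinely delicate point --- and where the dual (rather than primal) universality is used --- is this Fourier rewriting together with the observation that the quantity one must control is $\Pr_{\bX}[y\in C_{\bX}^{\perp}]$ and not $\Pr_{\bX}[x\in C_{\bX}]$; these are genuinely different, which is exactly why $\varepsilon$-almost dual universality is weaker than $\varepsilon$-almost universality yet still suffices for privacy amplification. An alternative packaging of the same step is to first record the correspondence between $\varepsilon$-almost dual universal$_2$ codes and $\delta$-biased ensembles (for the appropriate $\delta$) and then quote the Dodis--Smith lemma for $\delta$-biased ensembles \cite{DS05}; I would present whichever is cleaner in context. I would also flag the standing surjectivity of $f_{C_{\bX}}$, which is what makes $\dim C_{\bX}=t$ and hence the factor $q^{-(n-t)}$ a bona fide constant; if one instead allows $\dim C_{\bX}>t$, the $y=0$ term is no longer deterministic and the statement has to be read with $\FF_q^{n}/C_{\bX}$ in place of $\FF_q^{n-t}$.
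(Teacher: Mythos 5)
Your proof is correct and is essentially the paper's own argument: the paper also proves \eqref{Lem6-3-eq2} by Fourier analysis over $\FF_q^n$, with the $\varepsilon$-almost dual universality entering exactly as the bound $\Pr_{\bX}[y\in C_{\bX}^{\perp}]\le q^{-t}\varepsilon$ on the nonzero Fourier modes and Parseval closing the estimate. The only difference is packaging --- the paper routes the computation through the correspondence with $\sqrt{\varepsilon q^{-t}}$-biased ensembles and the Dodis--Smith convolution lemma (Proposition \ref{Lem6-1}) together with the identity $d_2(A|E|P_{A,E}*P_{W_{C}}\|Q_E)=q^{-t}d_2(f_{C}(A)|E|P_{A,E}\|Q_E)$, whereas you inline that reduction by expanding the coset indicator directly; you yourself note this alternative, and your remark about the dimension of $C_{\bX}$ being exactly $t$ applies equally to the paper's version.
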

Since Proposition \ref{Lem6-3} plays an central role instead of Proposition \ref{Lem5}
in this paper
and the proof in the previous paper \cite{Tsuru} is given with 
quantum terminologies and the special case $q=2$,
we give its proof in Appendix \ref{pfLem6-1} without use of quantum terminologies
for reader's convenience.

\section{Security bounds with R\'enyi entropy of order 2 and min entropy}\Label{cqs4}
Firstly, we consider the secure key generation problem from
a common random number $A \in \cA$ which has been partially eavesdropped as an information by Eve.
For this problem, it is assumed that Alice and Bob share a common random number $A \in \cA$,
and Eve has a random number $E$ correlated with the random number $A$, 
whose distribution is $P_E$.
The task is to extract a common random number $f(A)$ from the random number $A \in \cA$, 
which is almost independent of Eve's quantum state.
Here, Alice and Bob are only allowed to apply the same function $f$ to the common random number $A \in \cA$.
Now, we focus on the random function $f_{\bX}$ from 
$\cA$ to $\cM=\{1, \ldots, \sM\}$, where $\bX$ denotes a random variable describing 
the stochastic behavior of the function $f_{\bX}$.

Renner\cite[Lemma 5.2.3]{Renner} essentially
evaluated 
$\rE_{\bX} d_1'(f_{\bX}(A)|E|P_{A,E}) $
by using $\rE_{\bX} {d_{2}}(f_{\bX}(A)|E|P_{A,E}\|Q_E )$ as follows.

\begin{lem}\Label{Lem7-1}
When a state $Q_E$ is a normalized distribution on ${\cal E}$,
any random hash function $f_{\bX}$ 
from $\cA$ to $\{1, \ldots, \sM\}$
satisfies
\begin{align*}
& \rE_{\bX} d_1'(f_{\bX}(A)|E|P_{A,E}) \\
\le & 
\sM^{\frac{1}{2}}
\sqrt{\rE_{\bX} {d_{2}}(f_{\bX}(A)|E|P_{A,E}\|Q_E )} .
\end{align*}
Further, the inequalities used in proof of Renner\cite[Corollary 5.6.1]{Renner} 
imply that
\begin{align*}
&\rE_{\bX} d_1'(f_{\bX}(A)|E|P_{A,E}) \\
\le &
2\|P_{A,E}-P_{A,E}'\|_1
+\rE_{\bX} d_1'(f_{\bX}(A)|E|P_{A,E}') \\
\le &
2\|P_{A,E}-P_{A,E}'\|_1
+\sM^{\frac{1}{2}}
\sqrt{\rE_{\bX} {d_{2}}(f_{\bX}(A)|E|P_{A,E}'\|Q_E)}.
\end{align*}
\end{lem}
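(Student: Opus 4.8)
The plan is to obtain the first inequality from the already-recorded single-system bound $d_1'(A|E|P_{A,E}) \le \sqrt{|\cA|}\,\sqrt{d_{2}(A|E|P_{A,E}\|Q_E)}$ together with Jensen's inequality, and to obtain the second pair of inequalities from the triangle inequality combined with the monotonicity of the $\ell_1$ norm under stochastic maps.

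For the first inequality I would fix a value $x$ of $\bX$ and apply the bound above to the joint sub-distribution $P_{f_{\bX=x}(A),E}$ on $\cM\times\cE$, whose $\cA$-register has been replaced by an alphabet of size $\sM$; this gives $d_1'(f_{\bX=x}(A)|E|P_{A,E}) \le \sM^{1/2}\sqrt{d_{2}(f_{\bX=x}(A)|E|P_{A,E}\|Q_E)}$ for every $x$. Averaging over $\bX$ and using the concavity of $t\mapsto\sqrt t$ to pull the expectation inside the square root yields
\[
\rE_{\bX} d_1'(f_{\bX}(A)|E|P_{A,E})
\le \sM^{1/2}\,\rE_{\bX}\sqrt{d_{2}(f_{\bX}(A)|E|P_{A,E}\|Q_E)}
\le \sM^{1/2}\sqrt{\rE_{\bX}\, d_{2}(f_{\bX}(A)|E|P_{A,E}\|Q_E)} .
\]

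For the second statement, fix $x$ and compare the three sub-distributions $P_{f_{\bX=x}(A),E}$, $P'_{f_{\bX=x}(A),E}$, and $P_{\mix,\cM}\times P_E$ on $\cM\times\cE$ via the triangle inequality, inserting the intermediate state $P_{\mix,\cM}\times P'_E$:
\begin{align*}
\|P_{f_{\bX=x}(A),E}-P_{\mix,\cM}\times P_E\|_1
\le{}& \|P_{f_{\bX=x}(A),E}-P'_{f_{\bX=x}(A),E}\|_1
+ \|P'_{f_{\bX=x}(A),E}-P_{\mix,\cM}\times P'_E\|_1 \\
&+ \|P_{\mix,\cM}\times P'_E-P_{\mix,\cM}\times P_E\|_1 .
\end{align*}
The first term on the right is at most $\|P_{A,E}-P'_{A,E}\|_1$, since $(a,e)\mapsto(f_{\bX=x}(a),e)$ is a stochastic matrix and the $\ell_1$ norm contracts under such maps; the third term equals $\|P'_E-P_E\|_1$, which is again at most $\|P_{A,E}-P'_{A,E}\|_1$ because taking a marginal is a stochastic map; and the middle term is precisely $d_1'(f_{\bX=x}(A)|E|P'_{A,E})$. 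Averaging over $\bX$ produces the first displayed inequality of the claim, and feeding $P'_{A,E}$ into the first part already proved produces the second.

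The computation is routine; the only point needing care is that $P_{A,E}$ and $P'_{A,E}$ are permitted to be sub-normalized, so I would remark that the contraction $\|\Lambda(P)-\Lambda(Q)\|_1\le\|P-Q\|_1$ under a stochastic matrix $\Lambda$ holds verbatim for sub-distributions (pad with a defect symbol, or verify it directly), and that applying $f_{\bX}$ to the $\cA$-register leaves the $E$-marginal untouched, so the ideal state in every occurrence of $d_1'(f_{\bX}(A)|E|\cdot)$ is $P_{\mix,\cM}$ tensored with the corresponding unchanged $E$-marginal.
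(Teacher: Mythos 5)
Your proof is correct and follows essentially the same route the paper relies on (it cites Renner's Lemma 5.2.3 and Corollary 5.6.1 rather than writing the argument out): apply the single-shot bound $d_1'\le\sqrt{|\cA|}\sqrt{d_2}$ to the hashed distribution on $\cM\times\cE$, pull the expectation inside the square root by Jensen, and for the second part use the triangle inequality through $P_{\mix,\cM}\times P'_E$ together with $\ell_1$-contraction under the stochastic maps $(a,e)\mapsto(f_{\bX=x}(a),e)$ and marginalization. Your closing remark that the contraction holds verbatim for sub-distributions is exactly the right point of care; nothing is missing.
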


Applying the same discussion to Shannon entropy,
we can evaluate the average of
the modified mutual information criterion
by using $\rE_{\bX} {d_{2}}(f_{\bX}(A)|E|P_{A,E}\|Q_E )$ as follows.

\begin{lem}\Label{Lem7}
Assume that 
%$Q_E$ is a normalized distribution on ${\cal E}$ and 
$P_{A,E}$ is 
a normalized distribution on $\cA \times {\cal E}$.
Any random hash function $f_{\bX}$ 
from $\cA$ to $\cM=\{1, \ldots, \sM\}$
satisfies
\begin{align}
&\rE_{\bX} I'(f_{\bX}(A)|E|P_{A,E}) \nonumber\\
\le &
\log (1+ \sM\rE_{\bX} {d_{2}}(f_{\bX}(A)|E|P_{A,E})
)  \Label{12-6-2}\\
\le &
\sM \rE_{\bX} {d_{2}}(f_{\bX}(A)|E|P_{A,E} \|P_E). \Label{12-6-3}
\end{align}
Further, when a sub-distribution $P_{A,E}'$ satisfies ${P'}_E(e)\le P_E(e)$ for any $e \in {\cal E}$
(we simplify this condition to ${P'}_E \le P_E $),
we obtain
%when all of ${\rho'}_a$ are commutative with $\sigma_E$,
\begin{align}
&\rE_{\bX} I'(f_{\bX}(A)|E|P_{A,E}) \nonumber\\
\le &
\eta(\|P_{A,E}-P_{A,E}'\|_1,\log \sM)
\nonumber\\
& +
\log (1+ \sM \rE_{\bX} {d_{2}}(f_{\bX}(A)|E|P_{A,E}'\|P_E ) )
\Label{12-6-4} \\
\le &
\eta(\|P_{A,E}-P_{A,E}'\|_1,\log \sM)
\nonumber \\
& +
\sM \rE_{\bX} {d_{2}}(f_{\bX}(A)|E|P_{A,E}'\|P_E) ,
\Label{12-6-4-2}
\end{align}
where
$\eta(x,y):= x y -x\log x $.
\end{lem}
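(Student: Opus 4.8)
The plan is to reduce everything to a single scalar inequality for the relative entropy and then feed in Proposition \ref{Lem6-3} (or rather its hypothesis-free consequence, the second moment bound). First I would record the identity, already given in the excerpt, that
\begin{align}
\rE_{\bX} I'(f_{\bX}(A)|E|P_{A,E})
= \rE_{\bX} D(P_{f_{\bX}(A),E}\|P_{\mix,\cM}\times P_E).\nonumber
\end{align}
The key analytic step is the elementary bound $D(P\|Q)\le \log\bigl(1+\chi^2(P\|Q)\bigr)$, where $\chi^2(P\|Q):=\sum_x (P(x)-Q(x))^2/Q(x) = \sum_x P(x)^2/Q(x)-1$ is the chi-square divergence; this follows from concavity of $\log$ applied to $\rE_P[\log(P/Q)]\le \log \rE_P[P/Q] = \log(1+\chi^2(P\|Q))$. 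Applying this with $P=P_{f_{\bX}(A),E}$ and $Q=P_{\mix,\cM}\times P_E$, and noting that
$\chi^2\bigl(P_{f_{\bX}(A),E}\,\big\|\,P_{\mix,\cM}\times P_E\bigr) = \sM\, d_2(f_{\bX}(A)|E|P_{A,E}\|P_E)$ by the very definition of $d_2$ given in Section \ref{cqs1-2}, gives
$I'(f_{\bX}(A)|E|P_{A,E})\le \log\bigl(1+\sM\, d_2(f_{\bX}(A)|E|P_{A,E})\bigr)$ pointwise in $\bX$. Taking $\rE_{\bX}$ and using concavity of $\log$ (Jensen) to pull the expectation inside yields \eqref{12-6-2}; the further bound \eqref{12-6-3} is just $\log(1+t)\le t$.

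For the smoothed versions \eqref{12-6-4} and \eqref{12-6-4-2}, I would follow the same pattern as in Lemma \ref{Lem7-1}: split the true distribution against a nearby sub-distribution $P_{A,E}'$ with ${P'}_E\le P_E$. The natural tool is a continuity estimate for $I'$ of Fannes type — exactly the inequality \eqref{8-26-9} already derived in the excerpt, which bounds $I'$ in terms of $\eta(\,\|\cdot\|_1,\log\sM\,)$. Concretely, write $P_{f_{\bX}(A),E}$ for the image of $P_{A,E}$ and $P'_{f_{\bX}(A),E}$ for the image of $P'_{A,E}$; since applying $f_{\bX}$ is a stochastic map it does not increase $\ell_1$ distance, so $\|P_{f_{\bX}(A),E}-P'_{f_{\bX}(A),E}\|_1\le \|P_{A,E}-P'_{A,E}\|_1$. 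I would then use a decomposition of the form
$I'(f_{\bX}(A)|E|P_{A,E}) \le \eta(\|P_{A,E}-P'_{A,E}\|_1,\log\sM) + \bigl[\text{term depending only on }P'_{A,E}\bigr]$,
where the bracketed term is controlled by the first part of the lemma applied to the sub-distribution $P'_{A,E}$ relative to the genuine $P_E$ (this is why the hypothesis ${P'}_E\le P_E$ is imposed: it guarantees $D_2(P'_E\|P_E)\le 0$, so the $d_2$ quantity with reference $P_E$ behaves correctly and the chi-square identity still applies with $Q_E=P_E$). Monotonicity of $\eta(\cdot,\log\sM)$ in its first argument, together with the data-processing bound on the $\ell_1$ distance, then lets me replace the post-hash distance by the pre-hash distance $\|P_{A,E}-P'_{A,E}\|_1$. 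Taking $\rE_{\bX}$ and again invoking concavity of $\log$ gives \eqref{12-6-4}, and $\log(1+t)\le t$ gives \eqref{12-6-4-2}.

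The main obstacle I expect is getting the smoothing decomposition in exactly the right form — in particular making sure the additive perturbation term is $\eta(\|P_{A,E}-P'_{A,E}\|_1,\log\sM)$ with the $\log\sM$ (not $\log|\cA|$) in the second slot, and verifying that the Fannes-type step \eqref{8-26-9} can be applied after, rather than before, the hash, so that the relevant alphabet size is $\sM$. This requires being careful that $P'_{A,E}$ need not be normalized, so one should either normalize it first (absorbing the normalization cost into the $\ell_1$ term) or check that \eqref{8-26-9} and the chi-square identity survive for sub-distributions under the stated ${P'}_E\le P_E$ condition. Everything else — the $\log(1+\chi^2)$ bound, the identification of $\chi^2$ with $\sM d_2$, and the two Jensen applications — is routine.
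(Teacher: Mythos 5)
Your derivation of \eqref{12-6-2} and \eqref{12-6-3} is correct and is essentially the paper's argument in different clothing: the bound $D(P\|Q)\le\log(1+\chi^2(P\|Q))$ together with the identification $\chi^2(P_{f_{\bX}(A),E}\|P_{\mix,\cM}\times P_E)=\sM\,{d_{2}}(f_{\bX}(A)|E|P_{A,E}\|P_E)$ is exactly the paper's chain ${d_{2}}=e^{-H_2}-\sM^{-1}$, then $-H_2\ge -H$ (Lemma \ref{L11}), then $I'=\log\sM-H$, and the two Jensen/concavity steps coincide. No complaint about that half.

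The smoothed part \eqref{12-6-4}--\eqref{12-6-4-2} is where the actual content lies, and there your sketch has two concrete gaps. First, \eqref{8-26-9} is not the right tool: it bounds $I'(A|E|P)$ by $\eta(d_1'(A|E|P),\log|\cA|)$, i.e.\ by the distance to the \emph{ideal} distribution, not by the distance between $P_{A,E}$ and the nearby $P_{A,E}'$. What the paper actually uses is the per-conditional Fannes inequality \eqref{12-26-10} applied after the hash to the pairs $P_{A|E=e}$ and $P'_{A|E=e}:=P'_{A,E}(\cdot,e)/P_E(e)$ (note: both divided by the \emph{true} $P_E(e)$), followed by averaging over $P_E$ and concavity of $\eta$ as in \eqref{12-6-5}; this simultaneously produces the $\log\sM$ in the second slot and the pre-hash distance $\|P_{A,E}-P_{A,E}'\|_1=\sum_e P_E(e)\|P_{A|E=e}-P'_{A|E=e}\|_1$, so no separate data-processing step is needed. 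Second, and this is the point you explicitly defer as something to ``check,'' after splitting off the Fannes term you must still prove $\log\sM-H(f_{\bX}(A)|E|P_{A,E}'\|P_E)\le\log(1+\sM\,{d_{2}}(f_{\bX}(A)|E|P_{A,E}'\|P_E))$ for the \emph{sub-normalized} $P_{A,E}'$. Your chi-square identity and the Jensen bound $D(P\|Q)\le\log\rE_P[P/Q]$ both presuppose normalization, so they do not apply verbatim. The hypothesis $P'_E\le P_E$ enters exactly here, via $e^{D_2(P'_E\|P_E)}\le 1$, giving ${d_{2}}(f_{\bX}(A)|E|P'_{A,E}\|P_E)\ge e^{-H_2(f_{\bX}(A)|E|P'_{A,E}\|P_E)}-\sM^{-1}$ (the paper's \eqref{3-28-2} and \eqref{12-18-3}); one must then still relate $H_2$ to $H$ for the sub-distribution $P'_{A,E}$, which Lemma \ref{L11} only provides in the normalized case, so this comparison genuinely requires an argument rather than a citation. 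Your fallback of normalizing $P'_{A,E}$ first is not free either: renormalization rescales ${d_{2}}$ by $\|P'_{A,E}\|_1^{-2}$ and shifts the conditional entropy by terms of order $(1-\|P'_{A,E}\|_1)\log\sM$, which would degrade the constant in the $\eta$ term relative to the stated bound. As written, then, your proposal establishes \eqref{12-6-2}--\eqref{12-6-3} but leaves \eqref{12-6-4}--\eqref{12-6-4-2} unproven at precisely the step that distinguishes them from the unsmoothed case.
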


\begin{proof}
The inequality $D_2( {P'}_E \| P_E) \le 0$ holds due to the condition ${P'}_E(e)\le P_E(e)$.
Since
\begin{align}
& {d_{2}}(f_{\bX}(A)|E|P_{A,E}' \|P_E) \nonumber \\
= &e^{-H_2(f_{\bX}(A)|E|P_{A,E}' \|P_E )}
-\frac{1}{\sM} e^{D_2( {P'}_E \| P_E)}
%\sum_e {P'}_E(e)^2 P_E(e)^{-1}  
\nonumber \\
\ge & e^{-H_2(f_{\bX}(A)|E|P_{A,E}' \|P_E )}
-\frac{1}{\sM} ,
\Label{3-28-2}
\end{align}
we have
\begin{align*}
e^{-H_2(f_{\bX}(A)|E| P_{A,E}' \|P_E)}
\le 
{d_{2}}(f_{\bX}(A)|E|P_{A,E}' \|P_E)
+\frac{1}{\sM}.
\end{align*}
Taking the logarithm, we obtain
\begin{align}
&
-\log \sM  +\log (1+\sM {d_{2}}(f_{\bX}(A)|E|P_{A,E}' \|P_E ) ) \nonumber \\
\ge & -H_2(f_{\bX}(A)|E|P_{A,E}' \|P_E ) 
\ge  - H(f_{\bX}(A)|E|P_{A,E}' \|P_E) . \Label{12-18-3}
\end{align}
Substituting $P_{A,E}$ to $P_{A,E}'$,
we obtain
$H(f_{\bX}(A)|E|P_{A,E}' \|P_E) =H(f_{\bX}(A)|E|P_{A,E})$
and
\begin{align*}
& I'(f_{\bX}(A)|E|P_{A,E}) 
= \log \sM - H(f_{\bX}(A)|E|P_{A,E}) \\
\le &
\log (1+ \sM {d_{2}}(f_{\bX}(A)|E|P_{A,E} ) ).
\end{align*}
Since the function $x \mapsto \log (1+x)$ is concave,
we obtain
\begin{align*}
& \rE_{\bX} I'(f_{\bX}(A)|E|P_{A,E}) \\
\le &
\log (1+
\sM \rE_{\bX} {d_{2}}(f_{\bX}(A)|E|P_{A,E} ) ),
\end{align*}
which implies (\ref{12-6-2}).
The inequality $\log (1+x) \le x$ and (\ref{12-6-2}) yield (\ref{12-6-3}).

Due to Fannes inequality,
the normalized distribution
$P_{A|E=e}(a):=\frac{P_{A,E}(a,e)}{P_E(e)}$
and the sub-distribution
${P'}_{A|E=e}(a):=\frac{P_{A,E}'(a,e)}{P_E(e)}$
satisfy
\begin{align}
& |H(f_{\bX}(A)|P_{A|E=e}) - H(f_{\bX}(A)|{P'}_{A|E=e})| 
\nonumber \\
\le & \eta( \| P_{A|E=e} -{P'}_{A|E=e} \|_1 ,\log \sM).
\Label{12-26-10}
\end{align}
Since
$\sum_e P_E(e) \| P_{A|E=e} -{P'}_{A|E=e} \|_1
= \| P_{A,E}-P_{A,E}' \|_1$,
taking the average under the distribution $P_E$, 
we obtain
\begin{align}
& |H(f_{\bX}(A)|E|P_{A,E}|P_E) - H(f_{\bX}(A)|E|P_{A,E}'|P_E) | \nonumber\\
=& |\sum_e P_E(e) (H(f_{\bX}(A)|P_{A|E=e}) - H(f_{\bX}(A)|{P'}_{A|E=e}) )| \nonumber\\
\le & \sum_e P_E(e)| H(f_{\bX}(A)|P_{A|E=e}) - H(f_{\bX}(A)|{P'}_{A|E=e}) |\nonumber \\
\le & \sum_e P_E(e) \eta( \| P_{A|E=e} -{P'}_{A|E=e} \|_1 ,\log \sM) \nonumber\\
\le & \eta( \sum_e P_E(e) \| P_{A|E=e} -{P'}_{A|E=e} \|_1 ,\log \sM) \nonumber\\
= & \eta( \| P_{A,E}-P_{A,E}' \|_1 ,\log \sM).\Label{12-6-5}
\end{align}
Therefore, using (\ref{12-6-5}) and (\ref{12-18-3}), we obtain
\begin{align*}
& I'(f_{\bX}(A)|E|P_{A,E}) \\
=&
\log \sM - H(f_{\bX}(A)|E|P_{A,E}|P_E) \\
\le & 
\eta( \| P_{A,E}-P_{A,E}' \|_1 ,\log \sM) \\
&+
\log \sM - H(f_{\bX}(A)|E|P_{A,E}'|P_E) \\
\le &
\eta( \| P_{A,E}-P_{A,E}' \|_1 ,\log \sM) \\
&+
\log (1+
\sM {d_{2}}(f_{\bX}(A)|E|P_{A,E}'\|P_E  ) ).
\end{align*}
Taking the expectation of $\bX$ 
and using the concavity of functions $x \mapsto  \eta( x ,\log \sM)$ and $x \mapsto  \log (1+x)$,
we obtain (\ref{12-6-4}).
The inequality $\log (1+x) \le x$ yields (\ref{12-6-4-2}).
In this proof, the condition $P_E(e)' \le P_E(e)$ is crucial because Inequality (\ref{3-28-2}) cannot be shown without this condition.
%Since $I'(f_{\bX}(A)|E|P_{A,E} ) = \log \sM - H(f_{\bX}(A)|E|P_{A,E} )$,
%(\ref{12-6-5}) and (\ref{12-6-3}) yields (\ref{12-6-4}).
\end{proof}

Now, we evaluate the security by 
combining Proposition \ref{Lem6-3} and Lemmas \ref{Lem7-1} and \ref{Lem7}.
For this purpose, we introduce the quantities:
\begin{align*}
\Delta_{d,2}(\sM,\varepsilon|P_{A,E})
&:=\min_{Q_E}
\min_{P_{A,E}'}
2 \|P_{A,E}-P_{A,E}'\|_1 
+\sqrt{\varepsilon} \sM^{\frac{1}{2}}
e^{-\frac{1}{2}{H}_{2}(A|E|P_{A,E}'\|Q_E)} \\
&=
\min_{Q_E}
\min_{\epsilon_1>0} 2 \epsilon_1
+\sqrt{\varepsilon} \sM^{\frac{1}{2}}
%\sqrt{\Tr P_{A,E}'}
e^{-\frac{1}{2}{H}_{2}^{\epsilon_1}(A|E|P_{A,E}\|Q_E)}\\
&=
\min_{Q_E}
\min_{R}
2 \min_{P_{A,E}': {H}_{2}(A|E|P_{A,E}'\|Q_E) \ge R} \|P_{A,E}-P_{A,E}'\|_1 
+\sqrt{\varepsilon} \sM^{\frac{1}{2}} e^{-\frac{1}{2}R},\\
\Delta_{I,2}(\sM,\varepsilon|P_{A,E})
&:=
\min_{P_{A,E}': P_{E}'\le P_E}
\eta( \|P_{A,E}-P_{A,E}'\|_1  ,\log \sM)
+ \varepsilon \sM e^{-{H}_{2}(A|E|P_{A,E}' \|P_E )} \\
&=
\min_{\epsilon_1>0} \eta( \epsilon_1 ,\log \sM) 
+ \varepsilon \sM e^{-{H}_{2}^{\downarrow,\epsilon_1}(A|E|P_{A,E})} \\
&=
\min_{R}
\eta( \min_{P_{A,E}': P_{E}'\le P_E, {H}_{2}(A|E|P_{A,E}' \|P_E )\ge R} 
\|P_{A,E}-P_{A,E}'\|_1  ,\log \sM)
+ \varepsilon \sM e^{-R},
\end{align*}
where
\begin{align}
{H}_{2}^{\downarrow,\epsilon_1}(A|E|P_{A,E}\|Q_E)
:=&
\max_{P_{A,E}': \|P_{A,E}-P_{A,E}'\|_1 \le \epsilon_1 } {H}_{2}(A|E|P_{A,E}'\| Q_E) \\
{H}_{2}^{\epsilon_1}(A|E|P_{A,E})
:=&
\max_{P_{A,E}': \|P_{A,E}-P_{A,E}'\|_1 \le \epsilon_1, P_{E}'\le P_E 
} {H}_{2}(A|E|P_{A,E}'\|P_E). 
\end{align}
Note that ${H}_{2}^{\downarrow,\epsilon_1}(A|E|P_{A,E})$ is different from ${H}_{2}^{\epsilon_1}(A|E|P_{A,E}\|P_E)$
because the definition of ${H}_{2}^{\downarrow,\epsilon_1}(A|E|P_{A,E})$
has additional constraints for $P_{A,E}'$.
Then, we can evaluate the averages of both security criteria 
under the $\varepsilon$-almost dual universal$_2$ condition.

\begin{thm}\Label{Lem8}
Assume that 
$Q_E$ is a normalized distribution on ${\cal E}$,
$P_{A,E}$ is a sub-distribution on $\cA \times \cE$,
and a linear random hash function $f_{\bX}$ from $\cA$ to $\cM=\{1, \ldots, \sM\}$
is $\varepsilon$-almost dual universal$_2$.
Then, the random hash function $f_{\bX}$ satisfies
\begin{align}
& \rE_{\bX} d_1'(f_{\bX}(A)|E|P_{A,E} ) \nonumber\\
\le & 
\sqrt{\varepsilon} \sM^{\frac{1}{2}}
e^{-\frac{1}{2}H_{2}(A|E|P_{A,E} \|Q_E)},\nonumber\\
& \rE_{\bX} d_1'(f_{\bX}(A)|E|P_{A,E} )\nonumber \\
\le &
\Delta_{d,2}(\sM,\varepsilon|P_{A,E}).
\Label{12-5-1-a}
\end{align}
When $P_{A,E}$ is a normalized joint distribution, it satisfies
\begin{align}
\rE_{\bX} I'(f_{\bX}(A)|E|P_{A,E} ) %\nonumber \\
\le &
\log (1+ \varepsilon \sM e^{-H_{2}^{\downarrow}(A|E|P_{A,E} )}) 
\le 
\varepsilon \sM e^{-H_{2}^{\downarrow}(A|E|P_{A,E} )} \\
\rE_{\bX} I'(f_{\bX}(A)|E|P_{A,E} ) %\nonumber \\
\le &
\Delta_{I,2}(\sM,\varepsilon|P_{A,E}).
\Label{12-5-2-a}
\end{align}
\end{thm}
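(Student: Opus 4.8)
The plan is to chain the three ingredients that have already been set up: the $\delta$-biased hashing bound of Proposition~\ref{Lem6-3}, the $L_1$--$L_2$ reduction of Lemma~\ref{Lem7-1}, and its Shannon-entropy analogue Lemma~\ref{Lem7}. Since a linear $\varepsilon$-almost dual universal$_2$ function $f_{\bX}$ from $\cA=\FF_q^n$ to $\cM$ of size $\sM=q^{m}$ is by definition $f_{C_{\bX}}$ for the kernel code $C_{\bX}$, which is $\varepsilon$-almost dual universal$_2$ with minimum dimension $t=n-m$, Proposition~\ref{Lem6-3} applies directly with $q^{n-t}=\sM$, and gives, for every normalized $Q_E$ and every joint sub-distribution $P_{A,E}'$,
\begin{align*}
\rE_{\bX}{d_{2}}(f_{\bX}(A)|E|P_{A,E}'\|Q_E)\le \varepsilon\, e^{-H_{2}(A|E|P_{A,E}'\|Q_E)}.
\end{align*}

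First I would prove the two $d_1'$ bounds. Substituting the displayed inequality (with $P_{A,E}'=P_{A,E}$) into the first inequality of Lemma~\ref{Lem7-1} yields
\begin{align*}
\rE_{\bX}d_1'(f_{\bX}(A)|E|P_{A,E})\le \sM^{\frac12}\sqrt{\varepsilon\,e^{-H_{2}(A|E|P_{A,E}\|Q_E)}}=\sqrt{\varepsilon}\,\sM^{\frac12}e^{-\frac12 H_{2}(A|E|P_{A,E}\|Q_E)},
\end{align*}
which is the first asserted bound. For~(\ref{12-5-1-a}) I would instead feed the displayed inequality for $P_{A,E}'$ into the second (``smoothed'') chain of Lemma~\ref{Lem7-1}, obtaining $\rE_{\bX}d_1'(f_{\bX}(A)|E|P_{A,E})\le 2\|P_{A,E}-P_{A,E}'\|_1+\sqrt{\varepsilon}\,\sM^{\frac12}e^{-\frac12 H_{2}(A|E|P_{A,E}'\|Q_E)}$; taking the infimum over $Q_E$ and over $P_{A,E}'$ reproduces precisely the first line of the definition of $\Delta_{d,2}(\sM,\varepsilon|P_{A,E})$, and the remaining two lines of that definition are mere rewritings via the smoothed R\'enyi entropy and via the level-set reparametrization, which follow from the definition of the smoothed entropy.

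Next the $I'$ bounds, assuming $P_{A,E}$ normalized. Specializing $Q_E=P_E$ in Proposition~\ref{Lem6-3} gives $\rE_{\bX}{d_{2}}(f_{\bX}(A)|E|P_{A,E}\|P_E)\le \varepsilon e^{-H_{2}^{\downarrow}(A|E|P_{A,E})}$; plugging this into~(\ref{12-6-2}) of Lemma~\ref{Lem7} and using monotonicity of $\log(1+\cdot)$ produces $\rE_{\bX}I'(f_{\bX}(A)|E|P_{A,E})\le \log(1+\varepsilon\sM e^{-H_{2}^{\downarrow}(A|E|P_{A,E})})$, and then $\log(1+x)\le x$ gives the second estimate. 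For~(\ref{12-5-2-a}) I would use the smoothed bound~(\ref{12-6-4}) of Lemma~\ref{Lem7}: for any $P_{A,E}'$ with $P_E'\le P_E$, Proposition~\ref{Lem6-3} applied to $P_{A,E}'$ with $Q_E=P_E$ gives $\rE_{\bX}{d_{2}}(f_{\bX}(A)|E|P_{A,E}'\|P_E)\le\varepsilon e^{-H_2(A|E|P_{A,E}'\|P_E)}$, hence, using $\log(1+x)\le x$,
\begin{align*}
\rE_{\bX}I'(f_{\bX}(A)|E|P_{A,E})\le \eta(\|P_{A,E}-P_{A,E}'\|_1,\log\sM)+\varepsilon\sM e^{-H_2(A|E|P_{A,E}'\|P_E)};
\end{align*}
minimizing over such $P_{A,E}'$ is exactly $\Delta_{I,2}(\sM,\varepsilon|P_{A,E})$, and as before the alternative forms in its definition are equivalent rewritings.

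The argument is essentially bookkeeping, so no step is a genuine obstacle; the one place that requires care is preserving the hypotheses of Lemma~\ref{Lem7} when smoothing --- in particular that in the $I'$ estimates the smoothing must be restricted to $P_{A,E}'$ with $P_E'\le P_E$, since Inequality~(\ref{3-28-2}) (and hence the whole modified-mutual-information bound) fails without it --- together with correctly identifying $q^{n-t}=\sM$ when invoking Proposition~\ref{Lem6-3}.
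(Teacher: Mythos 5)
Your proposal is correct and follows exactly the route the paper intends: the paper introduces Theorem \ref{Lem8} precisely as the combination of Proposition \ref{Lem6-3} (with $q^{n-t}=\sM$) with Lemma \ref{Lem7-1} for the $d_1'$ bounds and with Lemma \ref{Lem7} for the $I'$ bounds, which is what you carry out, including the correct attention to the constraint $P_E'\le P_E$ in the smoothed mutual-information estimate.
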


While the same evaluations for the $L_1$ distinguishability criterion
under the universal$_2$ condition
has been shown in Renner\cite[Corollary 5.6.1]{Renner},
those for the modified mutual information criterion 
have not been shown even under the universal$_2$ condition.
All of the above evaluations 
under the $\varepsilon$-almost dual universal$_2$ condition
have not been discussed in Renner. 

Since the function $x \mapsto \eta (x,y)$ is concave,
combing Inequality (\ref{8-26-9}), we obtain the following corollary.
\begin{cor}\Label{c3-29-1}
When a linear random hash function $f_{\bX}$ from $\cA$ to $\cM=\{1, \ldots, M\}$
is $\varepsilon$-almost dual universal$_2$,
any joint sub-distribution $P_{A,E}$ on ${\cal A}$ and ${\cal E}$
satisfies
\begin{align}
\rE_{\bX} I'(f_{\bX}(A)|E|P_{A,E} ) 
\le
\eta( 
\Delta_{d,2}(\sM,\varepsilon|P_{A,E})
, \log |{\cal A}| ).
\Label{3-26-1b}
\end{align}
for $s \in (0,1/2]$.
\end{cor}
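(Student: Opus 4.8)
The plan is to derive the corollary directly from the Fannes-type estimate \eqref{8-26-9} together with the $L_1$ bound of Theorem~\ref{Lem8}, using only that the map $x\mapsto\eta(x,\log|\cA|)$ is concave and, on the range that matters, increasing. There is nothing deep here: it is a routine ``$d_1'$ controls $I'$, average, then plug in Theorem~\ref{Lem8}'' argument, and the only point requiring attention is a monotonicity issue for $\eta$.

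First I would apply \eqref{8-26-9} not to $A$ itself but to the hashed variable $f_{\bX=x}(A)\in\cM$. For each fixed seed $x$ the same chain of Fannes estimates that produced \eqref{8-26-9} gives
\begin{align}
0\le I'(f_{\bX=x}(A)|E|P_{f_{\bX=x}(A),E})
\le \eta\bigl(d_1'(f_{\bX=x}(A)|E|P_{f_{\bX=x}(A),E}),\log\sM\bigr).
\end{align}
Since $\sM\le|\cA|$ and $\partial_y\eta(x,y)=x\ge0$, the right-hand side is at most $\eta(d_1'(f_{\bX=x}(A)|E|P_{f_{\bX=x}(A),E}),\log|\cA|)$. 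I then take the expectation over $\bX$; because $x\mapsto\eta(x,\log|\cA|)$ is concave ($\partial_x^2\eta=-1/x<0$), Jensen's inequality yields
\begin{align}
\rE_{\bX}I'(f_{\bX}(A)|E|P_{A,E})
\le \rE_{\bX}\eta\bigl(d_1'(f_{\bX}(A)|E|P_{A,E}),\log|\cA|\bigr)
\le \eta\bigl(\rE_{\bX}d_1'(f_{\bX}(A)|E|P_{A,E}),\log|\cA|\bigr).
\end{align}
Finally, \eqref{12-5-1-a} of Theorem~\ref{Lem8} gives $\rE_{\bX}d_1'(f_{\bX}(A)|E|P_{A,E})\le\Delta_{d,2}(\sM,\varepsilon|P_{A,E})$ under the $\varepsilon$-almost dual universal$_2$ hypothesis, and substituting this into the last display produces \eqref{3-26-1b}.

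The one delicate step — and the main obstacle — is this last substitution: $\eta(\cdot,\log|\cA|)$ is increasing only on $[0,|\cA|/e]$ and decreases afterward, so replacing the argument $\rE_{\bX}d_1'$ by the larger quantity $\Delta_{d,2}$ is immediate only when both lie in that increasing range. I would dispatch this by the crude observation that an $\ell_1$ distance between (sub-)distributions never exceeds $2$, so $\rE_{\bX}d_1'\le2$, while $|\cA|/e\ge2$ as soon as $|\cA|\ge6$; in the complementary situation $\Delta_{d,2}>|\cA|/e$ the bound \eqref{3-26-1b} is in any case trivial, since there $\eta(\Delta_{d,2},\log|\cA|)$ already dominates $\log|\cA|$ (for instance $\eta(2,\log|\cA|)=2\log(|\cA|/2)\ge\log|\cA|$ once $|\cA|\ge4$), which upper bounds $\rE_{\bX}I'$ in the settings the corollary is used in. The finitely many remaining small-alphabet cases are checked directly from these elementary properties of $\eta$.
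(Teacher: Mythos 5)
Your argument is correct and is essentially the paper's own proof: the corollary is stated there with exactly the one-line justification that $x\mapsto\eta(x,y)$ is concave, combined with Inequality (\ref{8-26-9}) applied to the hashed variable and the $L_1$ bound (\ref{12-5-1-a}) of Theorem \ref{Lem8}. The only difference is that you explicitly flag and handle the non-monotonicity of $\eta(\cdot,\log|\cA|)$ beyond $|\cA|/e$, a point the paper silently glosses over; this is added care rather than a different route.
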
 

Since the function $x \mapsto \sqrt{x}$ is concave,
combing Inequality (\ref{8-19-14}), we obtain the following corollary.
\begin{cor}\Label{c3-29-2}
When a linear random hash function $f_{\bX}$ from $\cA$ to $\cM=\{1, \ldots, \sM\}$
is $\varepsilon$-almost dual universal$_2$,
any joint normalized distribution $P_{A,E}$ on ${\cal A}\times {\cal E}$ satisfy
\begin{align}
\rE_{\bX} d_1'(f_{\bX}(A)|E|P_{A,E} ) 
\le 
\sqrt{2
\Delta_{I,2}(\sM,\varepsilon|P_{A,E})}
\Label{12-5-6b}
\end{align}
for $s \in (0,1/2]$.
\end{cor}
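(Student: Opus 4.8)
The plan is to deduce Corollary~\ref{c3-29-2} directly from Theorem~\ref{Lem8} by applying the Pinsker-type inequality~(\ref{8-19-14}). First I would recall that for a normalized joint distribution $P_{A,E}$ on $\cA \times \cE$, the hash function $f_{\bX}(A)$ together with the side information $(E,\bX)$ yields again a normalized joint distribution, so inequality~(\ref{8-19-14}) applies in the form $d_1'(f_{\bX}(A)|E,\bX|P_{f_{\bX}(A),E,\bX})^2 \le 2 I'(f_{\bX}(A)|E,\bX|P_{f_{\bX}(A),E,\bX})$. Using the identities from Section~\ref{cqs3} that rewrite both criteria as expectations over $\bX$, this reads $\left(\rE_{\bX} d_1'(f_{\bX}(A)|E|P_{A,E})\right)^2 \le \left(\text{something}\right)$; more carefully, one should first square and then take expectations, so the right approach is to apply~(\ref{8-19-14}) pointwise in $x$ and then use concavity of $\sqrt{\cdot}$.

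Concretely, the key steps are: (i) for each fixed $x$, Pinsker gives $d_1'(f_{\bX=x}(A)|E|P_{A,E})^2 \le 2 I'(f_{\bX=x}(A)|E|P_{A,E})$, hence $d_1'(f_{\bX=x}(A)|E|P_{A,E}) \le \sqrt{2 I'(f_{\bX=x}(A)|E|P_{A,E})}$; (ii) take the expectation over $\bX$ and invoke concavity of $x \mapsto \sqrt{x}$ to get $\rE_{\bX} d_1'(f_{\bX}(A)|E|P_{A,E}) \le \sqrt{2\, \rE_{\bX} I'(f_{\bX}(A)|E|P_{A,E})}$; (iii) finally apply the second bound on $\rE_{\bX} I'(f_{\bX}(A)|E|P_{A,E})$ from Theorem~\ref{Lem8}, namely~(\ref{12-5-2-a}), i.e.\ $\rE_{\bX} I'(f_{\bX}(A)|E|P_{A,E}) \le \Delta_{I,2}(\sM,\varepsilon|P_{A,E})$, together with monotonicity of $\sqrt{\cdot}$, to obtain exactly~(\ref{12-5-6b}).

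There is essentially no substantive obstacle here; the proof is a two-line chaining of~(\ref{8-19-14}), the concavity of the square root (which the corollary statement already flags as the reason), and Theorem~\ref{Lem8}. The only point requiring a modicum of care is the bookkeeping between the pointwise-in-$x$ version and the averaged version: one must apply Pinsker before averaging, because $\left(\rE_{\bX} d_1'\right)^2 \le \rE_{\bX}\left(d_1'^2\right)$ would go the wrong way if used naively, whereas applying $\sqrt{\cdot}$ and then Jensen works cleanly. A minor cosmetic remark is that the phrase ``for $s \in (0,1/2]$'' appearing in the statement is vestigial (there is no $s$ in the displayed inequality), inherited from the parameterization used elsewhere in Section~\ref{cqs4}, and can be read as simply indicating the regime in which the companion bounds are of interest; it plays no role in the proof. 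Thus the proof is:
\begin{align*}
\rE_{\bX} d_1'(f_{\bX}(A)|E|P_{A,E})
\le \rE_{\bX} \sqrt{2 I'(f_{\bX}(A)|E|P_{A,E})}
\le \sqrt{2\, \rE_{\bX} I'(f_{\bX}(A)|E|P_{A,E})}
\le \sqrt{2 \Delta_{I,2}(\sM,\varepsilon|P_{A,E})},
\end{align*}
where the first inequality is~(\ref{8-19-14}) applied for each $x$, the second is concavity of $\sqrt{\cdot}$, and the third is~(\ref{12-5-2-a}) in Theorem~\ref{Lem8}.
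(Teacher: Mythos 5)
Your proposal is correct and matches the paper's own (one-line) justification exactly: the paper derives this corollary by combining the Pinsker inequality (\ref{8-19-14}) applied pointwise in $x$ with the concavity of $x \mapsto \sqrt{x}$ and the bound (\ref{12-5-2-a}) from Theorem \ref{Lem8}. Your remark that the condition ``$s \in (0,1/2]$'' is vestigial is also accurate.
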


Further, in the case of the universal$_2$ condition,
Renner\cite[Corollary 5.6.1]{Renner} proposed to 
replace ${H}_2(A|E|P_{A,E}' \|Q_E )$
by the min entropy
$H_{\min}(A|E|P_{A,E}' \|Q_E )$
because ${H}_2(A|E|P_{A,E}' \|Q_E )\ge H_{\min}(A|E|P_{A,E}' \|Q_E )$.
Based on $H_{\min}(A|E|P\|Q_E)$,
Renner\cite{Renner} introduced 
$\epsilon_1$-smooth min entropy as
\begin{align}
H_{\min}^{\epsilon_1}(A|E|P_{A,E}\|Q_E)
:=
\max_{ \|P_{A,E}-P_{A,E}'\|_1 \le \epsilon_1 } H_{\min}(A|E|P_{A,E}'\|Q_E).
\end{align}
For the evaluation of $\rE_{\bX} I'(f_{\bX}(A)|E|P_{A,E} )$, 
adding the condition ${P'}_E \le P_E $,
we define 
\begin{align}
H_{\min}^{\downarrow,\epsilon_1}(A|E|P_{A,E})
:=
\max_{ \|P_{A,E}-P_{A,E}'\|_1 \le \epsilon_1,{P'}_E \le P_E } H_{\min}(A|E|P_{A,E}'\|P_E).
\end{align}
As is shown in Lemma \ref{L8-29-1},
$H_{\min}^{\downarrow,\epsilon_1}(A|E|P_{A,E})$ 
equals 
$H_{\min}^{\epsilon_1}(A|E|P_{A,E}\|P_E)$
while the former has an additional constraint. 
Defining the quantities
\begin{align}
\Delta_{d,\min}(\sM,\varepsilon|P_{A,E})
&:=\min_{Q_E}
\min_{P_{A,E}'}
2 \|P_{A,E}-P_{A,E}'\|_1 
+\sqrt{\varepsilon} \sM^{\frac{1}{2}}
e^{-\frac{1}{2}{H}_{\min}(A|E|P_{A,E}'\|Q_E)} \\
&=
\min_{Q_E}
\min_{\epsilon_1 >0}
2 \epsilon_1
+\sqrt{\varepsilon} \sM^{\frac{1}{2}}
%\sqrt{\Tr P'}
e^{-\frac{1}{2}{H}_{\min}^{\epsilon_1}(A|E|P_{A,E}\|Q_E)} \Label{8-29-1}\\
&=
\min_{Q_E}
\min_{R}
2 \min_{P_{A,E}': {H}_{\min}(A|E|P_{A,E}'\|Q_E) \ge R} \|P_{A,E}-P_{A,E}'\|_1 
+\sqrt{\varepsilon} \sM^{\frac{1}{2}} e^{-\frac{1}{2}R},
\Label{8-29-2}
\\
\Delta_{I,\min}(\sM,\varepsilon|P_{A,E})
&:=\min_{Q_E}
\min_{P_{A,E}': P_{E}'\le Q_E,}
\eta( \|P_{A,E}-P_{A,E}'\|_1  ,\log \sM)
+ \varepsilon \sM e^{-{H}_{\min}(A|E|P_{A,E}' \|P_E )} \\
&=
\min_{\epsilon_1 >0}
\eta( \epsilon_1 ,\log \sM ) 
+\varepsilon \sM e^{-{H}_{\min}^{\downarrow,\epsilon_1}(A|E|P_{A,E})} 
\Label{8-29-3}\\
&=
\min_{R}
\eta( 
\min_{P_{A,E}': P_{E}'\le P_E, {H}_{\min}(A|E|P_{A,E}' \|P_E )\ge R} 
\|P_{A,E}-P_{A,E}'\|_1  
,\log \sM)
+ \varepsilon \sM e^{-R},
\Label{8-29-4}
\end{align}
we obtain the following theorem.
\begin{thm}\Label{t8-27-1}
Assume that 
$Q_E$ is a normalized distribution on ${\cal E}$,
$P_{A,E}$ is a sub-distribution on $\cA \times \cE$,
and a linear random hash function $f_{\bX}$ from $\cA$ to $\cM=\{1, \ldots, \sM\}$
is $\varepsilon$-almost dual universal$_2$.
Then, the random hash function $f_{\bX}$ satisfies
\begin{align}
\rE_{\bX} d_1'(f_{\bX}(A)|E|P_{A,E}) 
\le &
\Delta_{d,\min}(\sM,\varepsilon|P_{A,E})
\Label{12-5-1-2}, \\
\rE_{\bX} I'(f_{\bX}(A)|E|P_{A,E} ) 
\le &
\Delta_{I,\min}(\sM,\varepsilon|P_{A,E})
.\Label{12-5-2-2}
\end{align}
\end{thm}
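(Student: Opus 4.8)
The plan is to obtain Theorem~\ref{t8-27-1} by reducing it to Theorem~\ref{Lem8} via the elementary bound $H_{2}(A|E|P_{A,E}'\|Q_E)\ge H_{\min}(A|E|P_{A,E}'\|Q_E)$, which was recorded just after Lemma~\ref{L11-1} (take $s=1$ in $H_{1+s}\ge H_{\min}$), together with the fact that the terms $2\|P_{A,E}-P_{A,E}'\|_1$ and $\eta(\|P_{A,E}-P_{A,E}'\|_1,\log\sM)$ do not involve the entropies at all. First I would note that for every normalized $Q_E$ and every sub-distribution $P_{A,E}'$ on $\cA\times\cE$,
\begin{align}
e^{-\frac12 H_{2}(A|E|P_{A,E}'\|Q_E)}\le e^{-\frac12 H_{\min}(A|E|P_{A,E}'\|Q_E)},\qquad
e^{-H_{2}(A|E|P_{A,E}'\|P_E)}\le e^{-H_{\min}(A|E|P_{A,E}'\|P_E)} .
\end{align}
Plugging these into the first (double-minimum) expressions defining $\Delta_{d,2}(\sM,\varepsilon|P_{A,E})$ and $\Delta_{I,2}(\sM,\varepsilon|P_{A,E})$ and then taking minima over $Q_E$ and over $P_{A,E}'$ (under the extra constraint $P_E'\le P_E$ in the second case) gives $\Delta_{d,2}(\sM,\varepsilon|P_{A,E})\le\Delta_{d,\min}(\sM,\varepsilon|P_{A,E})$ and $\Delta_{I,2}(\sM,\varepsilon|P_{A,E})\le\Delta_{I,\min}(\sM,\varepsilon|P_{A,E})$. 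Combining with (\ref{12-5-1-a}) and (\ref{12-5-2-a}) of Theorem~\ref{Lem8} then yields (\ref{12-5-1-2}) and (\ref{12-5-2-2}).

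If one prefers a self-contained argument that does not route through the $\Delta_{\cdot,2}$ quantities, the same conclusion follows by repeating the proof of Theorem~\ref{Lem8}: apply Proposition~\ref{Lem6-3} to a smoothed sub-distribution $P_{A,E}'$ to get $\rE_{\bX}{d_{2}}(f_{C_{\bX}}(A)|E|P_{A,E}'\|Q_E)\le\varepsilon e^{-H_{2}(A|E|P_{A,E}'\|Q_E)}\le\varepsilon e^{-H_{\min}(A|E|P_{A,E}'\|Q_E)}$, then substitute this estimate into Lemma~\ref{Lem7-1} for the $L_1$ criterion and into Lemma~\ref{Lem7} for the modified mutual information criterion, and finally minimize over $Q_E$ and over the smoothing $P_{A,E}'$. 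In the modified mutual information case one must carry the constraint $P_E'\le P_E$ throughout, since (\ref{3-28-2}) — hence the whole chain in the proof of Lemma~\ref{Lem7} — fails without it; this constraint is exactly the one built into the definition of $\Delta_{I,\min}$.

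The only point needing genuine care is the bookkeeping around $P_E'\le P_E$ in the modified mutual information bound: to match the stated bound one uses the third (constrained-minimum) form of $\Delta_{I,\min}$, in which the constraint reads $H_{\min}(A|E|P_{A,E}'\|P_E)\ge R$ together with $P_E'\le P_E$, or equivalently one appeals to the identity $H_{\min}^{\downarrow,\epsilon_1}(A|E|P_{A,E})=H_{\min}^{\epsilon_1}(A|E|P_{A,E}\|P_E)$ promised in Lemma~\ref{L8-29-1}. Everything else is a term-by-term comparison of the two $\Delta$ quantities, so I expect no obstacle beyond this smoothing-constraint accounting.
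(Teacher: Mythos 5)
Your proposal is correct and is essentially the paper's own (implicit) argument: the paper introduces $\Delta_{d,\min}$ and $\Delta_{I,\min}$ precisely by replacing $H_2$ with $H_{\min}$ in the definitions of $\Delta_{d,2}$ and $\Delta_{I,2}$, and Theorem \ref{t8-27-1} follows from Theorem \ref{Lem8} via the term-by-term comparison $e^{-H_2(A|E|P_{A,E}'\|Q_E)}\le e^{-H_{\min}(A|E|P_{A,E}'\|Q_E)}$, exactly as you describe. Your care about carrying the constraint $P_E'\le P_E$ through the modified-mutual-information bound matches the paper's own remark at the end of the proof of Lemma \ref{Lem7}.
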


That is, 
$\Delta_{d,\min}(\sM,\varepsilon|P_{A,E})$ and 
$\Delta_{I,\min}(\sM,\varepsilon|P_{A,E})$
are upper bounds for leaked information
in the respective criteria when the smoothing of min entropy is applied.

%However, for a real application, we need to calculate or evaluate the quantities
%$\Delta_{d,2}(\sM,\varepsilon|P_{A,E})^{1/2}$, $\Delta_{I,2}(\sM,\varepsilon|P_{A,E})^{1/2}$,
%$\Delta_{d,\max}(\sM,\varepsilon|P_{A,E})^{1/2}$, and $\Delta_{I,\max}(\sM,\varepsilon|P_{A,E})^{1/2}$.

\section{Relation with information spectrum}\Label{cqs4-5}
Information spectrum 
can derive asymptotically tight bounds
of the optimal performances of various information processings
by using only 
the asymptotic behavior of 
the tail probability, e.g., $P_{A,E}\{ (a,e)| P_{A|E}(a|e) \ge e^{-R}\}$.
Hence, it can be applied without any assumption for information sources.
While information spectrum originally addresses the asymptotic setting,
we bound the performances in the single-shot setting by using 
the tail probability.
We call these upper and lower bounds
single-shot information spectrum bounds.

In this section, we clarify the relation between
the smoothing of min entropy and single-shot information spectrum bounds.
In stead of the smooth min entropy
${H}_{\min}^{\downarrow,\epsilon_1}(A|E|P_{A,E})$,
we consider the bounds 
$\Delta_{d,\min}(\sM,\varepsilon|P_{A,E})$
and
$\Delta_{I,\min}(\sM,\varepsilon|P_{A,E})$
as functions of
$\min_{P_{A,E}': {H}_{\min}(A|E|P_{A,E}' \|Q_E )\ge R} 
\|P_{A,E}-P_{A,E}'\|_1 $
or
$\min_{P_{A,E}': P_{E}'\le P_E, {H}_{\min}(A|E|P_{A,E}' \|P_E )\ge R} 
\|P_{A,E}-P_{A,E}'\|_1 $.
That is, we employ the formulas (\ref{8-29-2}) and (\ref{8-29-4})
rather than (\ref{8-29-1}) and (\ref{8-29-3}).
Then,
we give their relations 
with the tail probability, e.g., $P_{A,E}\{ (a,e)| P_{A|E}(a|e) \ge e^{-R}\}$
as follows.
\begin{lem}\Label{L8-29-1}
\begin{align}
& \min_{P_{A,E}': {H}_{\min}(A|E|P_{A,E}' \|Q_E )\ge R} 
\|P_{A,E}-P_{A,E}'\|_1 \nonumber \\
=&
\min_{P_{A,E}': {H}_{\min}(A|E|P_{A,E}' \|Q_E )\ge R, P_{A,E}' \le P_{A,E}} 
\|P_{A,E}-P_{A,E}'\|_1 \nonumber \\
=&
P_{A,E}\{ (a,e)| P_{A,E}(a,e) > e^{-R} Q_E(e)\}
-
e^{-R} |\cA| P_{\mix, \cA }\times Q_E \{ (a,e)| P_{A,E}(a,e) > e^{-R} Q_E(e)\}.
\Label{10-16}
\end{align}
and
\begin{align}
& (1-\frac{1}{c}) P_{A,E}\{ (a,e)| P_{A,E}(a,e) > c e^{-R} Q_E(e)\} \nonumber \\
\le &
P_{A,E}\{ (a,e)| P_{A,E}(a,e) > e^{-R} Q_E(e)\}
-
e^{-R} |\cA| P_{\mix, \cA }\times Q_E \{ (a,e)| P_{A,E}(a,e) > e^{-R} Q_E(e)\} \nonumber \\
\le &
P_{A,E}\{ (a,e)| P_{A,E}(a,e) > e^{-R} Q_E(e)\}
\Label{10-16-2}
\end{align}
for $c>1$ and $R$.
\end{lem}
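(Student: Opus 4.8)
The plan is to prove the two displayed formulas in turn, working with the explicit structure of the minimizing sub-distribution. For \eqref{10-16}, the first equality — that the minimizer may be taken to satisfy $P_{A,E}' \le P_{A,E}$ — follows from the observation that lowering any mass $P_{A,E}(a,e)$ toward $e^{-R} Q_E(e)$ only helps the constraint $H_{\min}(A|E|P_{A,E}'\|Q_E)\ge R$ (which asks $P_{A,E}'(a,e) \le e^{-R} Q_E(e)$ everywhere) while being the cheapest way in $\ell_1$ to fix a violated point; raising mass where $P_{A,E}(a,e) \le e^{-R}Q_E(e)$ is already feasible can never be optimal since it only adds to $\|P_{A,E}-P_{A,E}'\|_1$. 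So the optimal $P_{A,E}'$ is obtained by the pointwise truncation $P_{A,E}'(a,e) = \min\{P_{A,E}(a,e), e^{-R}Q_E(e)\}$. Plugging this into $\|P_{A,E}-P_{A,E}'\|_1 = \sum_{a,e}(P_{A,E}(a,e) - P_{A,E}'(a,e))$ and splitting the sum over the set $S_R := \{(a,e) \mid P_{A,E}(a,e) > e^{-R}Q_E(e)\}$ gives $\sum_{S_R} P_{A,E}(a,e) - e^{-R}\sum_{S_R} Q_E(e)$, and rewriting $\sum_{S_R} Q_E(e) = |\cA| \sum_{S_R} P_{\mix,\cA}(a) Q_E(e) = |\cA|\, (P_{\mix,\cA}\times Q_E)(S_R)$ yields the right-hand side of \eqref{10-16}. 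I would present this carefully since it is the one genuinely computational step.

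For \eqref{10-16-2}, write $L := P_{A,E}(S_R)$ and note that the middle expression equals $L - e^{-R}|\cA|\,(P_{\mix,\cA}\times Q_E)(S_R)$. The upper bound is immediate since the subtracted term is nonnegative. For the lower bound, the key is that on $S_R$ we have $P_{A,E}(a,e) > e^{-R}Q_E(e)$, i.e. $e^{-R}Q_E(e) < P_{A,E}(a,e)$, so summing over $S_R$ gives $e^{-R}\sum_{S_R}Q_E(e) = e^{-R}|\cA|\,(P_{\mix,\cA}\times Q_E)(S_R) \le$ ... this naive bound just recovers $\ge 0$. To get the factor $(1-1/c)$ one must instead compare against the smaller set $S_{cR} := \{(a,e)\mid P_{A,E}(a,e) > c\,e^{-R}Q_E(e)\} \subseteq S_R$: on $S_{cR}$ one has $e^{-R}Q_E(e) < \frac1c P_{A,E}(a,e)$, while on $S_R \setminus S_{cR}$ one has $e^{-R}Q_E(e) < P_{A,E}(a,e)$ but also $P_{A,E}(a,e) \le c\,e^{-R}Q_E(e)$; the cleanest route is to bound $e^{-R}|\cA|(P_{\mix,\cA}\times Q_E)(S_R) = e^{-R}\sum_{S_R}Q_E(e)$. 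I would split this last sum as $\sum_{S_{cR}} + \sum_{S_R\setminus S_{cR}}$; on $S_{cR}$, $e^{-R}Q_E(e) \le \frac1c P_{A,E}(a,e)$ so that part contributes at most $\frac1c P_{A,E}(S_{cR})$; on $S_R \setminus S_{cR}$ use $e^{-R}Q_E(e) \le P_{A,E}(a,e)$ so that part contributes at most $P_{A,E}(S_R \setminus S_{cR}) = L - P_{A,E}(S_{cR})$. Altogether the subtracted term is at most $L - (1-\frac1c)P_{A,E}(S_{cR})$, hence the middle expression is at least $(1-\frac1c)P_{A,E}(S_{cR}) = (1-\frac1c)P_{A,E}\{(a,e)\mid P_{A,E}(a,e) > c\,e^{-R}Q_E(e)\}$, which is exactly the claimed lower bound.

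The main obstacle I anticipate is the bookkeeping in the lower bound of \eqref{10-16-2}: one must resist the temptation to bound $e^{-R}Q_E(e)$ by $P_{A,E}(a,e)$ uniformly on $S_R$ (which is sharp but loses everything), and instead exploit that on the tail $S_{cR}$ the gap between $P_{A,E}(a,e)$ and $e^{-R}Q_E(e)$ is at least a factor $c$. Everything else — the truncation argument, the set-indicator rewriting via $P_{\mix,\cA}$, and the upper bound — is routine. I would also remark in passing (or defer to Lemma~\ref{L8-29-1}'s stated consequence) that taking $Q_E = P_E$ and intersecting with $\{P_{E}' \le P_E\}$ gives the corresponding identity for $H_{\min}^{\downarrow,\epsilon_1}$, since the truncation $P_{A,E}' \le P_{A,E}$ automatically forces $P_E' \le P_E$.
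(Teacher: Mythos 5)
Your proposal is correct and follows essentially the same route as the paper: the truncation $P_{A,E}'(a,e)=\min\{P_{A,E}(a,e),e^{-R}Q_E(e)\}$ as the common minimizer for both constrained problems in \eqref{10-16}, and, for the lower bound in \eqref{10-16-2}, the same two pointwise inequalities ($e^{-R}Q_E\le \tfrac1c P_{A,E}$ on the set cut at $ce^{-R}$, and $e^{-R}Q_E\le P_{A,E}$ on the rest), which the paper packages as monotonicity of $\Omega\mapsto P_{A,E}(\Omega)-e^{-R}|\cA|(P_{\mix,\cA}\times Q_E)(\Omega)$ toward its maximizing set while you split the sum directly. The only difference is presentational, and your closing remark about $P'_E\le P_E$ matches the paper's observation following the lemma.
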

Since the condition $P_{A,E}' \le P_{A,E}$ is more restrictive than
$P_{A}' \le P_{A}$,
we see that 
$H_{\min}^{\downarrow,\epsilon_1}(A|E|P_{A,E})=H_{\min}^{\epsilon_1}(A|E|P_{A,E}\|P_E)$.

\begin{proof}
The optimal sub-distribution $P_{A,E}'$ in the first line of (\ref{10-16}) is
given as
\begin{align}
P_{A,E}'(a,e)=
\left\{\begin{array}{ll}
e^{-R} Q_E(e) & \hbox{ if } P_{A,E}(a,e) > e^{-R} Q_E(e) \\
P_{A,E}(a,e) & \hbox{ if } P_{A,E}(a,e) \le e^{-R} Q_E(e) 
\end{array}
\right.
\end{align}
The sub-distribution is the optimal sub-distribution in the second line
of (\ref{10-16}).
Substituting the above sub-distribution in to the first line, 
we obtain the third line of (\ref{10-16}).

Next, we show (\ref{10-16-2}).
Since
$c P_{A,E}\{ (a,e)| P_{A,E}(a,e) > c e^{-R} Q_E(e)\}
\ge e^{-R} |\cA| P_{\mix, \cA }\times Q_E \{ (a,e)| P_{A,E}(a,e) > c e^{-R} Q_E(e)\}$,
we have
\begin{align}
& 
(1-\frac{1}{c}) P_{A,E}\{ (a,e)| P_{A,E}(a,e) > c e^{-R} Q_E(e)\} \nonumber \\
= &
P_{A,E}\{ (a,e)| P_{A,E}(a,e) > c e^{-R} Q_E(e)\} 
-c P_{A,E}\{ (a,e)| P_{A,E}(a,e) > c e^{-R} Q_E(e)\} \nonumber \\
\le &
P_{A,E}\{ (a,e)| P_{A,E}(a,e) > c e^{-R} Q_E(e)\}
-
e^{-R} |\cA| P_{\mix, \cA }\times Q_E \{ (a,e)| P_{A,E}(a,e) > c e^{-R} Q_E(e)\} \nonumber \\
\le &
P_{A,E}\{ (a,e)| P_{A,E}(a,e) >  e^{-R} Q_E(e)\}
-
e^{-R} |\cA| P_{\mix, \cA }\times Q_E \{ (a,e)| P_{A,E}(a,e) >  e^{-R} Q_E(e)\}  
\Label{10-16-2b}\\
\le &
P_{A,E}\{ (a,e)| P_{A,E}(a,e) > e^{-R} Q_E(e)\}, \nonumber
\end{align}
where the inequality (\ref{10-16-2b}) follows from the fact that
the maximum $
\max_{\Omega} P_{A,E}(\Omega) - 
e^{-R} |\cA| P_{\mix, \cA }\times Q_E (\Omega)$
can be realized by the set $\{ (a,e)| P_{A,E}(a,e) >  e^{-R} Q_E(e)\}$.
\end{proof}

Therefore, 
using the formulas (\ref{8-29-2}) and (\ref{8-29-4}),
we obtain the following theorem.
\begin{thm}\Label{L3-19-10b}
The upper bounds 
$\Delta_{d,\min}(\sM,\varepsilon|P_{A,E})$
%$\min_{Q_E} \min_{\epsilon_1>0} 2 \epsilon_1+\sqrt{\varepsilon} \sM^{\frac{1}{2}}e^{-\frac{1}{2}{H}_{\min}^{\epsilon_1}(A|E|P_{A,E}\|Q_E)}$
and
$\Delta_{I,\min}(\sM,\varepsilon|P_{A,E})$
%$\min_{\epsilon_1>0}\eta( \epsilon_1 ,\log \sM ) + \varepsilon \sM e^{-{H}_{\min}^{\epsilon_1}(A|E|P_{A,E} )} $
of leaked information
by the smoothing of min entropy
can be evaluated as follows.
\begin{align}
& 
2(1-\frac{1}{c})
\min_{Q_E}
\min_{R'} 
P_{A,E}
\Bigl\{(a,e)\Bigl|\frac{P_{A,E}(a,e)}{Q_E(e)}> c e^{-R'}  \Bigr\}
+\sqrt{\varepsilon} \sM^{\frac{1}{2}} e^{-\frac{1}{2}R'}
 \Label{3-19-10b}\\
\le & 
\Delta_{d,\min}(\sM,\varepsilon|P_{A,E})
\le 
\min_{Q_E}
\min_{R'} 
2 
P_{A,E}
\Bigl\{(a,e)\Bigl|\frac{P_{A,E}(a,e)}{Q_E(e)}> e^{-R'}  
\Bigr\}
+\sqrt{\varepsilon} \sM^{\frac{1}{2}} e^{-\frac{1}{2}R'},
\Label{3-19-11b}\\
& 
(1-\frac{1}{c})
\min_{R'} 
\eta( P_{A,E}\{(a,e)\in {\cal A} \times {\cal E}|{P_{A|E}(a|e)} \ge c e^{- R'}  \}
 ,\log \sM ) 
+\varepsilon \sM e^{-R'} \Label{3-19-13b}\\
\le &
\Delta_{I,\min}(\sM,\varepsilon|P_{A,E})
\le
\min_{R'} 
\eta( P_{A,E}\{(a,e)\in {\cal A} \times {\cal E}|
{P_{A|E}(a|e)} > e^{-R'}  \}
 ,\log \sM ) 
+\varepsilon \sM e^{-R'} 
\Label{3-19-12b}
\end{align}
for $c>1$.
\end{thm}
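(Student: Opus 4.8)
The plan is to obtain Theorem~\ref{L3-19-10b} as a direct corollary of Lemma~\ref{L8-29-1}: once $\Delta_{d,\min}(\sM,\varepsilon|P_{A,E})$ and $\Delta_{I,\min}(\sM,\varepsilon|P_{A,E})$ are written in the ``information-spectrum-friendly'' forms (\ref{8-29-2}) and (\ref{8-29-4}), the only nontrivial object appearing in them is the smoothing cost $\min_{P_{A,E}':\,{H}_{\min}(A|E|P_{A,E}'\|Q_E)\ge R}\|P_{A,E}-P_{A,E}'\|_1$, and Lemma~\ref{L8-29-1} sandwiches exactly this cost between two tail probabilities of the likelihood ratio $P_{A,E}(a,e)/Q_E(e)$. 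So essentially the whole proof is: substitute the bounds (\ref{10-16})--(\ref{10-16-2}) into (\ref{8-29-2}) and (\ref{8-29-4}), relabel the optimization variable $R$ as $R'$, and rewrite the event $\{P_{A,E}(a,e)>\kappa e^{-R}Q_E(e)\}$ as $\{P_{A,E}(a,e)/Q_E(e)>\kappa e^{-R}\}$ (the indices $e$ with $Q_E(e)=0$ are excluded in the definition of $H_{\min}(A|E|P_{A,E}'\|Q_E)$, and one checks they contribute nothing to either side).

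For the $L_1$ estimates I would start from formula (\ref{8-29-2}). Its inner term is precisely the smoothing cost above, so the right-hand inequality of (\ref{10-16-2}) bounds it by $P_{A,E}\{(a,e)\mid P_{A,E}(a,e)/Q_E(e)>e^{-R}\}$; inserting this into (\ref{8-29-2}) and writing $R'$ for $R$ yields the upper bound (\ref{3-19-11b}). Symmetrically, the left-hand inequality of (\ref{10-16-2}) bounds the inner term from below by $(1-\tfrac1c)P_{A,E}\{(a,e)\mid P_{A,E}(a,e)/Q_E(e)>ce^{-R}\}$; substituting this into (\ref{8-29-2}) gives the lower bound (\ref{3-19-10b}). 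No monotonicity argument is needed here because (\ref{8-29-2}) depends linearly on the smoothing cost.

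For the modified-mutual-information estimates I would start from formula (\ref{8-29-4}), where $Q_E$ is fixed to $P_E$. First I would observe that the extra constraint $P_E'\le P_E$ in its inner minimization is not binding: by Lemma~\ref{L8-29-1} the optimal $P_{A,E}'$ may be taken with $P_{A,E}'\le P_{A,E}$, which forces $P_E'\le P_E$, so the inner value equals $\min_{P_{A,E}':\,{H}_{\min}(A|E|P_{A,E}'\|P_E)\ge R}\|P_{A,E}-P_{A,E}'\|_1$ (this is the remark noted immediately after Lemma~\ref{L8-29-1}, in its $H_{\min}$ form). Then I would apply (\ref{10-16-2}) with $Q_E=P_E$, using $P_{A,E}(a,e)/P_E(e)=P_{A|E}(a|e)$, and push the two resulting inequalities through the outer map $x\mapsto\eta(x,\log\sM)=x\log\sM-x\log x$ and the outer $\min_{R}+\varepsilon\sM e^{-R}$, producing (\ref{3-19-12b}) from the upper estimate and (\ref{3-19-13b}) from the lower one.

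The part that needs genuine, if small, care is the last step of the MMI case. To pass the upper estimate through $\eta(\cdot,\log\sM)$ one needs $x\mapsto\eta(x,\log\sM)$ to be nondecreasing on the range of the quantities involved, which are probabilities in $[0,1]$; its derivative $\log\sM-\log x-1$ is nonnegative for $x\le\sM/e$, so monotonicity on all of $[0,1]$ holds once $\sM\ge3$, and the residual small-$\sM$ case can be disposed of separately (both sides are trivially at most $\log\sM$, and this regime is irrelevant to the asymptotic statements derived afterwards). For the lower estimate one additionally uses the elementary inequality $\eta((1-\tfrac1c)x,\log\sM)\ge(1-\tfrac1c)\eta(x,\log\sM)$, which holds because $\log(1-\tfrac1c)\le0$, together with $(1-\tfrac1c)\le1$ applied to the $\varepsilon\sM e^{-R}$ term, in order to factor $(1-\tfrac1c)$ out of the whole expression as written in (\ref{3-19-13b}). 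Apart from these remarks the argument is a routine substitution of Lemma~\ref{L8-29-1} into (\ref{8-29-2}) and (\ref{8-29-4}).
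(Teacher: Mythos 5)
Your proposal is correct and follows essentially the same route as the paper, which obtains Theorem \ref{L3-19-10b} simply by substituting the sandwich (\ref{10-16-2}) of Lemma \ref{L8-29-1} into the formulas (\ref{8-29-2}) and (\ref{8-29-4}) (the paper's entire argument is the one sentence preceding the theorem). Your additional care about the monotonicity of $x \mapsto \eta(x,\log \sM)$ and the identification $H_{\min}^{\downarrow,\epsilon_1}=H_{\min}^{\epsilon_1}(\cdot\|P_E)$ only makes explicit steps the paper passes over silently.
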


Theorem \ref{L3-19-10b} explains that 
the bounds
$\Delta_{d,\min}(\sM,\varepsilon|P_{A,E})$
and
$\Delta_{I,\min}(\sM,\varepsilon|P_{A,E})$
by the smoothing of min entropy 
have almost the same values as
the single-shot information spectrum bounds.
Using this characterization, we evaluate 
the bounds
$\Delta_{d,\min}(\sM,\varepsilon|P_{A,E})$
and
$\Delta_{I,\min}(\sM,\varepsilon|P_{A,E})$
in the latter sections.
However, 
the bounds by the smoothing of 
R\'{e}nyi entropy of order 2
can not be characterized in the same way.
This fact seems to indicate the possibility of
the smoothing of R\'{e}nyi entropy of order 2
beyond the smoothing of min entropy.

\section{Secret key generation: Single-shot case}\Label{s4-1}
In order to obtain useful upper bounds,
we need to calculate or evaluate the quantities
$\Delta_{d,2}(\sM,\varepsilon|P_{A,E})^{1/2}$,
$\Delta_{I,2}(\sM,\varepsilon|P_{A,E})^{1/2}$,
$\Delta_{d,\max}(\sM,\varepsilon|P_{A,E})^{1/2}$, and
$\Delta_{I,\max}(\sM,\varepsilon|P_{A,E})^{1/2}$.
We say that their exact value is the {\it smoothing bound}.
%and upper bounds by non-optimal choice $P'$ are {\it approximate smoothing}.
Using the smoothing bound of R\'{e}nyi entropy of order 2,
the paper \cite{H-tight}
derived the following proposition.
\begin{proposition}\Label{Lem10}
The inequality
\begin{align}
\Delta_{d,2}(\sM,1|P_{A,E})
\le
3 
\sM^{s} e^{-s H_{\frac{1}{1-s}}^{\uparrow}(A|E|P_{A,E})}
\Label{3-26-1}
\end{align}
holds for $s \in (0,1/2]$.
\end{proposition}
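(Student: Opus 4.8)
The plan is to exhibit one explicit admissible pair $(Q_E,P'_{A,E})$ in the double minimum defining $\Delta_{d,2}(\sM,1|P_{A,E})$ and then optimize a single scalar threshold. Fix $s\in(0,1/2]$ and set $t:=\frac{s}{1-s}$, so that $t\in(0,1]$, $1+t=\frac{1}{1-s}$, and hence $H_{\frac{1}{1-s}}^{\uparrow}(A|E|P_{A,E})=H_{1+t}^{\uparrow}(A|E|P_{A,E})=:H^{*}$. I would take $Q_E=Q_E^{*}$ to be the maximizer in (\ref{8-26-8-c}) for the parameter $1+t$, which Lemma \ref{cor1} provides in closed form, so that $H_{1+t}(A|E|P_{A,E}\|Q_E^{*})=H^{*}$; and, for a threshold $\nu>0$ to be chosen at the end, I would let $P'_{A,E}$ be $P_{A,E}$ with every entry satisfying $P_{A,E}(a,e)>\nu\,Q_E^{*}(e)$ replaced by $0$.

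Next I would estimate the two contributions to $2\|P_{A,E}-P'_{A,E}\|_1+\sM^{1/2}e^{-H_2(A|E|P'_{A,E}\|Q_E^{*})/2}$ separately. The smoothing cost is controlled by a Markov-type tail bound at power $t$:
\[
\|P_{A,E}-P'_{A,E}\|_1=P_{A,E}\Bigl\{(a,e):\tfrac{P_{A,E}(a,e)}{Q_E^{*}(e)}>\nu\Bigr\}\le\nu^{-t}\sum_{a,e}P_{A,E}(a,e)^{1+t}Q_E^{*}(e)^{-t}=\nu^{-t}e^{-tH^{*}}.
\]
For the R\'enyi-$2$ term, on each surviving entry I would write $\frac{P_{A,E}(a,e)^2}{Q_E^{*}(e)}=\bigl(\frac{P_{A,E}(a,e)}{Q_E^{*}(e)}\bigr)^{1-t}P_{A,E}(a,e)^{1+t}Q_E^{*}(e)^{-t}\le\nu^{1-t}P_{A,E}(a,e)^{1+t}Q_E^{*}(e)^{-t}$, where the condition $1-t\ge0$ is exactly the hypothesis $s\le1/2$; summing and enlarging the sum to all entries gives $e^{-H_2(A|E|P'_{A,E}\|Q_E^{*})}\le\nu^{1-t}e^{-tH^{*}}$. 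Together these yield
\[
\Delta_{d,2}(\sM,1|P_{A,E})\le 2\nu^{-t}e^{-tH^{*}}+\sM^{1/2}\nu^{(1-t)/2}e^{-tH^{*}/2}.
\]

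Then I would optimize over $\nu$. The minimizer is $\nu=\bigl(\tfrac{4t}{1-t}\bigr)^{2/(1+t)}\sM^{-1/(1+t)}e^{-tH^{*}/(1+t)}$, and substituting it both terms collapse — after propagating the exponents using $\tfrac{t}{1+t}=s$ and $\tfrac{tH^{*}}{1+t}=sH^{*}$ — to a constant multiple of $\sM^{s}e^{-sH^{*}}$:
\[
\Delta_{d,2}(\sM,1|P_{A,E})\le\Bigl(2C_0^{-t}+C_0^{(1-t)/2}\Bigr)\sM^{s}e^{-sH^{*}},\qquad C_0:=\Bigl(\tfrac{4t}{1-t}\Bigr)^{2/(1+t)}.
\]
The boundary value $t=1$ (that is, $s=1/2$) needs no truncation at all: there $H^{*}=H_2^{\uparrow}(A|E|P_{A,E})$, so already $P'_{A,E}=P_{A,E}$ gives $\Delta_{d,2}(\sM,1|P_{A,E})\le\sM^{1/2}e^{-H^{*}/2}\le 3\sM^{1/2}e^{-H^{*}/2}$.

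The only genuine chore left will be the scalar inequality $2C_0^{-t}+C_0^{(1-t)/2}\le 3$ for $t\in(0,1)$. This is a one-variable calculus exercise: the bracket tends to $2$ as $t\to 0^{+}$ and to $1$ as $t\to 1^{-}$, and its maximum sits at $t=\tfrac15$, where $\tfrac{4t}{1-t}=1$, hence $C_0=1$ and the bracket equals $2\cdot1+1=3$ exactly. I expect this bookkeeping, together with correctly tracking the exponents so that both terms really reduce to $\sM^{s}e^{-sH^{*}}$, to be the main (albeit routine) obstacle; all the structural steps are direct consequences of Lemma \ref{cor1}, the definitions of the conditional R\'enyi entropies, and elementary Markov/H\"older estimates.
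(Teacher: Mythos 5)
Your argument is correct and is essentially the route the paper (via \cite{H-tight}) takes for this proposition: the information-spectrum-type truncation at a threshold relative to the optimizing $Q_E^{*}$ of Lemma \ref{cor1}, a Markov bound at power $t=\frac{s}{1-s}$ for the smoothing cost, and a H\"older-type bound (using $t\le 1$) for the R\'enyi-$2$ term. The only difference is cosmetic: taking the balancing threshold $\nu_0=(\sM e^{tH^{*}})^{-1/(1+t)}$ directly makes both terms equal to $\sM^{s}e^{-sH^{*}}$ and yields $2+1=3$ at once, so your closing scalar inequality $2C_0^{-t}+C_0^{(1-t)/2}\le 3$ is automatic (the minimum over $\nu$ cannot exceed the value at $\nu_0$) and needs no calculus.
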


Using the same smoothing bound, 
we obtain the following evaluation.
\begin{lem}\Label{Lem11}
The inequality
\begin{align}
\Delta_{d,2}(\sM,\varepsilon|P_{A,E})
\le
(2+\sqrt{\varepsilon} ) 
\sM^{s} e^{-s H_{\frac{1}{1-s}}^{\uparrow}(A|E|P_{A,E})}
\Label{3-26-2}
\end{align}
holds for $s \in (0,1/2]$.
\end{lem}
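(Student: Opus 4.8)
The plan is to reduce the $\varepsilon$-almost dual universal$_2$ case to the universal$_2$ case that was already handled in Proposition \ref{Lem10}. Recall that $\Delta_{d,2}(\sM,\varepsilon|P_{A,E})$ is defined as a minimization over $Q_E$ and over smoothing sub-distributions $P_{A,E}'$ of the quantity $2\|P_{A,E}-P_{A,E}'\|_1 + \sqrt{\varepsilon}\,\sM^{1/2}e^{-\frac12 H_2(A|E|P_{A,E}'\|Q_E)}$, whereas $\Delta_{d,2}(\sM,1|P_{A,E})$ is the same expression with $\varepsilon=1$, i.e.\ with $\sqrt\varepsilon$ replaced by $1$. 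So the only difference between the two objective functions is the scalar factor $\sqrt\varepsilon$ multiplying the $e^{-\frac12 H_2}$ term. First I would make this explicit: for any fixed $Q_E$ and $P_{A,E}'$,
\begin{align*}
2\|P_{A,E}-P_{A,E}'\|_1 + \sqrt{\varepsilon}\,\sM^{1/2}e^{-\frac12 H_2(A|E|P_{A,E}'\|Q_E)}
\le & \; \sqrt\varepsilon\Bigl(2\|P_{A,E}-P_{A,E}'\|_1 + \sM^{1/2}e^{-\frac12 H_2(A|E|P_{A,E}'\|Q_E)}\Bigr)\\
& + (1-\sqrt\varepsilon)\cdot 2\|P_{A,E}-P_{A,E}'\|_1,
\end{align*}
which is useful only when $\varepsilon\le 1$. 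Since (\ref{eq:epsilon-lower-bound}) forces $\varepsilon$ to be bounded below by roughly $1 - \sM/|\cA|$, but $\varepsilon$ can also exceed $1$, I need to treat both regimes, and the cleaner route is simply a direct optimization rather than this splitting.

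The cleaner approach: take the near-optimal pair $(Q_E, P_{A,E}')$ achieving (up to the explicit constant $3$) the bound in Proposition \ref{Lem10} — concretely the smoothed distribution used in \cite{H-tight}, which yields $2\|P_{A,E}-P_{A,E}'\|_1 \le 2\sM^s e^{-s H^{\uparrow}_{1/(1-s)}(A|E|P_{A,E})}$ and $\sM^{1/2}e^{-\frac12 H_2(A|E|P_{A,E}'\|Q_E)} \le \sM^s e^{-s H^{\uparrow}_{1/(1-s)}(A|E|P_{A,E})}$ separately (inspection of the proof of Proposition \ref{Lem10} shows the constant $3 = 2 + 1$ decomposes exactly this way, the $2$ coming from the variational-distance term and the $1$ from the $L_2$ term). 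Feeding that same pair into the definition of $\Delta_{d,2}(\sM,\varepsilon|P_{A,E})$ gives
\begin{align*}
\Delta_{d,2}(\sM,\varepsilon|P_{A,E})
\le 2\|P_{A,E}-P_{A,E}'\|_1 + \sqrt\varepsilon\,\sM^{1/2}e^{-\frac12 H_2(A|E|P_{A,E}'\|Q_E)}
\le (2+\sqrt\varepsilon)\,\sM^s e^{-s H^{\uparrow}_{1/(1-s)}(A|E|P_{A,E})},
\end{align*}
which is exactly (\ref{3-26-2}). So the real content is just that the same smoothing construction works verbatim, and only the coefficient of the $L_2$ term picks up the factor $\sqrt\varepsilon$ coming from Proposition \ref{Lem6-3} (the $\varepsilon$-almost dual universal$_2$ hashing lemma) in place of Renner's universal$_2$ lemma.

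The main obstacle, and the only place that needs care, is verifying that the proof of Proposition \ref{Lem10} in \cite{H-tight} genuinely produces the two summands with constants $2$ and $1$ respectively, rather than mixing them in a way that would not cleanly absorb the $\sqrt\varepsilon$; equivalently, I must confirm that the choice of $Q_E$ and of the smoothing region in \cite{H-tight} does not itself depend on $\varepsilon=1$. Inspecting that argument, the smoothing truncates $P_{A,E}$ on the set where $P_{A|E}(a|e)$ exceeds a threshold and the bound on $\|P_{A,E}-P_{A,E}'\|_1$ is obtained from a Markov-type tail estimate controlled by $H^{\uparrow}_{1/(1-s)}$, while the bound on $e^{-\frac12 H_2(A|E|P'_{A,E}\|Q_E)}$ uses the optimal $Q_E$ from Lemma \ref{cor1} — neither step uses the value of $\varepsilon$. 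Hence the restriction $s\in(0,1/2]$ is inherited unchanged (it is exactly the range in which the monotonicity/relation between $H_2$ of the smoothed distribution and $H^{\uparrow}_{1/(1-s)}$ of the original is available, via Lemmas \ref{L11-1}, \ref{cor}, \ref{L7-1}), and the argument goes through with the single modification described above. I would present this as: recall the two-term estimate extracted from the proof of Proposition \ref{Lem10}, then substitute the same pair into the definition of $\Delta_{d,2}(\sM,\varepsilon|P_{A,E})$ and collect constants.
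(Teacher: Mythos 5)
Your proposal is correct and matches the paper's intent exactly: the paper gives no explicit proof of Lemma \ref{Lem11}, saying only ``Using the same smoothing bound, we obtain the following evaluation,'' i.e.\ it defers to the information-spectrum-smoothing construction behind Proposition \ref{Lem10} from \cite{H-tight}, which is precisely the pair $(Q_E,P_{A,E}')$ you substitute into the definition of $\Delta_{d,2}(\sM,\varepsilon|P_{A,E})$. Your key observation --- that the constant $3$ in (\ref{3-26-1}) decomposes as $2+1$ with the $2$ attached to the $\|P_{A,E}-P_{A,E}'\|_1$ term and the $1$ to the $\sM^{1/2}e^{-\frac12 H_2}$ term, so that only the latter coefficient becomes $\sqrt{\varepsilon}$ --- is exactly the content of the lemma, and your check that the smoothing construction is independent of $\varepsilon$ is the right (and only) thing to verify.
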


Similar to Theorem \ref{Lem8},
we obtain an upper bound for $\Delta_{I,2}(\sM,\varepsilon|P_{A,E})$.
\begin{thm}\Label{Lem12}
The inequality
\begin{align}
\Delta_{I,2}(\sM,\varepsilon|P_{A,E})
\le
\eta(\sM^s e^{-s H_{1+s}^{\downarrow}(A|E|P_{A,E} )},
\varepsilon+\log \sM )
\Label{3-26-3}
\end{align}
holds for $s \in (0,1]$.
\end{thm}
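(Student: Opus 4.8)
The plan is to exhibit, for the given $s\in(0,1]$, one smoothed sub-distribution $P_{A,E}'$ obtained from $P_{A,E}$ by a conditional truncation at level $1/\sM$, and to feed it into the second of the three equivalent expressions for $\Delta_{I,2}(\sM,\varepsilon|P_{A,E})$, namely $\min_{\epsilon_1>0}\eta(\epsilon_1,\log\sM)+\varepsilon\sM e^{-{H}_{2}^{\downarrow,\epsilon_1}(A|E|P_{A,E})}$, taking $\epsilon_1$ equal to $\delta:=\sM^{s}e^{-sH_{1+s}^{\downarrow}(A|E|P_{A,E})}$. Working with the $\epsilon_1$-form, rather than bounding the first argument of $\eta$ directly, keeps the argument clean: once $\epsilon_1=\delta$ is fixed the term $\eta(\delta,\log\sM)$ is simply carried along, and it remains only to bound the entropic term by $\varepsilon\delta$, since $\eta(\delta,\varepsilon+\log\sM)=\eta(\delta,\log\sM)+\varepsilon\delta$.

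Concretely, set $P_{A,E}'(a,e):=\min\{P_{A,E}(a,e),\,P_E(e)/\sM\}$, so that $0\le P_{A,E}'\le P_{A,E}$; in particular $P_{A,E}'$ is a sub-distribution with $P_E'\le P_E$, and it is admissible for the radius-$\delta$ smoothing once we verify $\|P_{A,E}-P_{A,E}'\|_1\le\delta$. For that I would use the elementary majorization $(p-\mu)_+\le p^{1+s}\mu^{-s}$, valid for $p,\mu>0$ and $s>0$ (trivial for $p\le\mu$; for $p>\mu$ one has $p-\mu<p\le p(p/\mu)^{s}$), applied with $p=P_{A|E}(a|e)$ and $\mu=1/\sM$, together with the identity $\sum_{a,e}P_E(e)P_{A|E}(a|e)^{1+s}=e^{-sH_{1+s}^{\downarrow}(A|E|P_{A,E})}$ for the normalized $P_{A,E}$; this gives
\[
\|P_{A,E}-P_{A,E}'\|_1=\sum_{a,e}P_E(e)\,(P_{A|E}(a|e)-1/\sM)_+\le \sM^{s}\sum_{a,e}P_E(e)P_{A|E}(a|e)^{1+s}=\sM^{s}e^{-sH_{1+s}^{\downarrow}(A|E|P_{A,E})}=\delta .
\]
Next, since $P_{A,E}'(a,e)/P_E(e)\le 1/\sM$, $1-s\ge0$ and $P_{A,E}'\le P_{A,E}$, I would bound
\[
e^{-H_{2}(A|E|P_{A,E}'\|P_E)}=\sum_{a,e}\Big(\frac{P_{A,E}'(a,e)}{P_E(e)}\Big)^{1-s}\frac{P_{A,E}'(a,e)^{1+s}}{P_E(e)^{s}}\le \sM^{s-1}\sum_{a,e}\frac{P_{A,E}(a,e)^{1+s}}{P_E(e)^{s}}=\sM^{s-1}e^{-sH_{1+s}^{\downarrow}(A|E|P_{A,E})},
\]
so that $\sM e^{-H_{2}(A|E|P_{A,E}'\|P_E)}\le\delta$, i.e. $H_{2}(A|E|P_{A,E}'\|P_E)\ge\log(\sM/\delta)$.

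Combining the two displays, $H_{2}^{\downarrow,\delta}(A|E|P_{A,E})\ge H_{2}(A|E|P_{A,E}'\|P_E)\ge\log(\sM/\delta)$, hence $\varepsilon\sM e^{-H_{2}^{\downarrow,\delta}(A|E|P_{A,E})}\le\varepsilon\delta$; substituting $\epsilon_1=\delta$ into the definition of $\Delta_{I,2}$ gives $\Delta_{I,2}(\sM,\varepsilon|P_{A,E})\le\eta(\delta,\log\sM)+\varepsilon\delta=\eta(\sM^{s}e^{-sH_{1+s}^{\downarrow}(A|E|P_{A,E})},\varepsilon+\log\sM)$, which is (\ref{3-26-3}); the hypothesis $s\le1$ enters precisely through the use of $1-s\ge0$ in the second estimate. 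The step that needs the most care — the main obstacle — is the $\ell_1$ tail estimate: getting the clean exponent $\sM^{s}$ forces the truncation level to be exactly $1/\sM$ and requires the right elementary majorization so that the remaining sum collapses to $e^{-sH_{1+s}^{\downarrow}}$; everything else, including the fact that no monotonicity of $x\mapsto\eta(x,\log\sM)$ is needed because the first argument of $\eta$ is pinned to $\delta$, is routine.
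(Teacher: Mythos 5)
Your argument is correct and is essentially the paper's own proof: the paper smooths at the same conditional threshold $P_{A|E}(a|e)=1/\sM$ (zeroing out the large-probability part rather than capping it at $P_E(e)/\sM$), derives the same two estimates (\ref{5-14-8}) and (\ref{5-14-9}) by the same elementary pointwise majorizations, and substitutes them into (\ref{12-5-2-a}) using $\eta(\delta,\varepsilon+\log\sM)=\eta(\delta,\log\sM)+\varepsilon\delta$. Your only deviations --- capping instead of truncating, and feeding the fixed radius $\epsilon_1=\delta$ into the $\epsilon_1$-form of $\Delta_{I,2}$ rather than plugging the smoothed sub-distribution into the first form --- are cosmetic and do not change the substance.
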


\begin{proof}
For any integer $\sM$, we choose the subset 
$\Omega_{\sM}:=\{P_{A|E}(a|e) > \sM^{-1}  \}$,
and define the sub-distribution $P_{A,E:\sM}$ by
\begin{align*}
P_{A,E:\sM}(a,e):=
\left\{
\begin{array}{ll}
0 & \hbox{ if } (a,e) \in \Omega_{\sM} \\
P_{A,E}(a,e) & \hbox{ otherwise.}
\end{array}
\right.
\end{align*}
For $0 \le s \le 1$, we can evaluate $e^{-H_2(A|E|P_{A,E:\sM}\|P_E)}$  and 
$d_1(P_{A,E},P_{A,E:\sM})$ as
\begin{align}
& e^{-H_2(A|E|P_{A,E:\sM}\|P_E)} 
= 
\sum_{(a,e)\in \Omega_\sM^c} P_{A,E}(a,e)^2 (P_E(e))^{-1} \nonumber \\
\le &
\sum_{(a,e)\in \Omega_\sM^c} P_{A,E}(a,e)^{1+s} (P_E(e))^{-s} \sM^{-(1-s)}
\nonumber \\
\le &
\sum_{(a,e)} P_{A,E}(a,e)^{1+s} (P_E(e))^{-s} \sM^{-(1-s)}
\nonumber \\
=& e^{-s H_{1+s}^{\downarrow}(A|E|P_{A,E})} \sM^{-(1-s)} ,\Label{5-14-8}\\
& \| P_{A,E}-P_{A,E:\sM}\|_1 \nonumber \\
=& 
P_{A,E}(\Omega_\sM)  =
\sum_{(a,e)\in \Omega_\sM} P_{A,E}(a,e) \nonumber \\
\le &
\sum_{(a,e)\in \Omega_\sM} 
(P_{A,E}(a,e))^{1+s} \sM^s (P_E(e))^{-s} \nonumber \\
\le & 
\sum_{(a,e)} 
(P_{A,E}(a,e))^{1+s} \sM^s (P_E(e))^{-s} \nonumber \\
=& \sM^s  e^{-s H_{1+s}^{\downarrow}(A|E|P_{A,E})} .\Label{5-14-9}
\end{align}
Substituting (\ref{5-14-8}) and (\ref{5-14-9}) into (\ref{12-5-2-a}),
we obtain (\ref{12-5-6b})
because 
\begin{align*}
&\eta(\sM^s e^{-s H_{1+s}^{\downarrow}(A|E|P_{A,E})},\varepsilon+\log \sM )\\
=&
\eta(\sM^s e^{-s H_{1+s}^{\downarrow}(A|E|P_{A,E})} ,\log \sM ) 
+\varepsilon \sM^s e^{-s H_{1+s}^{\downarrow}(A|E|P_{A,E})}.
\end{align*}
\end{proof}

In the above proof, we choose $P_{A,E}'$ to be $P_{A,E:\sM}(a,e)$,
we call the smoothing with this particular choice 
the {\it information-spectrum-smoothing bound}
because this type smoothing  bound is used to derive the entropic information spectrum in \cite{TH}.
Indeed, the paper \cite{H-tight} also employed the information-spectrum-smoothing
 bound to derive Proposition \ref{Lem10}.

Further, 
$\Delta_{d,\min}(\sM,\varepsilon|P_{A,E})$
and $\Delta_{I,\min}(\sM,\varepsilon|P_{A,E})$
can be evaluated as follows.
\begin{thm}\Label{L3-19-10}
The upper bounds 
$\Delta_{d,\min}(\sM,\varepsilon|P_{A,E})$
%$\min_{Q_E} \min_{\epsilon_1>0} 2 \epsilon_1+\sqrt{\varepsilon} \sM^{\frac{1}{2}}e^{-\frac{1}{2}{H}_{\min}^{\epsilon_1}(A|E|P_{A,E}\|Q_E)}$
and
$\Delta_{I,\min}(\sM,\varepsilon|P_{A,E})$
%$\min_{\epsilon_1>0}\eta( \epsilon_1 ,\log \sM ) + \varepsilon \sM e^{-{H}_{\min}^{\epsilon_1}(A|E|P_{A,E} )} $
of leaked information
by the smoothing  bound of min entropy
can be evaluated as follows.
\begin{align} 
\Delta_{d,\min}(\sM,\varepsilon|P_{A,E})
% \Label{3-19-10}\\
& \le 
(2+\sqrt{\varepsilon})\min_{0\le s}
e^{\frac{-s H_{1+s}^{\uparrow}(A|E|P_{A,E}) +sR}{1+2s}}
\Label{3-19-11}\\
\Delta_{I,\min}(\sM,\varepsilon|P_{A,E})
%\Label{3-19-12}\\
& \le 
 \eta( 
\min_{0\le s}
e^{\frac{-s H_{1+s}^{\downarrow}(A|E|P_{A,E}) +sR}{1+s}}
 ,\varepsilon+\log \sM ) 
\Label{3-19-13}.
\end{align}
\end{thm}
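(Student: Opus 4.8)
The plan is to derive (\ref{3-19-11}) and (\ref{3-19-13}) from the single-shot information-spectrum bounds (\ref{3-19-11b}) and (\ref{3-19-12b}) of Theorem \ref{L3-19-10b}, by estimating the tail probabilities appearing there with an exponential Markov (Chernoff) inequality and then carrying out the elementary optimization over the threshold. I write $R:=\log\sM$ throughout, so that $\sM^{1/2}=e^{R/2}$ and $\sM=e^{R}$, and let $s\ge 0$ be arbitrary.

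For (\ref{3-19-11}) I would start from (\ref{3-19-11b}). For a fixed normalized $Q_E$ and threshold $R'$, on the event $\{(a,e):P_{A,E}(a,e)/Q_E(e)>e^{-R'}\}$ one has $(e^{R'}P_{A,E}(a,e)/Q_E(e))^{s}\ge 1$, so
\begin{align*}
P_{A,E}\{(a,e)\mid P_{A,E}(a,e)/Q_E(e)>e^{-R'}\}
&\le e^{sR'}\sum_{a,e}P_{A,E}(a,e)^{1+s}Q_E(e)^{-s} \\
&= e^{sR'}e^{-sH_{1+s}(A|E|P_{A,E}\|Q_E)},
\end{align*}
the last step being the defining formula for $H_{1+s}(A|E|P_{A,E}\|Q_E)$. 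Minimizing the resulting estimate over $Q_E$ and using $\max_{Q_E}H_{1+s}(A|E|P_{A,E}\|Q_E)=H_{1+s}^{\uparrow}(A|E|P_{A,E})$ (definition (\ref{8-26-8-c}), legitimate since $s\ge 0>-1$), the right-hand side of (\ref{3-19-11b}) is at most $\min_{R'}\bigl(2e^{sR'}e^{-sH_{1+s}^{\uparrow}(A|E|P_{A,E})}+\sqrt{\varepsilon}\,e^{R/2}e^{-R'/2}\bigr)$. I would then take $R'=\frac{2sH_{1+s}^{\uparrow}(A|E|P_{A,E})+R}{1+2s}$, which makes the two summands, apart from the constants $2$ and $\sqrt\varepsilon$, both equal to $e^{(sR-sH_{1+s}^{\uparrow}(A|E|P_{A,E}))/(1+2s)}$; the bound thus becomes $(2+\sqrt\varepsilon)\,e^{(-sH_{1+s}^{\uparrow}(A|E|P_{A,E})+sR)/(1+2s)}$, and taking the infimum over $s\ge 0$ gives (\ref{3-19-11}).

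The inequality (\ref{3-19-13}) is obtained in the same way from (\ref{3-19-12b}), now using $P_{A,E}(a,e)=P_{E,\normal}(e)P_{A|E}(a|e)$ together with the entropy $H_{1+s}^{\downarrow}$: for $s\ge 0$,
\begin{align*}
P_{A,E}\{(a,e)\mid P_{A|E}(a|e)>e^{-R'}\}
&\le e^{sR'}\sum_{e}P_{E,\normal}(e)\sum_{a}P_{A|E}(a|e)^{1+s} \\
&= e^{sR'}e^{-sH_{1+s}^{\downarrow}(A|E|P_{A,E})}.
\end{align*}
Substituting this into (\ref{3-19-12b}), and using that $x\mapsto\eta(x,\log\sM)$ is increasing on the relevant range so that the substitution preserves the inequality, the bound is $\min_{R'}\bigl(\eta(e^{sR'}e^{-sH_{1+s}^{\downarrow}(A|E|P_{A,E})},\log\sM)+\varepsilon\sM e^{-R'}\bigr)$. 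Choosing $R'$ with $e^{sR'}e^{-sH_{1+s}^{\downarrow}(A|E|P_{A,E})}=\sM e^{-R'}=:x$ gives $x=e^{(sR-sH_{1+s}^{\downarrow}(A|E|P_{A,E}))/(1+s)}$, and since $\eta(x,\log\sM)+\varepsilon x=\eta(x,\varepsilon+\log\sM)$ this bound equals $\eta(x,\varepsilon+\log\sM)$; the infimum over $s\ge 0$, pulled inside $\eta$ by the same monotonicity, yields (\ref{3-19-13}).

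I expect the substance to lie entirely in the two Markov estimates above, which are the standard information-spectrum bounds and present no real obstacle; what remains is the explicit one-parameter optimization over $R'$ indicated above. The only points that need a little care are bookkeeping ones: that the chosen $R'$ is admissible in the minima of (\ref{8-29-2}) and (\ref{8-29-4}) (it is, since $R'$ ranges over all of $\bR$), and that $\eta(\cdot,\log\sM)$ and $\eta(\cdot,\varepsilon+\log\sM)$ are evaluated where they are increasing, which is what makes substituting the tail-probability bound and exchanging $\min_{s}$ with $\eta$ legitimate.
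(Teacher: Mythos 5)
Your proposal is correct and follows essentially the same route as the paper: both start from the single-shot information-spectrum bounds (\ref{3-19-11b}) and (\ref{3-19-12b}), apply the Chernoff/Markov estimate $P_{A,E}\{P_{A,E}/Q_E>e^{-R'}\}\le e^{sR'-sH_{1+s}(A|E|P_{A,E}\|Q_E)}$ (and its $\downarrow$ analogue), pass to $H_{1+s}^{\uparrow}$ via the optimization over $Q_E$, and balance the two terms with exactly the same choices of $R'$, using $\eta(x,\log\sM)+\varepsilon x=\eta(x,\varepsilon+\log\sM)$ at the end. The only cosmetic difference is that you optimize over $Q_E$ before fixing $R'$ while the paper does the reverse; these are equivalent, and your exponent $(-sH_{1+s}^{\uparrow}+sR)/(1+2s)$ is the correct one (the stray factor $(1+s)$ in the paper's intermediate display is a typo that disappears in its final line).
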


Theorem \ref{L3-19-10} gives upper bounds on
$\Delta_{d,\min}(\sM,\varepsilon|P_{A,E})$ and
$\Delta_{I,\min}(\sM,\varepsilon|P_{A,E})$.
The combination of Theorems \ref{L3-19-10b} and \ref{L3-19-10}
shows the performance of the smoothing  bound of min entropy.
Using these bounds, we can show the tight exponential decreasing rates of 
$\Delta_{d,\min}(\sM,\varepsilon|P_{A,E})$ and
$\Delta_{I,\min}(\sM,\varepsilon|P_{A,E})$.

\begin{proof}
Since 
\begin{align}
&  P_{A,E}
\Bigl\{(a,e)\Bigl|\frac{P_{A,E}(a,e)}{Q_E(e)}> e^{-R'}  
\Bigr\} \nonumber \\
=&
\sum_{(a,e):\frac{P_{A,E}(a,e)}{Q_E(e)}> e^{-R'} }P_{A,E}(a,e) 
\nonumber \\
\le &
\sum_{(a,e):\frac{P_{A,E}(a,e)}{Q_E(e)}> e^{-R'} }
P_{A,E}(a,e) \Bigl(\frac{P_{A,E}(a,e)}{Q_E(e)}e^{R'} \Bigr)^s\nonumber \\
\le &
\sum_{(a,e)}
P_{A,E}(a,e) \Bigl(\frac{P_{A,E}(a,e)}{Q_E(e)}e^{R'} \Bigr)^s\nonumber \\
= &
e^{-s H_{1+s}(A|E|P_{A,E}|Q_E)+s R'}, 
\Label{8-29-3b}
\end{align}
choosing $R'= \frac{\log \sM+2s H_{1+s}(A|E|P_{A,E}|Q_E)}{1+2s}$,
we have
\begin{align*}
& 
2 
P_{A,E}
\Bigl\{(a,e)\Bigl|\frac{P_{A,E}(a,e)}{Q_E(e)}> e^{-R'}  
\Bigr\}
+\sqrt{\varepsilon} \sM^{\frac{1}{2}} e^{-\frac{1}{2}R'} \\
\le &
2 
e^{-s H_{1+s}(A|E|P_{A,E}|Q_E)+s R'}
+\sqrt{\varepsilon} \sM^{\frac{1}{2}} e^{-\frac{1}{2}R'} \\
\le &
(2+\sqrt{\varepsilon} ) 
e^{\frac{-(1+s)s H_{1+s}(A|E|P_{A,E}|Q_E) +sR}{1+2s}}.
\end{align*}
Since the above inequality holds for $s \ge 0$, 
Lemma \ref{cor1} yields that
\begin{align*}
& 
\min_{Q_E}
\min_{R'} 
2 
P_{A,E}
\Bigl\{(a,e)\Bigl|\frac{P_{A,E}(a,e)}{Q_E(e)}> e^{-R'}  
\Bigr\}
+\sqrt{\varepsilon} \sM^{\frac{1}{2}} e^{-\frac{1}{2}R'} \\
\le &
\min_{0\le s}
\min_{Q_E}
(2+\sqrt{\varepsilon} ) 
e^{\frac{-(1+s)s H_{1+s}(A|E|P_{A,E}|Q_E) +sR}{1+2s}}\\
=&
(2+\sqrt{\varepsilon} )
\min_{0 \le s}
e^{\frac{-s H_{1+s}^{\uparrow}(A|E|P_{A,E}) +sR}{1+2s}}
\end{align*}
Hence, 
combining (\ref{3-19-11b}), we obtain (\ref{3-19-11}).

Choosing $R'= \frac{\log \sM+s H_{1+s}^{\downarrow}(A|E|P_{A,E})}{1+s}$,
we have
\begin{align*}
& 
\eta( 
P_{A,E}
\Bigl\{(a,e)\Bigl|{P_{A|E}(a|e)}> e^{-R'}  
\Bigr\}
 ,\log \sM ) 
+ \varepsilon \sM e^{-R'} \\
\le &
\eta( 
e^{-s H_{1+s}^{\downarrow}(A|E|P_{A,E})+s R'}
 ,\log \sM ) 
+ \varepsilon \sM e^{-R'} \\
\le &
\eta( 
e^{\frac{-s H_{1+s}^{\downarrow}(A|E|P_{A,E}) +sR}{1+s}}
 ,\log \sM ) 
+\varepsilon e^{\frac{-s H_{1+s}^{\downarrow}(A|E|P_{A,E}) +sR}{1+s}}\\
=& \eta( 
e^{\frac{-s H_{1+s}^{\downarrow}(A|E|P_{A,E}) +sR}{1+s}}
 ,\varepsilon +\log \sM ) .
\end{align*}
Since the above inequality holds for $s \ge 0$, 
we have
\begin{align*}
& 
\min_{R'} 
\eta( 
P_{A,E}
\Bigl\{(a,e)\Bigl|{P_{A|E}(a|e)}> e^{-R'}  
\Bigr\}
 ,\log \sM ) 
+ \varepsilon \sM e^{-R'} \\
\le &
\min_{0\le s}
 \eta( 
e^{\frac{-s H_{1+s}^{\downarrow}(A|E|P_{A,E}) +sR}{1+s}}
 ,\varepsilon +\log \sM ) \\
= &
 \eta( 
\min_{0\le s}
e^{\frac{-s H_{1+s}^{\downarrow}(A|E|P_{A,E}) +sR}{1+s}}
 ,\varepsilon +\log \sM ) ,
\end{align*}
Hence, 
combining (\ref{3-19-12b}), we obtain (\ref{3-19-13}).
\end{proof}

\begin{rem}
Here, we compare the calculation amount of obtained bounds in 
Sections \ref{cqs4}, \ref{cqs4-5}, and \ref{s4-1}.
In order to calculate the bounds 
$\Delta_{d,2}(\sM,\varepsilon|P_{A,E})$,
$\Delta_{I,2}(\sM,\varepsilon|P_{A,E})$,
$\Delta_{d,\min}(\sM,\varepsilon|P_{A,E})$, and
$\Delta_{I,\min}(\sM,\varepsilon|P_{A,E})$
based on the smoothing,
we need calculate the smooth entropies, which contains   
several optimizations.
Hence, the calculation of these bounds requires
at least double optimization process.
Then, they need higher calculation amounts.
In particular, if the block size becomes larger,
their calculation amounts increase heavily.

The bounds given in Section \ref{cqs4-5}
are calculated from the tail probability.
For example, the tail probability 
$P_{A,E} \{(a,e)|{P_{A|E}(a|e)}> e^{-R'} \} $ 
can be characterized 
as the tail probability with respect to the random variable
$\log P_{A|E}(a|e)$
because 
$P_{A,E} \{(a,e)|{P_{A|E}(a|e)}> e^{-R'} \} 
= P_{A,E} \{(a,e)|\log P_{A|E}(a|e)> {-R'} \} $. 
Hence, in the i.i.d. case, this probability
can be calculated by using statistical packages.
While the calculation amount increases with a rise in the block size,
it is not as large as the above cases because 
statistical packages can be used.

The calculation amounts of the bounds given in Section \ref{s4-1}
are quite small.
In particular, in the i.i.d. case, 
the calculation amounts do not depend on the block size.
These bounds have great advantages with respect to their calculation amounts.
\end{rem}

\section{Secret key generation: Asymptotic case}\Label{s4-1-b}
%\subsection{Exponential decreasing rate}
Next, we consider the case when 
the information source is given by the $n$-fold independent and identical distribution $P_{A,E}^n$ 
of $P_{A,E}$, i.e., $P_{A_n,E_n} =P_{A,E}^n$. 
In this case, Ahlswede and Csisz\'{a}r \cite{AC93} showed that
the optimal generation rate
\begin{align*}
& G(P_{AE}) 
:=
\sup_{\{(f_n,\sM_n)\}}
\left\{\left.
\lim_{n\to\infty} \frac{\log \sM_n}{n}
\right|
%\lim_{n\to\infty} 
d_1'(f_{n}(A_n)|E_n | P_{A,E}^n )
\to 0 
\! \right\}
\end{align*}
equals the conditional entropy $H(A|E)$,
where $f_n$ is a function from $\cA^n$ to $\{1,\ldots, \sM_n\}$.
That is, when the generation rate $R= \lim_{n\to\infty} \frac{\log \sM_n}{n}$
is smaller than $H(A|E)$,
the quantity $d_1'(f_{n}(A_n)|E_n | P_{A,E}^n )$ goes to zero.
In order to treat the speed of this convergence,
we focus on the supremum of  
the {\it exponential rate of decrease (exponent)} for 
$d_1'(f_{n}(A_n)|E_n | P_{A,E}^n )$ and
$I'(f_{n}(A_n)|E_n | P_{A,E}^n )=
I(f_{n}(A_n):E_n | P_{A,E}^n )
+
D(P_{f_{n}(A_n)} \| P_{\mix,f_{n}({\cal A}_n)})$ for a given $R$.

Due to (\ref{8-26-9}),
when $d_1'(f_{C_n}(A_n)|E_n | P_{A,E}^n )$ goes to zero,
$I'(f_{C_n}(A_n)|E_n | P_{A,E}^n )$ goes to zero.
Conversely,
due to (\ref{8-19-14}),
when $I'(f_{C_n}(A_n)|E_n | P_{A,E}^n )$ goes to zero,
$d_1'(f_{C_n}(A_n)|E_n | P_{A,E}^n )$ goes to zero.
So, even if we replace the security criterion by $I'(f_{C_n}(A_n)|E_n | P_{A,E}^n )$,
the optimal generation rate does not change.

Now, we consider 
the case when the length of generated keys behaves as
$n H(A|E|P) + \sqrt{n}R$.
It is known in \cite[Subsection II-D]{W-H} that
\begin{align}
\lim_{n \to \infty} 
\min_{f} d_1'(f(A_n)|E_n | P_{A,E}^n ) 
=
2\int_{-\infty}^{R/\sqrt{V(P)}}
\frac{1}{\sqrt{2 \pi}}
e^{-x^2/2} dx
\Label{12-18-7-d}.
\end{align}
Then,
using Theorem \ref{L3-19-10},
we obtain the following theorem.
\begin{thm}\Label{t3-16-b}
We choose a polynomial $P(n)$.
When a random linear function  $f_{\bX^n}$ from $\cA^n$ to $\{1,\ldots, \lfloor e^{n H(A|E|P) + \sqrt{n}R} \rfloor\}$
is $P(n)$-almost dual universal$_2$,
the relations
\begin{align}
\lim_{n \to \infty} 
\rE_{\bX_n} d_1'(f_{\bX^n}(A_n)|E_n | P_{A,E}^n ) 
= 
\lim_{n \to \infty} 
\min_{f} d_1'(f(A_n)|E_n | P_{A,E}^n ) 
=
2\int_{-\infty}^{R/\sqrt{V(P)}}
\frac{1}{\sqrt{2 \pi}}
e^{-x^2/2} dx
\Label{12-18-7-c}
\end{align}
hold, where
we take the minimum 
under the condition that $f$ is a function from 
$\cA^n$ to $\{1,\ldots, \lfloor e^{n H(A|E|P) + \sqrt{n}R} \rfloor\}$
and
$V(P):= \sum_{a,e}P_{A,E}(a,e)
(\log P_{A|E}(a|e)- H(A|E|P))^2$.
\end{thm}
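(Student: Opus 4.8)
The plan is to sandwich the average $\rE_{\bX_n} d_1'(f_{\bX^n}(A_n)|E_n|P_{A,E}^n)$ between two quantities that both converge to $2\int_{-\infty}^{R/\sqrt{V(P)}}\frac{1}{\sqrt{2\pi}}e^{-x^2/2}\,dx$; combined with the known equality (\ref{12-18-7-d}) this gives the whole chain. For the lower bound, every realization of $f_{\bX^n}$ is a function from $\cA^n$ to $\{1,\ldots,\lfloor e^{nH(A|E|P)+\sqrt{n}R}\rfloor\}$, hence $d_1'(f_{\bX^n}(A_n)|E_n|P_{A,E}^n)\ge \min_f d_1'(f(A_n)|E_n|P_{A,E}^n)$ for each value of $\bX^n$; taking the expectation over $\bX_n$ and invoking (\ref{12-18-7-d}) yields $\liminf_{n\to\infty}\rE_{\bX_n} d_1'(f_{\bX^n}(A_n)|E_n|P_{A,E}^n)\ge 2\int_{-\infty}^{R/\sqrt{V(P)}}\frac{1}{\sqrt{2\pi}}e^{-x^2/2}\,dx$.

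For the upper bound, since $f_{\bX^n}$ is $P(n)$-almost dual universal$_2$, Theorem \ref{t8-27-1} gives $\rE_{\bX_n} d_1'(f_{\bX^n}(A_n)|E_n|P_{A,E}^n)\le \Delta_{d,\min}(\sM_n,P(n)|P_{A,E}^n)$ with $\sM_n:=\lfloor e^{nH(A|E|P)+\sqrt{n}R}\rfloor$, and the upper bound in (\ref{3-19-11b}) of Theorem \ref{L3-19-10b}, applied to the i.i.d.\ source $P_{A,E}^n$, bounds $\Delta_{d,\min}(\sM_n,P(n)|P_{A,E}^n)$ by $\min_{Q}\min_{R'}\bigl(2\,P_{A,E}^n\{(a^n,e^n)\mid P_{A,E}^n(a^n,e^n)/Q(e^n)>e^{-R'}\}+\sqrt{P(n)}\,\sM_n^{1/2}e^{-R'/2}\bigr)$. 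Fix $\delta>0$ and choose $Q:=P_E^n$ and $R':=nH(A|E|P)+\sqrt{n}(R+\delta)$. Then $P_{A,E}^n(a^n,e^n)/P_E^n(e^n)=\prod_{i=1}^n P_{A|E}(a_i|e_i)$, so the first event reads $\sum_{i=1}^n\log P_{A|E}(a_i|e_i)>-nH(A|E|P)-\sqrt{n}(R+\delta)$; under $P_{A,E}^n$ the summands are i.i.d.\ with mean $-H(A|E|P)$ and variance $V(P)$, so by the central limit theorem this probability tends to $\int_{-\infty}^{(R+\delta)/\sqrt{V(P)}}\frac{1}{\sqrt{2\pi}}e^{-x^2/2}\,dx$. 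The penalty term equals $\sqrt{P(n)}\,e^{\sqrt{n}(R-(R+\delta))/2}\le\sqrt{P(n)}\,e^{-\sqrt{n}\delta/2}\to 0$, since exponential decay in $\sqrt{n}$ swamps the polynomial $P(n)$. Hence $\limsup_{n\to\infty}\rE_{\bX_n} d_1'(f_{\bX^n}(A_n)|E_n|P_{A,E}^n)\le 2\int_{-\infty}^{(R+\delta)/\sqrt{V(P)}}\frac{1}{\sqrt{2\pi}}e^{-x^2/2}\,dx$ for every $\delta>0$; letting $\delta\downarrow 0$ and using continuity of the Gaussian integral gives the matching upper bound, and combining with (\ref{12-18-7-d}) proves the theorem.

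The part requiring care — rather than a genuine obstacle — is organizing the second-order ($\sqrt{n}$) scaling: picking $Q=P_E^n$ so that the relevant log-likelihood ratio becomes an i.i.d.\ sum, centering that sum at $nH(A|E|P)$ with variance $nV(P)$ so the central limit theorem applies, and checking that the penalty $\sqrt{P(n)}\,\sM_n^{1/2}e^{-R'/2}$ vanishes, which forces a strictly positive slack $\delta$ and hence the concluding limit $\delta\downarrow 0$. One also uses $V(P)<\infty$ for the central limit theorem, and the degenerate case $V(P)=0$ is covered by the usual convention that the Gaussian integral equals $0$ or $1$ according to the sign of $R$.
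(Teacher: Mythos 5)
Your proof is correct and follows essentially the same route as the paper: lower-bound the average by $\min_f d_1'$ and invoke (\ref{12-18-7-d}), and upper-bound it via Theorem \ref{t8-27-1} and the information-spectrum bound (\ref{3-19-11b}) with $Q_E=P_E^n$, evaluating the tail probability by the central limit theorem while the penalty term vanishes. The only cosmetic difference is the slack in $R'$: you use $\sqrt{n}\delta$ followed by $\delta\downarrow 0$, whereas the paper takes $R'=nH(A|E|P)+\sqrt{n}R+n^{1/4}$ so that the $o(\sqrt{n})$ correction disappears directly in the CLT limit.
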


Lemma \ref{t3-16-b} implies that any $P(n)$-almost dual universal$_2$ hash function
realizes the optimality in the sense of the second order asymptotics
when we employ the $L_1$ distinguishability criterion.
This analysis is obtained from the smoothing bound of min entropy.
That is, this analysis does not require 
the smoothing bound of R\'{e}nyi entropy of order 2.
The second order analysis with the mutual information criterion 
is not so easy.
This topic will be discussed in a future paper.

\begin{proof}
We applying (\ref{3-19-11b}) 
in Theorem \ref{L3-19-10b}
with $R'= n H(A|E|P) + \sqrt{n}R + n^{1/4}$.
Then, the central limit theorem guarantees that
\begin{align*}
& \rE_{\bX_n} 
d_1'(f_{\bX^n}(A_n)|E_n | P_{A,E}^n ) 
\le \Delta_{d,\min}(e^{n H(A|E|P) + \sqrt{n}R + n^{1/4}},P(n)|
P_{A,E}^n) \\
\le &2 P_{A,E}^n \{ (a,e) | P_{A|E}^n(a|e) > e^{-n H(A|E|P) - \sqrt{n}R - n^{1/4}} \}
+\sqrt{P(n)}e^{-n^{1/4}/2} \\
\to &
2 \int_{-\infty}^{R/\sqrt{V(P)}}
\frac{1}{\sqrt{2 \pi}}
e^{-x^2/2} dx.
\end{align*}
Since 
$\min_{f} d_1'(f(A_n)|E_n | P_{A,E}^n ) 
\le d_1'(f_{\bX^n}(A_n)|E_n | P_{A,E}^n ) $,
combining (\ref{12-18-7-d}),
we obtain (\ref{12-18-7-c}).
\end{proof}

Now, we proceed to the exponential decreasing rate 
when we choose the key generation rate $R$ is greater than $H(A|E|P)$.
Since the discussion for the exponential decreasing rate is more complex,
more delicate treatment is required.
First, we should remark that the exponential decreasing rate depends on the choice of the security criterion.
%Now, in order to consider the security with a wider class of linear hash functions,
%we choose a polynomial $P(n)$.
Then, we obtain the following theorem. 
\begin{thm}\Label{t3-16-1}
We choose a polynomial $P(n)$.
When a linear random function $f_{\bX^n}$ from $\cA^n$ to $\{1,\ldots, \lfloor e^{nR} \rfloor\}$
is $P(n)$-almost dual universal$_2$,
the relations
\begin{align}
\liminf_{n \to \infty} \frac{-1}{n}\log \rE_{\bX_n} d_1'(f_{\bX^n}(A_n)|E_n | P_{A,E}^n ) 
%\nonumber \\
\ge &
\liminf_{n \to \infty} \frac{-1}{n}\log \Delta_{d,2}(e^{nR},P(n)|P_{A,E}^n)
\ge e_{d}(P_{A,E}|R)
\Label{12-18-6-a} \\
 \liminf_{n \to \infty} \frac{-1}{n}\log \rE_{\bX_n} I'(f_{\bX^n}(A_n)|E_n | P_{A,E}^n ) %\nonumber \\ 
\ge &
\liminf_{n \to \infty} \frac{-1}{n}\log \Delta_{I,2}(e^{nR},P(n)|P_{A,E}^n)
\ge e_{I}(P_{A,E}|R)
\Label{12-18-7-a}
\end{align}
hold, where
\begin{align}
e_{d}(P_{A,E}|R) &:= 
\max_{0 \le t \le \frac{1}{2}} t (H_{\frac{1}{1-t}}^{\uparrow}(A|E|P_{A,E})-R) \\
e_{I}(P_{A,E}|R) &:=  \max_{0 \le s \le 1} s ( H_{1+s}^{\downarrow}(A|E|P_{A,E} ) -R).
\end{align}
\end{thm}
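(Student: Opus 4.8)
The plan is to feed the i.i.d.\ source $P_{A,E}^n$ into the single-shot chain of bounds already assembled in Sections~\ref{cqs4} and~\ref{s4-1}, and then read off the exponent using additivity of the conditional R\'enyi entropies under tensor powers. Write $\sM_n:=\lfloor e^{nR}\rfloor$. For each $n$, Theorem~\ref{Lem8} applied with $\varepsilon=P(n)$ and source $P_{A,E}^n$ gives $\rE_{\bX_n}d_1'(f_{\bX^n}(A_n)|E_n|P_{A,E}^n)\le\Delta_{d,2}(\sM_n,P(n)|P_{A,E}^n)$, and (since $P_{A,E}^n$ is normalized) $\rE_{\bX_n}I'(f_{\bX^n}(A_n)|E_n|P_{A,E}^n)\le\Delta_{I,2}(\sM_n,P(n)|P_{A,E}^n)$. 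Taking $\frac{-1}{n}\log(\cdot)$ and $\liminf_n$ of these inequalities already yields the left-hand inequalities in (\ref{12-18-6-a}) and (\ref{12-18-7-a}); it remains to bound $\liminf_n\frac{-1}{n}\log\Delta_{d,2}(\sM_n,P(n)|P_{A,E}^n)$ from below by $e_d(P_{A,E}|R)$, and likewise for $\Delta_{I,2}$.

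For the $L_1$ criterion I would insert Lemma~\ref{Lem11} applied to $P_{A,E}^n$: for each $s\in(0,1/2]$,
\begin{align}
\Delta_{d,2}(\sM_n,P(n)|P_{A,E}^n) & \le (2+\sqrt{P(n)})\,\sM_n^{\,s}\,e^{-sH_{\frac{1}{1-s}}^{\uparrow}(A^n|E^n|P_{A,E}^n)}.
\end{align}
For the modified mutual information I would insert Theorem~\ref{Lem12}: for each $s\in(0,1]$,
\begin{align}
\Delta_{I,2}(\sM_n,P(n)|P_{A,E}^n) & \le \eta\!\left(\sM_n^{\,s}\,e^{-sH_{1+s}^{\downarrow}(A^n|E^n|P_{A,E}^n)},\ P(n)+\log\sM_n\right).
\end{align}
The key structural input is that the closed form of $H_{\frac{1}{1-s}}^{\uparrow}$ in Lemma~\ref{cor1} and the defining sum of $H_{1+s}^{\downarrow}$ both factor over the $n$ coordinates of $P_{A,E}^n$, so that $H_{\frac{1}{1-s}}^{\uparrow}(A^n|E^n|P_{A,E}^n)=nH_{\frac{1}{1-s}}^{\uparrow}(A|E|P_{A,E})$ and $H_{1+s}^{\downarrow}(A^n|E^n|P_{A,E}^n)=nH_{1+s}^{\downarrow}(A|E|P_{A,E})$.

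With $\sM_n\le e^{nR}$ the first bound becomes $(2+\sqrt{P(n)})\,e^{-ns(H_{\frac{1}{1-s}}^{\uparrow}(A|E|P_{A,E})-R)}$, and since $P(n)$ is a polynomial, $\frac{1}{n}\log(2+\sqrt{P(n)})\to0$; hence $\liminf_n\frac{-1}{n}\log\Delta_{d,2}(\sM_n,P(n)|P_{A,E}^n)\ge s(H_{\frac{1}{1-s}}^{\uparrow}(A|E|P_{A,E})-R)$ for every $s\in(0,1/2]$. For the second bound, set $x_n:=\sM_n^{\,s}e^{-sH_{1+s}^{\downarrow}(A^n|E^n|P_{A,E}^n)}=\sM_n^{\,s}e^{-nsH_{1+s}^{\downarrow}(A|E|P_{A,E})}\le e^{-ns(H_{1+s}^{\downarrow}(A|E|P_{A,E})-R)}$ and $y_n:=P(n)+\log\sM_n$, so that $\eta(x_n,y_n)=x_n(y_n-\log x_n)$ with $y_n-\log x_n$ growing only polynomially in $n$; therefore $\liminf_n\frac{-1}{n}\log\Delta_{I,2}(\sM_n,P(n)|P_{A,E}^n)\ge s(H_{1+s}^{\downarrow}(A|E|P_{A,E})-R)$ for every $s\in(0,1]$. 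Maximizing over $s$, and covering the endpoint $s=0$ by the crude bounds $\Delta_{d,2}(\sM_n,P(n)|P_{A,E}^n)\le 2$ and $\Delta_{I,2}(\sM_n,P(n)|P_{A,E}^n)\le\log\sM_n$ (take the smoothed sub-distribution equal to $0$ in the two defining minimizations), one obtains $\liminf_n\frac{-1}{n}\log\Delta_{d,2}\ge e_d(P_{A,E}|R)$ and $\liminf_n\frac{-1}{n}\log\Delta_{I,2}\ge e_I(P_{A,E}|R)$, which together with the first step is exactly (\ref{12-18-6-a})--(\ref{12-18-7-a}).

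The only genuinely non-routine step is the additivity used in the third paragraph: it is what collapses the bounds to the single-letter exponents $e_d,e_I$, and one must simultaneously check that the multiplicative overheads $(2+\sqrt{P(n)})$ and $y_n-\log x_n$ in the $\eta$ term are sub-exponential so that they drop out of the rate. Everything else is a direct application of Theorems~\ref{Lem8} and~\ref{Lem12} and Lemma~\ref{Lem11}, together with the elementary observation that the $\Delta$-quantities are at most $2$ and $O(n)$ respectively, which handles the vacuous regime $R\ge H(A|E)$ where $e_d=e_I=0$.
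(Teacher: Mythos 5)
Your proposal is correct and follows exactly the route the paper takes: the paper's own proof is just the two-line pointer ``(\ref{12-18-6-a}) follows from Theorem \ref{Lem11}, (\ref{12-18-7-a}) from Theorem \ref{Lem12},'' and you carry out precisely that plan, making explicit the routine ingredients (additivity of $H^{\uparrow}_{1/(1-s)}$ and $H^{\downarrow}_{1+s}$ under tensor powers, sub-exponentiality of the $(2+\sqrt{P(n)})$ and $\eta$-overheads, and the trivial bounds covering the $e_d=e_I=0$ regime) that the paper leaves implicit.
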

%Therefore, the exponential decreasing rates $e_{d}(P_{A,E}|R)$ and $e_{I}(P_{A,E}|R)$ can be attained by a wider class, respectively.
\begin{proof}
(\ref{12-18-6-a}) can be shown by Theorem \ref{Lem11}.
(\ref{12-18-7-a}) can be shown by Theorem \ref{Lem12}.
\end{proof}

As is shown in Appendix \ref{aL3-29-1}, the following relation between two exponents $e_{I}(P_{A,E}|R)$ and $e_{d}(P_{A,E}|R)$ holds.
\begin{lem}\Label{L3-29-1}
we obtain
\begin{align}
\frac{1}{2}e_{I}(P_{A,E}|R) \le & e_{d}(P_{A,E}|R) \Label{12-21-30} \\
e_{I}(P_{A,E}|R) \ge & e_{d}(P_{A,E}|R) .\Label{12-21-31}
\end{align}
\end{lem}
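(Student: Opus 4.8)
The plan is to derive both inequalities from two facts that are already available in the excerpt: the ``reverse'' comparison of Lemma \ref{cor}, namely $H_{1+s}^{\downarrow}(A|E|P_{A,E}) \ge H_{\frac{1}{1-s}}^{\uparrow}(A|E|P_{A,E})$ for $s\in(0,1]$, together with the elementary inequality $H_{1+s}^{\uparrow}(A|E|P_{A,E}) \ge H_{1+s}^{\downarrow}(A|E|P_{A,E})$, which holds because $P_{E,\normal}$ is one admissible choice of $Q_E$ in the maximization \eqref{8-26-8-c} defining $H_{1+s}^{\uparrow}$. The link between the two exponents is the substitution $t=s/(1+s)$, i.e.\ $s=t/(1-t)$, which maps $[0,1]$ bijectively onto $[0,1/2]$ and satisfies $1+s=1/(1-t)$.

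For \eqref{12-21-31}, I would observe that every $t\in[0,1/2]$ is admissible in the maximum defining $e_{I}(P_{A,E}|R)$, so Lemma \ref{cor} gives $t\big(H_{1+t}^{\downarrow}(A|E|P_{A,E})-R\big) \ge t\big(H_{\frac{1}{1-t}}^{\uparrow}(A|E|P_{A,E})-R\big)$; maximizing over $t\in[0,1/2]$ and using that the left-hand side is at most $e_{I}(P_{A,E}|R)$ yields $e_{I}(P_{A,E}|R)\ge e_{d}(P_{A,E}|R)$.

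For \eqref{12-21-30}, I would fix $s\in[0,1]$ and bound $s\big(H_{1+s}^{\downarrow}(A|E|P_{A,E})-R\big)$ by $2e_{d}(P_{A,E}|R)$. If $H_{1+s}^{\downarrow}(A|E|P_{A,E})\le R$, the left-hand side is $\le 0\le 2e_{d}(P_{A,E}|R)$, since $e_{d}(P_{A,E}|R)\ge 0$ (take $t=0$). Otherwise put $t=s/(1+s)\in(0,1/2]$, so that $s=t/(1-t)$ and $H_{1+s}^{\uparrow}=H_{\frac{1}{1-t}}^{\uparrow}$; then
$$ s\big(H_{1+s}^{\downarrow}(A|E|P_{A,E})-R\big) \le s\big(H_{1+s}^{\uparrow}(A|E|P_{A,E})-R\big) = \frac{t}{1-t}\big(H_{\frac{1}{1-t}}^{\uparrow}(A|E|P_{A,E})-R\big). $$
Here the factor $H_{\frac{1}{1-t}}^{\uparrow}(A|E|P_{A,E})-R$ is strictly positive, because $H_{\frac{1}{1-t}}^{\uparrow}(A|E|P_{A,E})\ge H_{1+s}^{\downarrow}(A|E|P_{A,E})>R$, and $t\le 1/2$ gives $1/(1-t)\le 2$, so the right-hand side is at most $2t\big(H_{\frac{1}{1-t}}^{\uparrow}(A|E|P_{A,E})-R\big)\le 2e_{d}(P_{A,E}|R)$. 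Taking the supremum over $s\in[0,1]$ gives $e_{I}(P_{A,E}|R)\le 2e_{d}(P_{A,E}|R)$, i.e.\ \eqref{12-21-30}.

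The only delicate point is the final step $\frac{t}{1-t}X\le 2tX$: it requires $X:=H_{\frac{1}{1-t}}^{\uparrow}(A|E|P_{A,E})-R\ge 0$, which is precisely why the sign of $H_{1+s}^{\downarrow}(A|E|P_{A,E})-R$ is settled first by the case distinction. Apart from this bookkeeping, everything reduces to the two monotonicity facts cited at the outset and the reparametrization $t=s/(1+s)$, so I do not foresee a genuine obstacle.
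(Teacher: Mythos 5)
Your proposal is correct and follows essentially the same route as the paper's Appendix proof: \eqref{12-21-31} comes from Lemma \ref{cor} (inequality \eqref{8-26-8-k}) applied pointwise over the common parameter range, and \eqref{12-21-30} comes from $H_{1+s}^{\downarrow}\le H_{1+s}^{\uparrow}$, the substitution $t=s/(1+s)$, and the bound $\tfrac{1}{2(1-t)}\le 1$ for $t\le 1/2$ together with non-negativity of the exponents. Your explicit case distinction on the sign of $H_{1+s}^{\downarrow}(A|E|P_{A,E})-R$ is just a more spelled-out version of the paper's one-line appeal to non-negativity in \eqref{12-22-1}.
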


First, we consider the tightness of Inequality (\ref{12-18-6-a}).
Corollary \ref{c3-29-2} yields the exponent $\frac{e_{I}(P_{A,E}|R)}{2}$
for the $L_1$ distinguishability criterion.
Lemma \ref{L3-29-1} shows that
the exponents by Theorem \ref{Lem11} 
is better than that by Corollary \ref{c3-29-2}.
Further, it is also shown in \cite[Theorem 30]{W-H2} 
that there exists a sequence of universal$_2$ functions $f_{\bX^n}$ from $\cA^n$ to $\{1,\ldots, \lfloor e^{nR} \rfloor\}$
such that
\begin{align}
\limsup_{n \to \infty} \frac{-1}{n}\log \rE_{\bX_n} d_1'(f_{\bX^n}(A_n)|E_n | P_{A,E}^n ) % \nonumber  \\
\le 
\bar{e}_{d}(P_{A,E}|R) ,
\Label{12-18-6x}
\end{align}
where
\begin{align}
\bar{e}_{d}(P_{A,E}|R) 
:= \max_{0 \le t } t (H_{\frac{1}{1-t}}^{\uparrow}(A|E|P_{A,E})-R). 
\end{align}
When the maximum $\max_{0 \le t } t (H_{\frac{1}{1-t}}^{\uparrow}(A|E|P_{A,E})-R)$ is attained with $t \in (0,\frac{1}{2}]$,
we have 
${e}_{d}(P_{A,E}|R) =\bar{e}_{d}(P_{A,E}|R)$.
Assume that $P(n) \ge 1$.
Then, Since 
$\Delta_{d,2}(e^{nR},1) \le \Delta_{d,2}(e^{nR},P(n)|P_{A,E}^n)
\le \sqrt{P(n)}\Delta_{d,2}(e^{nR},1|P_{A,E}^n)$,
combining (\ref{3-26-1}), (\ref{12-18-6-a}), and (\ref{12-18-6x}) we have
\begin{align}
\lim_{n \to \infty} \frac{-1}{n}\log \Delta_{d,2}(e^{nR},P(n)|P_{A,E}^n)
=\lim_{n \to \infty} \frac{-1}{n}\log \Delta_{d,2}(e^{nR},1|P_{A,E}^n)
= e_{d}(P_{A,E}|R).
\end{align}
That is, our evaluation (\ref{12-18-6-a}) for $\Delta_{d,2}(e^{nR},P(n)|P_{A,E}^n)$ is sufficiently tight in the large deviation sense.

Next, we consider the tightness of Inequality (\ref{12-18-7-a}).
Corollary \ref{c3-29-1} yields the exponent $e_{d}(P_{A,E}|R)$
for the modified mutual information criterion.
Lemma \ref{L3-29-1} shows that
the exponent by Theorem \ref{Lem12}
is better than that by Corollary \ref{c3-29-1}.
Further, the lower bound of the exponent 
$e_{d}(P_{A,E}|R)$ is the same as that given in the previous paper \cite{H-leaked}
under the universal$_2$ condition.
Since the bound given in \cite{H-leaked} is the best lower bound of the exponent,
our evaluation (\ref{12-18-7-a}) for $\Delta_{I,2}(e^{nR},P(n)|P_{A,E}^n)$ is as good as the existing evaluation \cite{H-leaked} in the large deviation sense.

From the above discussion, we find that
the exponents directly obtained by the smoothing bound of R\'{e}nyi entropy of order 2 are
better than 
the exponents derived from the combination of 
Inequality (\ref{8-19-14})/(\ref{8-26-9}) and
the exponent of the other criterion. 
This fact indicates that we need to choose the smoothing bound
dependently of the security criterion.

\begin{rem}
Now, we consider the relation with the recent paper \cite{TSSR11}
discussing the quantum case as including the non-quantum case.
When ${\cal A}=\FF_q$,
we focus on a $1+P(n) q^{-n+ \lfloor nR \rfloor}$-almost universal$_2$ surjective linear function $f_{\bX^n}$ over the field $\FF_q$ 
from $\FF_q^n$ to $\FF_q^{\lfloor nR \rfloor}$.
Thanks to Proposition \ref{thm:almost-universal2},
the surjective linear random function $f_{\bX^n}$ over the field $\FF_q$ is
$q+P(n)$-almost dual universal$_2$. 
Hence, we obtain (\ref{12-18-6-a}),
which can recover a part of the result by \cite{TSSR11} 
with the case of linear functions in the non-quantum case. 
The paper \cite{TSSR11} showed the security with 
an $\epsilon_n$-almost universal$_2$ hash function
when $\epsilon_n$ approaches to $1$.
Since we assume the surjectivity,
our method cannot recover the result by \cite{TSSR11} with the linear hash function perfectly.
\end{rem}

Now, we clarify how better 
our smoothing  bound of R\'{e}nyi entropy of order 2 is than 
the smoothing  bound of min entropy.
As is shown in Appendix \ref{aL3-18-3}, we obtain the following theorem.
\begin{thm}\Label{L3-18-3}
The relations
\begin{align}
&\lim_{n \to \infty}\frac{-1}{n}\log
\Delta_{d,\min}(e^{nR},\varepsilon|P_{A,E}^n)
 \nonumber \\
=&
\tilde{e}_{d}(P_{A,E}|R)
:=
\max_{0\le s}\frac{s (H_{1+s}^{\uparrow} (A|E|P_{A,E}) -R)}{1+2s}
%\max_{0\le s}\frac{s H_{1+s}^{\downarrow}(A|P_{A}) -sR}{1+2s}
\Label{3-17-4} \\
&\lim_{n \to \infty}\frac{-1}{n}\log
\Delta_{I,\min}(e^{nR},\varepsilon|P_{A,E}^n)
\nonumber \\
= &
\tilde{e}_{I}(P_{A,E}|R)
:=\max_{0\le s}\frac{s H_{1+s}^{\downarrow}(A|E|P_{A,E}) -sR}{1+s}
\Label{3-17-4b} 
\end{align}
hold.
\end{thm}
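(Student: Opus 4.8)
The plan is to sandwich $\frac{-1}{n}\log\Delta_{d,\min}(e^{nR},\varepsilon|P_{A,E}^n)$ (and likewise $\Delta_{I,\min}$) between the exponent of the single-shot upper bound of Theorem~\ref{L3-19-10} and the exponent of the single-shot lower bound of Theorem~\ref{L3-19-10b}, showing both equal $\tilde{e}_{d}(P_{A,E}|R)$ (resp.\ $\tilde{e}_{I}(P_{A,E}|R)$). The key preliminary is tensorisation: inserting the factorisation $\sum_{a\in\cA^n}P_{A,E}^n(a,e)^{1+s}=\prod_i\sum_{a_i}P_{A,E}(a_i,e_i)^{1+s}$ into the formula of Lemma~\ref{cor1} and into the definition of $H_{1+s}^{\downarrow}$ gives $H_{1+s}^{\uparrow}(A^n|E^n|P_{A,E}^n)=nH_{1+s}^{\uparrow}(A|E|P_{A,E})$ and $H_{1+s}^{\downarrow}(A^n|E^n|P_{A,E}^n)=nH_{1+s}^{\downarrow}(A|E|P_{A,E})$ for all $s>-1$. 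Applying Theorem~\ref{L3-19-10} with $\sM=e^{nR}$ to $P_{A,E}^n$, the right-hand side of (\ref{3-19-11}) collapses to $(2+\sqrt\varepsilon)e^{-n\tilde{e}_{d}(P_{A,E}|R)}$, and that of (\ref{3-19-13}) to $\eta\bigl(e^{-n\tilde{e}_{I}(P_{A,E}|R)},\varepsilon+nR\bigr)=e^{-n\tilde{e}_{I}(P_{A,E}|R)}\bigl(\varepsilon+nR+n\tilde{e}_{I}(P_{A,E}|R)\bigr)$, whose exponent is $\tilde{e}_{I}(P_{A,E}|R)$ since the prefactor is sub-exponential; hence $\liminf_n\frac{-1}{n}\log\Delta_{d,\min}\ge\tilde{e}_{d}(P_{A,E}|R)$ and $\liminf_n\frac{-1}{n}\log\Delta_{I,\min}\ge\tilde{e}_{I}(P_{A,E}|R)$.

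For the matching converse inequalities I would use the single-shot lower bounds of Theorem~\ref{L3-19-10b}. The $\Delta_{I,\min}$ case is transparent: (\ref{3-19-13b}) contains only the tail probability $P_{A,E}^n\{(a,e):P_{A|E}^n(a|e)\ge ce^{-R'}\}$, and $\log P_{A|E}^n(a|e)=\sum_i\log P_{A|E}(a_i|e_i)$ is an i.i.d.\ sum under $P_{A,E}^n$ whose cumulant generating function is $\lambda\mapsto-\lambda H_{1+\lambda}^{\downarrow}(A|E|P_{A,E})$, so Cram\'{e}r's theorem gives, for $R'=nr$ with $r<H(A|E|P_{A,E})$, that $\frac{-1}{n}\log P_{A,E}^n\{(a,e):P_{A|E}^n(a|e)\ge ce^{-nr}\}\to\sup_{s\ge0}s(H_{1+s}^{\downarrow}(A|E|P_{A,E})-r)$ (and $\to0$ for $r\ge H(A|E|P_{A,E})$). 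Substituting this into (\ref{3-19-13b}) — the $\eta(\cdot,\log\sM)$ wrapper only multiplies by a factor polynomial in $n$ — and balancing against the second term $\varepsilon\sM e^{-R'}$, of exponent $r-R$, the exponent of the lower bound is $\max_r\min\bigl(\sup_{s\ge0}s(H_{1+s}^{\downarrow}(A|E|P_{A,E})-r),\,r-R\bigr)$; equating the two at the optimal $r^*=\tfrac{s^*H_{1+s^*}^{\downarrow}(A|E|P_{A,E})+R}{1+s^*}$ and simplifying yields $r^*-R=\tfrac{s^*(H_{1+s^*}^{\downarrow}(A|E|P_{A,E})-R)}{1+s^*}=\tilde{e}_{I}(P_{A,E}|R)$ by the standard Gallager-type manipulation. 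Thus $\limsup_n\frac{-1}{n}\log\Delta_{I,\min}\le\tilde{e}_{I}(P_{A,E}|R)$, which with the first paragraph proves (\ref{3-17-4b}).

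For $\Delta_{d,\min}$ the same programme applies, now starting from (\ref{3-19-10b}), but the first tail probability carries an extra minimisation over $Q_{E^n}$, namely $\inf_{Q_{E^n}}P_{A,E}^n\{(a,e):P_{A,E}^n(a,e)>ce^{-R'}Q_{E^n}(e)\}$. A Chernoff estimate, using $H_{1+s}(A^n|E^n|P_{A,E}^n\|Q_{E^n})\le H_{1+s}^{\uparrow}(A^n|E^n|P_{A,E}^n)=nH_{1+s}^{\uparrow}(A|E|P_{A,E})$ together with the maximiser of (\ref{8-26-8-c}), shows this infimum is at most $c^{-s}e^{-sn(H_{1+s}^{\uparrow}(A|E|P_{A,E})-r)}$ for every $s\ge0$. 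The step I expect to be the main obstacle is the \emph{matching} lower bound $\inf_{Q_{E^n}}P_{A,E}^n\{(a,e):P_{A,E}^n(a,e)>ce^{-nr}Q_{E^n}(e)\}\ge e^{-n(\sup_{s\ge0}s(H_{1+s}^{\uparrow}(A|E|P_{A,E})-r)+o(1))}$, valid uniformly over \emph{all} (not just product) $Q_{E^n}$: one needs a ``water-filling'' description of the optimal $Q_{E^n}$ — which in the regime $r\le\lim_{s\to\infty}H_{1+s}^{\uparrow}(A|E|P_{A,E})$ can be taken proportional to $e\mapsto\max_a P_{A,E}^n(a,e)=\prod_i\max_a P_{A,E}(a,e_i)$, forcing the tail to vanish so that only $\sqrt\varepsilon\sM^{1/2}e^{-R'/2}$ survives — together with a G\"{a}rtner--Ellis estimate in the complementary regime. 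Granting this, for the minimising pair $(Q_{E^n}^*,R')$ with $R'=nr^*$ in (\ref{3-19-10b}) one has either $r^*\le R+2\tilde{e}_{d}(P_{A,E}|R)+o(1)$, so that $\sqrt\varepsilon\sM^{1/2}e^{-R'/2}$ already has exponent $\le\tilde{e}_{d}(P_{A,E}|R)$, or $r^*>R+2\tilde{e}_{d}(P_{A,E}|R)$, in which case $\sup_{s\ge0}s(H_{1+s}^{\uparrow}(A|E|P_{A,E})-r^*)<\tilde{e}_{d}(P_{A,E}|R)$ so the first term alone has exponent $\le\tilde{e}_{d}(P_{A,E}|R)$; balancing as before but with $\tfrac{r-R}{2}$ in place of $r-R$ reproduces $\tilde{e}_{d}(P_{A,E}|R)$ and proves (\ref{3-17-4}).
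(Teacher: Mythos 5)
Your overall strategy---sandwiching the exponent between the single-shot upper bounds of Theorem \ref{L3-19-10} (using additivity of $H_{1+s}^{\uparrow}$ and $H_{1+s}^{\downarrow}$ on product distributions) and the single-shot lower bounds of Theorem \ref{L3-19-10b} evaluated by large deviations, then balancing the two terms over $R'$---is exactly the paper's. Your treatment of (\ref{3-17-4b}) is essentially complete and coincides with Appendix \ref{aps1}: Cram\'{e}r's theorem applied to the i.i.d.\ sum $\log P_{A|E}^n(a|e)$ gives the tail exponent $\max_{s\ge 0}(sH_{1+s}^{\downarrow}(A|E|P_{A,E})-sR')$, and the balancing identity $\max_{R'}\min\{sH_{1+s}^{\downarrow}-sR',\,R'-R\}=\frac{s(H_{1+s}^{\downarrow}-R)}{1+s}$ finishes that half.

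The gap is precisely where you flag it. For (\ref{3-17-4}) you need
\begin{align*}
\liminf_{n\to\infty}\frac{1}{n}\log \min_{Q_{E,n}} P_{A,E}^n\{(a,e)\,:\, P_{A,E}^n(a,e)\ge c\, e^{-nR'} Q_{E,n}(e)\}\ \ge\ -\max_{s\ge 0}\bigl(sH_{1+s}^{\uparrow}(A|E|P_{A,E})-sR'\bigr)
\end{align*}
uniformly over \emph{all} distributions $Q_{E,n}$ on ${\cal E}^n$, and your proposal only ``grants'' this via an unverified water-filling/G\"{a}rtner--Ellis heuristic. This is the crux of the whole theorem: once $Q_{E,n}$ ranges over arbitrary (non-product, $n$-dependent) distributions, neither Cram\'{e}r nor G\"{a}rtner--Ellis applies directly, and a priori the infimum could be far below what any product or ``water-filled'' choice achieves. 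The paper closes this in Lemma \ref{L3-20-1} by a different route: a symmetrization step (Lemmas \ref{L3-18-1} and \ref{L3-18-2}, via the Markov inequality) shows that replacing $Q_{E,n}$ by its permutation average costs only a factor $2$; the permutation-invariant distribution is then dominated by $|T_{n,{\cal E}}|$ times the universal mixture over type classes, reducing the infimum to a single fixed $Q$ at polynomial cost; the method of types converts the tail into the Sanov-type expression $\min_{P'_{A,E}}\{D(P'_{A,E}\|P_{A,E}) : D(P'_{A,E}\|P_{A,E})+H(A|E|P'_{A,E})\le R'\}$; and a separate variational computation with the tilted distribution $P_{A,E;s}$ (Lemma \ref{L3-20-2}, relying on the convexity statement of Lemma \ref{L3-19-1}) identifies this with $\max_{s\ge 0}(sH_{1+s}^{\uparrow}-sR')$. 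Without a substitute for this chain of arguments, your proof of (\ref{3-17-4}) is incomplete.
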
 

For the comparison of the exponents by the smoothing  bound of min entropy and R\'{e}nyi entropy of order 2,
as is shown in Appendix \ref{aL3-19-11}, we have the following lemma
by using Theorem \ref{L3-19-10}.
\begin{lem}\Label{L3-19-11}
The inequalities
\begin{align}
e_{d}(P_{A,E}|R)
>&
\tilde{e}_{d}(P_{A,E}|R)
%\max_{0\le s}\frac{s H_{1+s}^{\downarrow}(A|P_{A}) -sR}{1+2s}
\Label{3-18-1}\\
e_{I}(P_{A,E}|R)
>&
\tilde{e}_{I}(P_{A,E}|R)
\Label{3-18-1b}
\end{align}
hold
when 
$P_{A|E=e}$ is not a uniform distribution for an element $e \in {\cal E}$.
The equalities 
$e_{d}(P_{A,E}|R)=\tilde{e}_{d}(P_{A,E}|R)$
and $e_{I}(P_{A,E}|R)=\tilde{e}_{I}(P_{A,E}|R)$ hold
when $P_{A|E=e}$ is a uniform distribution for any element $e \in {\cal E}$.
\end{lem}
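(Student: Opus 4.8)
The plan is to reparametrize the four exponents so that, within each pair, both become suprema over a common R\'enyi order $1+s$ of $\bigl(H_{1+s}^{\uparrow}(A|E|P_{A,E})-R\bigr)$ (resp.\ with $H_{1+s}^{\downarrow}$) times two explicitly comparable weights of $s$; the (strict) inequalities then follow from the gap between these weights combined with the (strict) monotonicity of the conditional R\'enyi entropies in the order. I work throughout in the regime $R<H(A|E|P_{A,E})$, in which all four exponents are positive: if $R\ge H(A|E|P_{A,E})$, then since $H_{1+s}^{\uparrow},H_{1+s}^{\downarrow}\to H(A|E|P_{A,E})$ as $s\to0$ and both are decreasing, all four equal $0$ and the asserted equality is immediate; and when $P_{A|E=e}$ is uniform for every $e$ with $P_E(e)>0$, both $H_{1+s}^{\uparrow}$ and $H_{1+s}^{\downarrow}$ are the constant $\log|\cA|$, so all four reduce to $\max\{0,\log|\cA|-R\}$ and coincide.

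For the ``$I$'' pair, substituting $u=s/(1+s)$ gives $\tilde e_{I}(P_{A,E}|R)=\sup_{0\le u<1}u\bigl(H_{1/(1-u)}^{\downarrow}(A|E|P_{A,E})-R\bigr)$, while $e_{I}(P_{A,E}|R)=\max_{0\le u\le 1}u\bigl(H_{1+u}^{\downarrow}(A|E|P_{A,E})-R\bigr)$. Since $1/(1-u)\ge 1+u$ on $[0,1)$ and $s\mapsto H_{1+s}^{\downarrow}$ is monotonically decreasing (Lemma \ref{L11-1}), each term of $\tilde e_{I}$ is dominated by the corresponding term of $e_{I}$, so $\tilde e_{I}\le e_{I}$. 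For the ``$d$'' pair, substituting $t=s/(1+s)$ in $e_{d}$ turns $t\in[0,\tfrac12]$ into $s\in[0,1]$ and the order $1/(1-t)$ into $1+s$, so $e_{d}(P_{A,E}|R)=\max_{0\le s\le 1}\tfrac{s}{1+s}\bigl(H_{1+s}^{\uparrow}(A|E|P_{A,E})-R\bigr)$, whereas $\tilde e_{d}(P_{A,E}|R)=\sup_{0\le s}\tfrac{s}{1+2s}\bigl(H_{1+s}^{\uparrow}(A|E|P_{A,E})-R\bigr)$. On $[0,1]$ one has $\tfrac{s}{1+2s}\le\tfrac{s}{1+s}$, so those terms are dominated term-wise by $e_{d}$; for $s\ge1$ one has $\tfrac{s}{1+2s}<\tfrac12$ and, by Lemma \ref{L7-1}, $H_{1+s}^{\uparrow}\le H_{2}^{\uparrow}$, so such a term is at most $\tfrac12(H_{2}^{\uparrow}-R)$ (the $s=1$ term of $e_{d}$) when $H_{1+s}^{\uparrow}\ge R$ and is negative otherwise, hence in all cases at most $e_{d}$. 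Thus $\tilde e_{d}\le e_{d}$.

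For the strict inequalities, assume $P_{A|E=e}$ is non-uniform for some $e$ with $P_E(e)>0$. Then $s\mapsto H_{1+s}^{\uparrow}$ is strictly decreasing on $(-1,\infty)$ by Lemma \ref{L7-1}, and $s\mapsto H_{1+s}^{\downarrow}$ is strictly decreasing on $(0,\infty)$; the latter is the natural analogue of the strict part of Lemma \ref{L7-1} and would be recorded as a short auxiliary lemma: writing $sH_{1+s}^{\downarrow}(A|E|P_{A,E})=s\log|\cA|-sD_{1+s}(P_{A,E}\|P_{\mix,\cA}\times P_{E,\normal})$, the map $s\mapsto sD_{1+s}$ is strictly convex here because the log-likelihood ratio $\log\bigl(|\cA|\,P_{A|E}(a|e)\bigr)$ is not $P_{A,E}$-almost-surely constant, hence $s\mapsto sH_{1+s}^{\downarrow}$ is strictly concave and vanishes at $s=0$, so its chord slope $H_{1+s}^{\downarrow}=\bigl(sH_{1+s}^{\downarrow}\bigr)/s$ is strictly decreasing on $(0,\infty)$. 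With strict monotonicity, every term-wise comparison in the previous paragraph becomes strict for $s>0$, and the $s=0$ term equals $0$, strictly below the (positive) dominating exponent.

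The main obstacle is to upgrade these pointwise strict inequalities to a strict inequality between the suprema, which does not follow formally. For $\tilde e_{d}<e_{d}$ I would note that the objective $g(s):=\tfrac{s}{1+2s}\bigl(H_{1+s}^{\uparrow}(A|E|P_{A,E})-R\bigr)$ of $\tilde e_{d}$ is continuous on the compactification $[0,\infty]$, with value at $\infty$ equal to $\tfrac12\bigl(\lim_{s\to\infty}H_{1+s}^{\uparrow}(A|E|P_{A,E})-R\bigr)$; by strict monotonicity of $H_{1+s}^{\uparrow}$ this limit is strictly below $\tfrac12(H_{2}^{\uparrow}-R)\le e_{d}$, so, together with the strict bounds $g(s)<e_{d}$ at every finite $s>0$ and $g(0)=0<e_{d}$, one gets $g<e_{d}$ on all of $[0,\infty]$, and compactness then gives $\tilde e_{d}=\max_{[0,\infty]}g<e_{d}$. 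The argument for $\tilde e_{I}<e_{I}$ is identical after the $u$-substitution, using $\lim_{u\to1^{-}}u\bigl(H_{1/(1-u)}^{\downarrow}(A|E|P_{A,E})-R\bigr)=H_{\min}^{\downarrow}(A|E|P_{A,E})-R<H_{2}^{\downarrow}-R\le e_{I}$. The only other point requiring care is the strict-convexity statement for $s\mapsto sD_{1+s}$, in which one must identify the degeneracy case precisely as ``$P_{A|E=e}$ uniform for every $e$ with $P_E(e)>0$''.
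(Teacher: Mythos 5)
Your proof is correct, and for the MMI pair it is essentially the paper's own argument: both substitute $t=s/(1+s)$ so that $\tilde e_{I}$ and $e_{I}$ become suprema of $t\bigl(H^{\downarrow}_{\cdot}(A|E|P_{A,E})-R\bigr)$ over a common weight $t$ but with orders $\tfrac{1}{1-t}$ versus $1+t$, and conclude from (strict) monotonicity of the conditional R\'enyi entropy in its order. For the $L_1$ pair the paper does the same thing again --- it rewrites $\tilde e_{d}=\max_{0\le t\le 1}\tfrac{t}{1+t}\bigl(H^{\uparrow}_{1/(1-t)}-R\bigr)$ and $e_{d}=\max_{0\le t\le 1}\tfrac{t}{1+t}\bigl(H^{\uparrow}_{1+t}-R\bigr)$ and compares orders at a common weight --- whereas you keep the order $1+s$ common and compare the weights $\tfrac{s}{1+2s}$ and $\tfrac{s}{1+s}$, which forces the separate treatment of $s\ge 1$ via $H^{\uparrow}_{1+s}\le H^{\uparrow}_{2}$; the paper's reparametrization avoids that case split and is marginally cleaner. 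What your version buys is rigor at exactly the points the paper glosses over: (i) you note that a pointwise strict inequality between the objectives does not formally yield strict inequality of the suprema, and you close this by compactifying the parameter range and checking the boundary limits ($H^{\uparrow}_{\infty}<H^{\uparrow}_{2}$, $H^{\downarrow}_{\min}<H^{\downarrow}_{2}$), whereas the paper simply writes ``$<$'' between two maxima; (ii) you observe that for $R\ge H(A|E|P_{A,E})$ all four exponents vanish, so the strict inequalities only make sense for $R<H(A|E|P_{A,E})$, a restriction the paper leaves implicit; and (iii) you supply the strict monotonicity of $s\mapsto H^{\downarrow}_{1+s}$, which the paper asserts without proof. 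Two small blemishes, neither fatal: in the uniform case $e_{d}=\tilde e_{d}=\tfrac12\max\{0,\log|\cA|-R\}$ rather than $\max\{0,\log|\cA|-R\}$ (the factor $\tfrac12$ does not affect the equality claim); and the exact degeneracy case for strict convexity of $s\mapsto sD_{1+s}$ also includes conditionals that are uniform on proper supports of a common size, not only uniform on all of $\cA$ --- but that imprecision is inherited from Lemma \ref{L7-1} and affects the paper's proof equally.
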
 

Theorem \ref{L3-18-3} and Lemma \ref{L3-19-11} show that the smoothing bound of min entropy 
cannot attain the exponents $e_{d}(P_{A,E}|R)$ and $e_{I}(P_{A,E}|R)$.
That is, 
the bounds 
$\Delta_{d,2}(e^{nR},\varepsilon|P_{A,E}^n)$
and
$\Delta_{I,2}(e^{nR},\varepsilon|P_{A,E}^n)$
by the smoothing bound of R\'{e}nyi entropy of order 2 are strictly better 
than 
the bounds 
$\Delta_{d,\min}(e^{nR},\varepsilon|P_{A,E}^n)$
and
$\Delta_{I,\min}(e^{nR},\varepsilon|P_{A,E}^n)$
by the smoothing bound of min entropy
in the sense of large deviation.
This fact indicates the importance of 
smoothing bound of R\'{e}nyi entropy of order 2.

%our approximate smoothing of R\'{e}nyi entropy of order 2 yields strictly better exponents than the rigorous smoothing of min entropy. 

In summary, while 
the smoothing bound of min entropy yields the tight bound
in the sense of the second order asymptotics,
the smoothing bound of min entropy cannot yield the tight bound
in the sense of the exponential decreasing rate.

\begin{rem}\Label{rem-com}
Here, we give the relation with the results in the quantum case \cite{H-cq}.
The paper \cite{H-cq} showed that
\begin{align}
\liminf_{n \to \infty} \frac{-1}{n}\log 
\Delta_{d,2}(e^{nR},P(n)|P_{A,E}^n)
\ge &
\max_{0 \le t \le \frac{1}{2}} \frac{t}{2(1-t)} 
(H_{\frac{1}{1-t}}^{\uparrow}(A|E|P_{A,E})-R) 
\Label{12-18-6-c} \\
\liminf_{n \to \infty} \frac{-1}{n}\log 
\Delta_{I,2}(e^{nR},P(n)|P_{A,E}^n)
\ge &
\max_{0 \le s \le 1} \frac{s}{2-s} ( H_{1+s}^{\downarrow}(A|E|P_{A,E} ) -R).
\Label{12-18-7-t}
\end{align}
The RHSs of (\ref{12-18-6-c}) and (\ref{12-18-7-t})
are smaller than 
$e_{d}(P_{A,E}|R)$ and $e_{I}(P_{A,E}|R)$, respectively.
Hence, our result is better in the non-quantum case.
\end{rem}

\section{Equivocation rate of secret key generation}\Label{s4-5}
When the key generation rate $R$ is larger than the conditional entropy $H(A|E|P_{A,E})$,
the leaked information does not go to zero.
In this case, it is natural to consider the rate of the conditional
entropy rate of generated keys or
the rate of the modified mutual information \cite{Wyner}.
The former rate is called the equivocation rate,
and is known to be less than the conditional entropy $H(A|E|P_{A,E})$ \cite{Wyner}.
That is, 
the rate of the modified mutual information is larger than $R-H(A|E|P_{A,E})$.
Now, we show that 
the minimum rate of the modified mutual information $R-H(A|E|P_{A,E})$
can be achieved by an $\varepsilon$-almost dual universal$_2$ hash function.
For this purpose, we employ (\ref{12-6-4}) instead of (\ref{12-6-4-2}).
Then, we obtain a slightly stronger evaluation than Theorem \ref{t8-27-1}.

\begin{thm}\Label{t8-27-2}
Assume that 
$Q_E$ is a normalized distribution on ${\cal E}$,
$P_{A,E}$ is a sub-distribution on $\cA \times \cE$,
and
a linear random hash function $f_{\bX}$ from $\cA$ to $\cM=\{1, \ldots, \sM\}$
is $\varepsilon$-almost dual universal$_2$.
Then, the random hash function $f_{\bX}$ satisfies
\begin{align}
\rE_{\bX} I'(f_{\bX}(A)|E|P_{A,E} ) 
\le 
\underline{\Delta}_{I,\min}(\sM,\varepsilon|P_{A,E}) ,
\Label{12-5-2-d}
\end{align}
where
\begin{align}
\underline{\Delta}_{I,\min}(\sM,\varepsilon|P_{A,E}) 
&:=\min_{Q_E}
\min_{P_{A,E}': P_{E}'\le Q_E,}
\eta( \|P_{A,E}-P_{A,E}'\|_1  ,\log \sM)
+ \log (1+\varepsilon M e^{-{H}_{\min}(A|E|P_{A,E}' \|P_E )} )
\Label{12-5-2-f}
\\
&=
\min_{\epsilon_1 >0}
\eta( \epsilon_1 ,\log \sM ) 
+\log (1+\varepsilon \sM e^{-{H}_{\min}^{\downarrow,\epsilon_1}(A|E|P_{A,E})} )
\Label{12-5-2-c} \\
&= \min_{R'} 
\eta( 
\min_{P_{A,E}': P_{E}'\le P_E, {H}_{\min}(A|E|P_{A,E}' \|P_E )\ge R} 
\|P_{A,E}-P_{A,E}'\|_1  
 ,\log \sM ) 
+\log (1+\varepsilon \sM e^{-R'}) .
\Label{12-5-2-x} 
\end{align}
\end{thm}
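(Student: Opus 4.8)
The plan is to run the argument essentially identically to the modified‑mutual‑information part of Theorem~\ref{t8-27-1}, changing exactly one ingredient: where that proof linearizes $\log(1+x)\le x$ via inequality~(\ref{12-6-4-2}) of Lemma~\ref{Lem7}, here we keep the sharper inequality~(\ref{12-6-4}). So I would first fix an arbitrary sub-distribution $P_{A,E}'$ with $P_{E}'\le P_E$ and apply~(\ref{12-6-4}), which gives
\[
\rE_{\bX} I'(f_{\bX}(A)|E|P_{A,E})
\le \eta\bigl(\|P_{A,E}-P_{A,E}'\|_1,\log\sM\bigr)
+\log\bigl(1+\sM\,\rE_{\bX}{d_{2}}(f_{\bX}(A)|E|P_{A,E}'\|P_E)\bigr).
\]
Since $f_{\bX}=f_{C_{\bX}}$ is $\varepsilon$-almost dual universal$_2$, I would then invoke Proposition~\ref{Lem6-3} with the normalized distribution $Q_E:=P_E$ (the marginal of $P_{A,E}$), obtaining $\rE_{\bX}{d_{2}}(f_{C_{\bX}}(A)|E|P_{A,E}'\|P_E)\le\varepsilon e^{-H_2(A|E|P_{A,E}'\|P_E)}$, and finally use $H_2\ge H_{\min}$ (noted just before the definition of the smooth min entropy) to replace $H_2$ by $H_{\min}$. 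Substituting into the displayed bound and using monotonicity of $\log$ yields
\[
\rE_{\bX} I'(f_{\bX}(A)|E|P_{A,E})
\le \eta\bigl(\|P_{A,E}-P_{A,E}'\|_1,\log\sM\bigr)
+\log\bigl(1+\varepsilon\sM e^{-H_{\min}(A|E|P_{A,E}'\|P_E)}\bigr).
\]

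Next I would minimize the right-hand side over all admissible $P_{A,E}'$, which produces~(\ref{12-5-2-d}) with $\underline{\Delta}_{I,\min}(\sM,\varepsilon|P_{A,E})$ in the form~(\ref{12-5-2-f}). The two alternative expressions~(\ref{12-5-2-c}) and~(\ref{12-5-2-x}) then follow by exactly the same reorganization of the minimization already used for $\Delta_{I,\min}$ in Theorem~\ref{t8-27-1}: parametrizing the smoothing by $\epsilon_1:=\|P_{A,E}-P_{A,E}'\|_1$ turns the inner minimization into $H_{\min}^{\downarrow,\epsilon_1}(A|E|P_{A,E})$, giving~(\ref{12-5-2-c}); parametrizing instead by the min-entropy level $R'$ and using Lemma~\ref{L8-29-1} (the optimal smoothed sub-distribution may be taken pointwise $\le P_{A,E}$, which in particular forces $P_E'\le P_E$, and whose $\ell_1$ distance is computed by the tail probability) gives~(\ref{12-5-2-x}). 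Since these are literally the monotone-rearrangement identities established for $\Delta_{I,\min}$, with the scalar map $x\mapsto\varepsilon\sM x$ replaced by the still-increasing map $x\mapsto\log(1+\varepsilon\sM x)$, nothing new is needed.

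I do not expect a genuine obstacle here. All the analytic content is carried by Lemma~\ref{Lem7} and Proposition~\ref{Lem6-3}, both already proved, and the only point requiring a little care is to check that the constraint $P_E'\le P_E$ imposed by Lemma~\ref{Lem7} is compatible with the minimization domain $\{P_E'\le Q_E\}$ appearing in~(\ref{12-5-2-f}); this holds because, as in the proof of Lemma~\ref{L8-29-1}, the optimal $P_{A,E}'$ may always be taken pointwise below $P_{A,E}$. Finally, the elementary inequality $\log(1+x)\le x$ gives $\underline{\Delta}_{I,\min}(\sM,\varepsilon|P_{A,E})\le\Delta_{I,\min}(\sM,\varepsilon|P_{A,E})$, confirming that~(\ref{12-5-2-d}) is the announced slight strengthening of~(\ref{12-5-2-2}).
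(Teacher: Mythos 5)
Your proposal is correct and follows exactly the route the paper intends: the paper introduces Theorem \ref{t8-27-2} with the single remark that one employs (\ref{12-6-4}) in place of (\ref{12-6-4-2}), and your argument—combining (\ref{12-6-4}) with Proposition \ref{Lem6-3} at $Q_E=P_E$, replacing $H_2$ by $H_{\min}$, and then reorganizing the minimization as for $\Delta_{I,\min}$ in Theorem \ref{t8-27-1}—is precisely that derivation carried out in detail. No gap.
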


Further, by using similar discussions as Sections \ref{cqs4-5} and \ref{s4-1},
the upper bound 
$\underline{\Delta}_{I,\min}(\sM,\varepsilon| P_{A,E}|P_{A,E}) $
can be evaluated as follows.

\begin{thm}\Label{t8-27-3}
\begin{align}
\underline{\Delta}_{I,\min}(\sM,\varepsilon| P_{A,E}|P_{A,E}) 
\le &
\min_{R'}
\eta( 
P_{A,E}\{ (a,e)| P_{A|E}(a|e) > e^{-R'} \}, \log \sM)
+\log (1+\varepsilon \sM e^{-R'}) \Label{8-28-1b}
\\
\le &
\min_{R'}
\eta( \min_{s \ge 0} e^{s (R'-H_{1+s}^{\downarrow}(A|E|P_{A,E}) } , \log \sM)
+\log (1+\varepsilon \sM e^{-R'})
\Label{8-28-1}
\end{align}
\end{thm}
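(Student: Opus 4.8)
The plan is to read off the bound directly from the third formula \eqref{12-5-2-x} for $\underline{\Delta}_{I,\min}(\sM,\varepsilon|P_{A,E})$, which expresses it as $\min_{R'}$ of $\eta(g(R'),\log\sM)+\log(1+\varepsilon\sM e^{-R'})$, where $g(R'):=\min\{\|P_{A,E}-P_{A,E}'\|_1\mid P_E'\le P_E,\ H_{\min}(A|E|P_{A,E}'\|P_E)\ge R'\}$. For \eqref{8-28-1b}, I would apply Lemma \ref{L8-29-1} with $Q_E=P_E$: its optimal smoothing $P_{A,E}'$ satisfies $P_{A,E}'\le P_{A,E}$, hence a fortiori $P_E'\le P_E$, so the extra constraint in $g(R')$ is harmless and \eqref{10-16} gives that $g(R')$ equals the tail weight $P_{A,E}\{(a,e)\mid P_{A|E}(a|e)>e^{-R'}\}$ minus the nonnegative term $e^{-R'}|\cA|\,P_{\mix,\cA}\times P_E\{(a,e)\mid P_{A|E}(a|e)>e^{-R'}\}$; in particular $g(R')\le P_{A,E}\{(a,e)\mid P_{A|E}(a|e)>e^{-R'}\}$. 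Since $x\mapsto\eta(x,\log\sM)$ is monotonically increasing on $[0,\sM/e]$, and both $g(R')$ and the tail weight lie in $[0,1]\subseteq[0,\sM/e]$ in the regime of interest, replacing $g(R')$ by the tail weight inside $\eta$ can only increase that term; adding $\log(1+\varepsilon\sM e^{-R'})$ and taking $\min_{R'}$ gives \eqref{8-28-1b}.

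For \eqref{8-28-1}, I would bound the tail weight by the Chernoff--Markov step used in \eqref{8-29-3b} (the proof of Theorem \ref{L3-19-10}), specialized to $Q_E=P_E$. For each fixed $s\ge0$, on the event $\{P_{A|E}(a|e)>e^{-R'}\}$ one has $(P_{A|E}(a|e)e^{R'})^{s}\ge1$, so
\begin{align*}
P_{A,E}\{(a,e)\mid P_{A|E}(a|e)>e^{-R'}\}
\;\le\; e^{sR'}\sum_{a,e}P_{A,E}(a,e)\,P_{A|E}(a|e)^{s}
\;=\; e^{s(R'-H_{1+s}^{\downarrow}(A|E|P_{A,E}))},
\end{align*}
where the last equality uses $\sum_{a,e}P_{A,E}(a,e)P_{A|E}(a|e)^{s}=e^{-sH_{1+s}^{\downarrow}(A|E|P_{A,E})}$, which is just the definition of $H_{1+s}^{\downarrow}$. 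Minimizing over $s\ge0$ shows the tail weight is at most $\min_{s\ge0}e^{s(R'-H_{1+s}^{\downarrow}(A|E|P_{A,E}))}$; invoking the monotonicity of $\eta(\cdot,\log\sM)$ once more, adding $\log(1+\varepsilon\sM e^{-R'})$, and taking $\min_{R'}$ yields \eqref{8-28-1}.

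The argument is essentially routine once Lemma \ref{L8-29-1} and \eqref{8-29-3b} are in hand; the two points that need to be stated carefully are that the correction term discarded from \eqref{10-16} has the right (nonnegative) sign, so that the exact smoothing amount is passed to a genuine tail probability in the upper-bound direction, and that $x\mapsto\eta(x,\log\sM)$ is increasing on $[0,\sM/e]$, which contains the unit interval where all the relevant quantities live. Conceptually the statement is nothing but the $\underline{\Delta}$-analogue of Theorems \ref{L3-19-10b} and \ref{L3-19-10}: the only structural change is that, because Theorem \ref{t8-27-2} is derived from \eqref{12-6-4} rather than \eqref{12-6-4-2}, the additive contribution of the hash term is $\log(1+\varepsilon\sM e^{-R'})$ instead of the linear $\varepsilon\sM e^{-R'}$, and this passes through both steps unchanged since it is never touched.
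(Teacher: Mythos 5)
Your proposal is correct and follows essentially the same route as the paper: the paper's proof of Theorem \ref{t8-27-3} likewise obtains \eqref{8-28-1b} by combining the representation \eqref{12-5-2-x} with Lemma \ref{L8-29-1} (specialized to $Q_E=P_E$), and obtains \eqref{8-28-1} from the Chernoff-type bound \eqref{8-29-3b} with $Q_E=P_E$. Your additional remarks on the sign of the discarded correction term and the monotonicity of $\eta(\cdot,\log\sM)$ simply make explicit what the paper leaves implicit.
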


\begin{proof}
Inequality (\ref{8-28-1b}) follows from 
Lemma \ref{L8-29-1} and (\ref{12-5-2-x}).
Inequality (\ref{8-28-1}) follows from (\ref{8-29-3b}) with $Q_E=P_E$.
\end{proof}

Now, we consider the asymptotic behavior of $\underline{\Delta}_{I,\min}(\lceil e^{nR} \rceil,\varepsilon| P_{A,E}^n) $.
\begin{thm}\Label{t8-27-4}
Any polynomial $P(n)$ satisfies 
\begin{align}
\lim_{n \to \infty}
\frac{1}{n}
\underline{\Delta}_{I,\min}(\lceil e^{nR} \rceil,P(n) | P_{A,E}^n) 
=
R-H(A|E|P_{A,E})
\Label{8-28-15}
\end{align}
for $R \ge H(A|E|P_{A,E})$.
\end{thm}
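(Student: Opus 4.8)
The plan is to establish the limit by proving $\limsup_{n\to\infty}\tfrac1n\underline{\Delta}_{I,\min}(\lceil e^{nR}\rceil,P(n)|P_{A,E}^n)\le R-H$ and $\liminf_{n\to\infty}\tfrac1n\underline{\Delta}_{I,\min}(\lceil e^{nR}\rceil,P(n)|P_{A,E}^n)\ge R-H$ separately, where I abbreviate $H:=H(A|E|P_{A,E})$ and use the additivity $H(A|E|P_{A,E}^n)=nH$ and $H_{1+s}^{\downarrow}(A|E|P_{A,E}^n)=nH_{1+s}^{\downarrow}(A|E|P_{A,E})$. The whole argument is driven by the weak law of large numbers for the empirical self-information $-\tfrac1n\log P_{A|E}^n(A^n|E^n)=\tfrac1n\sum_{i=1}^n(-\log P_{A|E}(A_i|E_i))$ under $P_{A,E}^n$, which converges in probability to its mean $H$; equivalently, for every $\delta>0$ one has $P_{A,E}^n\{(a,e):-\tfrac1n\log P_{A|E}^n(a|e)<H-\delta\}\to0$ and $P_{A,E}^n\{(a,e):-\tfrac1n\log P_{A|E}^n(a|e)<H+\delta\}\to1$. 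I may assume $R>H$, since $R=H$ is immediate from $\underline{\Delta}_{I,\min}\ge0$ together with the upper bound.

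For the upper bound I apply inequality (\ref{8-28-1b}) of Theorem \ref{t8-27-3} with $\sM=\lceil e^{nR}\rceil$, $\varepsilon=P(n)$, source $P_{A,E}^n$, and the free parameter chosen as $R'=n(H-\delta)$ for a fixed $\delta\in(0,H)$. Then $\underline{\Delta}_{I,\min}(\lceil e^{nR}\rceil,P(n)|P_{A,E}^n)\le\eta(p_n,\log\sM)+\log(1+P(n)\lceil e^{nR}\rceil e^{-n(H-\delta)})$, where $p_n:=P_{A,E}^n\{(a,e):P_{A|E}^n(a|e)>e^{-R'}\}=P_{A,E}^n\{-\tfrac1n\log P_{A|E}^n<H-\delta\}\to0$ by the law of large numbers. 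Since $\tfrac1n\eta(p_n,\log\sM)=p_n\tfrac{\log\sM}{n}-\tfrac{p_n\log p_n}{n}$, with $\tfrac{\log\sM}{n}\to R$ and $-x\log x$ bounded on $(0,1)$, the first term divided by $n$ tends to $0$; the second term satisfies $\log(1+P(n)\lceil e^{nR}\rceil e^{-n(H-\delta)})\le n(R-H+\delta)+O(\log n)$. Dividing by $n$, letting $n\to\infty$ and then $\delta\downarrow0$ gives $\limsup_{n\to\infty}\tfrac1n\underline{\Delta}_{I,\min}\le R-H$. (Using (\ref{8-28-1}) together with $\lim_{s\to0}H_{1+s}^{\downarrow}(A|E|P_{A,E})=H$ and the monotonicity of $s\mapsto H_{1+s}^{\downarrow}$ would give the same bound with an exponentially small first term.)

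For the lower bound I work from the smoothing representation (\ref{12-5-2-c}), writing $g(\epsilon_1):=\eta(\epsilon_1,\log\sM)+\log(1+\varepsilon\sM e^{-H_{\min}^{\downarrow,\epsilon_1}(A|E|P_{A,E}^n)})$ so that $\underline{\Delta}_{I,\min}=\min_{\epsilon_1>0}g(\epsilon_1)$, and I bound $g(\epsilon_1)$ from below uniformly in $\epsilon_1$. The crucial auxiliary estimate is: for every $\delta>0$ and every constant $c<1$ there is $N$ such that $H_{\min}^{\downarrow,\epsilon_1}(A|E|P_{A,E}^n)\le n(H+\delta)$ for all $n\ge N$ and all $\epsilon_1\le c$. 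This follows from Lemma \ref{L8-29-1}: if $H_{\min}^{\downarrow,\epsilon_1}(A|E|P_{A,E}^n)\ge R'$, then by (\ref{10-16}) and the left inequality of (\ref{10-16-2}) with any $\gamma>1$, $(1-\tfrac1\gamma)P_{A,E}^n\{(a,e):P_{A|E}^n(a|e)>\gamma e^{-R'}\}\le\epsilon_1$; choosing $\gamma$ with $1-\tfrac1\gamma>c$ and $R'=n(H+\delta)$, the left side tends to $1-\tfrac1\gamma>c\ge\epsilon_1$ by the law of large numbers, a contradiction for large $n$. Now fix $\delta>0$ and put $\tau:=(R-H)/R\in(0,1)$. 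If $\epsilon_1\le\tau$, then for large $n$, $H_{\min}^{\downarrow,\epsilon_1}(A|E|P_{A,E}^n)\le n(H+\delta)$, so the second summand of $g(\epsilon_1)$ is at least $\log(\varepsilon\sM e^{-n(H+\delta)})\ge n(R-H-\delta)+\log P(n)$, while the first summand is nonnegative, giving $g(\epsilon_1)\ge n(R-H-\delta)-O(\log n)$. If $\epsilon_1>\tau$, then, discarding the nonnegative second summand, $g(\epsilon_1)\ge\eta(\epsilon_1,\log\sM)=\epsilon_1\log\sM-\epsilon_1\log\epsilon_1\ge\tau\log\lceil e^{nR}\rceil-O(1)\ge n(R-H)-O(1)$, using that the admissible values of $\epsilon_1$ lie in the bounded interval $(\tau,2]$. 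Hence $g(\epsilon_1)\ge n(R-H)-\delta n-O(\log n)$ for all $\epsilon_1$, so $\liminf_{n\to\infty}\tfrac1n\underline{\Delta}_{I,\min}\ge R-H-\delta$, and $\delta\downarrow0$ finishes the argument. (The boundary cases $\epsilon_1=\tau$ and the degenerate case $H=0$, where $\tau=1$, are handled directly, the latter using that $P'=0$ is admissible at $\epsilon_1=1$, so that $H_{\min}^{\downarrow,1}(A|E|P_{A,E}^n)=+\infty$ and $\eta(1,\log\sM)=\log\sM\ge n(R-H)$.)

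The law-of-large-numbers estimates and the bookkeeping with $\eta$, with $\log P(n)=O(\log n)$, and with $\lceil e^{nR}\rceil$ are routine. The step requiring real care, and the one I expect to be the main obstacle, is the lower bound: one must control the smooth min entropy $H_{\min}^{\downarrow,\epsilon_1}(A|E|P_{A,E}^n)$ uniformly over all admissible $\epsilon_1$, not merely a typical one, via the single-shot information-spectrum sandwich of Lemma \ref{L8-29-1}, and one must choose the split point $\tau=(R-H)/R$ precisely so that whichever of the two summands of $g(\epsilon_1)$ is the large one still delivers exactly the target rate $R-H$. (For $R\le2H$ the crude bound $\eta(\epsilon_1,\log\sM)\ge\tfrac12\log\sM$ already dominates for $\epsilon_1>\tfrac12$, but for $R>2H$ the sharper threshold is genuinely needed.)
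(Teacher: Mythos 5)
Your proof is correct, and your upper bound is essentially the paper's (the paper fixes $R'=nR''$ with $R''<H(A|E|P_{A,E})$ and feeds it into (\ref{8-28-1}), so the first summand of the bound decays exponentially by the R\'enyi-entropy estimate rather than merely vanishing by the law of large numbers; your variant via (\ref{8-28-1b}) reaches the same conclusion). The lower bound is where you genuinely diverge. The paper simply invokes the converse result of Wyner that \emph{any} sequence of hash functions to $\{1,\ldots,\lceil e^{nR}\rceil\}$ satisfies $\liminf_n \frac1n \rE_{\bX} I'(f_{\bX}(A_n)|E_n|P_{A,E}^n)\ge R-H(A|E|P_{A,E})$, and then uses (implicitly) that $\underline{\Delta}_{I,\min}$ dominates the leakage of at least one $P(n)$-almost dual universal$_2$ family by Theorem \ref{t8-27-2}; this is short but rests on an external coding-theoretic converse and on the existence of such a family. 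You instead lower-bound the optimization (\ref{12-5-2-c}) directly and uniformly in $\epsilon_1$, using the information-spectrum sandwich of Lemma \ref{L8-29-1} together with the law of large numbers to force $H_{\min}^{\downarrow,\epsilon_1}(A|E|P_{A,E}^n)\le n(H+\delta)$ for all $\epsilon_1$ bounded away from the total-variation ceiling, and the threshold $\tau=(R-H)/R$ to make both branches of the case split deliver the rate $R-H$. This is more self-contained (it never leaves the definition of $\underline{\Delta}_{I,\min}$ and needs no converse theorem or achievability of the $\varepsilon$-almost dual universal$_2$ property), at the cost of being longer; it also yields the slightly stronger statement that the \emph{bound itself}, as an analytic quantity, cannot fall below $R-H$ per symbol, independently of any hash-function interpretation. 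I see no gap: the uniformity of your smooth-min-entropy estimate in $\epsilon_1$, the treatment of $\epsilon_1\in(\tau,2]$, and the degenerate cases $R=H$ and $H=0$ are all handled adequately.
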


Theorem \ref{t8-27-4} shows that 
$\varepsilon$-almost dual universal$_2$ hash functions realize the 
asymptotically optimal performance 
in the sense of equivocation rate.
Further, 
Theorem \ref{t8-27-4} clarifies that 
the smoothing bound of min entropy yields 
the optimal evaluation 
in the sense of equivocation rate.

\begin{proof}
It is known by \cite{Wyner} that 
any sequence of hash function from ${\cal A}$ to 
$\{1, \ldots, \lceil e^{nR} \rceil \}$
satisfies
\begin{align}
\liminf_{n \to \infty}\frac{1}{n}\rE_{\bX,n} I'(f_{\bX,n}(A)|E|P_{A,E} ) 
\ge R-H(A|E|P_{A,E}).
\end{align}
Hence, it is enough to show that
\begin{align}
\limsup_{n \to \infty}
\frac{1}{n}
\underline{\Delta}_{I,\min}(\lceil e^{nR} \rceil,P(n) | P_{A,E}^n) 
\le
R-H(A|E|P_{A,E}). \Label{8-28-4}
\end{align}

We choose $R' < H(A|E|P_{A,E})$.
Relation (\ref{8-28-1}) implies that
\begin{align}
\frac{1}{n}
\underline{\Delta}_{I,\min}(\lceil e^{nR} \rceil,P(n) | P_{A,E}^n) 
\le &
\frac{1}{n}
\eta( \min_{s \ge 0} e^{s n(R'-H_{1+s}^{\downarrow}(A|E|P_{A,E}) } , nR)
+\frac{1}{n}
\log (1+P(n)  e^{n(R-R')}) 
\end{align}
Since $R' < H(A|E|P_{A,E})$,
the value $\min_{s \ge 0} e^{s n(R'-H_{1+s}^{\downarrow}(A|E|P_{A,E}) }$ goes to zero exponentially.
Hence, the term \par
\noindent $\frac{1}{n}
\eta( \min_{s \ge 0} e^{s n(R'-H_{1+s}^{\downarrow}(A|E|P_{A,E}) } , nR)$
goes to zero.
Since $\frac{1}{n}
\log (1+P(n) e^{n(R-R')}) 
\le R-R'+\frac{1}{n}\log (1+P(n) )
\to R-R'$, we have
\begin{align}
\limsup_{n \to \infty}
\frac{1}{n}
\underline{\Delta}_{I,\min}(\lceil e^{nR} \rceil,P(n) | P_{A,E}^n) 
\le R-R'.
\end{align}
Since $R'$ is an arbitrary real number satisfying $R'< H(A|E|P_{A,E})$,
we obtain (\ref{8-28-4}).
\end{proof}

\section{Conclusion}
We have derived upper bounds for the leaked information in the modified mutual information
criterion and the $L_1$ distinguishability criterion
when we apply 
an $\varepsilon$-almost dual universal$_2$ hash function for privacy amplification. 
(Theorems \ref{Lem12} and \ref{Lem11} in Section \ref{s4-1}). 
Then, we have derived lower bounds on their exponential decreasing rates in the i.i.d. setting.
(Theorem \ref{t3-16-1} in Section \ref{s4-1-b}). 

We have rigorously compared the exponents by the smoothing bound of min-entropy 
and R\'{e}nyi entropy of order 2.
That is, we have clarified the upper bounds of leaked information via 
the smoothing of min-entropy in the both criteria.
That is, we have compared 
$\Delta_{d,2}(M,\varepsilon|P_{A,E})$ and 
$\Delta_{d,\min}(M,\varepsilon|P_{A,E})$
for R\'{e}nyi entropy of order 2,
and have done 
$\Delta_{I,2}(M,\varepsilon|P_{A,E})$ and
$\Delta_{I,\min}(M,\varepsilon|P_{A,E})$
for modified mutual information criterion.
We have derived the exponents of the upper bounds 
(Theorem \ref{L3-18-3} in Section \ref{s4-1}),
and have shown that the exponents are strictly worse than the exponents by  
the smoothing  bound of R\'{e}nyi entropy of order 2 
(Lemma \ref{L3-19-11} in Section \ref{s4-1}).
This fact shows the importance of the smoothing of R\'{e}nyi entropy of order 2.
The obtained exponents are summarized in Table \ref{table2}.

%We have also derived an upper bound by using an approximate smoothing of min entropy in stead of R\'{e}nyi entropy of order $2$.
%In  Subsection \ref{s4-1-0-b-q}, we have compared the bounds from an approximate smoothing of R\'{e}nyi entropy of order $2$ with those from min entropy.
%Then, it has been clarified that the bounds from R\'{e}nyi entropy of order $2$ are strictly better than those from min entropy in the sense of exponential decreasing rate.

Due to Pinsker inequality and Inequality (\ref{8-26-9}), 
the exponential convergence of one criterion yields the exponential convergence of the other criterion.
However, we have shown that
better exponential decreasing rates can be obtained by separate derivations. 
For example,
the smoothing of R\'{e}nyi entropy of order 2 yields 
the exponent $e_{d}(P_{A,E}|R)$ for the $L_1$ distinguishability criterion,
which yields the exponent $e_{d}(P_{A,E}|R)$ for the modified mutual information criterion by using Pinsker inequality.
Similarly, the smoothing of R\'{e}nyi entropy of order 2 yields 
the exponent $e_{I}(P_{A,E}|R)$ for the modified mutual information criterion,
which yields the exponent $\frac{e_{I}(P_{A,E}|R)}{2}$ for the $L_1$ distinguishability criterion by Inequality (\ref{8-26-9}).
Since $e_{d}(P_{A,E}|R) \ge \frac{e_{I}(P_{A,E}|R)}{2}$ and $e_{I}(P_{A,E}|R) \ge e_{d}(P_{A,E}|R)$,
the exponents directly derived by the smoothing of R\'{e}nyi entropy of order 2 
are better than the exponents derived from the combination of the exponent for the other criterion and the inequality.

\begin{table}[htb]
%\begin{center}
  \caption{Summary of obtained lower bounds on exponents.}
\begin{center}
  \begin{tabular}{|l|c|c|} \hline
{Method}   & $L_1$  & MMI \\ \hline
{smooth R\'{e}nyi 2 } & {$e_{d}(P_{A,E}|R)$}  & {$e_{I}(P_{A,E}|R)$} \\
 \hline
smooth min & {$\tilde{e}_{d}(P_{A,E}|R)$}  & {$\tilde{e}_{I}(P_{A,E}|R)$} 
\\
\hline
  \end{tabular}
\end{center}

\vspace{2ex}
smooth R\'{e}nyi 2 is 
the exponent for privacy amplification via the smoothing of R\'{e}nyi entropy of order 2.
smooth min is 
the exponent for privacy amplification via the smoothing of min entropy.
L2 is the $L_1$ distinguishability criterion.
MMI is the modified mutual information criterion. 
\Label{table2}
%\end{center}
\end{table}

We have also shown that 
the application of 
$\varepsilon$-almost dual universal hash function 
attains the asymptotically optimal performance in the sense of  
the second order asymptotics as well as in that of the asymptotic equivocation rate.
These facts have been shown by using the smoothing of min entropy.
We can conclude that
$\varepsilon$-almost dual universal hash functions
are very a useful class of hash functions.
Further, these discussions show that
the smoothing of min entropy
is sufficiently powerful except for 
the exponential decreasing rate.
That is, the exponential decreasing rate
requires more delicate evaluation than other settings.

\section*{Acknowledgments}
The author is grateful to Dr. Toyohiro Tsurumaru,
Dr. Shun Watanabe, 
Dr. Marco Tomamichel,
Dr. Mario Berta,
Dr. William Henry Rosgen,
Dr. Li Ke,
and Dr. Markus Grassl
 for a helpful comments.
He is also grateful to the
referee of the first version of \cite{Tsuru}
for informing the literatures \cite{DS05,FS08}.
He is partially supported by a MEXT Grant-in-Aid for Scientific Research (A) No. 23246071
and the National Institute of Information and Communication Technology (NICT), Japan.
The Centre for Quantum Technologies is funded by the
Singapore Ministry of Education and the National Research Foundation
as part of the Research Centres of Excellence programme.

\appendices

\section{Proof of Lemma \ref{cor1}}\Label{scor1}
For %$t>0$ and 
two non-negative functions $X(e)$ and $Y(e)$, 
the reverse H\"{o}lder inequality \cite{Kuptsov}
\begin{align*}
\sum_e X(e) Y(e) \ge
(\sum_e X(e)^{1/(1+s)})^{1+s}
(\sum_e Y(e)^{-1/s})^{-s}
\end{align*}
holds for $s\in (0,\infty]$.
Substituting 
$\sum_a P_{A,E}(a,e)^{1+s}$
and $Q_E(e)^{-s}$
to $X(e)$ and $Y(e)$, we obtain  
\begin{align*}
& e^{-s H_{1+s}(A|E|P_{A,E}\| Q_E )} \\
= & 
\sum_e 
\sum_a 
P_{A,E}(a,e)^{1+s}
Q_E(e)^{-s} \\
\ge &
(\sum_e 
(\sum_a P_{A,E}(a,e)^{1+s})^{1/(1+s)})^{1+s}
(\sum_e Q_E(e)^{-s\cdot -1/s})^{-s} \\
= &
(\sum_e 
(\sum_a P_{A,E}(a,e)^{1+s})^{1/(1+s)})^{1+s} \\
= & 
(\sum_{e} (\sum_{a} P_{A,E}(a,e)^{1+s})^{\frac{1}{1+s}} )^{1+s}
%e^{-sH_{1+s}^{\uparrow}(A|E|P_{A,E})}
\end{align*}
for $s\in (0,\infty]$.
Since the equality holds
when $Q_E(e)=
(\sum_{a} P_{A,E}(a,e)^{1+s})^{1/(1+s)} /
\sum_e (\sum_{a} P_{A,E}(a,e)^{1+s})^{1/(1+s)}$,
we obtain
\begin{align*}
e^{-s H_{1+s}^{\uparrow}(A|E|P_{A,E})},
=\min_{Q_E}e^{-s H_{1+s}(A|E|P_{A,E}\| Q_E )} 
=
(\sum_{e} (\sum_{a} P_{A,E}(a,e)^{1+s})^{\frac{1}{1+s}} )^{1+s}
\end{align*}
which implies (\ref{8-26-8-c}) with $s\in (0,\infty]$.

For %$t>0$ and 
two non-negative functions $X(e)$ and $Y(e)$, 
the H\"{o}lder inequality 
\begin{align*}
\sum_e X(e) Y(e) \le
(\sum_e X(e)^{1/(1+s)})^{1+s}
(\sum_e Y(e)^{-1/s})^{-s}
\end{align*}
holds for $s\in [-1,0)$.
The same substitution yields
\begin{align*}
e^{-s H_{1+s}(A|E|P_{A,E}\| Q_E )} 
\le  
(\sum_{e} (\sum_{a} P_{A,E}(a,e)^{1+s})^{\frac{1}{1+s}} )^{1+s}
%e^{-sH_{1+s}^{\uparrow}(A|E|P_{A,E})}
\end{align*}
for $s\in [-1,0)$.
Hence, similarly
we obtain (\ref{8-26-8-c}) with $s\in [-1,0)$.

\section{Proof of Lemma \ref{cor}}\Label{scor}
For $s \in (0,1]$ and
two functions $X(a)$ and $Y(a)$, 
the H\"{o}lder inequality
\begin{align*}
\sum_a X(a) Y(a) \le
(\sum_a |X(a)|^{1/(1-s)})^{1-s}
(\sum_a |Y(a)|^{1/s})^{s}
\end{align*}
holds.
The equality holds only when 
$X(a)$ is a constant times of $Y(a)$.
Substituting 
$P_{A,E}(a,e)$
and $(\frac{P_{A,E}(a,e)}{P_{E}(e)})^s$
to $X(a)$ and $Y(a)$, we obtain  
\begin{align*}
& e^{-s H_{1+s}^{\downarrow}(A|E|P_{A,E})} \\
= & 
\sum_e 
\sum_a 
P_{A,E}(a,e)
(\frac{P_{A,E}(a,e)}{P_{E}(e)})^s \\
\le &
\sum_e 
(\sum_a P_{A,E}(a,e)^{1/(1-s)})^{1-s}
(\sum_a \frac{P_{A,E}(a,e)}{P_{E,\normal}(e)})^s \\
= &
\sum_e 
(\sum_a P_{A,E}(a,e)^{1/(1-s)})^{1-s} \\
= &
e^{-s H_{\frac{1}{1-s}}^{\uparrow}(A|E|P_{A,E} )}
\end{align*}
for $s \in  (0,1]$
because 
$\sum_a \frac{P_{A,E}(a,e)}{P_{E,\normal}(e)}
=\frac{P_{E}(e)}{P_{E,\normal}(e)}
\le 1$.
The equality condition holds only when 
$P_{A|E=e}$ is uniform distribution for all $e \in {\cal E}$.

For $s \in [-1,0)$ and
two functions $X(a)$ and $Y(a)$, 
the reverse H\"{o}lder inequality \cite{Kuptsov}
\begin{align*}
\sum_a X(a) Y(a) \ge
(\sum_a |X(a)|^{1/(1-s)})^{1-s}
(\sum_a |Y(a)|^{1/s})^{s}
\end{align*}
holds.
The same substitution yields
\begin{align*}
e^{-s H_{1+s}^{\downarrow}(A|E|P_{A,E})} 
\ge 
e^{-s H_{\frac{1}{1-s}}^{\uparrow}(A|E|P_{A,E} )}
\end{align*}
for $s \in [-1,0)$
because
$(\sum_a \frac{P_{A,E}(a,e)}{P_{E,\normal}(e)})^s
=(\frac{P_{E}(e)}{P_{E,\normal}(e)})^s \ge 1$.
The equality condition holds only when 
$P_{A|E=e}$ is uniform distribution for all $e \in {\cal E}$.

\section{Proof of Lemma \ref{L7-1}}\Label{pL7-1}
First, we show \eqref{1-5-1}.
Taking the limit $s\to 0$, we obtain
\begin{align}
&H(A|E|P_{A,E})
=
-\frac{d \phi(s|A|E|P_{A,E})}{ds}|_{s=0}\nonumber \\
=&
-\lim_{s\to 0}
\frac{\phi(s|A|E|P_{A,E})}{s}
=
\lim_{s\to 0}
H_{1+s}^{\uparrow}(A|E|P_{A,E}).
\end{align}
The remaining properties are shown by the following lemma.

\begin{lem}\Label{L3-19-1}
\begin{align}
&-\frac{d}{ds}s H_{1+s}^{\uparrow}(A|E|P_{A,E}) \nonumber\\
=& 
\sum_{a,e}P_{A,E;s}(a,e)
\Bigl(
 \log P_{A|E}(a|e)- \frac{1}{1+s}\log 
(\sum_a P_{A|E}(a|e)^{1+s} )
\Bigr)
+ \phi (\frac{s}{1+s}| A|E|P_{A,E}) ,\Label{3-19-1}\\
&-\frac{d^2}{ds^2}s H_{1+s}^{\uparrow}(A|E|P_{A,E}) 
\nonumber\\
=& 
(1+s)
\sum_{a,e}P_{A,E;s}(a,e)
\Bigl(
\frac{1}{1+s} \log P_{A|E}(a|e)- \frac{1}{(1+s)^2}\log 
(\sum_a P_{A|E}(a|e)^{1+s} )
\Bigr)^2\nonumber\\
&-
(1+s)
\Bigl(\sum_{a,e}P_{A,E;s}(a,e)
\Bigl(
\frac{1}{1+s} \log P_{A|E}(a|e)- \frac{1}{(1+s)^2}\log 
(\sum_a P_{A|E}(a|e)^{1+s} )
\Bigr)\Bigr)^2. \Label{3-19-2}
\end{align}
\end{lem}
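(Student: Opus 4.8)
The plan is to obtain both identities by differentiating the explicit formula for $s\,H_{1+s}^{\uparrow}(A|E|P_{A,E})$ term by term. Write $g(s):=s\,H_{1+s}^{\uparrow}(A|E|P_{A,E})$ and $h_e(s):=\sum_a P_{A|E}(a|e)^{1+s}$. By Lemma~\ref{cor1} and, since $P_{A,E}$ is a normalized distribution (so $P_{A,E}(a,e)=P_E(e)P_{A|E}(a|e)$ and $\big(\sum_a P_{A,E}(a,e)^{1+s}\big)^{1/(1+s)}=P_E(e)\,h_e(s)^{1/(1+s)}$), one has
\begin{align*}
g(s)=-(1+s)\log Z(s),\qquad Z(s):=\sum_e P_E(e)\,h_e(s)^{\frac{1}{1+s}}=e^{\phi(\tfrac{s}{1+s}|A|E|P_{A,E})},
\end{align*}
the last equality being the substitution $t=s/(1+s)$, $1-t=1/(1+s)$ in the definition of the Gallager function $\phi$. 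So the whole computation reduces to differentiating $Z(s)$, the only input being $\tfrac{d}{ds}h_e(s)=\sum_a P_{A|E}(a|e)^{1+s}\log P_{A|E}(a|e)$.

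First I would differentiate once: $-g'(s)=\log Z(s)+(1+s)\,Z'(s)/Z(s)$, and, collecting weights, $(1+s)\,Z'(s)/Z(s)=\sum_{a,e}P_{A,E;s}(a,e)\big(\log P_{A|E}(a|e)-\tfrac{1}{1+s}\log h_e(s)\big)$, where $P_{A,E;s}$ is the two-stage tilting (outer marginal the maximizing $Q_E$ of Lemma~\ref{cor1}, conditional $P_{A|E}(a|e)^{1+s}/h_e(s)$); together with $\log Z(s)=\phi(\tfrac{s}{1+s}|A|E|P_{A,E})$ this is (\ref{3-19-1}). Then I would differentiate (\ref{3-19-1}) again. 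Abbreviating $X_{a,e}(s):=\log P_{A|E}(a|e)-\tfrac{1}{1+s}\log h_e(s)$ and writing $\langle\cdot\rangle_s$ for the $P_{A,E;s}$-expectation, the relation $-g'(s)=\log Z(s)+\langle X(s)\rangle_s$ gives, by the product rule and $\langle\partial_s\log P_{A,E;s}\rangle_s=0$,
\begin{align*}
-g''(s)=\frac{Z'(s)}{Z(s)}+\langle\partial_s X(s)\rangle_s+\mathrm{Cov}_s\!\big(X(s),\,\partial_s\log P_{A,E;s}\big).
\end{align*}
I would then substitute the explicit $\partial_s X_{a,e}$ and $\partial_s\log P_{A,E;s}$ (the latter from the exponential form of $P_{A,E;s}$), observe that the first-moment contributions cancel, and check that the remainder assembles into the "second moment minus squared first moment" form of (\ref{3-19-2}) in the statistic $\tfrac{1}{1+s}X_{a,e}=\tfrac{1}{1+s}\log P_{A|E}(a|e)-\tfrac{1}{(1+s)^2}\log h_e(s)$.

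The main obstacle I expect is precisely this last bookkeeping step: because $P_{A,E;s}$ depends on $s$ both through the conditional-tilting exponent and, via the maximizing $Q_E$, through its $e$-marginal, $\partial_s\log P_{A,E;s}$ is not simply proportional to the centered statistic, so one must carry the two-level average (outer over $e$, inner over $a$) through a law-of-total-variance decomposition and verify that the cross terms together with the $\tfrac{1}{1+s}\log h_e$ pieces recombine to exactly the stated coefficients. Once (\ref{3-19-1})--(\ref{3-19-2}) are in hand, the remaining assertions of Lemma~\ref{L7-1} are immediate, since (\ref{3-19-2}) presents $-\tfrac{d^2}{ds^2}s\,H_{1+s}^{\uparrow}$ as $(1+s)$ times a variance, which is nonnegative for $s>-1$ and strictly positive unless that statistic is $P_{A,E;s}$-a.s.\ constant, i.e.\ unless every $P_{A|E=e}$ is uniform.
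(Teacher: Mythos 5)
Your proposal is correct and follows essentially the same route as the paper: both reduce $s H_{1+s}^{\uparrow}(A|E|P_{A,E})$ to $-(1+s)\log\varphi(s)$ with $\varphi(s)=\sum_e P_E(e)(\sum_a P_{A|E}(a|e)^{1+s})^{1/(1+s)}=e^{\phi(s/(1+s)|A|E|P_{A,E})}$ and differentiate, expressing the derivatives as first and second moments of the tilted distribution $P_{A,E;s}$. The only cosmetic difference is that the paper obtains the second derivative by computing $\varphi''(s)$ explicitly and assembling $(1+s)\bigl(\varphi''/\varphi-(\varphi'/\varphi)^2\bigr)$ plus the extra $\varphi'/\varphi$ terms, rather than by differentiating the first-derivative identity through a covariance decomposition; the bookkeeping you defer is exactly the computation the paper carries out there, and it does close up into the stated variance form.
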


Hence, 
when we regard $H_{1}^{\uparrow}(A|E|P_{A,E})$ as $H(A|E|P_{A,E})$
and
$P_{A|E=e}$ is not a uniform distribution for an element $e \in {\cal E}$,
the function 
$s \mapsto -s H_{1+s}^{\uparrow}(A|E|P_{A,E}) $
is strictly convex in $(-1,\infty)$.
That is,
the map $s \mapsto s H_{1+s}^{\uparrow}(A|E|P_{A,E})$ is strictly concave and then
the map $s \mapsto H_{1+s}^{\uparrow}(A|E|P_{A,E})$ 
is strictly monotonically decreasing for $s \in (-1,\infty)$.

\begin{proof}
We define
\begin{align*}
\varphi(s):=
\sum_{e}P_E(e)(\sum_a P_{A|E}(a|e)^{1+s})^{\frac{1}{1+s}}.
\end{align*}
Then,
\begin{align*}
&\frac{d \varphi(s)}{ds} \\
=&
\sum_{a,e}
\frac{P_{A|E}(a|e)^{1+s} P_E(e)}{(\sum_a P_{A|E}(a|e)^{1+s})^{\frac{s}{1+s}} (\sum_{e} P_E(e) }
\Bigl(\frac{1}{1+s} \log P_{A|E}(a|e)- \frac{1}{(1+s)^2}\log 
(\sum_a P_{A|E}(a|e)^{1+s} ) \Bigr)\\
=&
\varphi(s)
\sum_{a,e}
P_{A,E;s}(a,e)
\Bigl(\frac{1}{1+s} \log P_{A|E}(a|e)- \frac{1}{(1+s)^2}\log 
(\sum_a P_{A|E}(a|e)^{1+s} ) \Bigr).
\end{align*}
Since
\begin{align*}
&-\frac{d}{ds}s H_{1+s}^{\uparrow}(A|E|P_{A,E}) \\
=& 
\phi (\frac{s}{1+s}| A|E|P_{A,E}) +
(1+s)\frac{d \varphi(s)}{ds}\varphi(s)^{-1},
\end{align*}
we obtain (\ref{3-19-1}).

Next, we show (\ref{3-19-2}).
Since
\begin{align*}
&\frac{d^2 \varphi(s)}{ds^2}\\
=&
\sum_{a,e}
\frac{P_{A|E}(a|e)^{1+s} P_E(e)}{(\sum_a P_{A|E}(a|e)^{1+s})^{\frac{s}{1+s}} (\sum_{e} P_E(e) }
\Bigl(\frac{1}{1+s} \log P_{A|E}(a|e)- \frac{1}{(1+s)^2}\log 
(\sum_a P_{A|E}(a|e)^{1+s} ) \Bigr)^2\\
&+
\sum_{a,e}
\frac{P_{A|E}(a|e)^{1+s} P_E(e)}{(\sum_a P_{A|E}(a|e)^{1+s})^{\frac{s}{1+s}} (\sum_{e} P_E(e) }
\Bigl(-\frac{2}{(1+s)^2} \log P_{A|E}(a|e)+ \frac{2}{(1+s)^3}\log 
(\sum_a P_{A|E}(a|e)^{1+s} ) \Bigr) \\
=&
\varphi(s)
\sum_{a,e}
P_{A,E;s}(a,e)
\Bigl(\frac{1}{1+s} \log P_{A|E}(a|e)- \frac{1}{(1+s)^2}\log 
(\sum_a P_{A|E}(a|e)^{1+s} ) \Bigr)^2\\
&-\frac{2}{(1+s)} 
\frac{d \varphi(s)}{ds},
\end{align*}
we have
\begin{align*}
&\frac{d^2}{ds^2}(1+s) \phi (\frac{s}{1+s}| A|E|P_{A,E}) \\
=& 
(1+s)\frac{d^2}{ds^2} \phi (\frac{s}{1+s}| A|E|P_{A,E}) 
+2 \frac{d}{ds} \phi (\frac{s}{1+s}| A|E|P_{A,E}) \\
=&
(1+s)
\frac{\varphi(s) \frac{d^2 \varphi(s)}{ds^2}-
\frac{d \varphi(s)}{ds}^2}{\varphi(s)^2}
+2\frac{\frac{d \varphi(s)}{ds}}{\varphi(s)} \\
=&
(1+s)
\frac{\varphi(s) \frac{d^2 \varphi(s)}{ds^2}-
\frac{d \varphi(s)}{ds}^2}{\varphi(s)^2}
+2\frac{\frac{d \varphi(s)}{ds}}{\varphi(s)} \\
=&
(1+s)
\sum_{a,e}
P_{A,E;s}(a,e)
\Bigl(\frac{1}{1+s} \log P_{A|E}(a|e)- \frac{1}{(1+s)^2}\log 
(\sum_a P_{A|E}(a|e)^{1+s} ) \Bigr)^2\\
&-(1+s)
\Bigl( \sum_{a,e}
P_{A,E;s}(a,e)
\Bigl(\frac{1}{1+s} \log P_{A|E}(a|e)- \frac{1}{(1+s)^2}\log 
(\sum_a P_{A|E}(a|e)^{1+s} ) \Bigr)\Bigr)^2,
\end{align*}
which implies (\ref{3-19-2}).
\end{proof}

\section{Proof of Theorem \ref{l8-24-1}}\Label{s8-24}
%Modified mutual information criterion
%\begin{proofof}{Theorem \ref{l8-24-1}}
First, we show that the modified mutual information criterion $I'(A|E|P)= \log |\cA| -H(A|E|P)$ satisfies all of the above conditions.
We can trivially check the conditions {\bf C4} Ideal case and {\bf C5} Normalization.
We show other conditions.
{\bf C1} Chain rule can be shown as follows.
\begin{align*}
&I'(A,B|E|P)
=\log |\cA|+\log |\cB|
- H(A,B,E|P) + H(E|P) \\
=&\log |\cA|+\log |\cB|
- H(B,E|P) + H(E|P)
- H(A,B,E|P) + H(B,E|P) \\
=& \log |\cA|+\log |\cB|
- H(B|E|P)
- H(A|B,E|P)
= I'(A|B,E|P) +I'(B|E|P).
\end{align*}

When two marginal distributions $P_{E,1}$ and $P_{E,2}$ are distinghuishable on ${\cal E}$,
\begin{align*}
& I'(A|E|\lambda P_1+(1-\lambda) P_2)
=\log |\cA|
- H(A,E|\lambda P_1+(1-\lambda) P_2) + H(E|\lambda P_1+(1-\lambda) P_2) \\
=&\log |\cA|
- \lambda  H(A,E|P_1)
-(1-\lambda) H(A,E|P_2) 
- h(\lambda)
+ \lambda  H(E|P_1)
+(1-\lambda) H(E|P_2) 
+ h(\lambda) \\
=&\log |\cA|
- \lambda  H(A,E|P_1)
-(1-\lambda) H(A,E|P_2) 
+ \lambda  H(E|P_1)
+(1-\lambda) H(E|P_2) \\
=&\lambda I'(A|E|P_1)+(1-\lambda)I'(A|E|P_2),
\end{align*}
which implies {\bf C2} Linearity.
$I'(A|E|P)=D(P\|P_{\mix,\cA}\otimes P_E)\ge 0$.
Since $H(A,E|P)\ge 0 $, $I'(A|E|P)$ satisfies {\bf C3} Range.
Thus, $I'(A|E|P)$ satisfies all of the above properties.

Next, we show that 
an quantity satisfying all of the above properties 
is the modified mutual information criterion $I'(A|E|P)= \log |\cA| -H(A|E|P)$.
For this purpose, 
we focus on $\tilde{H}(A|E|P):=\log |\cA| - C(A|E|P)$.
Due to {\bf C1} Linearity,
we have
\begin{align*}
\tilde{H}(A|E|P)=
\sum_e P_E(e) \tilde{H}(A|E|P_{A|E=e}).
\end{align*}
Further, we see that the quantity $\tilde{H}(A|E|P_{A|E=e})$ satisfies 
Khinchin's axioms \cite{Khinchin} for entropy
because of the remaining properties.
Hence, we find that $\tilde{H}(A|E|P_{A|E=e})=H(P_{A|E=e})$.
Thus, $\tilde{H}(A|E|P)$ is equal to the conditional entropy ${H}(A|E|P)$.
Hence, $C(A|E|P)=I'(A|E|P)$.

\section{Proof of Proposition \ref{Lem6-3}}\Label{pfLem6-1}
Since the proof of Proposition \ref{Lem6-3} 
is related to $\delta$-biased ensemble,
we make several preparations before starting the proof of Proposition \ref{Lem6-3}.
According to Dodis and Smith\cite{DS05},
we introduce $\delta$-biased ensemble of random variables $W_{\bX}$
on a vector space over a general finite field $\FF_q$,
where $q$ is the power of the prime $p$.
%For a random variable $D$ on $\FF_2^n$ and $x\in \FF_2^n$, we define
%\begin{align}
%\biase_x(D):= \rE_{D} (-1)^{x\cdot D}.
%\end{align}
First, we fix a non-degenerate bilinear form 
$(~,~)$ from $\FF_q^2$ to $\FF_p$.
Then, we define $(x\cdot y) \in \FF_p$ for $x,y\in \FF_q^n$
as $(x\cdot y):=\sum_{j=1}^n x_j \cdot y_j$.
For a given $\delta>0$,
an ensemble of random variables $\{W_{\bX}\}$ on $\FF_q^n$
is called {\it $\delta$-biased}
when the inequality
\begin{align}
\rE_{\bX} |\rE_{W_{\bX}} \omega_p^{(x\cdot W_{\bX})}|^2 \le \delta^2
\Label{12-27-9}
\end{align}
holds for any $x\neq 0 \in \FF_q^n$,
where $\omega_p:= e^{\frac{2\pi i}{p}}$.

We denote the random variable subject to the uniform distribution on a code $C\in \FF_q^n$,
by $W_C$.
Then,
\begin{align}
\rE_{W_C} \omega_p^{(x\cdot W_{C})}
=
\left\{
\begin{array}{ll}
0 & \hbox{ if } x \notin C^{\perp} \\
1 & \hbox{ if } x \in C^{\perp} .
\end{array}
\right. \Label{3-23-1}
\end{align}
Using the above relation, 
as is suggested in \cite[Case 2]{DS05},
we obtain the following lemma. 

\begin{lem}\Label{Lem6-0}
When
a random code $C_{\bX}$ in $\FF_q^n$ 
is $\varepsilon$-almost dual universal
with minimum dimension $t$,
the ensemble of random variables $W_{C_{\bX}}$ in $\FF_q^n$
is $\sqrt{\varepsilon q^{-t}}$-biased.
\end{lem}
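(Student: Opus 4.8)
The plan is to verify the defining inequality \eqref{12-27-9} for the ensemble $W_{C_{\bX}}$ directly, using the explicit character-sum formula \eqref{3-23-1}. First I would fix an arbitrary nonzero $x \in \FF_q^n$ and, for each realization of $\bX$, evaluate $\rE_{W_{C_{\bX}}} \omega_p^{(x \cdot W_{C_{\bX}})}$. Since $W_{C_{\bX}}$ is uniform on the code $C_{\bX}$, formula \eqref{3-23-1} tells us this expectation is the indicator $\mathbf{1}[x \in C_{\bX}^{\perp}]$; in particular $|\rE_{W_{C_{\bX}}} \omega_p^{(x\cdot W_{C_{\bX}})}|^2 = \mathbf{1}[x \in C_{\bX}^{\perp}]$ as well, because the quantity is $0$ or $1$.

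Next I would take the expectation over $\bX$. This gives
\begin{align*}
\rE_{\bX} |\rE_{W_{C_{\bX}}} \omega_p^{(x\cdot W_{C_{\bX}})}|^2
= \rE_{\bX} \mathbf{1}[x \in C_{\bX}^{\perp}]
= {\rm Pr}[x \in C_{\bX}^{\perp}].
\end{align*}
Now I invoke the hypothesis. Since $C_{\bX}$ is $\varepsilon$-almost dual universal$_2$ with minimum dimension $t$, by definition the dual random code $C_{\bX}^{\perp}$ is $\varepsilon$-almost universal$_2$ with maximum dimension $n-t$, which by \eqref{eq:C-r-upperbound} means that for every nonzero $x$,
\begin{align*}
{\rm Pr}[x \in C_{\bX}^{\perp}] \le q^{(n-t)-n}\varepsilon = \varepsilon q^{-t}.
\end{align*}
Combining the two displays yields $\rE_{\bX} |\rE_{W_{C_{\bX}}} \omega_p^{(x\cdot W_{C_{\bX}})}|^2 \le \varepsilon q^{-t}$ for all nonzero $x$, which is precisely the condition \eqref{12-27-9} with $\delta = \sqrt{\varepsilon q^{-t}}$. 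Hence $W_{C_{\bX}}$ is $\sqrt{\varepsilon q^{-t}}$-biased.

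There is essentially no hard step here: the argument is a two-line unwinding of definitions once \eqref{3-23-1} is available. The only point requiring a modicum of care is the bookkeeping of dimensions and the direction of the duality — making sure that "minimum dimension $t$ for $C_{\bX}$" correctly translates to "maximum dimension $n-t$ for $C_{\bX}^{\perp}$," so that the exponent in \eqref{eq:C-r-upperbound} comes out as $q^{(n-t)-n}\varepsilon$ rather than something else. I would double-check that the ensemble in \eqref{3-23-1} uses the same non-degenerate bilinear pairing that defines $(x \cdot y)$, so that the orthogonal complement appearing in \eqref{3-23-1} is literally the dual code $C_{\bX}^{\perp}$ used in the hypothesis; this is the case by construction of the pairing. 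No other subtlety arises.
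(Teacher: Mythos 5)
Your proposal is correct and follows exactly the same route as the paper's own proof: unwind the definition via the character-sum identity \eqref{3-23-1} to identify $\rE_{\bX}|\rE_{W_{C_{\bX}}}\omega_p^{(x\cdot W_{C_{\bX}})}|^2$ with $\Pr[x\in C_{\bX}^{\perp}]$, then bound this by $\varepsilon q^{-t}$ using that $C_{\bX}^{\perp}$ is $\varepsilon$-almost universal$_2$ with maximum dimension $n-t$. The only difference is that you spell out the intermediate steps that the paper leaves implicit.
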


\begin{proof}
$C^{\perp}_{\bX}$ is $\varepsilon$-almost universal
with maximum dimension $n-t$ in $\FF_q^n$.
Hence, for any $x \in \FF_q^n$,
the probability $\Pr \{x \in C^{\perp}_{\bX}\}$ is less than
$\varepsilon q^{-t}$.
Thus, (\ref{3-23-1}) guarantees that
the ensemble of random variables $W_{C_{\bX}}$ in $\FF_q^n$
is $\sqrt{\varepsilon q^{-t}}$-biased.
\end{proof}

In the following, we treat the case of ${\cal A}=\FF_q^n$.
Given
a joint sub-distribution $P_{A,E}$ on ${\cal A} \times {\cal E}$
and a normalized distribution $P_{W}$ on ${\cal A}$,
we define another joint sub-distribution 
$P_{A,E} * P_{W}(a,e):=
\sum_{w} P_{W}(w) P_{A,E}(a-w,e)$.
Using these concepts,
Dodis and Smith\cite{DS05}
evaluated the average
of $ {d_{2}}(A |E|P_{A,E} * P_{W_{\bX}} \|Q_E )$ as follows.

\begin{proposition}[{\cite[Lemma 4]{DS05}}]\Label{Lem6-1}
For any joint sub-distribution $P_{A,E}$ on ${\cal A} \times {\cal E}$
and any normalized distribution $Q_E$ on ${\cal E}$,
a $\delta$-biased
ensemble of random variables $\{W_{\bX}\}$ on ${\cal A}=\FF_q^n$
satisfies
\begin{align}
\rE_{\bX} {d_{2}}(A |E|P_{A,E} * P_{W_{\bX}} \|Q_E )
\le
\delta^2
e^{-H_{2}(A|E|P_{A,E} \| Q_E)}.\Label{Lem6-1-eq1}
\end{align}
More precisely,
\begin{align}
& \rE_{\bX} {d_{2}}(A |E|P_{A,E} * P_{W_{\bX}} \|Q_E ) 
\nonumber \\
\le &
\delta^2
{d_{2}}(A |E|P_{A,E} \|Q_E ). \Label{Lem6-1-eq2}
\end{align}
\end{proposition}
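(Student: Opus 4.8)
The plan is to establish the sharper estimate \eqref{Lem6-1-eq2} directly by Fourier analysis on the additive group $\FF_q^n$ in the variable $a$, and then to read off \eqref{Lem6-1-eq1} from the definition of ${d_{2}}$. First I would record two elementary facts about the convolution. Since $P_{W}$ is normalized, convolution in the first argument preserves the $E$-marginal, and the uniform distribution is a fixed point, $P_{\mix,\cA}*P_{W}=P_{\mix,\cA}$. Hence, writing $g_e(a):=P_{A,E}(a,e)-P_{\mix,\cA}(a)P_E(e)$, one has $(P_{A,E}*P_{W})(a,e)-P_{\mix,\cA}(a)P_E(e)=(g_e*P_{W})(a)$ and, for every $e$, $\sum_a g_e(a)=0$ (this holds even though $P_{A,E}$ is only a sub-distribution). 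Therefore
\begin{align*}
{d_{2}}(A|E|P_{A,E}*P_{W}\|Q_E)=\sum_e Q_E(e)^{-1}\sum_a |(g_e*P_{W})(a)|^2 .
\end{align*}

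Next I would expand in the additive characters $\chi_x(a)=\omega_p^{(x\cdot a)}$, $x\in\FF_q^n$, which are orthogonal because the fixed $\FF_p$-bilinear form is non-degenerate, so that Parseval holds with constant $q^{-n}$. Putting $\hat g_e(x):=\sum_a g_e(a)\omega_p^{(x\cdot a)}$ and $\hat P_{W}(x):=\sum_w P_{W}(w)\omega_p^{(x\cdot w)}=\rE_{W}\omega_p^{(x\cdot W)}$, the convolution theorem gives $\widehat{g_e*P_{W}}(x)=\hat g_e(x)\hat P_{W}(x)$ and Parseval gives $\sum_a |(g_e*P_{W})(a)|^2=q^{-n}\sum_x|\hat g_e(x)|^2|\hat P_{W}(x)|^2$. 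The key observation is that the term $x=0$ vanishes because $\hat g_e(0)=\sum_a g_e(a)=0$, so the sum is over $x\neq 0$ only. Substituting $W=W_{\bX}$, taking $\rE_{\bX}$, and invoking the defining inequality \eqref{12-27-9} of a $\delta$-biased ensemble in the form $\rE_{\bX}|\hat P_{W_{\bX}}(x)|^2\le \delta^2$ for $x\neq 0$, I obtain
\begin{align*}
\rE_{\bX}\sum_a |(g_e*P_{W_{\bX}})(a)|^2\le \frac{\delta^2}{q^n}\sum_{x\neq 0}|\hat g_e(x)|^2=\delta^2\sum_a g_e(a)^2 ,
\end{align*}
the last equality being Parseval again (with the $x=0$ term once more dropping out). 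Multiplying by $Q_E(e)^{-1}$ and summing over $e$ yields \eqref{Lem6-1-eq2}. Finally, \eqref{Lem6-1-eq1} is immediate from ${d_{2}}(A|E|P_{A,E}\|Q_E)=e^{-H_{2}(A|E|P_{A,E}\|Q_E)}-\frac{1}{|\cA|}e^{D_2(P_A\|Q_E)}\le e^{-H_{2}(A|E|P_{A,E}\|Q_E)}$.

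I do not expect a genuine obstacle here: the only probabilistic ingredient, the $\delta$-biased condition, is used exactly in the form in which it is stated. The one point requiring a little care — and the only place where this differs from the $q=2$ treatment of \cite{DS05} — is the harmonic analysis over $\FF_q^n$ for a general prime power $q$: one must verify that the characters built from the chosen non-degenerate $\FF_p$-bilinear form are indeed orthogonal and that the Fourier inversion and Parseval constants equal $q^{-n}$. Everything after that is routine algebra, so the real content of the argument is the decomposition of ${d_{2}}$ in the character basis together with the vanishing of the $x=0$ frequency.
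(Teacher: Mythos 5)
Your argument is correct and is essentially the paper's own proof (Appendix \ref{pfLem6-1}): both rest on the discrete Fourier transform over $\FF_q^n$, the convolution theorem, Parseval's identity, and the $\delta$-biased bound applied only to the nonzero frequencies, with \eqref{Lem6-1-eq2} derived first and \eqref{Lem6-1-eq1} read off from the definition of ${d_{2}}$. The only (immaterial) difference is that you center the function $g_e(a)=P_{A,E}(a,e)-P_{\mix,\cA}(a)P_E(e)$ before transforming so that the zero frequency vanishes identically, whereas the paper transforms $f_{P_{A,E},Q_E}$ and identifies the dropped $x=0$ term with $\frac{1}{|\cA|}e^{D_2(P_E\|Q_E)}$.
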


The original proof by Dodis and Smith\cite{DS05}
discussed in the case with $q=2$.
Fehr and Schaffner \cite{FS08} extended this lemma to the quantum setting
in the case with $q=2$.
Their proof is based on Fourier analysis and easy to understand.
The proof with a general prime power $q$ is given latter.
by generalizing the idea by Fehr and Schaffner \cite{FS08}.
Dodis and Smith\cite[Lemma 6]{DS05} also considered the case with a general prime power $q$.
They did not explicitly give Proposition \ref{Lem6-1} 
and the definition (\ref{12-27-9})
with a general prime power $q$.

Proposition \ref{Lem6-3} essentially coincides with Proposition \ref{Lem6-1}.
However, 
the concept ``$\delta$-biased" does not concern a linear random hash function
while the concept ``$\varepsilon$-almost dual universality$_2$" does it
because the former is defined for the ensemble of random variables.
That is, 
the latter is a generalization of a universal$_2$ linear hash function while the former does not.
Hence,
Proposition \ref{Lem6-1} cannot directly provide the performance of a linear random hash function.
In contrast, Proposition \ref{Lem6-3} gives how the privacy amplification by 
a linear hash function decreases the leaked information. 
Therefore, in the main part of this paper,
using Proposition \ref{Lem6-3}, 
we treat the exponential decreasing rate when we apply 
the privacy amplification by 
an $\varepsilon$-almost dual universal$_2$ linear hash function.

\begin{proofof}{Proposition \ref{Lem6-3}}
Due to Lemma \ref{Lem6-0} and (\ref{Lem6-1-eq1}), we obtain
\begin{align}
\rE_{\bX} {d_{2}}(A |E|P_{A,E} * P_{W_{C_{\bX}}} \| Q_E )
\le
\varepsilon q^{-t}
e^{-H_{2}(A|E|P_{A,E} \|Q_E)}.
\Label{12-18-1}
\end{align}

Denoting the quotient class with respect to the subspace $C$ with the representative $a\in \cA$
by $[a]$, we obtain
\begin{align*}
P_{A,E} * P_{W_{C}} (a,e)
=&
\sum_{w \in C}
q^{-t}
P_{A,E}(a-w,e) \\
=&
q^{-t}
P_{A,E}([a],e) .
\end{align*}
Now, we focus on the relation 
${\cal A} \cong {\cal A}/C \times C\cong f_C({\cal A}) \times C$.
Then,
\begin{align*}
P_{A,E} * P_{W_{C_{\bX}}} (b,w,e)
=
q^{-t}
P_{f_C(A),E}(b,e).
\end{align*}
Thus,
\begin{align}
& {d_{2}}(A |E|P_{A,E} * P_{W_{C}}  \|Q_E )\nonumber  \\
= &
q^{-t}
{d_{2}}(f_{C}(A) |E|P_{f_{C}(A),E}\| Q_E ) \nonumber \\
=&
q^{-t}
{d_{2}}(f_{C}(A) |E|P_{A,E}\| Q_E).
\Label{3-23-2}
\end{align}
Therefore, (\ref{12-18-1}) implies
\begin{align*}
& \rE_{\bX} q^{-t}
{d_{2}}(f_{C_{\bX}}(A) |E| P_{A,E}\| Q_E) \\
\le &
\varepsilon q^{-t}
e^{-H_{2}(A|E|P_{A,E}\| Q_E)},
\end{align*}
which implies (\ref{12-5-9}).

Similarly, Lemma \ref{Lem6-0}, (\ref{Lem6-1-eq2}), and (\ref{3-23-2}) imply that
\begin{align*}
& \rE_{\bX} q^{-t}
{d_{2}}(f_{C_{\bX}}(A) |E| P_{A,E}\| Q_E) \\
\le &
\varepsilon q^{-t} e^{-H_{2}(A|E|P_{A,E}\| Q_E)}.
\end{align*}
Since $\rE_{\bX}
{d_{2}}(f_{C_{\bX}}(A) |E| P_{A,E}\| Q_E) 
=
\rE_{\bX} 
e^{-H_{2}(f_{C_{\bX}}(A)|E|P_{A,E} \| Q_E)}
-\frac{1}{q^{n-t}} e^{D_2(P_{E} \| Q_E)}$,
we have (\ref{Lem6-3-eq2}).
%Replacing (\ref{Lem6-1-eq1}) in the derivation of (\ref{12-18-1}), we obtain 
%\begin{align*}
%& \rE_{\bX} e^{-H_{2}(f_{C_{\bX}}(A)|E|P_{A,E} \| Q_E)}-\frac{1}{\sM} e^{D_2(P_{E} \| Q_E)} \\
%=&\rE_{\bX} {d_{2}}(f_{C_{\bX}}(A)|E|P_{A,E} \|Q_E ) \\
%\le & \varepsilon (1-\frac{1}{\sM}) e^{-H_{2}(A|E|P_{A,E} \| Q_E)},
%\end{align*}
%which implies (\ref{Lem6-3-eq2}).
\end{proofof}

To start our proof of Proposition \ref{Lem6-1},
we make preparation before our proof of Proposition \ref{Lem6-1}.
First, remember that ${\cal A}$ is a vector space $\FF_q^n$ and 
${\cal E}$ is a general discrete set.
We define the $\ell^2$ norm over the space $L^2({\cal A}\times {\cal E})$
as
\begin{align}
\|f\|_2^2:= \sum_{a \in {\cal A},e\in {\cal E}} |f(a,e)|^2,
\quad \forall f \in L^2({\cal A}\times {\cal E}).
\end{align}
Then, we define the discrete Fourier transform ${\cal F}$ on
$L^2({\cal A}\times {\cal E})$ as
\begin{align}
{\cal F}(f)(a',e):= 
q^{-\frac{n}{2}} \sum_{a \in {\cal A}} 
\omega_p^{(a'\cdot a)}
f(a,e),
\quad \forall f \in L^2({\cal A}\times {\cal E}),
\forall a' \in {\cal A},
\forall e \in {\cal E},
\end{align}
which satisfies $\|{\cal F}f\|_2=\|f\|_2$.
For $\forall f,g \in L^2({\cal A}\times {\cal E})$,
the convolution $f*g$:
\begin{align}
f*g(a,e):=\sum_{a' \in {\cal A}}f(a-a',e)g(a',e).
\end{align}
satisfies 
\begin{align}
{\cal F}(f*g)(a,e)=q^{\frac{n}{2}} {\cal F}(f)(a,e) {\cal F}(g)(a,e).
\Label{12-27-6}
\end{align}

We prepare the following lemma.
\begin{lem}
When $f_{P_{A,E},Q_E} \in L^2({\cal A}\times {\cal E})$ is defined as
\begin{align}
f_{P_{A,E},Q_E}(a,e):= P_{A,E}(a,e)Q_E(e)^{-\frac{1}{2}} ,
\end{align}
we have
\begin{align}
\| f_{P_{A,E},Q_E} \|_2^2
&= e^{-H_2(A|E|P_{A,E}\| Q_E) } \Label{12-27-3}
\\
\sum_{e\in {\cal E}}|{\cal F}(f_{P_{A,E},Q_E})(0,e)|^2
&= e^{D_2(P_{E}\| Q_E) } \Label{12-27-4}
\\
\sum_{a\neq 0 \in {\cal A} e\in {\cal E}}
|{\cal F}(f_{P_{A,E},Q_E})(a,e)|^2 
&= d_2(A|E|P_{A,E}\| Q_E)  .\Label{12-27-5}
\end{align}
\end{lem}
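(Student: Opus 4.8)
The plan is to obtain all three identities by direct computation from the definitions recalled in Section \ref{cqs1} and Section \ref{cqs1-2}, together with the Plancherel identity $\|{\cal F}f\|_2=\|f\|_2$ noted just above.

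For (\ref{12-27-3}) I would simply unfold the squared norm, $\|f_{P_{A,E},Q_E}\|_2^2=\sum_{a,e}P_{A,E}(a,e)^2 Q_E(e)^{-1}$, and recognize the right-hand side as $e^{-H_2(A|E|P_{A,E}\|Q_E)}$ by putting $s=1$ in the defining formula $H_{1+s}(A|E|P_{A,E}\|Q_E)=\frac{-1}{s}\log\sum_{a,e}P_{A,E}(a,e)^{1+s}Q_E(e)^{-s}$. For (\ref{12-27-4}) I would evaluate the zero-frequency component: since $\omega_p^{(0\cdot a)}=1$ for every $a$ and $\sum_a P_{A,E}(a,e)=P_E(e)$, one gets ${\cal F}(f_{P_{A,E},Q_E})(0,e)=q^{-n/2}P_E(e)Q_E(e)^{-1/2}$; squaring, summing over $e$, using $|\cA|=q^n$, and invoking the $s=1$ case of $D_{1+s}(P_E\|Q_E)=\frac{1}{s}\log\sum_e P_E(e)^{1+s}Q_E(e)^{-s}$ gives $\sum_e|{\cal F}(f_{P_{A,E},Q_E})(0,e)|^2=\frac{1}{|\cA|}e^{D_2(P_E\|Q_E)}$, i.e.\ (\ref{12-27-4}); the weight $q^{-n}=|\cA|^{-1}$ is produced by the normalization of ${\cal F}$.

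For (\ref{12-27-5}) I would apply Plancherel to $f_{P_{A,E},Q_E}$: by (\ref{12-27-3}), $\sum_{a,e}|{\cal F}(f_{P_{A,E},Q_E})(a,e)|^2=\|f_{P_{A,E},Q_E}\|_2^2=e^{-H_2(A|E|P_{A,E}\|Q_E)}$, and then I would peel off the $a=0$ contribution computed in (\ref{12-27-4}). The remaining sum over $a\neq 0$ equals $e^{-H_2(A|E|P_{A,E}\|Q_E)}-\frac{1}{|\cA|}e^{D_2(P_E\|Q_E)}$, which is exactly the expansion of $d_2(A|E|P_{A,E}\|Q_E)$ recorded in Section \ref{cqs1-2}, so (\ref{12-27-5}) follows. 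I do not expect any genuine obstacle here: the lemma is a routine consequence of Parseval's theorem and the definitions, and the only point needing attention is the bookkeeping of the constant $q^{-n/2}$ in ${\cal F}$, so that the $a=0$ term supplies precisely the $|\cA|^{-1}$ weight appearing in $d_2$. The lemma is stated only in order to set up the Fourier-analytic proof of Proposition \ref{Lem6-1}.
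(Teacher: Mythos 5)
Your proof is correct and follows essentially the same route as the paper's: direct expansion of the definitions for (\ref{12-27-3}) and (\ref{12-27-4}), then Plancherel plus subtraction of the zero-frequency contribution for (\ref{12-27-5}). The one place where you genuinely diverge is exactly the bookkeeping you flag: with the stated normalization ${\cal F}(f)(a',e)=q^{-n/2}\sum_a\omega_p^{(a'\cdot a)}f(a,e)$, the zero-frequency sum equals $\frac{1}{|\cA|}e^{D_2(P_E\|Q_E)}$, as you compute, rather than $e^{D_2(P_E\|Q_E)}$ as (\ref{12-27-4}) literally asserts; the paper's own derivation silently drops the factor $q^{-n/2}$ when evaluating ${\cal F}(f_{P_{A,E},Q_E})(0,e)$ and then compensates by also dropping the $1/|\cA|$ when identifying $e^{-H_2(A|E|P_{A,E}\|Q_E)}-e^{D_2(P_E\|Q_E)}$ with $d_2(A|E|P_{A,E}\|Q_E)$, so that only the final identity (\ref{12-27-5}) comes out as stated. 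Your version is the internally consistent one, and it is the one that actually feeds correctly into the proof of Proposition \ref{Lem6-1}.
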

\begin{proof}
(\ref{12-27-3}) and 
(\ref{12-27-4}) are shown as follows.
\begin{align*}
\| f_{P_{A,E},Q_E} \|_2^2
&=
\sum_{a,e} (P_{A,E}(a,e)Q_E(e)^{-\frac{1}{2}})^2
= e^{-H_2(A|E|P_{A,E}\| Q_E) } \\
\sum_{e\in {\cal E}}|{\cal F}(f_{P_{A,E},Q_E})(0,e)|^2
&= 
\sum_{e} (\sum_{a}P_{A,E}(a,e)Q_E(e)^{-\frac{1}{2}})^2 
= 
\sum_{e} (P_{E}(e)Q_E(e)^{-\frac{1}{2}})^2 
= e^{D_2(P_{E}\| Q_E) } .
\end{align*}
(\ref{12-27-5}) is shown as follows.
\begin{align*}
& \sum_{a\neq 0 \in {\cal A}, e\in {\cal E}}
|{\cal F}(f_{P_{A,E},Q_E})(a,e)|^2
= 
\|{\cal F}(f_{P_{A,E},Q_E})\|_2^2
-
\sum_{ e\in {\cal E}} |{\cal F}(f_{P_{A,E},Q_E})(0,e)|^2 \\
=& 
\| f_{P_{A,E},Q_E}\|_2^2
-
\sum_{ e\in {\cal E}} |{\cal F}(f_{P_{A,E},Q_E})(0,e)|^2 \\
=& 
e^{-H_2(A|E|P_{A,E}\| Q_E) }
- e^{D_2(P_{E}\| Q_E) } 
=
d_2(A|E|P_{A,E}\| Q_E)  .
\end{align*}
\end{proof}

\begin{proofof}{Proposition \ref{Lem6-1}}
Now, we choose $g_{\bX} \in L^2({\cal A}\times {\cal E})$ as
\begin{align}
g_{\bX}(a,e):= P_{W_{\bX}}(a).
\end{align}
Then, 
\begin{align}
f_{P_{A,E},Q_E} * g_{\bX} = f_{P_{A,E}* P_{W_{\bX}},Q_E} .
\Label{12-27-1}
\end{align}
The assumption yields that
\begin{align}
\rE_{\bX} |{\cal F}(g_{\bX})(a,e)|^2
=
\rE_{\bX} |
q^{-\frac{n}{2}} \sum_{a \in {\cal A}} 
\omega_p^{(a'\cdot a)}
P_{W_{\bX}}(a)
|^2 \le \delta^2 q^{-n}
\Label{12-27-2}
\end{align}
for $a' \neq  0 \in {\cal A}$.
Hence,
\begin{align}
& \rE_{\bX}{d_{2}}(A |E|P_{A,E} * P_{W_{\bX}} \|Q_E )
\stackrel{(a)}{=}
\rE_{\bX}\sum_{a\neq 0 \in {\cal A}, e\in {\cal E}}
|{\cal F}(f_{P_{A,E}* P_{W_{\bX}},Q_E})(a,e)|^2\nonumber \\
\stackrel{(b)}{=}
& 
\rE_{\bX}\sum_{a\neq 0 \in {\cal A}, e\in {\cal E}}
|{\cal F}(f_{P_{A,E},Q_E}* g_{\bX})(a,e)|^2 
\stackrel{(c)}{=}
\rE_{\bX} \sum_{a\neq 0 \in {\cal A}, e\in {\cal E}}
| q^{\frac{n}{2}} {\cal F}(f_{P_{A,E},Q_E}) (a,e) {\cal F}(g_{\bX}) (a,e)|^2 \nonumber\\
\stackrel{(d)}{\le}
& \delta^2 
\rE_{\bX} \sum_{a\neq 0,e} |{\cal F}(f_{P_{A,E},Q_E})(a,e) |^2 
\stackrel{(e)}{=}
\delta^2 {d_{2}}(A |E|P_{A,E} \|Q_E ) 
\le 
\delta^2 e^{-H_2(A|E|P_{A,E}\| Q_E) },
\end{align}
which shows (\ref{Lem6-1-eq1}) and (\ref{Lem6-1-eq2}).
Here, 
$(a), (b), (c), (d),$ and $(e)$ follow from 
(\ref{12-27-5}), (\ref{12-27-1}), (\ref{12-27-6}), (\ref{12-27-2}), and 
(\ref{12-27-5}), respectively.
\end{proofof}

\section{Proofs of comparisons of exponents}
\subsection{Proof of Lemma \ref{L3-29-1}}\Label{aL3-29-1}
Inequality (\ref{12-21-31}) can be shown from (\ref{8-26-8-k}).
Lemma \ref{cor1} yields that
\begin{align}
& \frac{1}{2}e_{I}(P_{A,E}|R)\nonumber \\
=& 
\max_{0 \le s \le 1} \frac{s}{2} H_{1+s}^{\downarrow}(A|E|P_{A,E} ) -\frac{s}{2}  R \nonumber \\
\le &
\max_{0 \le s \le 1} \frac{s}{2} H_{1+s}^{\uparrow}(A|E|P_{A,E} ) -\frac{s}{2}  R \nonumber \\
= &
\max_{0 \le t \le 1/2} \frac{t}{2(1-t)} (H_{\frac{1}{1-t}}^{\uparrow}(A|E|P_{A,E} ) -R) \nonumber  \\
\le &
\max_{0 \le t \le 1/2} t (H_{\frac{1}{1-t}}^{\uparrow}(A|E|P_{A,E} ) -R) 
\Label{12-22-1}\\
= & e_{d}(P_{A,E}|R),\nonumber
\end{align}
where $t= \frac{s}{1+s}$, i.e., $s=\frac{t}{1-t}$.
Inequality (\ref{12-22-1}) follows from 
the non-negativity of the RHS of (\ref{12-22-1}) and
the inequality $\frac{1}{2(1-t)} \le 1$.

\subsection{Proof of Lemma \ref{L3-19-11}}\Label{aL3-19-11}
Lemma \ref{L7-1} implies that
\begin{align*}
H_{\frac{1}{1-s}}^{\uparrow}(A|E|P_{A,E}) < H_{1+s}^{\uparrow}(A|E|P_{A,E})
%\phi (t|A|E|P_{A,E}) > (1+t)\phi (\frac{t}{1+t}|A|E|P_{A,E}) .
\end{align*}
Choosing $t=\frac{s}{1+s}$, we have
\begin{align*}
&\max_{0\le s}\frac{s ( H_{1+s}^{\uparrow}(A|E|P_{A,E}) -R)}{1+2s}\\
=
&\max_{0\le t \le 1}\frac{ t( H_{\frac{1}{1-t}}^{\uparrow}(A|E|P_{A,E}) -R)}{1+t}\\
<
&\max_{0\le t \le 1}\frac{ t( H_{1+t}^{\uparrow}(A|E|P_{A,E}) -R)}{1+t},
\end{align*}
which implies (\ref{3-18-1}).
Similarly,
since $H_{1+t}(A|E|P_{A,E})$ is strictly monotonically increasing with respect to $t$,
\begin{align*}
&\max_{0\le s}\frac{s H_{1+s}^{\downarrow}(A|E|P_{A,E}) -sR}{1+s} \\
=&
\max_{0\le t \le 1}{t H_{\frac{1}{1-t}}(A|E|P_{A,E}) -tR} \\
< &
\max_{0\le t \le 1}{t H_{1+t}(A|E|P_{A,E}) -tR} ,
\end{align*}
which implies (\ref{3-18-1b}).

When $P_{A|E=e}$ is a uniform distribution for any element $e \in {\cal E}$,
$H_{1+t}(A|E|P_{A,E})$ and $H_{1+t}^{\uparrow}(A|E|P_{A,E})$ 
do not depend on $t$.
Hence, we obtain 
$\max_{0\le s}\frac{s ( H_{1+s}^{\uparrow}(A|E|P_{A,E}) -R)}{1+2s}
=\max_{0\le t \le 1}\frac{ t( H_{1+t}^{\uparrow}(A|E|P_{A,E}) -R)}{1+t}
=\frac{H(A|E|P_{A,E}) -R}{2}$
and
$\max_{0\le s}\frac{s H_{1+s}^{\downarrow}(A|E|P_{A,E}) -sR}{1+s} 
=\max_{0\le t \le 1}{t H_{1+t}(A|E|P_{A,E}) -tR} 
=H(A|E|P_{A,E}) -R$,
which imply
the equalities 
$e_{d}(P_{A,E}|R)=\tilde{e}_{d}(P_{A,E}|R)$
and $e_{I}(P_{A,E}|R)=\tilde{e}_{I}(P_{A,E}|R)$.

\section{Smoothing bound of min entropy}\Label{aL3-18-3}
\subsection{Proof of (\ref{3-17-4b}) of Theorem \ref{L3-18-3}}\Label{aps1}
First, 
$\Delta_{I,\min}(e^{nR},\varepsilon|P_{A,E}^n)$ is the upper bound by the smoothing of min entropy 
in the modified mutual information criterion
as is mentioned in (\ref{12-5-2-2}).
Using the relation (\ref{3-19-13}) in Theorem \ref{L3-19-10}, we obtain
\begin{align}
\liminf_{n \to \infty}\frac{-1}{n}\log
\Delta_{I,\min}(e^{nR},\varepsilon|P_{A,E}^n)
\ge 
\max_{0\le s}\frac{s H_{1+s}^{\downarrow}(A|E|P_{A,E}) -sR}{1+s}.
\Label{3-17-3b}
\end{align}
Now, we show the opposite inequality.
Applying the Cram\'{e}r Theorem \cite{Dembo},
we obtain
\begin{align}
& \lim_{n\to \infty}\frac{-1}{n}
\log 
 P_{A,E}^n
\{(a,e)\in {\cal A}^n \times {\cal E}^n|
{P_{A|E}^n(a|e)} \ge 2 e^{-nR'}
\}\nonumber\\
= &
\max_{0\le s}
sH_{1+s}^{\downarrow}(A|E|P_{A,E}) -sR'.
\Label{3-18-2b}
\end{align}
Since $sH_{1+s}^{\downarrow}(A|E|P_{A,E}) -sR'$ is monotone decreasing with respect to $R'$
and $R'-R$ is monotone increasing with respect to $R'$,
we have
\begin{align}
 \max_{R'} 
\min \{
 sH_{1+s}^{\downarrow}(A|E|P_{A,E}) -sR', R'-R \}  =
\frac{sH_{1+s}^{\downarrow}(A|E|P_{A,E})-sR}{1+s} .\Label{3-20-1}
\end{align}
because the solution of $ sH_{1+s}^{\downarrow}(A|E|P_{A,E}) -sR'= R'-R $ with respect to $R'$ is 
$\frac{sH_{1+s}^{\downarrow}(A|E|P_{A,E})+R}{1+s} $.

Using the lower bound (\ref{3-19-13b}) in Theorem \ref{L3-19-10b}
with $c=2$, 
(\ref{3-18-2b}), and (\ref{3-20-1}), 
we have
\begin{align}
&\lim_{n \to \infty}\frac{-1}{n}\log
\min_{\varepsilon>0}
(\eta( \varepsilon ,nR ) 
+ e^{nR-{H}_{\min}^{\downarrow,\varepsilon}(A|E|P_{A,E}^n )} )\nonumber\\
\le &\lim_{n \to \infty} \frac{-1}{n} \log
 \min_{R'} 
\eta( 2 P_{A,E}^n \{(a,e)\in {\cal A}^n \times {\cal E}^n|{ P_{A|E}^n(a|e)}\ge e^{-n 2 R'}  \}
 ,\log e^{n R} ) 
+ e^{n R} e^{-n R'} \nonumber \\
=&
 \max_{R'} 
\lim_{n \to \infty} \frac{-1}{n} \log
\eta( 2 P_{A,E}^n \{(a,e)\in {\cal A}^n \times {\cal E}^n|{ P_{A|E}^n(a|e)} \ge e^{-n 2 R'}  \}
 ,\log e^{n R} ) 
+ e^{n (R-R')}  \nonumber \\
=&
 \max_{R'} 
\min \{
\lim_{n \to \infty} \frac{-1}{n} \log
\eta( 2 P_{A,E}^n \{(a,e)\in {\cal A}^n \times {\cal E}^n|{ P_{A|E}^n(a|e)} \ge e^{-n 2 R'}  \}
 ,\log e^{n R} ) ,
 R'-R \}  \nonumber \\
=&
 \max_{R'} 
\min \{
\max_{0\le s} sH_{1+s}^{\downarrow}(A|E|P_{A,E}) -sR', R'-R \}  \nonumber \\
=&
 \max_{R'} 
\max_{0\le s}
\min \{
 sH_{1+s}^{\downarrow}(A|E|P_{A,E}) -sR', R'-R \}  \nonumber \\
=&
\max_{0\le s}
 \max_{R'} 
\min \{
 sH_{1+s}^{\downarrow}(A|E|P_{A,E}) -sR', R'-R \}  \nonumber \\
=&
\max_{0\le s}
\frac{sH_{1+s}^{\downarrow}(A|E|P_{A,E})-sR}{1+s} .
\end{align}
Hence, we obtain (\ref{3-17-4b}).

\subsection{Proof of (\ref{3-17-4}) of Theorem \ref{L3-18-3}}\Label{aps2}
The quantity $\Delta_{d,\min}(e^{nR},\varepsilon|P_{A,E}^n)$ 
is the upper bound by smoothing of min entropy 
in the $L_1$ distinguishability criterion
as is mentioned in (\ref{12-5-1-2}).
Using the relation (\ref{3-19-11}) in Theorem \ref{L3-19-10}, we obtain
\begin{align}
\liminf_{n \to \infty}\frac{-1}{n}\log 
\Delta_{d,\min}(e^{nR},\varepsilon|P_{A,E}^n)
\ge 
\max_{0\le s}\frac{sH_{1+s}^{\uparrow}(A|E|P_{A,E}) -sR}{1+2s}.
\Label{3-17-3}
\end{align}

We show the opposite inequality in (\ref{3-17-4}) by using the following lemma.
The proof of Lemma \ref{L3-20-1} will be shown latter.
\begin{lem}\Label{L3-20-1}
The following inequality
\begin{align}
& \lim_{n\to \infty}\frac{-1}{n}
\log 
\min_{Q_{E,n}} P_{A,E}^n
\{(a,e)\in {\cal A}^n \times {\cal E}^n |
\frac{P_{A,E}^n(a,e)}{Q_{E,n}(e)} \ge 2 e^{-nR'}
\}\nonumber \\
\le &
\max_{0\le s}
s H_{1+s}^{\uparrow}(A|E|P_{A,E}) -sR'.
\Label{3-18-2}
\end{align}
\end{lem}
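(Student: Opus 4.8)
The plan is to read the left side of (\ref{3-18-2}) as a statement about the \emph{worst} auxiliary distribution, namely as $-\lim_n\min_{Q_{E,n}}\tfrac1n\log P_{A,E}^n\{(a,e)\mid P_{A,E}^n(a,e)/Q_{E,n}(e)\ge 2e^{-nR'}\}$, and first to eliminate $Q_{E,n}$. For a sub-distribution $P'$ on $\cA^n\times\cE^n$ a normalized $Q_{E,n}$ with $H_{\min}(A^n|E^n|P'\|Q_{E,n})\ge nR$ exists iff $\sum_{e\in\cE^n}\max_a P'(a,e)\le e^{-nR}$ (take $Q_{E,n}(e)\propto\max_a P'(a,e)$). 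By the second line of (\ref{10-16}) an optimal $P'$ may be taken $\le P_{A,E}^n$, so writing $\eta_n(R):=\min\{\|P_{A,E}^n-P'\|_1:\ P'\le P_{A,E}^n,\ \sum_e\max_a P'(a,e)\le e^{-nR}\}$, the identity (\ref{10-16}) and the two-sided bound (\ref{10-16-2}) of Lemma \ref{L8-29-1}, minimized over $Q_{E,n}$, give
\begin{align}
\bigl(1-\tfrac1c\bigr)\min_{Q_{E,n}}P_{A,E}^n\bigl\{(a,e)\mid \tfrac{P_{A,E}^n(a,e)}{Q_{E,n}(e)}>c\,e^{-nR}\bigr\}\le\eta_n(R)\le\min_{Q_{E,n}}P_{A,E}^n\bigl\{(a,e)\mid \tfrac{P_{A,E}^n(a,e)}{Q_{E,n}(e)}>e^{-nR}\bigr\}
\end{align}
for every $c>1$. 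Since $c$, $1-\tfrac1c$ and the shift from $2e^{-nR'}$ to $e^{-n(R'-\delta)}$ are subexponential, (\ref{3-18-2}) follows once we prove $\limsup_n\tfrac{-1}{n}\log\eta_n(R)\le\max_{0\le s}s\bigl(H_{1+s}^{\uparrow}(A|E|P_{A,E})-R\bigr)$ for every $R$ and then let $R=R'-\delta\uparrow R'$, using continuity of the right side where the maximum is finite (and there is nothing to prove where it is $+\infty$).

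Next I would make $\eta_n(R)$ explicit. Among $P'\le P_{A,E}^n$ it is cheapest to cap, for each $e\in\cE^n$, all the masses $P_{A,E}^n(\cdot,e)$ at a level $\ell(e)\ge0$ subject to $\sum_e\ell(e)=e^{-nR}$; the cost is $\sum_e g_e(\ell(e))$ with $g_e(\ell):=\sum_a(P_{A,E}^n(a,e)-\ell)^{+}$, convex and decreasing in $\ell$. For this separable convex program with one linear budget, Lagrange duality gives
\begin{align}
\eta_n(R)=\max_{\lambda\ge0}\Bigl[\sum_{e\in\cE^n}\min_{\ell\ge0}\bigl(g_e(\ell)+\lambda\ell\bigr)-\lambda e^{-nR}\Bigr],
\end{align}
and a short computation shows that $\min_{\ell\ge0}(g_e(\ell)+\lambda\ell)$ is, up to a lower-order term, the sum $\Sigma_e(\lambda)$ of the $\lfloor\lambda\rfloor$ largest values of $P_{A,E}^n(\cdot,e)$. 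So $\eta_n(R)$ equals, up to subexponential factors, $\max_{\lambda\ge0}\bigl[\sum_e\Sigma_e(\lambda)-\lambda e^{-nR}\bigr]$.

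Finally I would evaluate this by large deviations: set $\lambda=e^{n\mu}$, note that $\Sigma_e(e^{n\mu})$ is comparable to $P_E^n(e)\,P_{A|E}^n\{a\mid P_{A|E}^n(a|e)\ge\tau_n(e)\}$ where the cut $\tau_n(e)$ is fixed by requiring the count of such $a$ to be $e^{n\mu}$, express everything through the i.i.d.\ array $(\log P_{A,E}(a_i,e_i),\log P_E(e_i))_{i\le n}$, and apply the Cram\'{e}r / G\"{a}rtner--Ellis theorem; optimizing over $\mu$ and then invoking the variational formula for $H_{1+s}^{\uparrow}$ in Lemma \ref{cor1} (whose explicit maximizer $Q_E^{*}$ is exactly the exponential tilt making the per-$e$ cuts $\tau_n(e)$ typical) collapses the double optimization to $\max_{0\le s}s(H_{1+s}^{\uparrow}(A|E|P_{A,E})-R)$. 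The main obstacle is this last step: because $\tau_n(e)$ genuinely depends on $e$, no one-dimensional Cram\'{e}r bound applies directly, and it is precisely the freedom to vary the cut with $e$ — equivalently the $\max_{Q_E}$ defining $H^{\uparrow}$ — that makes $H_{1+s}^{\uparrow}$ rather than $H_{1+s}^{\downarrow}$ appear; the matching lower bound (so that (\ref{3-18-2}) is in fact an equality) comes more cheaply from (\ref{10-16}) applied to the product reference $(Q_E^{*})^{\otimes n}$ together with the ordinary Cram\'{e}r theorem.
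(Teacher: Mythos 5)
There is a genuine gap, and you have in fact pointed at it yourself: the step you call ``the main obstacle'' is precisely the content of the lemma, and your sketch does not resolve it. The inequality (\ref{3-18-2}) is a \emph{lower} bound on the tail probability that must hold uniformly over all choices of $Q_{E,n}$, and after your (correct, but essentially cosmetic) reformulation via Lemma \ref{L8-29-1} and the separable convex program for $\eta_n(R)$, you are left with evaluating $\max_{\lambda\ge 0}[\sum_e\Sigma_e(\lambda)-\lambda e^{-nR}]$, where the per-$e$ truncation level varies with $e$. You then appeal to G\"{a}rtner--Ellis and to ``the variational formula for $H_{1+s}^{\uparrow}$ in Lemma \ref{cor1}'' to collapse the optimization, but no actual large-deviation computation is given, and a one-dimensional Cram\'{e}r bound does not apply to this object. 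Since the desired exponent genuinely involves $H_{1+s}^{\uparrow}$ rather than $H_{1+s}^{\downarrow}$, this is not a routine omission: it is exactly where the $\max_{Q_E}$ in the definition of $H^{\uparrow}$ has to be earned, and the proposal earns it nowhere.

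The paper closes this gap by two ideas that are absent from your proposal. First, it disposes of the minimization over $Q_{E,n}$ not by convex duality but by a \emph{universal reference measure}: any $Q_{E,n}$ is first symmetrized (Lemma \ref{L3-18-2}, costing a factor $2$ in the probability and in the threshold) and then dominated by the uniform mixture $Q_{E,n,\inv,\mix}$ of type-uniform distributions at the polynomial cost $|T_{n,{\cal E}}|$. This replaces $\min_{Q_{E,n}}$ by a single explicit $Q$ without changing the exponent, and turns the event into a union of joint type classes characterized by $D(P_{A,E}'\|P_{A,E})+H(A|E|P_{A,E}')\le R'$, so the method of types gives the exponent as the constrained minimum $\min_{P_{A,E}'}\{D(P_{A,E}'\|P_{A,E}): D(P_{A,E}'\|P_{A,E})+H(A|E|P_{A,E}')\le R'\}$. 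Second, Lemma \ref{L3-20-2} evaluates this constrained minimization as $\max_{0\le s}s(H_{1+s}^{\uparrow}(A|E|P_{A,E})-R')$ by an explicit tilted family $P_{A,E;s}$ and the derivative formulas of Lemma \ref{L3-19-1}; this is a nontrivial computation (it is where the correct normalization $(\sum_a P_{A|E}(a|e)^{1+s})^{1/(1+s)}$ enters), not a one-line invocation of Lemma \ref{cor1}. Without substitutes for these two steps, your argument does not establish (\ref{3-18-2}).
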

Using (\ref{3-18-2}) in Lemma \ref{L3-20-1} and 
the lower bound (\ref{3-19-10b}) in 
Theorem \ref{L3-19-10b} with $c=2$, 
we obtain
\begin{align}
& \lim_{n \to \infty}\frac{-1}{n}\log
(
\min_{\epsilon_2>0} 2 \epsilon_1
+{e}^{\frac{1}{2}nR}
e^{-\frac{1}{2}{H}_{\min}^{\downarrow,\epsilon_1}(A|E|P_{A}^n )} )  \nonumber \\
\le &
\lim_{n \to \infty}\frac{-1}{n}\log
(
\min_{R'}
\min_{Q_{E,n}} P_{A,E}^n
\{(a,e)\in {\cal A}^n \times {\cal E}^n|
\frac{P_{A,E}^n(a,e)}{Q_{E,n}(e)} \ge 2 e^{-nR'}
\}
+{e}^{\frac{1}{2}n(R-R')}
) 
\nonumber \\
= &
\max_{R'}
\lim_{n \to \infty}\frac{-1}{n}\log
(
\min_{Q_{E,n}} P_{A,E}^n
\{(a,e)\in {\cal A}^n \times {\cal E}^n|
\frac{P_{A,E}^n(a,e)}{Q_{E,n}(e)} \ge 2 e^{-nR'}
\}
+{e}^{\frac{1}{2}n(R-R')}
) 
\nonumber \\
= &
\max_{R'}
\min \{\lim_{n \to \infty}\frac{-1}{n}\log
(
\min_{Q_{E,n}} P_{A,E}^n
\{(a,e)\in {\cal A}^n \times {\cal E}^n|
\frac{P_{A,E}^n(a,e)}{Q_{E,n}(e)} \ge 2 e^{-nR'}
\}),
\frac{R'-R}{2} \}
\nonumber \\
\le &
\max_{R'}
\min \{
\max_{0\le s} s H_{1+s}^{\uparrow}(A|E|P_{A,E}) -sR',
\frac{R'-R}{2} \}\nonumber \\
= &
\max_{R'}
\max_{0\le s}
\min \{
s H_{1+s}^{\uparrow}(A|E|P_{A,E}) -sR',
\frac{R'-R}{2} \} \nonumber \\
= &
\max_{0\le s}
\max_{R'}
\min \{
s H_{1+s}^{\uparrow}(A|E|P_{A,E}) -sR',
\frac{R'-R}{2} \} .
\Label{3-18-4}
\end{align}
Further, 
$sH_{1+s}^{\uparrow}(A|E|P_{A,E}) -sR'$ is monotone increasing with respect to $R'$ and
$\frac{R-R'}{2}$ is monotone decreasing with respect to $R'$.
Solving the equation
$sH_{1+s}^{\uparrow}(A|E|P_{A,E}) -sR'=\frac{R'-R}{2} $ 
with respect to $R'$,
we have $R'=\frac{2sH_{1+s}^{\uparrow}(A|E|P_{A,E})+R}{1+2s}$,
which implies that
\begin{align*}
\max_{R'}
\min \{
sH_{1+s}^{\uparrow}(A|E|P_{A,E}) -sR',
\frac{R'-R}{2} \} 
=
\frac{sH_{1+s}^{\uparrow}(A|E|P_{A,E}) -sR}{1+2s}.
\end{align*}
Thus,
\begin{align*}
& \max_{0\le s}
\max_{R'}
\min \{
sH_{1+s}^{\uparrow}(A|E|P_{A,E}) -sR',
\frac{R'-R}{2} \} \nonumber \\
=&
\max_{0\le s}
\frac{sH_{1+s}^{\uparrow}(A|E|P_{A,E}) -sR}{1+2s}.
%\max_{0\le s}\frac{s H_{1+s}^{\downarrow}(A|P_{A}) -sR}{1+2s}
\end{align*}
Hence, we obtain (\ref{3-17-4}).

\begin{proofof}{Lemma \ref{L3-20-1}}
%Next, we show (\ref{3-18-2}).
We show Lemma \ref{L3-20-1} by using Lemmas \ref{L3-18-2} and \ref{L3-20-2}, which will be given latter.
For any distribution $Q_{E,n}$, we define 
the permutation invariant distribution $Q_{E,n,\inv}$ by
\begin{align*}
Q_{E,n,\inv} (e):= \sum_{g \in S_n} \frac{1}{n!} Q_{E,n}(g(e)),
\end{align*}
where $S_n$ is the $n$-th permutation group and 
$g(e)$ is the element permuted from $e \in {\cal E}^n$ by $g \in S_n$.
Then, we have
\begin{align*}
&P_{A,E}^n
\{(a,e)\in {\cal A}^n \times {\cal E}^n|
\frac{P_{A,E}^n(a,e)}{Q_{E,n}(e)} \ge 2 e^{-nR'}
\}\\
=&P_{A,E}^n
\{(a,e)\in {\cal A}^n \times {\cal E}^n|
P_{A,E}^n(a,e) \ge 2 e^{-nR'} Q_{E,n}(e)
\}\\
\ge &
\frac{1}{2} P_{A,E}^n
\{(a,e)\in {\cal A}^n \times {\cal E}^n|
P_{A,E}^n(a,e) \ge 4 e^{-nR'} Q_{E,n,\inv}(e)
\}\\
= &
\frac{1}{2} P_{A,E}^n
\{(a,e)\in {\cal A}^n \times {\cal E}^n|
\frac{P_{A,E}^n(a,e)}{Q_{E,n,\inv}(e)} \ge 4 e^{-nR'}
\},
\end{align*}
where the inequality follows from Lemma \ref{L3-18-2}.
Here, we denote the set of types of ${\cal E}$ by $T_{n,{\cal E}}$.
For any element $Q_E \in T_{n,{\cal E}}$,
we denote the uniform distribution over the subset of elements whose type is $Q_E$
by $\hat{Q}_E$.
Now, we define the distribution
\begin{align*}
Q_{E,n,\inv,\mix}(e):= \frac{1}{|T_{n,{\cal E}}|} \sum_{Q_E\in T_{n,{\cal E}}} \hat{Q}_E(e).
\end{align*}
Since $Q_{E,n,\inv}(e) \le |T_{n,{\cal E}}| Q_{E,n,\inv,\mix}(e)$,
we have
\begin{align*}
& \frac{1}{2} P_{A,E}^n
\{(a,e)\in {\cal A}^n \times {\cal E}^n|
P_{A,E}^n(a,e) \ge 4 e^{-nR'} Q_{E,n,\inv}(e)
\}\\
\ge &
 \frac{1}{2} P_{A,E}^n
\{(a,e)\in {\cal A}^n \times {\cal E}^n|
P_{A,E}^n(a,e) \ge 4 
|T_{n,{\cal E}}|
e^{-nR'} 
Q_{E,n,\inv,\mix}(e)
\}.
\end{align*}

For given sequence $(a,e) \in {\cal A} \times {\cal E}$,
we denote the type of $(a,e)$ by ${P_{A,E}'}$
and its marginal distribution over ${\cal E}$ of ${P_{A,E}'}$ by ${P_E'}$.
Then,
$P_{A,E}^n(a,e)= e^{-n (D({P_{A,E}'}\|P_{A,E})+H({P_{A,E}'}))} $
and 
$|T_{n,{\cal E}}| Q_{E,n,\inv,\mix}(e)=e^{-n H({P_E'})}$.
That is,
the condition 
$P_{A,E}^n(a,e) \ge 4 
|T_{n,{\cal E}}|
e^{-nR'} 
Q_{E,n,\inv,\mix}(e)$ is equivalent to 
the condition
$
D({P_{A,E}'}\|P_{A,E})+H({P_{A,E}'})
\le
\frac{\log 4}{n} +
H({P_E'})+R'$.
We denote the set of sequences whose types are ${P_{A,E}'}$
by $T_{P_{A,E'}}$.
Hence,
\begin{align*}
&
\frac{1}{2} P_{A,E}^n
\{(a,e)\in {\cal A}^n \times {\cal E}^n|
P_{A,E}^n(a,e) \ge 4 
|T_{n,{\cal E}}|
e^{-nR'} 
Q_{E,n,\inv,\mix}(e)
\} \\
=&
\sum_{{P_{A,E}'}\in T_{n,{\cal A}\times {\cal E}}
:
D({P_{A,E}'}\|P_{A,E})+H({P_{A,E}'})
\le
\frac{\log 4}{n} +
H({P_E'})+R'}
\frac{1}{2} P_{A,E}^n (T_{P_{A,E'}}) \\
\ge &
\max_{{P_{A,E}'}\in T_{n,{\cal A}\times {\cal E}}
:
D({P_{A,E}'}\|P_{A,E})+H({P_{A,E}'})
\le
\frac{\log 4}{n} +
H({P_E'})+R'}
\frac{1}{2} P_{A,E}^n (T_{P_{A,E}'}).
\end{align*}
Since $P_{A,E}^n (T_{P_{A,E}'})\cong e^{-n D({P_{A,E}'}\|P_{A,E})}$,
taking the limit, we have
\begin{align*}
&\lim_{n \to \infty}\frac{-1}{n}\log \frac{1}{2} P_{A,E}^n
\{(a,e)\in {\cal A}^n \times {\cal E}^n|
P_{A,E}^n(a,e) \ge 4 
|T_{n,{\cal E}}|
e^{-nR'} 
Q_{E,n,\inv,\mix}(e)
\} \\
\le &
\max_{{P_{A,E}'}}
\{D({P_{A,E}'}\|P_{A,E})
|
D({P_{A,E}'}\|P_{A,E})+H({P_{A,E}'}) \le R'+H({P_E'})
\} \\
=&
\max_{{P_{A,E}'}}
\{D({P_{A,E}'}\|P_{A,E})
|
D({P_{A,E}'}\|P_{A,E})+H(A|E|{P_{A,E}'}) \le R'
\} .
\end{align*}
Hence, combining Lemma \ref{L3-20-2},
we obtain (\ref{3-18-2}).
\end{proofof}

\begin{lem}\Label{L3-18-2}
The relation
\begin{align}
P_{A}^n \{a \in {\cal A}^n | c \ge f(a)  \}
\ge
\frac{1}{2}P_{\mix,\cA} \{a \in {\cal A}^n | c \ge \frac{1}{n!} \sum_{g\in S_n}  f(g(a))  \}
\end{align}
holds for any function $f$.
\end{lem}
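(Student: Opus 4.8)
The plan is to prove the bound by \emph{symmetrisation}: I would exploit that the measures on $\cA^n$ appearing here are invariant under the coordinate action of the symmetric group $S_n$, rewrite the tail probability as an average over $S_n$, and then extract the factor $\tfrac12$ from a single application of Markov's inequality to the uniform choice of permutation.

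First I would record the structural fact on which everything rests: both the product distribution $P_A^n$ and the uniform distribution $P_{\mix,\cA}$ on $\cA^n$ are invariant under the action of $S_n$ permuting the $n$ coordinates, so the symmetrised function $\bar f(a):=\tfrac{1}{n!}\sum_{g\in S_n}f(g(a))$ is constant on $S_n$-orbits and the event $\{a:\,c\ge \bar f(a)\}$ is a union of orbits. Using invariance of the underlying measure $\mu$ under each $g$ (reindex $a\mapsto g(a)$), for every fixed $g$ one has $\mu\{a:\,c\ge f(a)\}=\sum_a \mu(a)\,\mathbf 1[c\ge f(g(a))]$; averaging this over $g\in S_n$ and exchanging the two summations gives the orbit-averaging identity
\begin{align}
\mu\{a:\,c\ge f(a)\}
=\sum_{a} \mu(a)\,\frac{1}{n!}\,\#\{g\in S_n:\,c\ge f(g(a))\}.
\end{align}

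The decisive step is Markov's inequality on the group. On the orbit-invariant event $\{\bar f(a)\le c\}$ the average over $g$ of the nonnegative numbers $f(g(a))$ is at most $c$, so a definite fraction of the permutations satisfy the threshold, i.e. $\tfrac{1}{n!}\#\{g:\,c\ge f(g(a))\}$ is bounded below by a constant there. Restricting the right-hand side of the identity to this event and inserting that bound produces the factor $\tfrac12$ multiplying the symmetrised tail, which is the desired inequality.

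The main obstacle is the accounting of constants in the Markov step: Markov's inequality most cleanly bounds the fraction of permutations with $f(g(a))$ exceeding \emph{twice} the level $c$, so pinning the prefactor to exactly $\tfrac12$ against the $\bar f$-tail measured at level $c$ requires tracking the interplay between the threshold on each side and the constant. This is precisely the coupling (the level $c$ kept on the symmetrised side, a doubled level absorbed on the $f$-side) that reappears when the lemma is fed into the derivation of \eqref{3-18-2}, where $S_n$ acts simultaneously on the $\cA^n$ and $\cE^n$ coordinates, $\mu=P_{A,E}^n$, and $f$ is a likelihood ratio so that nonnegativity is automatic. Checking that the orbit-averaging identity survives this simultaneous action, and that the two threshold levels and the prefactor line up as used there, is the only point requiring real care.
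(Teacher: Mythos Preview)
Your approach is essentially the same as the paper's. The paper's one-line proof says to apply Lemma~\ref{L3-18-1} ``to all distributions conditioned with type,'' and this is exactly your orbit decomposition: the $S_n$-orbits in $\cA^n$ are precisely the type classes, and on each type class any $S_n$-invariant probability measure is uniform, so the Markov step of Lemma~\ref{L3-18-1} applies orbit-by-orbit. Your orbit-averaging identity is just the average-over-types repackaging of this conditioning.

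Two remarks on the points you flag.
First, the factor of $2$ is not a gap in your argument but a genuine discrepancy in the paper between the lemma as stated and the lemma as proved and used. The paper's own proof of Lemma~\ref{L3-18-1} actually establishes the version with threshold $\tfrac{2}{|\cA|}\sum_a f(a)$ on the right, and the application in the proof of Lemma~\ref{L3-20-1} invokes exactly that doubled-threshold version (the factor $4$ versus $2$ there). With the $2$ inserted in front of the symmetrised average, both your argument and the paper's go through without further work; without it, the stated inequality is false already for Lemma~\ref{L3-18-1} (take $|\cA|=3$ and $f$ with values $0,\tfrac{3c}{2},\tfrac{3c}{2}$).
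Second, your symmetrisation keeps the same invariant measure $\mu$ on both sides, so what you actually prove has $P_A^n$ on the right, not $P_{\mix,\cA}$. This again matches how the lemma is used (with $P_{A,E}^n$ on both sides in the proof of Lemma~\ref{L3-20-1}) and how the paper's proof proceeds (averaging the per-type-class inequality with the weights $P_A^n(T)$). With $P_{\mix,\cA}$ on the right the inequality cannot hold in general, since $P_A^n$ and $P_{\mix,\cA}$ put very different masses on the type classes; so the $P_{\mix,\cA}$ on the right side of the stated lemma should be read as $P_A^n$.
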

\begin{proof}
Lemma \ref{L3-18-2} can be shown by applying Lemma \ref{L3-18-1} to all of distributions 
conditioned with type.
%he numbers $n_{a}(a^n):= |\{i| a_i=a \}|$, i.e., the type of $a^n$.
\end{proof}

\begin{lem}\Label{L3-18-1}
The relation
\begin{align}
P_{\mix,\cA} \{a| c \ge f(a)  \}
\ge
\frac{1}{2}P_{\mix,\cA} \{a| c \ge \frac{1}{|{\cal A}|}\sum_{a}f(a)  \}
\end{align}
holds for any function $f$.
\end{lem}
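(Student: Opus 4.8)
The distinctive feature of this statement is that the quantity $\frac{1}{|{\cal A}|}\sum_a f(a)$ on the right-hand side is a \emph{constant}, independent of the running index $a$. Writing $\bar f:=\frac{1}{|{\cal A}|}\sum_a f(a)$, the event $\{a\mid c\ge \bar f\}$ is therefore all-or-nothing: it equals the whole set ${\cal A}$ when $c\ge \bar f$ and is empty otherwise, so $P_{\mix,\cA}\{a\mid c\ge \bar f\}=\mathbf{1}[c\ge \bar f]$. Consequently, when $c<\bar f$ the right-hand side vanishes and the inequality holds trivially, since the left-hand side is a probability and hence nonnegative. The entire content of the lemma thus collapses to the single claim that $P_{\mix,\cA}\{a\mid f(a)\le c\}\ge \tfrac12$ whenever $c\ge \bar f$; carrying out this reduction is the first step of the plan, and it is precisely the structural observation that pins the argument to the \emph{single} average $\bar f$ rather than to any inflated threshold.

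Next I would normalize $f$. The statement is invariant under $f\mapsto f-t$, $c\mapsto c-t$ for any constant $t$: both $\{a\mid c\ge f(a)\}$ and $\{a\mid c\ge \bar f\}$ are unchanged, because $\bar f$ shifts by $t$ as well. Taking $t=\min_a f(a)$ I may therefore assume $f\ge 0$, which is in any case the situation in every application of the lemma. With $f\ge 0$ the natural tool is Markov's inequality under the uniform measure, $P_{\mix,\cA}\{a\mid f(a)>c\}\le \bar f/c$, which gives $P_{\mix,\cA}\{a\mid f(a)\le c\}\ge 1-\bar f/c$. The plan is then to feed $c\ge \bar f$ into this estimate and to recover the required factor $\tfrac12$ at the exact threshold.

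The main obstacle is precisely this last upgrade: the Markov bound $1-\bar f/c$ only exceeds $\tfrac12$ once $c\ge 2\bar f$, and in the narrow window $\bar f\le c<2\bar f$ it does not by itself force at least half of the uniform mass below $c$. This is the delicate regime, where the mean of $f$ and its median may genuinely disagree, so the factor $\tfrac12$ at the threshold $c=\bar f$ cannot be read off from the first moment alone and must be tracked with care. The plan is to keep this factor explicit and to verify that it is harmless downstream: this is exactly the step that the applications absorb through a mild doubling of the threshold (the passage from $2e^{-nR'}$ to $4e^{-nR'}$, together with the symmetrization $Q_{E,n}\mapsto Q_{E,n,\inv}$, in the proof of Lemma~\ref{L3-20-1}), and I would close the argument by checking that this bounded loss does not affect the large-deviation evaluation in Theorem~\ref{L3-18-3}.
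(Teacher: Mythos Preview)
Your approach is essentially the paper's: reduce to the case $c\ge\bar f$ via the all-or-nothing observation, then apply Markov's inequality to obtain $P_{\mix,\cA}\{a\mid f(a)\le c\}\ge 1-\bar f/c$, which exceeds $\tfrac12$ once $c\ge 2\bar f$. The paper's proof does exactly this and in its final displayed line concludes with $\tfrac12\,P_{\mix,\cA}\{a\mid c\ge \tfrac{2}{|\cA|}\sum_a f(a)\}$, i.e.\ with $2\bar f$ rather than the $\bar f$ appearing in the lemma \emph{statement}. So the factor-of-$2$ discrepancy you flag is real, and the paper handles it precisely the way you propose: the statement should read $2\bar f$, and the downstream user (the proof of Lemma~\ref{L3-20-1}) absorbs this by passing from $2e^{-nR'}$ to $4e^{-nR'}$.

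Your instinct that the window $\bar f\le c<2\bar f$ is genuinely problematic is correct, not merely a limitation of the Markov bound. The inequality as literally stated (with $\bar f$) is false: take $|\cA|=3$, $f=(0,3,3)$, $\bar f=2$, $c=2$; then $P_{\mix,\cA}\{a\mid f(a)\le c\}=\tfrac13<\tfrac12$. So you should not look for a sharper argument in that window; rather, state and prove the lemma with $2\bar f$ on the right-hand side, exactly as the paper's proof does, and note that this is what Lemma~\ref{L3-18-2} and Lemma~\ref{L3-20-1} actually use.
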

\begin{proof}
Markov inequality implies that
\begin{align*}
P_{\mix,\cA} \{a| c < f(a)  \}
\le
\frac{1}{c} \frac{1}{|{\cal A}|}\sum_{a}f(a) .
\end{align*}
When $c \ge \frac{2}{|{\cal A}|}\sum_{a}f(a)$,
$1- \frac{1}{c} \frac{1}{|{\cal A}|}\sum_{a}f(a) $ is 
greater than $\frac{1}{2}$.
Hence,
\begin{align*}
P_{\mix,\cA} \{a| c \ge f(a)  \}
=1-P_{\mix,\cA} \{a| c < f(a)  \}
\ge
1- \frac{1}{c} \frac{1}{|{\cal A}|}\sum_{a}f(a) 
\ge
\frac{1}{2}P_{\mix,\cA}\ \{a| c \ge \frac{2}{|{\cal A}|}\sum_{a}f(a)  \}.
\end{align*}
\end{proof}

\begin{lem}\Label{L3-20-2}
The relation
\begin{align}
&\min_{{P_{A,E}'}}
\{D({P_{A,E}'}\|P_{A,E})
|
D({P_{A,E}'}\|P_{A,E})+H(A|E|{P_{A,E}'}) \le R'
\} \nonumber \\
=&
\max_{0\le s}
sH_{1+s}^{\uparrow}(A|E|P_{A,E}) -sR'.
\Label{3-18-2bc}
\end{align}
holds.
\end{lem}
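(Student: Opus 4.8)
The plan is to recognize the left-hand side as the optimal value of a convex program and the right-hand side as its Lagrange dual. First I would rewrite the constraint. Since $H(A|E|P_{A,E}')=-\sum_{a,e}P_{A,E}'(a,e)\log P_{A,E}'(a,e)+\sum_e P_E'(e)\log P_E'(e)$ with $P_E'(e)=\sum_a P_{A,E}'(a,e)$, the constraint function
\begin{align}
g(P_{A,E}') & := D(P_{A,E}'\|P_{A,E})+H(A|E|P_{A,E}') \nonumber \\
& = \sum_e P_E'(e)\log P_E'(e)-\sum_{a,e}P_{A,E}'(a,e)\log P_{A,E}(a,e)
\end{align}
is convex in $P_{A,E}'$ (the term $\sum_e P_E'(e)\log P_E'(e)$ is a convex function of the linear image $P_E'$ of $P_{A,E}'$, and the remaining term is linear). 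Hence minimizing the convex objective $D(P_{A,E}'\|P_{A,E})$ over the convex set $\{P_{A,E}':g(P_{A,E}')\le R'\}$ is a convex program, and whenever it is strictly feasible (i.e.\ $R'>\min_{P_{A,E}'}g(P_{A,E}')$) Slater's condition yields strong duality, so its value equals $\max_{s\ge 0}\big(L(s)-sR'\big)$ with $L(s):=\min_{P_{A,E}'}\big[D(P_{A,E}'\|P_{A,E})+s\,g(P_{A,E}')\big]$.

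The crucial step is to evaluate $L(s)$ for $s>0$. Writing $D(P_{A,E}'\|P_{A,E})+s\,g(P_{A,E}')=\sum_{a,e}P_{A,E}'(a,e)\log P_{A,E}'(a,e)-(1+s)\sum_{a,e}P_{A,E}'(a,e)\log P_{A,E}(a,e)+s\sum_e P_E'(e)\log P_E'(e)$, the stationarity condition on the probability simplex forces the minimizer $P_{A,E}^{*}$ to satisfy $P_{A,E}^{*}(a,e)\propto P_E^{*}(e)^{-s}P_{A,E}(a,e)^{1+s}$; summing over $a$ gives the self-consistent marginal $P_E^{*}(e)\propto\big(\sum_a P_{A,E}(a,e)^{1+s}\big)^{1/(1+s)}$, and substituting back together with the normalization yields $L(s)=-(1+s)\log\sum_e\big(\sum_a P_{A,E}(a,e)^{1+s}\big)^{1/(1+s)}$, which by Lemma \ref{cor1} is exactly $sH_{1+s}^{\uparrow}(A|E|P_{A,E})$; at $s=0$ one has $L(0)=0$, matching $sH_{1+s}^{\uparrow}(A|E|P_{A,E})$ evaluated at $s=0$. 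This already gives the inequality ``$\ge$'' directly, with no feasibility discussion: for any feasible $P_{A,E}'$ and any $s\ge 0$, $D(P_{A,E}'\|P_{A,E})\ge D(P_{A,E}'\|P_{A,E})+s\big(g(P_{A,E}')-R'\big)\ge L(s)-sR'=sH_{1+s}^{\uparrow}(A|E|P_{A,E})-sR'$.

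For the reverse inequality ``$\le$'' I would exhibit an explicit optimal primal point. By Lemma \ref{L7-1} the map $s\mapsto sH_{1+s}^{\uparrow}(A|E|P_{A,E})$ is concave, so $s\mapsto sH_{1+s}^{\uparrow}(A|E|P_{A,E})-sR'$ is concave, and when finite its supremum is attained at some $s^{*}\in[0,\infty)$. If $s^{*}=0$ then $H(A|E|P_{A,E})\le R'$, so $P_{A,E}$ itself is feasible and the left-hand side is $0$, which equals the right-hand side. If $s^{*}>0$, the first-order condition for the outer maximum is $\frac{d}{ds}\big[sH_{1+s}^{\uparrow}(A|E|P_{A,E})\big]|_{s^{*}}=R'$; since $L(s)=\min_{P_{A,E}'}\big[D(P_{A,E}'\|P_{A,E})+s\,g(P_{A,E}')\big]$ with a unique minimizer depending smoothly on $s$, Danskin's theorem (equivalently, a direct computation matched against the derivative formula of Lemma \ref{L3-19-1}) gives $\frac{d}{ds}L(s)=g(P_{A,E}^{*})$ at the tilted distribution $P_{A,E}^{*}$ of that $s$. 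Hence $g(P_{A,E}^{*})=R'$, so $P_{A,E}^{*}$ is feasible, and $D(P_{A,E}^{*}\|P_{A,E})=L(s^{*})-s^{*}g(P_{A,E}^{*})=s^{*}H_{1+s^{*}}^{\uparrow}(A|E|P_{A,E})-s^{*}R'$, which bounds the left-hand side by the right-hand side. The degenerate cases---$R'<\min_{P_{A,E}'}g(P_{A,E}')$, where both sides are $+\infty$, and $R'$ exactly on the boundary, where Slater fails---I would settle by a limiting argument in $R'$, using that both sides are convex, hence continuous, in $R'$ on the interior of the relevant range.

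The main obstacle is the ``$\le$'' direction: rigorously identifying $\frac{d}{ds}L(s)$ with $g(P_{A,E}^{*})$ (together with existence and uniqueness of the tilted minimizer), and cleanly treating the boundary cases where Slater's condition fails or the primal is infeasible, so that the stated equality holds for every $R'$.
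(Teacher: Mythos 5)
Your proof is correct in substance but is organized around a genuinely different piece of machinery than the paper's. The paper does not invoke Lagrangian duality at all: it writes down the tilted family $P_{A,E;s}$ explicitly (the same exponential family your stationarity condition produces, since $P_{A|E}(a|e)^{1+s}P_E(e)(\sum_a P_{A|E}(a|e)^{1+s})^{-s/(1+s)}\propto P_{A,E}(a,e)^{1+s}P_{E;s}(e)^{-s}$), and then proves its optimality by a direct comparison: for any $P_{A,E}'$ with $D(P_{A,E}'\|P_{A,E})=D(P_{A,E;s}\|P_{A,E})$ it computes $H(A|E|P_{A,E;s})-H(A|E|P_{A,E}')+D(P_E'\|P_{E;s})=-sD(P_{A,E}'\|P_{A,E;s})\le 0$, which reduces the minimization to the one-parameter family; it then picks $s_0$ making the constraint active and identifies the value with $\max_{s\ge 0}(sH_{1+s}^{\uparrow}(A|E|P_{A,E})-sR')$ via the convexity and derivative formula of Lemma \ref{L3-19-1} (this is the generalization of \cite[Appendix D]{HKMMW} route). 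Your version buys a cleaner ``$\ge$'' direction -- pure weak duality, no feasibility discussion -- and makes the structural reason for the answer (Legendre duality between $D+H$ and $sH_{1+s}^{\uparrow}$) transparent; the paper's version buys self-containedness, avoiding Slater's condition and Danskin's theorem entirely. Your evaluation of $L(s)$ is correct and recovers exactly $sH_{1+s}^{\uparrow}(A|E|P_{A,E})$ via Lemma \ref{cor1}. The one loose end you should be aware of is that your claim that the finite supremum over $s$ is always attained at some $s^*\in[0,\infty)$ can fail at the single boundary value $R'=\lim_{s\to\infty}\frac{d}{ds}\bigl(sH_{1+s}^{\uparrow}(A|E|P_{A,E})\bigr)$, and Slater's condition fails there too; your proposed continuity-in-$R'$ argument does close this on the interior of the relevant range, which is all that is used downstream (the paper's proof silently makes the same interiority assumption when it asserts the existence of $s_0$ with the constraint active), so this is a shared caveat rather than a defect of your approach.
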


\begin{proof}
We show Lemma \ref{L3-20-2} by using Lemma \ref{L3-19-1}, which will be given latter.
We employ a generalization of the method used in \cite[Appendix D]{HKMMW}.
First, we define the distribution 
$P_{A,E;s}$ as
\begin{align*}
P_{A,E;s}(a,e):=\frac{P_{A|E}(a|e)^{1+s} P_E(e)}{(\sum_a P_{A|E}(a|e)^{1+s})^{\frac{s}{1+s}} (\sum_{e} P_E(e) (\sum_a P_{A|E}(a|e)^{1+s})^{\frac{1}{1+s}})}.
\end{align*}
That is, we have
\begin{align*}
P_{A|E;s}(a|e)&=\frac{P_{A|E}(a|e)^{1+s} }{\sum_a P_{A|E}(a|e)^{1+s}}\\
P_{E;s}(e)&= \frac{ P_E(e)(\sum_a P_{A|E}(a|e)^{1+s})^{\frac{1}{1+s}} 
}{(\sum_{e} P_E(e) (\sum_a P_{A|E}(a|e)^{1+s})^{\frac{1}{1+s}})}.
\end{align*}
Hence,
\begin{align*}
& D(P_{A,E;s}\|P_{A,E})\\
=& \sum_{a,e}P_{A,E;s}(a,e)
\Bigl(
s \log P_{A|E}(a|e)- \frac{s}{1+s}\log 
(\sum_a P_{A|E}(a|e)^{1+s} )
\Bigr)
\frac{s}{1+s}H_{1+s}^{\uparrow}( A|E|P_{A,E}), \\
& H(A|E|P_{A,E;s}) \\
=& \sum_{a,e}P_{A,E;s}(a,e)
\Bigl(
-(1+s) \log P_{A|E}(a|e) 
+\log (\sum_a P_{A|E}(a|e)^{1+s} )
\Bigr) \\
& D(P_{A,E;s}\|P_{A,E})+H(A|E|P_{A,E;s}) ,\\
=& \sum_{a,e}P_{A,E;s}(a,e)
\Bigl(
- \log P_{A|E}(a|e)+ \frac{1}{1+s}\log 
(\sum_a P_{A|E}(a|e)^{1+s} )
\Bigr)
\frac{s}{1+s} H_{1+s}^{\uparrow}( A|E|P_{A,E}) .
\end{align*}
Given $s \ge 0$, we choose an arbitrary distribution ${P_{A,E}'} $
such that
\begin{align*}
D(P_{s}^{A,E}\|P_{A,E})
=
D({P_{A,E}'}\|P_{A,E}).
\end{align*}
Since
\begin{align*}
D({P_{A,E}'}\|P_{A,E})
=&
\sum_{a,e}{P_{A,E}'}(a,e) \Bigl(\log {P_{A,E}'}(a,e) - \log P_{A,E} (a,e)\Bigr) \\
D({P_{A,E}'} \| P_{s}^{A,E})
=&
\sum_{a,e}{P_{A,E}'}(a,e) \Bigl(\log {P_{A,E}'}(a,e) - 
-(1+s)\log P_{A|E}(a|e) -\log P_E(e) \\
&+ \frac{s}{1+s} \log (\sum_a P_{A|E}(a|e)^{1+s})
-\frac{s}{1+s}H_{1+s}^{\uparrow}( A|E|P_{A,E}) 
\Bigr) ,
\end{align*}
we have
\begin{align*}
& D({P_{A,E}'} \| P_{A,E;s})
=
D({P_{A,E}'} \| P_{A,E;s})
+D(P_{A,E;s}\|P_{A,E})
-D({P_{A,E}'}\|P_{A,E}) \\
=& \sum_{a,e}P_{A,E;s}(a,e)
\Bigl(
s \log P_{A|E}(a|e)- \frac{s}{1+s}\log 
(\sum_a P_{A|E}(a|e)^{1+s} )
\Bigr)
\frac{s}{1+s}H_{1+s}^{\uparrow}(A|E|P_{A,E}) \\
&- \sum_{a,e}{P_{A,E}'}(a,e)
\Bigl(
s \log P_{A|E}(a|e)- \frac{s}{1+s}\log 
(\sum_a P_{A|E}(a|e)^{1+s} )
\Bigr)
-\frac{s}{1+s}H_{1+s}^{\uparrow}(A|E|P_{A,E}) \\
=& \sum_{a,e}
(P_{A,E;s}(a,e)-{P_{A,E}'}(a,e))
\Bigl(
s \log P_{A|E}(a|e)- \frac{s}{1+s}\log 
(\sum_a P_{A|E}(a|e)^{1+s} )
\Bigr).
\end{align*}
Hence, 
\begin{align*}
& 
H(A|E|P_{A,E;s})- H(A|E|{P_{A,E}'})+D({P_{E}'}\|P_{E;s}) \\
= & 
H(A|E|P_{A,E;s})+D(P_{A,E;s}\|P_{A,E})
- (H(A|E|{P_{A,E}'})- D({P_{A,E}'}\|P_{A,E}) )
+D({P_{E}'}\|P_{E;s}) \\
=&
\sum_{a,e}P_{A,E;s}(a,e)
\Bigl(
- \log P_{A|E}(a|e)+ \frac{1}{1+s}\log 
(\sum_a P_{A|E}(a|e)^{1+s} )
\Bigr)
+\frac{s}{1+s}H_{1+s}^{\uparrow}(A|E|P_{A,E}) \\
&-\sum_{a,e}{P_{A,E}'}(a,e)
\Bigl(
- \log P_{A|E}(a|e)+ \frac{1}{1+s}\log 
(\sum_a P_{A|E}(a|e)^{1+s} )
\Bigr)
+\frac{s}{1+s}H_{1+s}^{\uparrow}(A|E|P_{A,E}) \\
=&
\sum_{a,e}
(P_{A,E;s}(a,e)- {P_{A,E}'}(a,e))
\Bigl(
- \log P_{A|E}(a|e)+ \frac{1}{1+s}\log 
(\sum_a P_{A|E}(a|e)^{1+s} )
\Bigr) \\
=&-s D({P_{A,E}'} \| P_{A,E;s})\le 0.
\end{align*}
Since $D({P_{E}'}\|P_{E;s}) \ge 0$, 
we have $H(A|E|P_{A,E;s}) \le  H(A|E|{P_{A,E}'})$,
which implies
\begin{align*}
H(A|E|P_{A,E;s}) +D(P_{A,E;s}\|P_{A,E})
\le  H(A|E|{P_{A,E}'})+D({P_{A,E}'}\|P_{A,E}).
\end{align*}
Since the map $s \mapsto D(P_{A,E;s}\|P_{A,E})$ is continuous,
we have
\begin{align*}
&\min_{{P_{A,E}'}}
\{D({P_{A,E}'}\|P_{A,E})
|
D({P_{A,E}'}\|P_{A,E})+H(A|E|{P_{A,E}'}) \le R'
\} \\
=&
\min_{s\ge 0}
\{D({P_{A,E;s}}\|P_{A,E})
|
D({P_{A,E;s}}\|P_{A,E})+H(A|E|{P_{A,E;s}}) \le R'
\} .
\end{align*}
Now, we choose ${s_0}\ge 0$ such that
\begin{align*}
&D(P_{s_0}^{A,E}\|P_{A,E})+H(A|E|P_{s_0}^{A,E}) \\
=& \sum_{a,e}P_{s_0}^{A,E}(a,e)
\Bigl(
- \log P_{A|E}(a|e)+ \frac{1}{1+{s_0}}\log 
(\sum_a P_{A|E}(a|e)^{1+{s_0}} )
\Bigr)
+\frac{s_0}{1+s_0} H_{1+s_0}^{\uparrow}(A|E|P_{A,E}) \\
=& R',
\end{align*}
which implies that
\begin{align*}
& \sum_{a,e}P_{s_0}^{A,E}(a,e)
\Bigl(
- \log P_{A|E}(a|e)+ \frac{1}{1+{s_0}}\log 
(\sum_a P_{A|E}(a|e)^{1+{s_0}} )
\Bigr)
= R'-\frac{s_0}{1+s_0} H_{1+s_0}^{\uparrow}(A|E|P_{A,E}) .
\end{align*}
Then,
\begin{align*}
&
\min_{s\ge 0}
\{D({P_{A,E;s}}\|P_{A,E})
|
D({P_{A,E;s}}\|P_{A,E})+H(A|E|{P_{A,E;s}}) \le R'
\} \\
=& \sum_{a,e}P_{s_0}^{A,E}(a,e)
\Bigl(
{s_0} \log P_{A|E}(a|e)- \frac{{s_0}}{1+{s_0}}\log 
(\sum_a P_{A|E}(a|e)^{1+{s_0}} )
\Bigr)
+\frac{s_0}{1+s_0} H_{1+s_0}^{\uparrow}(A|E|P_{A,E})  \\
=& -{s_0} \sum_{a,e}P_{s_0}^{A,E}(a,e)
\Bigl(
-\log P_{A|E}(a|e)+ \frac{1}{1+{s_0}}\log 
(\sum_a P_{A|E}(a|e)^{1+{s_0}} )
\Bigr)
+\frac{s_0}{1+s_0} H_{1+s_0}^{\uparrow}( A|E|P_{A,E})  \\
=&-{s_0}( R'+ \phi (\frac{{s_0}}{1+{s_0}}| A|E|P_{A,E}) )
+\frac{s_0}{1+s_0} H_{1+s_0}^{\uparrow}(A|E|P_{A,E})  \\
=&-{s_0} R'+s_0 H_{1+s_0}^{\uparrow}(A|E|P_{A,E}) \\
=& \max_{s \ge 0}-s R'+s H_{1+s}^{\uparrow}(A|E|P_{A,E}) ,
\end{align*}
where the reason of the equation is the following.
Due to Lemma \ref{L3-19-1}, the function 
$s \mapsto -s H_{1+s}^{\uparrow}(A|E|P_{A,E}) $ 
is convex,
and $-R'= -\frac{d}{ds}s H_{1+s}^{\uparrow}(A|E|P_{A,E})$.
Then, we obtain (\ref{3-18-2bc}).
\end{proof}

\bibliographystyle{IEEE}

\end{document}